\newif\ifcomments
\newif\ifstoc
\newif\ifanonymous
\newif\ifacmart
  \newcommand{\cal}[1]{\mathcal{#1}}
\definecolor{linkblue}{HTML}{001487}
  \setlist[description]{noitemsep}
  \setlist[enumerate]{noitemsep}
  \setlist[itemize]{noitemsep}
  \newcommand{\hnote}[1]{{\color{magenta}\bf Henry: #1}} 
  \newcommand{\john}[1]{{\color{green}\bf John: #1}}
  \newcommand{\luowen}[1]{{\color{brown}\bf Luowen: #1}}
  \newcommand{\nick}[1]{{\color{orange}\bf Nick: #1}}
  \newcommand{\hnote}[1]{\ignorespaces}
  \newcommand{\john}[1]{\ignorespaces}
  \newcommand{\luowen}[1]{\ignorespaces}
  \newcommand{\nick}[1]{\ignorespaces}
\theoremstyle{plain}
\newtheorem{theorem}{Theorem}
\newtheorem{theorem}{Theorem}[section]
\newtheorem*{theorem*}{Theorem}
\newtheorem{definition}[theorem]{Definition}
\newtheorem{lemma}[theorem]{Lemma}
\newtheorem{claim}[theorem]{Claim}
\Crefname{claim}{Claim}{Claims}
\newtheorem{algorithm}{Algorithm}
\newtheorem*{lemma*}{Lemma}
\newtheorem{fact}[theorem]{Fact}
\newtheorem{corollary}[theorem]{Corollary}
\newtheorem*{corollary*}{Corollary}
\newtheorem{proposition}[theorem]{Proposition}
\theoremstyle{definition}
\theoremstyle{remark}
\newtheorem{remark}[theorem]{Remark}
\crefname{step}{step}{steps}
\Crefname{step}{Step}{Steps}
\renewcommand{\paragraph}{%
  \@startsection{paragraph}{4}%
  {\z@}{2.25ex \@plus 1ex \@minus .2ex}{-1em}%
  {\normalfont\normalsize\bfseries}%
}
\newcommand{\class}[1]{\mathsf{#1}}
\newcommand{\poly}{\mathrm{poly}}
\newcommand{\negl}{\mathrm{negl}}
\newcommand*{\interact}{\mathord{\leftrightarrows}}
\newcommand{\Sim}{\mathsf{Sim}}
\newcommand{\eps}{\epsilon}
\DeclareMathOperator{\Span}{span}
\newcommand{\ketbra}[2]{\ket{#1}\!\!\bra{#2}}
\renewcommand{\cal}[1]{\mathcal{#1}}
\newcommand{\R}{\mathbb{R}}
\newcommand{\N}{\mathbb{N}}
\newcommand{\Z}{\mathbb{Z}}
\newcommand{\E}{\mathop{\mathbb{E}}}
\newcommand{\Tr}{\mathrm{Tr}}
\newcommand{\reg}[1]{\mathsf{#1}}
\newcommand{\Id}{\id}
\newcommand{\td}{\mathrm{td}}
\newcommand{\fidelity}{\mathrm{F}}
\newcommand{\wt}[1]{\widetilde{#1}}
\newcommand{\setft}[1]{\textnormal{#1}}
\newcommand{\id}{\setft{id}}
\newcommand{\bits}{\ensuremath{\{0, 1\}}}
\newcommand{\paren}[1]{\left(#1\right)}
\newcommand{\mparen}[1]{\mleft(#1\mright)}
\newcommand{\mbracket}[1]{\mleft[#1\mright]}
\newcommand{\abs}[1]{\left|#1\right|}
\newcommand{\ot}{\ensuremath{\otimes}}
\newcommand{\norm}[1]{\left\lVert#1\right\rVert}
\newcommand{\opnorm}[1]{\norm{#1}_{\infty}}
\DeclareMathOperator{\img}{img}
\DeclareMathOperator{\sgn}{sgn}
\newcommand{\1}{\mathds{1}}
\newcommand{\states}{\setft{S}}
\newcommand{\proj}[1]{\ket{#1}\!\!\bra{#1}}
\newcommand{\HProj}[2]{\Pi^{#1}_{#2}}
\newcommand{\aux}{\mathsf{aux}}
\newcommand{\init}{\mathsf{init}}
\newcommand{\APMeas}{\mathcal{M}}
\newcommand{\RegH}{\mathsf{H}}
\newcommand{\RegW}{\mathsf{W}}
\newcommand{\MeasUnitary}[1]{U_{\APMeas_{#1}}}
\newcommand{\NumReps}{\mathsf{NumReps}}
\newcommand{\Bin}{\mathsf{Bin}}
\newcommand{\MeasJor}{\mathcal{P}_\mathsf{Jor}}
\newcommand{\JorProj}[1]{\Pi^{\mathsf{Jor}}_{#1}}
\newcommand{\APMeasProjExt}[2]{\tilde{\Pi}_{#1}^{#2}}
\newcommand{\EffJor}{\mathsf{EffJor}}
\newcommand{\proto}{\pi}
\newcommand{\buresdist}{\mathrm{d_{Bures}}}
\newcommand{\OUT}{\mathsf{OUT}}
\newcommand{\ACCEPT}{\mathsf{ACCEPT}}
\newcommand{\REVEAL}{\mathsf{REVEAL}}
\newcommand{\tg}{\mathsf{tg}}
\newcommand{\mim}{\mathsf{mim}}
\newcommand{\sch}{\mathsf{schedule}}
\newcommand{\KeyGen}{\mathsf{KeyGen}}
\newcommand{\Mint}{\mathsf{Mint}}
\newcommand{\Ver}{\mathsf{Ver}}
\NewDocumentCommand{\whiten}{ m }
{
  \int_step_function:nnnN {1}{1}{#1} \white_text:n
}
\NewDocumentCommand{ \varul }{ D<>{5} O{0.2ex} O{0.1ex} +m } {%
  \begingroup
  \setul{#2}{#3}%
  \def\SOUL@uleverysyllable{%
    \setbox0=\hbox{\the\SOUL@syllable}%
    \ifdim\dp0>\z@
    \SOUL@ulunderline{\phantom{\the\SOUL@syllable}}%
    \whiten{#1}%
    \llap{%
      \the\SOUL@syllable
      \SOUL@setkern\SOUL@charkern
    }%
    \else
    \SOUL@ulunderline{%
      \the\SOUL@syllable
      \SOUL@setkern\SOUL@charkern
    }%
    \fi}%
  \ul{#4}%
  \endgroup
}
\title{An efficient quantum parallel repetition theorem and applications}
\date{}
\author[1]{John Bostanci}
\author[2]{Luowen Qian}
\author[3]{Nicholas Spooner}
\author[1]{Henry Yuen}
\affil[1]{Columbia University}
\affil[2]{Boston University}
\affil[3]{University of Warwick \& NYU}
\begin{document}
\ifacmart
\title[An Efficient Quantum Parallel Repetition Theorem and Applications]{An Efficient Quantum Parallel Repetition Theorem\\ and Applications}

\author{John Bostanci}
\orcid{0000-0001-9666-7114}
\affiliation{%
  \institution{Columbia University}
  \city{New York}
  \country{USA}
}
\email{chb2154@columbia.edu}

\author{Luowen Qian}
\orcid{0000-0002-1112-8822}
\affiliation{%
  \institution{Boston University}
  \city{Boston}
  \country{USA}
}
\email{luowenq@bu.edu}

\author{Nicholas Spooner}
\orcid{0000-0002-0085-2137}
\affiliation{%
  \institution{University of Warwick}
  \city{Coventry}
  \country{United Kingdom}
}
\affiliation{%
  \institution{New York University}
  \city{New York}
  \country{USA}
}
\email{nickspoon0@gmail.com}

\author{Henry Yuen}
\orcid{0000-0002-2684-1129}
\affiliation{%
  \institution{Columbia University}
  \city{New York}
  \country{USA}
}
\email{hyuen@cs.columbia.edu}

\renewcommand{\shortauthors}{John Bostanci, Luowen Qian, Nicholas Spooner, Henry Yuen}

\begin{CCSXML}
<ccs2012>
   <concept>
       <concept_id>10003752.10003777.10003789</concept_id>
       <concept_desc>Theory of computation~Cryptographic protocols</concept_desc>
       <concept_significance>500</concept_significance>
       </concept>
   <concept>
       <concept_id>10003752.10003777.10003786</concept_id>
       <concept_desc>Theory of computation~Interactive proof systems</concept_desc>
       <concept_significance>500</concept_significance>
       </concept>
   <concept>
       <concept_id>10003752.10003777.10003788</concept_id>
       <concept_desc>Theory of computation~Cryptographic primitives</concept_desc>
       <concept_significance>300</concept_significance>
       </concept>
   <concept>
       <concept_id>10003752.10003777.10003779</concept_id>
       <concept_desc>Theory of computation~Problems, reductions and completeness</concept_desc>
       <concept_significance>300</concept_significance>
       </concept>
   <concept>
       <concept_id>10003752.10003777.10003784</concept_id>
       <concept_desc>Theory of computation~Quantum complexity theory</concept_desc>
       <concept_significance>100</concept_significance>
       </concept>
   <concept>
       <concept_id>10003752.10003753.10003758.10003784</concept_id>
       <concept_desc>Theory of computation~Quantum complexity theory</concept_desc>
       <concept_significance>100</concept_significance>
       </concept>
   <concept>
       <concept_id>10003752.10003777.10003781</concept_id>
       <concept_desc>Theory of computation~Circuit complexity</concept_desc>
       <concept_significance>100</concept_significance>
       </concept>
 </ccs2012>
\end{CCSXML}

\ccsdesc[500]{Theory of computation~Cryptographic protocols}
\ccsdesc[500]{Theory of computation~Interactive proof systems}
\ccsdesc[300]{Theory of computation~Cryptographic primitives}
\ccsdesc[300]{Theory of computation~Problems, reductions and completeness}
\ccsdesc[100]{Theory of computation~Quantum complexity theory}
\ccsdesc[100]{Theory of computation~Circuit complexity}

\keywords{average case complexity, direct product, post-quantum security, puzzle}
\else
\maketitle
\fi

\begin{abstract}
    We prove a tight parallel repetition theorem for $3$-message computationally-secure quantum interactive protocols between an efficient challenger and an efficient adversary. We also prove under plausible assumptions that the security of $4$-message computationally secure protocols does not generally decrease under parallel repetition. These mirror the classical results of Bellare, Impagliazzo, and Naor~\cite{bellare1997does}. Finally, we prove that all quantum argument systems can be generically compiled to an equivalent $3$-message argument system, mirroring the transformation for quantum proof systems \cite{kitaev2000parallelization,kempe2007using}.

    As immediate applications, we show how to derive hardness amplification theorems for quantum bit commitment schemes (answering a question of Yan~\cite{yan2023general}), EFI pairs (answering a question of Brakerski, Canetti, and Qian~\cite{brakerski2022computational}), public-key quantum money schemes (answering a question of Aaronson and Christiano~\cite{aaronson2012quantum}), and quantum zero-knowledge argument systems.
    We also derive an XOR lemma \cite{yao82theory} for quantum predicates as a corollary.
\end{abstract}

\ifacmart
\maketitle
\else
    \ifstoc\newpage\fi
    \tableofcontents
    \ifstoc
      \newpage
      \pagenumbering{arabic}
    \fi
\fi

\section{Introduction}
\label{sec:intro}

In this work we study one of the most fundamental questions in theoretical cryptography: can we transform a ``weakly'' secure construction of a primitive into one that is ``truly'' secure? A common strategy for such a transformation is \emph{parallel repetition}: if the adversary's success probability against the original construction is bounded away from $1$, then the adversary's success probability against the repeated construction should tend to zero with the number of repetitions.
In classical cryptography this question is well-studied, beginning with the seminal work of Yao~\cite{yao82theory,levin1987one,goldreich2011yao} and leading to a long sequence of works \cite{bellare1997does,canetti2005hardness,haitner2009parallel,haastad2010efficient,chung2010parallel,pass2012parallel,chung2015tight,berman2021tight}.
Hardness amplification is also an essential tool for bootstrapping circuit lower bounds (see \cite{SV08} and the references therein).

Our focus in this work is on hardness amplification for \emph{quantum} cryptographic primitives; in particular we focus on the following class of quantum interactive protocols between an \emph{efficient challenger} (specified as part of the protocol) and an \emph{efficient adversary} indexed by a security parameter $\lambda$:
\begin{itemize}
    \item \emph{$3$-message}: The adversary sends the first message, the challenger the second, and the adversary the third. After the protocol ends, the challenger decides to accept or reject. All messages may be quantum.
    \item \emph{Weakly computationally secure}: No efficient ($\poly(\lambda)$-size) adversary can cause the challenger to accept with probability greater than (say) $1 - \frac1{\poly(\lambda)}$.
\end{itemize}

The security of many quantum cryptographic primitives --- including quantum (non-interactive) commitments, quantum money and 3-message quantum arguments --- can be naturally formulated in terms of the a quantum 3-message protocol associated with the primitive like above.
This quantum protocol is often called a \emph{security game}. %

Similarly to the classical setting one would like a generic method for amplifying the security of quantum cryptographic primitives. A natural approach is to repeat the construction in parallel; the security of the repeated construction usually corresponds to the parallel-repeated security game. Ideally, one would want the adversary's maximum success probability in a repeated security game to decrease exponentially with the number of repetitions. 

Our first result is a \emph{tight} parallel repetition theorem for all 3-message quantum protocols.
\begin{theorem}[3-message efficient parallel repetition, informal]
\label{thm:main-informal}
	Let $\proto$ be a $3$-message $\gamma$-computationally secure quantum protocol. Then the $k$-fold parallel repetition $\proto^{\otimes k}$ is $(\gamma^k + \negl(\lambda))$-computationally secure.
\end{theorem}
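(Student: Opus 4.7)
We prove the contrapositive: given an efficient adversary $A^*$ that wins $\proto^{\otimes k}$ with probability $p \geq \gamma^k + \varepsilon$ for some $\varepsilon \geq 1/\poly(\lambda)$, we construct an efficient adversary $B$ for a single instance of $\proto$ whose acceptance probability exceeds $\gamma$ by a noticeable amount. Because $\proto$ has only three messages, both the challenger and $A^*$ can be purified so that the $k$-fold interaction reduces to a joint unitary $U_k$ (on $A^*$'s workspace together with $k$ challenger registers) followed by the accepting projector $\Pi = \bigotimes_{j=1}^{k} \Pi_j$. Since the $\Pi_j$ act on disjoint registers, they pairwise commute. The reduction $B$ embeds the external challenger at a chosen position $i$ of a simulated $k$-fold execution with $A^*$, and internally runs honest challenger copies on the remaining $k-1$ positions (possible because the challenger is itself efficient).

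The core difficulty is that, although $\Pr[\text{all }k\text{ accept}] = p > \gamma^k$, we need the embedded position's acceptance probability to exceed $\gamma$, not merely $p$. Classically, Bellare--Impagliazzo--Naor achieve this by combining rejection sampling over the other instances' randomness with a chain-rule argument; quantumly, destructive measurement rules out a direct rejection-sampling strategy, and \emph{this is the main technical obstacle.} We replace rejection sampling with a \emph{coherent} quantum conditioning procedure based on Marriott--Watrous style amplification (equivalently, Jordan's lemma applied to the pair of projectors $\proj{0}$ and $\Pi_S := \bigotimes_{j \in S} \Pi_j$ for a subset $S \subseteq [k]$). This yields an efficient unitary $V_S$ that, in time $\poly(\lambda, 1/q_S, 1/\delta)$ with $q_S := \|\Pi_S U_k \ket{0}\|^2$, approximately prepares the normalized post-conditioned state $\Pi_S U_k \ket{0}/\sqrt{q_S}$ to within trace distance $\delta$, without collapsing the complementary registers.

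The position to embed and the subset to condition on are chosen by a chain-rule argument. Sample a uniformly random permutation $\sigma$ of $[k]$ and a uniformly random $t \in [k]$; coherently condition on ``positions $\sigma(1), \ldots, \sigma(t-1)$ accept'' using $V_S$, and embed the external challenger at position $\sigma(t)$. Since the $\Pi_j$ commute, for every $\sigma$ the chain rule gives $p = \prod_{t=1}^{k} p_{\sigma, t}$, where $p_{\sigma, t} := \Pr[W_{\sigma(t)} \mid W_{\sigma(1)}, \ldots, W_{\sigma(t-1)}]$; by AM--GM, $\mathbb{E}_{t}[p_{\sigma,t}] \geq p^{1/k} \geq (\gamma^k + \varepsilon)^{1/k}$, which exceeds $\gamma$ by a $1/\poly(\lambda)$ margin. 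Up to the approximation errors introduced by coherent conditioning, this is precisely the expected acceptance probability of $B$ at the embedded position, and hence strictly above $\gamma$.

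The remaining complication is that the coherent-conditioning subroutine is expensive when some $p_{\sigma,t}$ in the chain is very small. We handle this via a truncation argument: if the cumulative conditional probability drops below a threshold $\tau = 1/\poly(\lambda)$ on some prefix of the conditioning sequence, $B$ abandons the current $(\sigma, t)$ and outputs rejection. A counting argument bounds the contribution of these bad prefixes to the success probability by an additive $O(k\tau)$, while above the threshold Marriott--Watrous amplification runs in time $\poly(\lambda, 1/\tau, 1/\delta)$. Aggregating the $\delta$-errors additively across the at most $k$ coherent conditioning steps and then averaging over $(\sigma, t)$ yields an efficient single-instance adversary with success probability strictly exceeding $\gamma$, contradicting $\gamma$-security and completing the proof.
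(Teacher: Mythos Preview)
Your proposal has a genuine gap at the ``coherent conditioning'' step. You claim that Marriott--Watrous style amplification, applied to the pair $(\proj{0},\Pi_S)$, yields an efficient unitary that approximately prepares the normalized post-conditioned state $\Pi_S U_k\ket{0}/\sqrt{q_S}$. This is false, and the paper explicitly discusses it as a failed first attempt (Section~2.1, ``Attempt: Alternating projectors''). Writing the Jordan decomposition of $\Pi_S\cdot\Pi$ as $\sum_j \varsigma_j\ketbra{w_j}{v_j}$ (where $\Pi$ is the ``correctly initialized'' projector the reduction can implement), the post-selection target is $\sum_j \alpha_j \varsigma_j\ket{w_j}$ up to normalization, whereas alternating projections produce (after uncomputing history, which is itself nontrivial) the state $\sum_j \alpha_j\ket{w_j}$ --- the singular-value weights are missing. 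Because the external challenger holds a private register entangled with the challenge at the embedded position, you cannot upgrade $\Pi$ to the rank-one projector $\proj{\psi}$ that would make ordinary amplitude amplification produce the correct target. The paper's resolution is the quantum singular value transform, which implements $\varsigma\mapsto \varsigma/\sqrt{\delta^{i-1}}$ coherently, but only under a \emph{spectral-norm} promise: all singular values of the relevant $\tilde\Pi\Pi$ must lie below $\sqrt{\delta^{i-1}}$.

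This spectral-norm requirement is the second place your outline breaks. Your chain-rule/AM--GM argument controls only \emph{probabilities} $q_S=\|\Pi_S U_k\ket{0}\|^2$, not operator norms $\|\Pi_S\Pi\|$, and the QSVT step is uncontrolled when the spectrum has mass above the threshold. The paper's structure is therefore different from yours: rather than randomizing over a permutation and a prefix, it applies a discrete intermediate-value argument to the sequence $\opnorm{\wt G_i}^2-\delta^i$ to locate an index $i$ with $\opnorm{\wt G_{i-1}}^2\le\delta^{i-1}$ and $\opnorm{\wt G_i}^2\ge\delta^i$, and then (in the uniform setting) uses an alternating-projector state-repair procedure to \emph{find} such an $i$ and prepare a witness state with negligible mass above the spectral threshold for $\wt G_{i-1}$. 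Your truncation on cumulative probabilities does not substitute for this: it bounds $q_S$ from below, not the top singular value from above. In short, the ``replace rejection sampling by Marriott--Watrous'' step is exactly the obstacle the paper identifies and overcomes with QSVT plus a spectral-norm case analysis; as written, your reduction does not produce an adversary whose acceptance probability can be shown to exceed $\gamma$.
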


\ifstoc\else
\noindent We prove \Cref{thm:main-informal} by identifying the key high level approach used in proving both the classical Yao's XOR lemma \cite{yao82theory,levin1987one,goldreich2011yao} and classical tight $3$-message parallel repetition theorem of Canetti, Halevi, and Steiner~\cite{canetti2005hardness}, and then instantiating this high level approach by designing quantum components that work with an arbitrary quantum adversary.
As one would expect, handling quantum protocols and adversaries is much more challenging than classical: (1) the classical reduction involves cloning of the adversary's internal state during protocol execution, which may be computationally infeasible or even information-theoretically impossible (due to entanglement with the challenger); and %
(2) the classical analysis relies on conditional distributions which breaks down in the quantum setting due to non-commutativity.
To resolve these challenges, we combine techniques from recent works on quantum rewinding \cite{chiesa2022post} and quantum algorithmic techniques such as the quantum singular value transform \cite{gilyen2019quantum}, as well as additional new ideas to make them compatible with our setting.
We explain these in more detail in \Cref{sec:overview}.
\fi

We stress that our reduction is \emph{uniform} in the strongest possible sense: if an adversary uses quantum advice $\ket{\aux}$ then the reduction uses quantum advice $\ket{\aux}^{\otimes t}$ for some polynomial $t$.
Furthermore, $t = 1$ is possible for any $\cal A$ as long as $\ket\aux$ is an appropriate eigenstate.
See \Cref{remark:advice-preservation} for details.

\paragraph{On tightness of the reduction.}
We remark that $\gamma^k + \negl(\lambda)$ is likely the best general bound that one could hope for.
The $\gamma^k$ term is inherent since if the best attack on the original protocol has success probability $\gamma$, then simply running this attack on each repetition independently yields an attack achieving success probability $\gamma^k$.
The negligible term also cannot be eliminated under reasonable assumptions.
In particular, the classical 2-message counterexample by Dodis, Jain, Moran, and Wichs~\cite{DJMW12} generalizes to the post-quantum setting, thus the negligible term is inherent assuming existence of exponentially hard post-quantum extended second-preimage resistant hash functions.

\subsection{Applications of 3-message hardness amplification}

\Cref{thm:main-informal} immediately implies hardness amplification for several quantum cryptographic primitives.

\paragraph{Quantum commitments.} Bit commitments are a fundamental cryptographic primitive where a sender can commit to a bit $b$ without revealing it at first (this is the \emph{hiding} property), and later can reveal the bit but without the ability to change the bit (this is the \emph{binding} property).
Recently our understanding of commitment schemes in the quantum setting has considerably advanced.
In particular, there is a robust existential equivalence between commitments and many quantum cryptographic primitives including EFI pairs, which are pairs of efficient mixed states that can only be inefficiently distinguished~\cite{brakerski2022computational}.
Therefore, it is likely that commitments and EFI pairs play a similar ``minimal assumption'' role (analogous to one-way functions classically) to quantum cryptography.

An important question that has remained open is whether the computational security of quantum commitments (and friends) can be amplified.
In other words, given an arbitrary quantum commitment scheme where either the hiding or binding property holds with \emph{weak} (computational) security, can we generically transform it into another quantum commitment scheme where hiding and binding hold with \emph{strong} security? This question was explicitly raised by Yan~\cite{yan2023general}.

Our parallel repetition theorem for computationally secure protocols directly implies hardness amplification for quantum bit commitments, and thus showing robustness of the existence of commitments from a new angle.
\begin{corollary}[Hardness amplification for commitments]
\label{thm:commitments-informal}
	There is a quantum commitment scheme but only with computational weak hiding (or binding) security, if and only if there is a strong quantum commitment scheme. 
\end{corollary}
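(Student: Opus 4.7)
The ``if'' direction is immediate, so the content is the ``only if'' direction: transform a weakly secure quantum commitment into a strongly secure one. My plan is to apply \Cref{thm:main-informal} to the natural $3$-message security games of the commitment scheme, and package the resulting hardness amplification into a new single-bit commitment via a standard XOR-style construction.

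Both the hiding and binding properties of a quantum commitment scheme admit natural formulations as $3$-message games with an efficient challenger. The \emph{hiding game} has the adversary send an initial register, the challenger commit to a uniformly random bit $b$ using the commit algorithm, and the adversary output a guess for $b$. The \emph{binding game} has the adversary send a commitment register, the challenger send a uniformly random challenge $b \in \bits$, and the adversary send an opening that the challenger verifies. Weak security means the adversary wins with probability at most $\gamma \le 1 - 1/\poly(\lambda)$ on a suitable variant (in the hiding case, an XOR-style variant where honest success is far from $1/2$), so \Cref{thm:main-informal} bounds the $k$-fold parallel success probability by $\gamma^k + \negl(\lambda)$, which is negligible for $k = \omega(\log \lambda) \cdot \poly(\lambda)$.

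To obtain a single-bit commitment scheme with strong security, my plan is the XOR construction: to commit to bit $b$, sample uniformly random $b_1, \ldots, b_k \in \bits$ with $b_1 \oplus \cdots \oplus b_k = b$ and commit to each $b_i$ in parallel with the weak scheme. When hiding is the weak property, strong hiding of the new scheme follows from the quantum XOR lemma advertised in the abstract (itself a corollary of \Cref{thm:main-informal}), while binding is preserved by a polynomial-loss reduction: any adversary that opens the commitment tuple to two vectors with different XOR must flip at least one coordinate, so guessing the flipping index produces a double-opening of the underlying scheme. When binding is the weak property, the analogous argument applies \Cref{thm:main-informal} directly to the parallel-repeated binding game, converting a single-bit binding attack into a parallel binding attack via a quantum rewinding argument, and hiding is preserved along the way by a hybrid argument over the components.

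The principal obstacle will be the binding reduction in the weak-binding case: unlike in the classical setting, extracting two openings of the adversary's commitment register, coherent enough to contradict the parallel binding bound, cannot rely on cloning and so demands a quantum rewinding and state-repair argument. I plan to reuse the quantum rewinding primitives developed in the proof of \Cref{thm:main-informal} itself, which are already designed to handle exactly this kind of repeated-measurement-on-shared-quantum-state situation. A secondary technical point is to present the commit and reveal algorithms in the canonical form required to expose hiding and binding as $3$-message games with an efficient challenger; this is by now standard in the quantum commitment literature.
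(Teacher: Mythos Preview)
Your proposal has a real gap in the weak-binding case, and more broadly takes a harder route than the paper. The paper does \emph{not} use an XOR construction. Instead, after compiling to canonical (noninteractive) form via Yan~\cite{yan2023general}, it observes that the honest-binding game is the following $2$-message game: the challenger prepares $\ket{\psi_0}=C_0\ket{0\cdots0}$ and sends the \emph{reveal} register, the adversary applies a channel to it and returns it, and the challenger checks whether the joint state is now $\ket{\psi_1}$. The crucial point is that the binding game for the straight $k$-fold repeated commitment $C_b^{\otimes k}$ is \emph{literally} the $k$-fold parallel repetition of this $2$-message game, so \Cref{thm:main-informal} applies directly and drives the binding error from $1-1/\poly$ to negligible with no additional reduction whatsoever. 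Statistical hiding is preserved under straight repetition by a hybrid argument. The weak-hiding case is then handled not by an XOR lemma but by \emph{flavor switching} \cite{yan2023general,gunn2023commitments,hhan2023hardness}: convert weak hiding into weak binding, amplify binding as above, and (if desired) switch back.

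By contrast, your XOR scheme's binding game is \emph{not} the $k$-fold parallel repetition of the base binding game. In your $3$-message formulation (commit, receive a random bit $b$, open to $b$), the $k$-fold parallel game asks the adversary to open to an \emph{arbitrary} vector $(b_1,\dots,b_k)\in\bits^k$ chosen by the challenger, whereas an XOR-binding adversary only needs to hit two of the $2^k$ vectors (one per parity), typically differing in a single coordinate. So the step you flag as ``converting a single-bit binding attack into a parallel binding attack via a quantum rewinding argument'' is not a technicality; it is a missing reduction, and the rewinding machinery inside \Cref{thm:main-informal} does not supply it. Even in the weak-hiding case, your ``guess the flipping index to get a double opening'' requires running the adversary twice on the same commitment state, which is already the quantum rewinding problem you are trying to avoid. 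The paper's route sidesteps all of this by choosing the commitment transformation (straight repetition, then flavor switching) so that the repeated scheme's security game \emph{is} the parallel-repeated game.
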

We argue this as follows: without loss of generality it suffices to consider noninteractive commitment schemes using Yan's compiler \cite{yan2023general}.
The binding security of the noninteractive scheme can be formulated in terms of the success probability of any efficient adversary in a 2-message security game; correspondingly the security of the repeated scheme can be formulated in terms of any efficient adversary's success probability in the parallel repeated security game, which by \Cref{thm:main-informal} decays to negligible at an exponential rate.
Amplification of hiding can be achieved via flavor-switching \cite{yan2023general,gunn2023commitments,hhan2023hardness}.
\ifacmart\else We describe this in detail in \Cref{sec:commitments}.\fi
We also show how this can be used to drastically simplify constructing commitments from hardness of decoding black hole radiation, originally proven by Brakerski~\cite{brakerski2023black}, in \ifacmart{}the full version\else\Cref{sec:blackholes}\fi.

\paragraph{Quantum Yao's XOR lemma.}
By the equivalence of quantum commitments and EFI pairs, we also obtain hardness amplification for EFI pairs, answering an open question of Brakerski, Canetti, and Qian~\cite{brakerski2022computational}.
In fact, we can even use it to show polarization for EFI pairs\ifacmart\else{} (\Cref{cor:efi-polarization})\fi.

\begin{corollary}[XOR lemma for EFI pairs]
\label{thm:efi-informal}
If there exists (an ensemble of) weak EFI pairs $(\rho_0,\rho_1)$ that are statistically far but cannot be distinguished with advantage better than $\epsilon$, then the $k$-fold XOR of $(\rho_0,\rho_1)$ cannot be distinguished with advantage better than $\epsilon^{k/2} + \negl(\lambda)$.
In particular, this gives a (strong) EFI pair if $\epsilon^k$ is negligible.
\end{corollary}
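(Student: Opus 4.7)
The plan is to reduce the $k$-fold XOR distinguishing problem for $(\rho_0, \rho_1)$ to the $k$-fold parallel repetition of the single-bit distinguishing game, and then apply \Cref{thm:main-informal}. The single-bit distinguishing game is naturally a $3$-message protocol with trivial first message: the challenger samples $b\in\{0,1\}$ uniformly, sends $\rho_b$, and accepts iff the adversary returns $b$. The hypothesis that $(\rho_0,\rho_1)$ has distinguishing advantage at most $\epsilon$ means this game is $((1+\epsilon)/2)$-computationally secure, so \Cref{thm:main-informal} yields that the $k$-fold parallel repetition---in which the adversary must recover every $b_i$---has computational security $((1+\epsilon)/2)^k + \negl(\lambda)$.

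The second step is to convert an XOR distinguisher with advantage $\delta$ into an adversary for this parallel-repetition game. A direct classical reduction---guess $k-1$ of the bits uniformly at random and let the XOR adversary fix the last one---produces a parallel-rep adversary succeeding with probability $(1+\delta)/2^k$, which only yields the too-weak bound $\delta \le O(k\epsilon)$. To obtain the stated $\epsilon^{k/2}$ exponent, the reduction should instead be carried out coherently: the purified XOR adversary's amplitude on the correct outcome is $\sqrt{(1+\delta)/2}$, and the quantum amplitude-amplification and Jordan's-lemma-based rewinding machinery developed for \Cref{thm:main-informal} can be used to extract $\Omega(\delta^2)$-scale information about the full bit vector from a single invocation of the XOR adversary on its one copy of $\rho_{b_1}\otimes\cdots\otimes\rho_{b_k}$. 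Plugging the resulting parallel-rep success probability into the bound from the first step and rearranging then gives $\delta \le \epsilon^{k/2} + \negl(\lambda)$.

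The main obstacle is designing exactly this coherent reduction, which has no classical analog: the XOR adversary cannot be invoked twice (no cloning of its single input copy), so amplifying amplitude bias into a useful $k$-bit prediction must be done in a single coherent computation using the same rewinding primitives that power \Cref{thm:main-informal}. Once the XOR bound is established, the ``in particular'' strong-EFI claim follows by taking $k = \omega(\log\lambda)$ so that $\epsilon^{k/2}$ is negligible; the companion polarization statement is then obtained by alternating the XOR amplification above (which shrinks computational distance) with a tensor-product construction (which enlarges statistical distance), restoring the gap needed to iterate.
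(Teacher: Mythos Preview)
Your approach differs from the paper's and has a genuine gap in both the reduction step and the quantitative conclusion.

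First, the numbers do not line up. The single-bit distinguishing game has soundness $(1+\epsilon)/2$, so \Cref{thm:main-informal} bounds the $k$-fold parallel repetition by $((1+\epsilon)/2)^k + \negl(\lambda)$, not by anything involving $\epsilon^k$. Even granting your unspecified ``$\Omega(\delta^2)$'' coherent reduction, you would conclude only $\delta \lesssim ((1+\epsilon)/2)^{k/2}$, which is not the claimed $\epsilon^{k/2}$ (for small $\epsilon$ the former is roughly $2^{-k/2}$, independent of $\epsilon$). Second, the coherent reduction itself is not constructed and faces a real barrier: an XOR predictor emits one bit while the parallel-repetition game demands all $k$ bits, and with only a single copy of each $\rho_{b_i}$ you cannot run a Goldreich--Levin-style list-decoding that would require many correlated queries to the predictor. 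The rewinding machinery behind \Cref{thm:main-informal} manipulates the challenger--adversary interaction inside a single repetition; it does not turn a one-bit output into a $k$-bit output.

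The paper's route is entirely different. It passes through the equivalence between EFI pairs and canonical quantum commitments and uses \emph{flavor switching}: an $\epsilon$-computationally-\emph{hiding} commitment is converted into an $\epsilon$-computationally-\emph{binding} one. Crucially, the binding game is a $2$-message protocol whose soundness parameter is the fidelity $\epsilon$ itself (not a guessing probability $(1+\epsilon)/2$), so \Cref{thm:main-informal} amplifies it directly to $\epsilon^k + \negl$. Flavor-switching back turns $\epsilon^k$-binding into $\epsilon^{k/2}$-hiding (the square root is the cost of that direction of the switch), and one checks that the composite construction is exactly the $k$-fold XOR of the original commitment, hence of the EFI pair. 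As the paper notes, this commitment duality is effectively carrying out quantum Goldreich--Levin, which is why the quantum argument sidesteps the lossy list-decoding step your proposal would need.
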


We point out that from this and leveraging an equivalence between quantum state distinguishing and quantum predicates, we can immediately derive a quantum analogue of Yao's XOR lemma \cite{yao82theory}, which states that weak computational unpredictability of Boolean predicates (over some distribution of inputs) is amplified when the results of several independent instances are XOR-ed together.
A quantum predicate can be defined as two orthogonal average-case inputs $\rho_+$ (YES), $\rho_-$ (NO) with $\rho_+ \rho_- = 0$, and the goal of the predictor is to correctly predict the sign with advantage $\epsilon$.
This question was previously asked by Brakerski~\cite{brakerski2023black} (private communication) and Colisson~\cite{Col19}.

\begin{corollary}[Quantum Yao's XOR lemma]
    The $k$-fold XOR of an $\epsilon$-unpredictable quantum predicate for $\rho_+, \rho_-$ is $(\epsilon^{k/2} + \negl)$-unpredictable.
\end{corollary}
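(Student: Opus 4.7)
The plan is to reduce the statement directly to the XOR lemma for EFI pairs (Corollary~\ref{thm:efi-informal}) by making explicit the equivalence between quantum predicates and EFI pairs that is alluded to in the preceding paragraph. Concretely, an $\epsilon$-unpredictable predicate is specified by a pair of efficiently preparable average-case inputs $(\rho_+, \rho_-)$ satisfying $\rho_+ \rho_- = 0$, such that no efficient quantum algorithm, given a uniformly random sample from $\{\rho_+, \rho_-\}$, outputs the hidden sign with probability exceeding $1/2 + \epsilon$. Orthogonality forces $\|\rho_+ - \rho_-\|_1 = 2$, so $(\rho_+, \rho_-)$ is statistically far; together with the computational hardness assumption this is exactly the definition of a weak EFI pair.

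Next I would verify that the two notions of XOR agree. The $k$-fold XOR of the predicate presents the adversary with $k$ independent labeled samples $\rho_{s_1} \ot \cdots \ot \rho_{s_k}$ for uniform signs $s_1, \ldots, s_k$, and asks for $\bigoplus_i s_i$. This is information-theoretically the same as the challenger first sampling $b \in \bits$ uniformly, then sampling the $s_i$ uniformly conditioned on $\bigoplus_i s_i = b$, and finally outputting $\rho_{s_1} \ot \cdots \ot \rho_{s_k}$; the resulting two mixtures are precisely the $k$-fold XOR EFI pair $(\sigma_0, \sigma_1)$ built from $(\rho_+, \rho_-)$. A predictor achieving advantage $\delta$ on the XOR'd predicate therefore yields a distinguisher achieving advantage $\delta$ between $\sigma_0$ and $\sigma_1$, and conversely; the reduction is a black-box wrapper that preserves efficiency and any quantum advice the adversary uses.

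Applying Corollary~\ref{thm:efi-informal} to the weak EFI pair $(\rho_+, \rho_-)$ then bounds the distinguishing advantage on $(\sigma_0, \sigma_1)$, and hence the XOR-prediction advantage, by $\epsilon^{k/2} + \negl(\lambda)$, which is exactly the claim. The main obstacle, such as it is, is entirely in the XOR lemma for EFI pairs (and ultimately \Cref{thm:main-informal}); the present step is a definitional translation, with the only mild care needed being to check that the distribution produced by the predicate-XOR game coincides syntactically with the mixture defining the XOR of an EFI pair, so that the advantages match on the nose rather than up to constants.
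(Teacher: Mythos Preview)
Your proposal is correct and follows essentially the same route as the paper: identify the predicate-prediction problem with a state-distinguishing problem (so that prediction advantage equals distinguishing advantage), check that the two notions of $k$-fold XOR coincide, and then invoke the distinguishing XOR lemma. The paper's formal proof cites the commitment XOR lemma (\Cref{lemma:xor-commitment}) rather than the EFI corollary and explicitly pads $\rho_\pm$ with identity to handle the subnormalized case $\Tr(\rho_+) < 1$, but for the density-matrix formulation you are working with this is unnecessary and the argument is the same.
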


To see a circuit lower bound application of this, we can naturally define ``projection complexity classes'', a quantum-input analogue of decision complexity classes.
Then we have that for any such class $\mathsf C$ that is closed under composition with a polynomial fan-in XOR (like the analogue for $\mathsf{PSPACE}$), $\mathsf C$ is strongly hard-on-average against $\mathsf{BQP}$ machines if and only if $\mathsf C$ is weakly hard-on-average against them.

\paragraph{Quantum money.} A public-key quantum money scheme consists of quantum states (called \emph{quantum banknotes}) that can be publicly verified by anyone with the public-key, yet remain computationally infeasible to clone. A major goal of quantum cryptography research has been to construct \emph{public-key} quantum money schemes with security based on well-understood assumptions. Aaronson and Christiano~\cite{aaronson2012quantum} proved a per-key amplification for a special class of schemes called projective money schemes, and asked whether strong hardness amplification is possible for quantum money schemes.
We prove a general amplification that applies to \emph{any} public-key quantum money scheme:
\begin{corollary}[Hardness amplification for quantum money]
\label{thm:money-informal}
	Public-key quantum money schemes satisfying weak uncloneability exist, if and only if there exists a public-key quantum money scheme (satisfying strong unclonability).
\end{corollary}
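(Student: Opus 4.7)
The plan is to cast the strong-uncloneability game of public-key quantum money as a 3-message protocol fitting \Cref{thm:main-informal} and then apply parallel repetition. The converse direction (strong uncloneability implies weak) is immediate, so I focus on the weak-to-strong implication.

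First, given a scheme $(\KeyGen, \Mint, \Ver)$ with only weak uncloneability, I formulate its cloning game as a 3-message protocol $\proto$ between an efficient challenger and an efficient adversary: the adversary sends a trivial first message; the challenger samples $(pk, sk) \gets \KeyGen(1^\lambda)$ together with a banknote $\ket{\$} \gets \Mint(sk)$ and transmits $(pk, \ket{\$})$; the adversary responds with a bipartite register $(R_1, R_2)$; the challenger applies $\Ver(pk, \cdot)$ to each of $R_1$ and $R_2$ and accepts iff both verifications succeed. By hypothesis, no efficient adversary wins $\proto$ with probability exceeding $\gamma = 1 - 1/\poly(\lambda)$, so $\proto$ is $\gamma$-computationally secure in the sense required by \Cref{thm:main-informal}.

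Next, define the amplified scheme $(\KeyGen^{(k)}, \Mint^{(k)}, \Ver^{(k)})$ in the natural way: $\KeyGen^{(k)}$ runs $\KeyGen$ independently $k$ times to produce $\{(pk_i, sk_i)\}_{i \in [k]}$; a banknote is $\ket{\$_1} \ot \cdots \ot \ket{\$_k}$ with each $\ket{\$_i} \gets \Mint(sk_i)$; and $\Ver^{(k)}$ applies each $\Ver(pk_i, \cdot)$ in parallel and accepts iff all $k$ component verifications accept. Correctness of the amplified scheme follows from a union bound on the component correctness error. Now observe that the strong uncloneability game for $(\KeyGen^{(k)}, \Mint^{(k)}, \Ver^{(k)})$ is \emph{exactly} $\proto^{\otimes k}$: a winning clone attack must produce $2k$ registers that, paired by coordinate and verified under the corresponding $pk_i$, all accept, which is the defining winning condition of the $k$-fold parallel repetition since the subkeys and banknotes are sampled independently across coordinates.

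Applying \Cref{thm:main-informal} to $\proto$ then yields that $\proto^{\otimes k}$ is $(\gamma^k + \negl(\lambda))$-computationally secure. Choosing $k$ to be a sufficiently large polynomial in $\lambda$ (relative to the $1/\poly(\lambda)$ slack in $\gamma$) makes $\gamma^k$ negligible, and hence the amplified scheme satisfies strong uncloneability. The only conceptual check --- rather than a genuine obstacle, since \Cref{thm:main-informal} handles arbitrary efficient challengers and entangled adversary responses --- is that the cloning experiment really fits the 3-message template with an efficient challenger; this is immediate from the syntactic definition of a public-key quantum money scheme, where $\KeyGen$, $\Mint$ and $\Ver$ are all efficient quantum algorithms and the adversary's bipartite output may be arbitrarily entangled without affecting the verification circuit.
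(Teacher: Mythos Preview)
Your proposal is correct and follows essentially the same approach as the paper: formulate the counterfeiting experiment as a (2- or 3-)message protocol with an efficient challenger, observe that the security game of the coordinate-wise repeated scheme coincides with the parallel repetition of that protocol, and invoke \Cref{thm:main-informal}. The paper's treatment in \Cref{subsec:money} differs only in cosmetic details (it keeps track of serial numbers and phrases the game as 2-message rather than 3-message with a trivial first round).
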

Similar to amplifying commitments, this also follows directly from the observation that the security of a public-key quantum money scheme can be formulated in terms of a $2$-message security game, thus it immediately generalizes to e.g.\ quantum lightning and private-key quantum money.
\ifacmart\else We describe this in detail in \Cref{subsec:money}.\fi

\ifstoc\else
\paragraph{Amplification of post-quantum security.}
We remark that, if the original protocol is \emph{classical}, then the repeated protocol is also classical. Hence \Cref{thm:main-informal} also implies a parallel repetition theorem for general 3-message \emph{post-quantum} protocols; this was not previously known.
\fi

\subsection{Barrier for parallel repetition beyond \texorpdfstring{$3$}{3}-message protocols}
We also show that our 3-message parallel repetition theorem (\Cref{thm:main-informal}) cannot extend to $4$-message protocols under reasonable cryptographic assumptions, even if we are restricted to the post-quantum setting.
This is a (post-)quantum analogue of the classical result by Bellare, Impagliazzo and Naor~\cite[Section 3.3]{bellare1997does}%
\footnote{
    This is essentially the same but one subtle difference is that our counterexample does not require setup.
    Using setup in a counterexample is arguably problematic as pointed out in \cite[Section 2.1]{PW07}.
    Note that (post-quantum) non-interactive non-malleable commitments with setup can be instantiated from much weaker assumptions, e.g.\ using a non-malleable encryption or in the (quantum) random oracle model as was done in \cite{bellare1997does}.
}.

\begin{theorem}[Impossibility of parallel repetition, informal]
    If there is a post-quantum $c$-message concurrent-secure many-to-many non-malleable commitment scheme, then for every polynomial $k$ there is a $2c$-message post-quantum interactive protocol such that the security of a $k$-fold repetition of the protocol does not decrease compared to the original protocol.
\end{theorem}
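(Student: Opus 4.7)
The plan is to transpose the classical Bellare--Impagliazzo--Naor counterexample~\cite{bellare1997does} to the post-quantum setting, using the assumed $c$-message post-quantum concurrent-secure many-to-many non-malleable commitment (NMC) scheme in place of the classical one.

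First, I would construct a $2c$-message protocol $\pi$ consisting of two (sequential or interleaved) executions of the NMC. Roughly, in the first $c$-message phase the challenger acts as committer to a uniformly random secret $s$ with the adversary as receiver, and in the second $c$-message phase the roles are swapped so that the adversary commits to some value $s'$. The challenger accepts iff $s' = s$, where $s'$ is determined by the statistical binding property of the NMC (evaluated in unbounded time for the formal acceptance predicate). For single-session security, I would argue that any efficient quantum adversary winning with probability substantially above $1/2$ must either break hiding of the challenger's commitment (thereby learning $s$ in advance), break binding of its own commitment, or carry out a mauling attack against the NMC (correlating $s'$ with $s$); each possibility is ruled out by the post-quantum security of the NMC.

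Next, I would exhibit an efficient quantum adversary $\cA^*$ against $\pi^{\otimes k}$ whose success probability matches (up to negligible additive error) the single-session success probability rather than decaying like $(1/2)^k$. The attack follows the BIN template: $\cA^*$ runs all $k$ sessions concurrently and interleaves their messages so as to correlate its second-phase commitments across sessions via a strategy that is syntactically not a single-session MIM transcript and therefore is not forbidden by many-to-many non-malleability. Concretely, the adversary would delay all of its second-phase commitments across the $k$ sessions until a carefully chosen schedule lets it align them with the (still hidden) challenger secrets in a coordinated way that wins simultaneously in all sessions with probability close to the single-session value.

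The main obstacle is twofold. First, $\pi$ and the attack must be designed in concert so that non-malleability \emph{simultaneously} rules out the single-session attack yet does not constrain the multi-session attack --- this subtle balance is the heart of any BIN-style counterexample, and one has to choose the precise schedule of interleaving and the acceptance predicate so that the mauling relation witnessed by $\cA^*$ across sessions is trivial in the sense of the many-to-many non-malleability game. Second, lifting to the post-quantum setting requires that $\cA^*$ be implementable by a polynomial-size quantum circuit without cloning challenger messages or rewinding any sub-session, and that the single-session reduction use the NMC only in ways consistent with its post-quantum concurrent non-malleability. I expect both difficulties to be surmountable by faithfully adapting the BIN construction and invoking the assumed post-quantum NMC in a black-box manner.
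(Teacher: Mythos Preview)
Your high-level plan (transpose the BIN counterexample using a post-quantum NMC) is right, but the concrete protocol you propose and, more importantly, the $k$-fold attack, do not work as described. In your protocol the challenger commits to a single secret $s$ and accepts iff the adversary's committed value $s'$ equals $s$. In the $k$-fold game the only ``coordinated'' thing a forwarding adversary can do is set $s'_i$ to be the commitment of some other session $\sigma(i)$; the adversary then wins only when $s_i = s_{\sigma(i)}$ for all $i$, which for independent uniform bits has probability at most $2^{-\lceil k/2\rceil}$. So your attack degrades with $k$ rather than staying at $1/2$, and your vague ``carefully chosen schedule'' does not repair this: no amount of scheduling lets a single forwarded commitment encode the right bit in each coordinate.

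The missing idea, which the paper uses, is twofold. First, the protocol itself is allowed to depend on $k$: the challenger commits (under $k-1$ fresh tags) to a single random bit $b$, receives $k-1$ commitments from the adversary, and accepts iff the XOR of the adversary's $k-1$ revealed bits equals $1-b$ (and all tags are fresh). Second, the $k$-fold attack is simply \emph{forwarding}: to challenger $i$, the adversary relays the $k-1$ commitments produced by challengers $j\neq i$ (the interleaving schedule is chosen precisely so that each challenger sends its commit-phase message before expecting the adversary's). Then the XOR seen by challenger $i$ is $\bigoplus_{j\neq i} b_j = b_i \oplus B$ where $B=\bigoplus_j b_j$, so \emph{all} sessions accept simultaneously whenever $B=1$, which happens with probability exactly $1/2$ regardless of $k$. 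Single-session soundness at $1/2+\negl$ follows directly from many-to-many non-malleability: an adversary beating $1/2$ would, as a man-in-the-middle receiving $k-1$ left commitments to $b$, produce $k-1$ right commitments whose XOR is biased toward $1-b$, distinguishing $\mim(0^{k-1})$ from $\mim(1^{k-1})$. Your second worry about quantum cloning/rewinding is a non-issue here: the forwarding attack is classical and straight-line.
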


For the special case of non-interactive commitments ($c = 2$), we would get a $4$-message impossibility.
We note that while there are no known post-quantum secure non-interactive non-malleable commitments, ``pre-quantum'' non-interactive non-malleable commitments can be constructed from various subexponential hardness assumptions \cite{KS17,BL18,GKLW21}, and so we view this assumption as plausible.
Note that a weaker post-quantum \emph{one-to-one} secure constant-round non-malleable commitment scheme is known to exist assuming post-quantum one-way functions \cite{liang2022new}, and this suffices for a special case of $k = 2$.

We note that classically, stronger impossibilities are known: there is a 4-message protocol whose $k$-fold computational security cannot be shown to decrease with black-box reductions for any polynomial $k$ \cite[Section 3.4]{bellare1997does}, and there is an 8-message protocol whose $k$-fold computational security is at least constant, regardless of proof techniques \cite{PW07}.
These might also generalize to the post-quantum setting, assuming strong but reasonable assumptions like post-quantum CCA-secure non-interactive commitments and post-quantum constant-round universal arguments.
We consider this sufficient evidence to conjecture that parallel repetition does not amplify 4-message (post-)quantum protocols, but we leave improving the impossibility for future work.

\subsection{Round compression for quantum argument systems}
An \emph{interactive argument} is a form of interactive proof where the completeness and soundness conditions hold with respect to computationally efficient provers.
An important complexity measure of interactive arguments (and interactive proofs in general) is the round complexity. One surprising result in the theory of quantum interactive proofs, due to 
Kitaev and Watrous~\cite{kitaev2000parallelization}, is that all (single-prover) quantum interactive \emph{proof} systems (where soundness holds against computationally unbounded adversaries) can be compressed to just three messages.  We show the analogous statement for quantum interactive arguments via the round compression technique of Kempe, Kobayashi, Matsumoto, and Vidick~\cite{kempe2007using}.  Our technical contribution is to make the reduction efficient.

\begin{theorem}[Round compression, informal]
\label{thm:compression-informal}
    Let $L$ be a language with an $m$-message quantum interactive argument with completeness $1 - c$ and soundness error $s = 1 - \delta$ for $m \ge 3$.
    Then there exists a $3$-message quantum interactive argument for $L$ with completeness $1 - 2c/(m - 1)$ and soundness error $1 - \delta/(m - 1)^4$.
    The verifier and communication complexity incur only
 a $\poly(m)$ multiplicative overhead.
\end{theorem}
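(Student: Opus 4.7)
The plan is to follow the information-theoretic round reduction of Kitaev--Watrous~\cite{kitaev2000parallelization} and Kempe--Kobayashi--Matsumoto--Vidick~\cite{kempe2007using} that compresses any $m$-message quantum interactive protocol to a $3$-message one, and to upgrade both the forward reduction (honest prover) and the inverse reduction (soundness extractor) to be polynomial-time, which is what distinguishes the argument setting from the proof setting.

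First, I would describe the compressed protocol. Let $|\phi_t\rangle$ denote the joint prover--verifier state of the original $m$-message protocol after message $t$ has been exchanged. The compressed protocol has three messages: (i) the prover sends a register purportedly holding a snapshot $|\phi_t\rangle$; (ii) the verifier samples a uniformly random round $t \in \{1, \ldots, m-1\}$ and sends it; (iii) the prover sends a ``witness'' register that, together with the snapshot from step (i), lets the verifier perform a single-step local check that the transition $|\phi_{t-1}\rangle \to |\phi_t\rangle$ is consistent with one step of the original protocol (or that $|\phi_{m-1}\rangle$ would be accepted by the original verifier). Each of these actions is a single step of the original protocol on registers of size $\poly(m)$ times the original, so the compressed verifier and communication complexity grow by only a $\poly(m)$ factor.

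Completeness follows by having the honest prover run the original protocol coherently, recording snapshots in ancilla registers and sending them on demand; the union bound over the possible random choices of $t$ (and over a check of the verifier's step versus the prover's step at round $t$) yields completeness error at most $2c/(m-1)$.

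The main obstacle is the soundness reduction. Given an efficient cheating $3$-message prover $P^*$ with success probability exceeding $1-\delta/(m-1)^4$, I would construct an efficient cheating $m$-message prover $P'$ with success probability exceeding $1-\delta$. The prover $P'$ simulates $P^*$'s first message internally and maintains $P^*$'s residual state across rounds of the external protocol: at round $t$, $P'$ feeds $P^*$ the challenge ``round $t$'' internally, uses its answer to produce the $t$-th message of the original protocol, and then applies an efficient state-repair procedure (based on the measure-and-repair and quantum singular value transform techniques from~\cite{chiesa2022post} and developed further in this paper) to restore $P^*$'s state before proceeding to round $t+1$. The $(m-1)^4$ soundness loss absorbs the $\Theta(m)$ factor from the verifier's random round selection, the $\Theta(m)$ sequential queries to $P^*$, and the amplitude-squared loss inherent to the Jordan-style repair step. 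The hard part is ensuring that these sequential queries do not compound destructively, which is exactly what the state-repair machinery achieves at the cost of this polynomial overhead in $m$.
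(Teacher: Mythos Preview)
Your proposal diverges substantially from the paper's approach, and the soundness reduction you sketch has a real gap.

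First, the protocol. The paper does \emph{not} use a single-shot compression where the verifier samples a random round $t$ and checks one transition. Instead it follows the iterative round-halving of \cite{kempe2007using}: given a $(2r+1)$-message protocol, the prover sends only the verifier's register at the \emph{midpoint} (round $r/2$); the verifier flips one coin and either plays the original protocol \emph{forward} from the midpoint (checking the final accept projector) or \emph{backward} from the midpoint (checking that the verifier's register returns to $\ket{0}$). This is repeated $\log(m-1)$ times to reach three messages. Your description---``the prover sends a snapshot $\ket{\phi_t}$'' before $t$ is even chosen---is not well-formed as stated, and is in any case a different protocol.

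Second, and more importantly, the soundness reduction in the paper needs \emph{none} of the state-repair or QSVT machinery you invoke. At each halving step the reduction is elementary: given a compressed adversary succeeding with probability $1-\epsilon$, the uncompressed adversary (i) internally simulates the compressed adversary together with a \emph{fake} verifier running the backward branch, (ii) measures the fake verifier's register in $\proj{0}$ (succeeding with probability $\ge 1-2\epsilon$, after which that register can be discarded), and then (iii) plays the real $(2r+1)$-message protocol by running the backward-branch unitaries in reverse up to the midpoint and the forward-branch unitaries thereafter. A squared-Bures-distance triangle inequality gives success probability $\ge 1-16\epsilon$. Iterating $\lceil\log(m-1)\rceil$ times yields the $(m-1)^4 = 16^{\log(m-1)}$ loss.

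The gap in your approach is the interface with the external verifier. Your $P'$ wants to query $P^*$ on ``round $t$'' for each $t$ in sequence, but $P^*$'s first message in your protocol contains the \emph{verifier's} private register at various rounds, and $P^*$'s consistency witness at round $t$ is entangled with that register. In the real $m$-message game, $P'$ never holds the external verifier's register, so it cannot set up the state $P^*$ expects, nor cleanly extract from $P^*$'s response the prover message to forward outward. State repair addresses reusing $P^*$ across queries; it does not address this register-ownership mismatch. The paper's forward/backward trick sidesteps the issue entirely: the backward simulation collapses the fake verifier's register to $\ket{0}$, which can then be discarded and replaced by the real verifier.
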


To counteract the worse soundness error,
we can again apply \Cref{thm:main-informal} to the compressed protocol to obtain a $3$-message interactive argument for $L$ with negligible soundness error.
Combining these two results, we obtain a general round-preserving soundness amplification theorem for quantum arguments:

\begin{corollary}[Round-preserving amplification for arguments]
\label{thm:soundness-amp-informal}
    Let $L$ be a language with an $m$-message quantum interactive argument with completeness $1 - \negl$ (resp., $1$) and soundness error $1 - 1/\poly$.
    Then there exists a $\min\{3, m\}$-message quantum interactive argument for $L$ with completeness $1 - \negl$ (resp., $1$), negligible soundness error, and similar complexity.
\end{corollary}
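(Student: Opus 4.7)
The plan is to chain together Theorem~\ref{thm:compression-informal} and Theorem~\ref{thm:main-informal}, then verify that both completeness and complexity are preserved.

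First, I handle the case $m \geq 3$. I apply the round compression theorem to the given $m$-message argument, obtaining a $3$-message quantum argument $\proto_3$ for $L$ with completeness error $2c/(m-1)$ and soundness error $s' \deq 1 - \delta/(m-1)^4$. Since $c$ is negligible (or $0$) and $m$ is polynomially bounded by the communication complexity, the new completeness error is negligible (or $0$), and $s' = 1 - 1/\poly(\lambda)$. The verifier size, prover size, and communication grow only by a $\poly(m)$ factor. For $m \leq 2$ this step is skipped, and I simply set $\proto_3 \deq \proto$; it already has $\min\{3,m\} = m$ messages, and (after trivial padding with empty messages) fits the three-message template needed to invoke Theorem~\ref{thm:main-informal}.

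Next, I apply parallel repetition. For any fixed $x \notin L$, the three-message security game naturally attached to $\proto_3$ has an efficient challenger (the verifier), considers efficient adversaries (malicious provers), and the acceptance probability is bounded by $s' = 1 - 1/\poly(\lambda)$. Theorem~\ref{thm:main-informal} then yields that $\proto_3^{\otimes k}$ is $\paren{s'^k + \negl(\lambda)}$-computationally secure. Choosing $k = \lambda \cdot (m-1)^4 / \delta$, which is polynomial, gives $s'^k \leq e^{-\lambda}$, so the soundness error of $\proto_3^{\otimes k}$ is negligible. Completeness is preserved because the honest prover in the repeated protocol simply runs the single-instance honest prover independently on each thread: a union bound gives total completeness error at most $k \cdot 2c/(m-1) = \negl(\lambda)$, and the parallel verifier accepts iff every thread accepts, so perfect completeness carries over exactly. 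The total complexity overhead is $\poly(m) \cdot \poly(k) = \poly(\lambda, m, 1/\delta)$, which is polynomial.

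The main conceptual step is to observe that computational soundness of an interactive argument, on each $x \notin L$, is precisely the computational security of the associated three-message game, and that parallel repetition of the argument coincides with parallel repetition of that game; both are true by construction. The main obstacle is really already absorbed into the two theorems being combined (round compression makes the compressed verifier efficient and preserves completeness in the efficient setting, and the parallel repetition theorem handles arbitrary quantum adversaries). The remaining book-keeping — the edge cases $m \in \{1,2\}$ where padding to three messages is required, and checking that a polynomial $k$ suffices to drive $s'^k$ below any negligible threshold — is routine.
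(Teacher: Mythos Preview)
Your proposal is correct and follows essentially the same approach as the paper: apply round compression (Theorem~\ref{thm:compression-informal}) to obtain a $3$-message argument with inverse-polynomial soundness gap, then apply the $3$-message parallel repetition theorem (Theorem~\ref{thm:main-informal}) with polynomially many repetitions to drive the soundness error to negligible, checking that completeness and complexity survive both steps. The paper treats this corollary as an immediate consequence of combining the two theorems and does not spell out more detail than you have.
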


\ifacmart\else We prove these formally in \Cref{sec:compression}.\fi
We remark that the crucial aspect of \Cref{thm:compression-informal} and \Cref{thm:soundness-amp-informal} is that they preserve the communication complexity and the verifier complexity of the original protocol.
(Indeed, a trivial round compression for argument systems that is not complexity-preserving can be obtained by having the prover forward its input to the verifier.)
We are not aware of any classical analogue of this round compression result.

\ifacmart We \else In \Cref{sec:zk}, we \fi note that similar techniques allow us to further compile any quantum argument into to a (quantum communication) $\Sigma$-protocol \cite{Marriott2005}, and thus starting from an honest-verifier zero knowledge protocol, we can get a $3$-message malicious-verifier zero knowledge protocol, albeit the soundness becomes worse.
We further discuss how to get back negligible soundness \ifacmart in the full version\else at the end of \Cref{sec:zk}\fi.

\begin{corollary}[Round compression of zero-knowledge protocols]
    For any language $L$ that admits an honest-verifier quantum statistical (resp.\ computational) zero-knowledge protocol and computational (resp.\ statistical) soundness, $L$ also admits a malicious-verifier public-coin statistical (resp.\ computational) zero knowledge protocol with 3 messages, and $1 - 1/\poly$ computational (resp.\ statistical) soundness, and similar complexity.
\end{corollary}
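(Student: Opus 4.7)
The plan is to chain together three ingredients: the round compression of \Cref{thm:compression-informal}, a conversion into public-coin $\Sigma$-protocol form as hinted at in the paragraph preceding the corollary, and a quantum rewinding argument in the style of Watrous to lift honest-verifier ZK to malicious-verifier ZK, while tracking the flavour (statistical vs.\ computational) of the ZK and soundness guarantees separately.

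First, I would apply the round compression to the given $m$-message HVZK protocol to obtain a $3$-message interactive argument of shape $(a,c,z)$ in which the verifier's single message $c$ is public coin, invoking the remark in the text that the same techniques yield a quantum-communication $\Sigma$-protocol à la Marriott--Watrous. The key auxiliary claim is that this compression preserves \emph{honest-verifier} zero knowledge: given the original HVZK simulator $\Sim$, one constructs a simulator $\Sim'$ for the compressed protocol that, on a uniformly random challenge $c$, runs $\Sim$ to obtain a simulated transcript and post-processes it into the compressed prover's first and last messages using the same Kitaev-history-style encoding that the honest compressed prover would use. Since the compressed verifier's acceptance predicate is a random local check on this encoded transcript, the output of $\Sim'$ is indistinguishable --- statistically or computationally, matching the flavour of $\Sim$ --- from a real interaction with the honest compressed prover.

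Second, given such a $3$-message public-coin HVZK protocol, I would apply Watrous's quantum rewinding lemma to obtain malicious-verifier ZK. For any (possibly non-uniform quantum) verifier $V^*$, the malicious-verifier simulator (i) samples $c$ uniformly from the challenge set $C$, (ii) runs $\Sim'$ on $c$ to obtain $(a,c,z)$, (iii) forwards $a$ to $V^*$ and reads $V^*$'s challenge $c^*$, and (iv) outputs $z$ if $c^* = c$, rewinding otherwise. Because the protocol is public coin and $\Sim'$'s first-message distribution is (close to) independent of the honest verifier's random tape, the probability that $c = c^*$ is essentially $1/|C|$, \emph{independent of $V^*$'s auxiliary state}. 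Watrous's oblivious amplitude amplification then produces an efficient simulator with negligible simulation error; statistical vs.\ computational closeness is inherited from $\Sim'$, and the $1-1/\poly$ soundness bound is exactly what \Cref{thm:compression-informal} delivers (the computational/statistical flavour of soundness transferring unchanged, since compression does not alter the power of the cheating prover's resources).

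The main obstacle is the first step: verifying that the compression of \Cref{thm:compression-informal}, which was stated for argument systems without any ZK guarantee, preserves HVZK together with public-coin structure. I expect this to follow by inspecting the construction --- the compressed prover's task reduces to preparing the history state of the original interaction and the compressed verifier performs a single random local check on it, so a simulator for the original transcript lifts to a simulator for the history state via the standard purification/Uhlmann arguments already developed elsewhere in the paper. The ``similar complexity'' claim then follows from combining the $\poly(m)$ overhead of compression with the polynomial overhead of Watrous rewinding.
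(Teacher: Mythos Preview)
Your overall strategy matches the paper's: (1) apply round compression while arguing that honest-verifier zero knowledge is preserved, (2) convert to a public-coin $\Sigma$-form, and (3) apply Watrous rewinding to upgrade honest-verifier ZK to malicious-verifier ZK. The flavour bookkeeping (statistical vs.\ computational for ZK and soundness separately) and the final rewinding step are handled the same way in both.

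The main difference is in how you picture the compression step and its HVZK preservation. You describe the compressed prover as ``preparing the history state of the original interaction'' with the compressed verifier performing ``a single random local check'', and you anticipate proving HVZK preservation via ``purification/Uhlmann arguments''. That is the Kitaev--Watrous one-shot picture, but it is not the compression the paper actually uses. The paper follows the iterative Kempe et al.\ round-halving: at each step the prover sends the verifier's \emph{midpoint} private state, the verifier flips a coin, and then runs the original protocol either forward or backward from that midpoint. The HVZK preservation argument is correspondingly more elementary than you expect: at each halving step, the compressed honest verifier's view at round $i$ is nothing more than a uniformly random bit $b$ together with the \emph{original} verifier's view at round $r+1+(-1)^b(i-1)$, so one simply invokes the original HVZK simulator at the appropriate round. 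The public-coin conversion admits the same one-line analysis. No history state, no Uhlmann machinery is needed; the ``main obstacle'' you flag dissolves into a structural observation once you look at the actual construction rather than the one-shot analogue.
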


\ifstoc\else
\subsection{Related works}

Prior works have derived quantum direct product theorems or quantum XOR lemmas in the query-efficient (or communication-efficient) setting \cite{ASW06,She11,Lee2013}.
Morally these are 2-message (post-quantum) parallel repetitions.
However, the query-efficient setting is usually weaker than our time-efficient setting and uses drastically different (non-algorithmic) techniques.
For the rest of the discussion we focus on time-efficient hardness amplification.

In \cite{bostanci2023unitary} a parallel repetition theorem for quantum canonical form commitments was proved, but it only handled classical side information and furthermore only achieved a polynomial rate of decay of the success probability in the repeated protocol.
As a consequence, we also improve their Theorem 6.8 such that any inverse polynomial fidelity (or any error that is inverse polynomially bounded away from 1) suffices.

In the classical setting, parallel repetition for three-message arguments (or ``weakly-verifiable puzzles'') was studied by \cite{bellare1997does,canetti2005hardness}, with the latter showing an optimal exponential soundness amplification.
Our three-message parallel repetition result also follows the high-level proof strategy of \cite{canetti2005hardness} while borrowing insights from proofs of XOR lemma \cite{yao82theory,levin1987one,goldreich2011yao}.
Recent works have observed that in some cases, the \cite{canetti2005hardness} amplification result can be applied essentially without modification in the quantum setting.
Radian and Sattath~\cite{radian2019semi} point out that \cite{canetti2005hardness} generalizes to handle 2-message \emph{post-quantum} (classical communication) protocols. %

Morimae and Yamakawa~\cite{morimae2022one} extend this argument further, adapting \cite{canetti2005hardness} to give a parallel repetition theorem for 2-message quantum protocols of the following special form:
	\begin{enumerate}
            \item[\ifacmart(0)\else0.\fi] Both parties a priori agree on a parameter $t$.
		\item The challenger generates a classical verification key $k$, then uses $k$ to generate $t$ copies of a quantum ``puzzle'' state $\ket{\mathsf{puz}}$, which it sends to the adversary.
		\item The adversary returns a classical answer $k'$.
		\item The challenger accepts or rejects based on $k, k'$.
	\end{enumerate}
They use this result to argue that \emph{weak} one-way state generators (OWSGs) imply OWSGs, analogous to Yao's amplification of one-way functions. Due to the restriction on the behavior of the challenger~--- essentially, that its secret state is classical~--- this result does not suffice for parallel repetition of general 2-message quantum protocols, and does not extend to 3-message protocols even with classical communication.
Furthermore, it always requires many copies of the adversary's auxiliary input whereas our reduction can be advice preserving for eigenstates.

In addition, neither commitments nor quantum money fall within the scope of their result. In the commitment case, this is because both messages in the security game are quantum, and furthermore a general quantum commitment does not have a classical verification key; indeed, the information required to verify the commitment is typically entangled with the state sent to the adversary. For quantum money, the issue is instead that the \cite{morimae2022one} reduction shows only that given an adversary for the parallel repetition of a $t$-copy protocol, we obtain an adversary for a single repetition of the corresponding $t'$-copy protocol for some $t' = t \cdot \poly(\lambda)$. This corresponds to giving the adversary multiple copies of the money state, which of course makes the cloning task trivial.

Our reductions share many techniques with prior works in quantum cryptographic reductions, especially in the area of quantum rewinding \cite{watrous2006zero,chia2021black,chiesa2022post,lombardi2022post}. Like the cited works, we make extensive use of Jordan's lemma and alternating sequences of projective measurements.

In recent work by Lombardi, Ma, and Spooner~\cite{lombardi2022post}, they achieved expected polynomial time quantum rewinding, in part by accelerating certain components of \cite{chiesa2022post} using the quantum singular value transform (QSVT). In this work, we also make use of the QSVT, but for a quite different purpose: coherent post-selection. Unlike in \cite{lombardi2022post}, we crucially rely on the ability of the QSVT to manipulate singular vectors while maintaining coherence between subspaces; see \Cref{sec:tech-twofold} for more details.

\fi

\newcommand{\printacks}{
We thank Scott Aaronson for bringing to our attention the open question of hardness amplification for quantum money raised by Aaronson and Christiano~\cite{aaronson2012quantum}.
We thank Ran Canetti for the discussions as well as suggesting the round compression idea.
We also thank Xiao Liang and Miranda Christ for the references on post-quantum non-malleability.
LQ is supported by DARPA under Agreement No.\ HR00112020023. JB and HY are supported by AFOSR award FA9550-21-1-0040, NSF CAREER award CCF-2144219, and the Sloan Foundation.
}

\ifanonymous\else\ifacmart\else
\paragraph{Acknowledgments.}
\printacks
\fi\fi

\section{Technical overview}
\label{sec:overview}

\subsection{2-message non-uniform parallel repetition}
\label{sec:tech-twofold}

In this section, we give an informal proof sketch for the special case of taking a $2$-fold parallel repetition of a $2$-message quantum protocol.
This special case is easier to understand and cannot be immediately handled by easy changes to \cite{canetti2005hardness}.
It turns out that the proof for this special case also contains most of the main ideas in the proof for the general non-uniform reduction.

We begin with some notation.
A challenger in a $2$-message protocol is identified with a pair $(V, P)$, for $V$ a unitary and $P$ a projector, and an adversary in a $2$-message protocol is identified with a pair $(U, \ket{\aux})$, for $U$ a unitary and $\ket{\aux}$ a quantum input.
There are three registers: $\reg A, \reg M, \reg C$, being the adversary's register, the message register, and the challenger's register respectively.
We can write the protocol as follows:
\begin{itemize}
    \item (Challenge) The challenger initializes both $\reg M, \reg C$ to $\ket{0}$, and applies the unitary $V$ to registers $\reg{MC}$.
    \item (Response) The adversary applies some unitary $U$ to registers $\reg{AM}$, where $\reg A$ initially contains some ``advice'' state $\ket{\aux}$.
    \item (Decision) The challenger applies a projective measurement $\{P, \Id - P\}$ to registers $\reg{MC}$, and accepts if and only if he gets outcome $P$.
\end{itemize}
Without loss of generality, we assume all operations are unitaries or projective measurements since we can expand the private registers $\reg{C}$ and $\reg{A}$ appropriately.  A $2$-fold parallel repetition of $(V, P)$ is simply $(V^{\otimes 2}, P^{\otimes 2})$, acting on registers $\reg M_1, \reg M_2, \reg C_1, \reg C_2$.
For $i \in \{1,2\}$, we write $V_i$ to denote the unitary that applies $V$ on registers $\reg M_i, \reg C_i$; $P_i$ to denote the projective measurement on registers $\reg M_i, \reg C_i$.%

Suppose $(V, P)$ has computational soundness $\epsilon + \negl$, and we would like to prove that $(V^{\otimes 2}, P^{\otimes 2})$ has computational soundness $\epsilon^2 + \negl$.
Assume for the sake of contradiction that there is a 2-fold adversary $(U,\ket{\aux})$ that achieves an inverse polynomial (for simplicity) advantage over $\epsilon^2$.
That is, the adversary is accepted with probability $\delta^2$, where $\delta - \epsilon$ is inverse polynomial.
Our goal is to construct an $1$-fold adversary that is accepted by the original challenger with probability close to $\delta$.

We first give a unified high level approach of the \emph{classical} proof for both tight parallel repetition \cite{canetti2005hardness} and the XOR lemma \cite{yao82theory} (or Levin's isolation lemma \cite{levin1987one,goldreich2011yao}).
Later we will extend this high level approach to the quantum setting.
The main idea behind all these proofs is similar, we construct a $1$-fold adversary by simulating a second challenger with a suitable challenge.  Consider the following two cases.
\begin{enumerate}[label=(\roman*)]
    \item\label{case:good_second_challenge}  There exists a fixed challenge $c_2$ such that running the $2$-fold adversary on $(c, c_2)$ outputs a response that is accepted by the first repetition with probability $\ge \delta$. 
    \item\label{case:all_second_challenges_bad} For every challenge in the second repetition, the adversary is accepted by the first repetition with probability $\le \delta$.
\end{enumerate}
If we are in case \ref{case:good_second_challenge}, then we can construct a non-uniform adversary by giving the $1$-fold adversary $c_2$ as advice.  On the other hand, if we are in case \ref{case:all_second_challenges_bad}, then the $2$-fold adversary is accepted by the second repetition with probability $\ge \delta$ whenever it breaks the first repetition.
To see why this is the case, let $G_1(c), G_2(c)$ be the events that the adversary is accepted by the first/second repetition on a random challenge $c$ respectively.
Then by Bayes' rule,%
\begin{align}
    \ifacmart
    \delta^2 &= \Pr[G_1 \land G_2] \nonumber\\&= \E_{c_1, c_2}[\Pr[G_1] \cdot \Pr[G_2 | G_1]] \nonumber\\
    \label{eq:coreidea}
    &\le \delta \cdot \E_{c_1, c_2}[\Pr[G_2 | G_1]],
    \else
    \label{eq:coreidea}
    \delta^2 = \Pr[G_1 \land G_2] = \E_{c_1, c_2}[\Pr[G_1] \cdot \Pr[G_2 | G_1]] \le \delta \cdot \E_{c_1, c_2}[\Pr[G_2 | G_1]],
    \fi
\end{align}
implying that $\E_{c_2}[\Pr[G_2 | G_1]] \geq \delta$.  Thus the algorithm for the $1$-fold adversary is to simulate the $2$-fold protocol, with a real challenger sampling $c_1$ for the first repetition and the challenge $c$ in the second repetition until the first repetition accepts, and then return the response to the second challenger.  

We now attempt to generalize this to the quantum setting. As a first attempt, a natural quantum analogue of case \ref{case:good_second_challenge} could be the condition
\begin{equation}
\label{eq:q-case1}
\exists \ket{m},~ \norm{P_1UV_1(\ket\aux_{\reg A} \ket{m}_{\reg M_2} \ket0_{\reg M_1\reg C_1\reg C_2})}^2 \ge \delta,
\end{equation}
which says that there is some message $\ket m$ we can insert into the second repetition so the adversary wins the first repetition with probability at least $\delta$. The reduction for this case is straightforward: put the real challenge in $\reg M_1$, run the adversary $U$, then output $\reg M_1$; this succeeds with probability $\delta$ by \cref{eq:q-case1}. We will see soon that case \ref{case:all_second_challenges_bad} requires a slightly different condition, but for now we will proceed with \cref{eq:q-case1} as stated.
\Cref{eq:coreidea} suggests the following natural reduction for case \ref{case:all_second_challenges_bad}:
\begin{enumerate}
    \item Initialize $\ket{0}_{\reg M_1 \reg C_1}$ and simulate the challenger in the first repetition by running $V_1$.
    \item Put the real challenge in $\reg M_2$.
    \item Run the 2-fold adversary $U$ on $\reg A \reg M_1 \reg M_2$.
    \item ``Post-select'' on the event that the challenger accepts in the first repetition (i.e., on $P_1$).
    \item Output $\reg M_2$ as response.
\end{enumerate}
Before the post-selection step, the state of the system is $UV_1V_2\ket\aux\ket0$. We know that $\norm{P_1P_2UV_1V_2\ket\aux\ket0}^2 \allowbreak\geq \delta^2$ by assumption (the adversary is accepted with probability $\geq \delta^2$), and that $\norm{P_1UV_1V_2\ket\aux\ket0}^2\allowbreak < \delta$ by the negation of \cref{eq:q-case1}.
Suppose that we are now able to post-select on $P_1$; i.e., to prepare the state $\frac{P_1UV_1V_2\ket\aux\ket0}{\norm{P_1UV_1V_2\ket\aux\ket0}}$. Then as in the classical case, we would be done, since that state achieves success probability
\begin{align}
    \ifacmart
    \norm{P_2 \cdot \frac{P_1UV_1V_2\ket\aux\ket0}{\norm{P_1UV_1V_2\ket\aux\ket0}}}^2
    &= \frac{\norm{P_1P_2UV_1V_2\ket\aux\ket0}^2}{\norm{P_1UV_1V_2\ket\aux\ket0}^2} \nonumber\\
    \label{eq:postselectionprobability}
    &\ge \frac{\delta^2}\delta \\
    &= \delta. \nonumber
    \else
    \label{eq:postselectionprobability}
    \norm{P_2 \cdot \frac{P_1UV_1V_2\ket\aux\ket0}{\norm{P_1UV_1V_2\ket\aux\ket0}}}^2 = \frac{\norm{P_1P_2UV_1V_2\ket\aux\ket0}^2}{\norm{P_1UV_1V_2\ket\aux\ket0}^2} \ge \frac{\delta^2}\delta = \delta.
    \fi
\end{align}
How do we perform post-selection? Classically, this is achieved by rejection sampling.

As a seasoned reader might expect at this point, na\"ive rejection sampling does not immediately generalize to the quantum setting. This is because measuring $P_1$ disturbs $\reg M_2$, and it is not in general possible to clone the state on $\reg M_2$; worse, it may be that this state is entangled with the challenger's private register $\reg C_2$.
Indeed, for canonical form commitment schemes, $\reg M_2$ and $\reg C_2$ are highly entangled, and the challenger will later check for the presence of entanglement.

\paragraph{Attempt: Alternating projectors.}
Classical rejection sampling can be thought of as a form of \emph{rewinding}.
Hence a natural first attempt is to try to apply recent quantum rewinding techniques \cite{watrous2006zero,chia2021black,chiesa2022post,lombardi2022post}.
Following these works, we can implement a form of post-selection without cloning by alternating $P_1$ (the first repetition accepting) with the projective measurement $Q_1 := (UV_1)\ketbra00_{\reg M_1\reg C_1}\allowbreak(UV_1)^\dagger$ (the first repetition being initialized correctly) until $P_1$ accepts.

There are a few issues with this attempt.
Alternating projector algorithms can be analyzed via the Jordan (singular value) decomposition of $P_1Q_1 = \sum_i\varsigma_i\ketbra{w_i}{v_i}$.
Before the post-selection step, the state is clearly in $Q_1$, and so it can be written as $UV_1V_2\ket\aux\ket0 = \sum_i \alpha_i\ket{v_i}$.
For simplicity assume for now that we are able to rotate all the singular vectors and the singular values are all non-zero, then the output state of the alternating projectors will be
\[\sum_{i : \varsigma > 0} \alpha_i \ket{w_i} \otimes  \ket{\tilde\varsigma_i},\]
where $\ket{\tilde\varsigma_i}$ is the alternating projection history register that only depends on the singular value $\varsigma_i$ (which may be subnormalized).
The presence of the history register is problematic since tracing it out amounts to measuring the singular value $\varsigma_i$.
Since this measurement is unlikely to commute with $P_2$, we cannot argue that the success probability is at least $\delta$ as above. To avoid this problem, we would need to uncompute the history register, which we do not know how to do.

Even if we ignore this issue, and assume we can somehow uncompute the history to obtain the state
\begin{equation*}
	\ket{\psi} = \sum_{i: \varsigma_i > 0} \alpha_i \ket{w_i},
\end{equation*}
we still would not be able to say that the adversary is accepted with high probability. Recall that our ``target'' state is
\begin{equation}
    \label{eq:postselecttarget}
    \frac{P_1UV_1V_2\ket\aux\ket0}{\norm{P_1UV_1V_2\ket\aux\ket0}} \approx \sum_i \frac{\varsigma_i}{\sqrt\delta} \alpha_i \ket{w_i}.
\end{equation}
The best bound we can get (via the triangle inequality and \cref{eq:postselectionprobability}) is
\[
  \norm{P_2 \ket{\psi}} > \sqrt{\delta} - \norm{P_2 \sum_{i: \varsigma_i > 0} \mparen{1 - \frac{\varsigma_i}{\sqrt\delta}} \alpha_i \ket{w_i}},
\]
which may be trivial (e.g. if $\alpha_i \approx 1$ for $\varsigma_i \ll \sqrt{\delta}$).
Note that this last term can be shown to be non-negative in the classical case, but this could fail quantumly due to the possibility of destructive interference with respect to $P_2$.
Therefore, we cannot hope to simply improve the bound on the probability without changing the state $\ket{\psi}$ itself.

\paragraph{Solution: QSVT.} To summarize, the alternating projectors approach suffers from two issues: \begin{enumerate*}[label=(\alph*)]
 	\item loss of coherence due to explicit computation of $\varsigma_i$, and
 	\item incorrect weighting of different singular vectors.
 \end{enumerate*}
To solve both of these issues, we make use of a more sophisticated quantum algorithmic tool, the \emph{quantum singular value transformation} (QSVT) \cite{gilyen2019quantum}. Roughly, the QSVT enables efficient, coherent transformations of the form
\begin{equation*}
	\sum_{i} \alpha_i \ket{v_i} \to \sum_{i} \alpha_i f(\varsigma_i) \ket{w_i}
\end{equation*}
for low-degree real polynomials $f$ with $|f(x)| \leq 1$ when $|x| \leq 1$.
We observe that our post-selection task corresponds to $f(\varsigma) = \varsigma/\sqrt{\delta}$.
Then Gilyen et al.\ \cite[Theorem 17]{gilyen2019quantum} show how to construct a low-degree function $g$ which does satisfy the boundedness conditions, and which approximates $\varsigma/\sqrt{\delta}$ on the range $[0,\sqrt{\delta}]$.
Applying the QSVT with respect to this $g$ achieves the necessary post-selection, \emph{provided the spectral norm (maximum singular value) of $P_1 Q_1$ is bounded by $\sqrt{\delta}$}. %
Furthermore, the reduction goes through as long as the approximation error is $\ll \delta - \epsilon$.

Now we want a promise that all of the singular values of $P_1Q_1$ are at most $\sqrt{\delta}$ in order to satisfy the necessary boundedness conditions.
To achieve this, we simply change the the condition for case \ref{case:all_second_challenges_bad} to be that the singular values of $P_1 Q_1$ are bounded by $\sqrt{\delta}$, and thus in this case we can safely apply QSVT to approximately post-select. However, we note that the negation of this condition is no longer \cref{eq:q-case1}, as $P_1Q_1$ might have a large singular value corresponding to a state that does not come from a state of the form $\ket{\aux}_{\reg{A}} \otimes \ket{m}_{\reg{M}_2}$.

Nevertheless, we can ``fix'' case 1 by taking advantage of non-uniformity.
Suppose that $P_1 Q_1$ has some singular value $\varsigma_i$ larger than $\sqrt{\delta}$, and let $\ket{v_i}_{\reg A \reg M_1 \reg C_1 \reg M_2}$ be a corresponding right singular vector. Since $\ket{v_i}$ is in $Q_1$, $U^{\dagger} \ket{v_i} = (V_1 \ket{0}_{\reg M_1 \reg C_1}) \otimes \ket{\psi}_{\reg A \reg M_2}$ for some advice state $\ket{\psi}$. Then
\begin{equation*}
\norm{P_1U(V_1\ket0_{\reg M_1\reg C_1}) \ket{\psi}_{\reg A \reg M_2}}^2 = \norm{P_1 \ket{v_i}}^2 = \varsigma_i^2 \ge \delta.
\end{equation*}
That is, in case \ref{case:good_second_challenge} the adversary $(U,\ket{\psi})$ achieves success probability $\delta$, which completes the proof in the non-uniform case. Note that, unlike in the classical case, $\ket\psi$ may be entangled across $\reg A$ and $\reg M_2$.

\paragraph{Extension to $k$-fold repetition.}
In the classical setting for general $k$, we have $k$ cases as follows. Let $G_i$ be the event that the adversary wins the $i$-th repetition of the protocol, and suppose that $\Pr[G_k] \geq \delta^k$. It is straightforward to generalize the above to see that there exists some $j \in [k]$ and $c_{j + 1}, ..., c_k$ such that
\begin{equation*}
	\Pr[G_j | \wedge_{i=1}^{j - 1} G_i, c_{j + 1}, ..., c_k] \geq \delta,
\end{equation*}
and we can follow the same rejection sampling strategy as above.

In the quantum setting, we similarly generalize the projector $Q_1$ from the 2-fold case as \[Q_{\leq j} \coloneqq U V_1 \ldots V_j \proj{0}_{\reg M_{\leq j} \reg C_{\leq j}} (U V_1 \ldots V_j)^{\dagger},\] and define $P_{\leq j} \coloneqq P_1 \ldots P_j$. By assumption, $\norm{P_{\leq k} Q_{\leq k} \ket{\aux}\ket{0}}^2 \geq \delta^k$, and so in particular the spectral norm of $P_{\leq k} Q_{\leq k}$ is at least $\sqrt{\delta^k}$. It follows that there is some $j \in [k]$ such that
	\begin{equation}
		\label{eq:spectral-norm-condition}
		\norm{P_{\leq j} Q_{\leq j}} \geq \sqrt{\delta^j} \text{ and } \norm{P_{<j} Q_{<j}} \leq \sqrt{\delta^{j-1}}~.
	\end{equation}
	Therefore, given as non-uniform advice a state $\ket{\psi}$ with $\norm{P_{\leq j} Q_{\leq j} \ket{\psi}}\allowbreak \geq \sqrt{\delta^j}$, by applying the QSVT with respect to $P_{<j} Q_{<j}$ as in case 2 above we obtain an adversary with success probability $\delta$. 

\subsection{Uniform reduction}

In the previous section, we made crucial use of non-uniformity to provide the adversary with an index $j$ satisfying \cref{eq:spectral-norm-condition} and a vector $\ket{\psi}$ with $\norm{P_{\leq j} Q_{\leq j} \ket{\psi}} \geq \sqrt{\delta^j}$. In this section, we will describe how to efficiently prepare $j,\ket{\psi}$ from (polynomially many copies of) the adversary's initial state $\ket{\aux}$.

We will need to start by relaxing \cref{eq:spectral-norm-condition}, as we cannot in general efficiently check the spectral norm of an operator. We address this by observing that our spectral norm condition for post-selection via the QSVT can be substantially weakened: it suffices for the \emph{input state} to have small ($\ll \delta^k$) amplitude on (right) singular vectors $\ket{v_i}$ of $P_{<j}Q_{<j}$ with singular value $\varsigma_i > \sqrt{\delta^{j-1}}$.

Our task then becomes, formally: find an index $j$ and state $\ket{\psi}$ such that
\begin{enumerate*}[label=(\roman*)]
	\item $\norm{P_{\leq j} Q_{\leq j} \ket{\psi}} \geq \sqrt{\delta^j}$, and
	\item writing $\ket{\psi} = \sum_{i} \alpha_i \ket{v_i}$ where $\sum_i \varsigma_i \ketbra{w_i}{v_i}$ is the singular value decomposition of $P_{< j} Q_{< j}$, we have $\sum_{i,\varsigma_i > \sqrt{\delta^{j-1}}} |\alpha_i|^2 \ll \delta^k$.
\end{enumerate*}
This is in fact a quantum analogue of a main algorithmic task in the preprocessing phase of \cite{canetti2005hardness}. In more detail, the analogous classical task is to find $j$ and a sequence of challenges $c_{j+1},\ldots,c_{k}$ such that, \begin{enumerate*}[label=(\roman*)]
 		\item after fixing challenges $c_{j+1},\ldots,c_{k}$ in repetitions $j+1,\ldots,k$, the residual probability of winning the first $j$ repetitions is at least $\delta^j$, and
 		\item \label{cond:first-j-bound} with probability $\gg 1 - \delta^k$ over $c_j$, after fixing $c_j,\ldots,c_{k}$ in repetitions $j,\ldots,k$, the probability of winning the first $j-1$ repetitions is at most $\delta^{j-1}$.
 	\end{enumerate*}

\paragraph{First attempt.}
Let $\ket{\psi_k} \coloneqq U V_1 \cdots V_k \ket\aux\ket0$. Recall that, by assumption, we have that $\norm{P_{\leq k} Q_{\leq k} \ket{\psi_k}} \geq \sqrt{\delta^k}$. For each $j$, let $\sum_i \varsigma_i^{(j)} \ketbra{w_i^{(j)}}{v_i^{(j)}}$ be the singular value decomposition of $P_{\leq j} Q_{\leq j}$.

Let us suppose for now that we have access to the binary projective ``singular value threshold'' measurement $\Pi^{(j)} = \sum_{i,\varsigma_i^{(j)} > \sqrt{\delta^{j}}}\allowbreak \proj{v_i^{(j)}}$, for each $j$.\footnote{Unlike in the classical setting, it is important here that we do not actually \emph{measure} the singular value, since this would cause too much disturbance.} We do not know how to realize this measurement efficiently, but it can be approximated in some sense \cite{gilyen2019quantum,chiesa2022post}. This will introduce a number of technical complications that we address later; for now, we assume access to the exact measurement. Observe that we can write condition \ref{cond:first-j-bound} equivalently as $\norm{\Pi^{(j-1)}\ket\psi}^2 \ll \delta^k$.

Our first attempt at a uniform reduction is as follows. We apply $(\Pi^{(k-1)}, I - \Pi^{(k-1)})$ to $t \gg 1/\delta^k$ copies of $\ket{\psi_k}$. If we ever see the outcome $\Pi^{(k-1)}$, the post-measurement state $\ket{\psi_{k-1}}$ is in $\Pi^{(k-1)}$, and so $\norm{P_{\leq k-1} Q_{\leq k-1} \ket\phi} > \sqrt{\delta^{k-1}}$, and we can then recurse on $\ket{\psi_{k-1}}$. Otherwise, since we never see the outcome $\Pi^{(k-1)}$, with high probability $\norm{\Pi^{(k-1)}\ket{\psi_k}}^2 \ll \delta^{k}$, and so we can output $j=k$ and $\ket{\psi} = \ket{\psi_k}$.
Finally, if we get to $\ket{\psi_1}$, we can simply output $j = 1$ and $\ket\psi = \ket{\psi_1}$.

Unfortunately, this approach only works for constant $k$. To see why, notice that to prepare a single copy of $\ket{\psi_{j-1}}$ we may need $1/\delta^k$ copies of $\ket{\psi_j}$.
Unlike in the classical setting, we cannot in general clone $\ket{\psi_j}$.
Hence the number of copies of $\ket{\psi_k}$ required (and the running time of the algorithm) scales as $\Omega(1/\delta^{k^2})$, which may be superpolynomial for $k = \omega(1)$.

\paragraph{Second attempt.}
To resolve this issue, we note that in order for the non-uniform reduction to work, it suffices to simply produce $j$ along with \emph{any} state in $\Pi^{(j)}$ with a small enough overlap with $\Pi^{(j - 1)}$, therefore in the case we measure $I - \Pi^{(j-1)}$, it suffices to recover a state from $\Pi^{(j)}$ instead of recovering exactly $\ket{\psi_j}$.
This is reminiscent of the ``state repair'' problem encountered in quantum rewinding \cite{chiesa2022post}; our algorithm will follow that template. In more detail, the reduction works as follows.
\begin{enumerate}
    \item \label[step]{step:overview-uniformmeasure} Measure the input state $\ket{\psi_k}_{\reg A}$ with $\Pi^{(k)}$. If it rejects, start over with a fresh copy of $\ket{\psi_k}$.
    \item \label[step]{step:overview-uniformloop} Repeat for $j = k, ..., 2$: \begin{enumerate}
        \item \label[step]{step:tech-est} Measure $\Pi^{(j - 1)}$ and $\Pi^{(j)}$ in an alternating fashion for up to $t \gg 1/\delta^k$ iterations.\footnote{Technically $\Pi^{(j - 1)}$ acts on an additional register $\reg C_j$. This is a minor point and does not really affect the algorithm nor the analysis. We can simply initialize all $\reg C_j$'s to 0 at the beginning and add them to $\reg A$ before aborting.} If $\Pi^{(j - 1)}$ ever accepts, go to the next iteration of the loop ($j-1$).
        \item \label[step]{step:tech-bail} Otherwise, keep performing alternating projections until $\Pi^{(j)}$ accepts, then output $j$ and $\reg A$ and abort.
    \end{enumerate}
    \item Output $j = 1$ and register $\reg A$.
\end{enumerate}
Using Jordan's lemma, and via similar reasoning to \cite{chiesa2022post}, it is possible to show that
\begin{enumerate*}[label=(\roman*)]
	\item because at the beginning of the $j$-th loop iteration, the state is in $\Pi^{(j)}$, the number of measurements performed in \cref{step:tech-bail} is $O(t)$ in expectation; and
	\item if we never see $\Pi^{(j-1)}$ in \cref{step:tech-est} then with high probability the state $\ket{\psi}$ output by the algorithm on termination satisfies $\norm{\Pi^{(j-1)} \ket{\psi}}^2 \ll \delta^k$.
\end{enumerate*}

\paragraph{Adapting to approximate POVMs.}
The algorithm described above is correct assuming access to the projectors $\Pi^{(j)}$.
In reality, we can only approximate them using (e.g.)\ Marriott--Watrous \cite{Marriott2005}. Furthermore, this approximate implementation is not a projection but a POVM; equivalently, it is a projection $\tilde{\Pi}^{(j)}$ acting on the register $\reg A$ and an auxiliary register $\reg W_j$ that is initially set to $\ket0$.

Following \cite{chiesa2022post}, the natural approach to extend the algorithm above to this case is to simply replace $\Pi^{(j-1)}$ and $\Pi^{(j)}$ measurements with their approximate counterparts, $\tilde{\Pi}^{(j-1)} \otimes \proj0_{\reg W_j}$ and $\tilde{\Pi}^{(j)}\otimes \proj0_{\reg W_{j-1}}$. The projection on the ancilla register for the other measurement aims to ensure its correct initialization.

This approach \emph{almost} works but for a subtle technical issue.
Even though $\reg W_{j-1}$ and $\reg W_{j}$ will be initialized to $\ket0$, after applying the first two projections in \cref{step:tech-est}, we no longer have any guarantees about the ancilla registers.
Therefore, even if we measure that $\tilde\Pi$ accepts, it does not imply that we have a state close to $\Pi$ since it could be that the ancilla registers were malformed.

As a starting point, let us first look at how well the previous algorithm works if we simply plug in $\tilde\Pi^{(j)}$'s (we omit the zero projector on the ancillas to keep the notations simple).
Since the ancillary issue only arises after we perform two projections $\tilde\Pi^{(j-1)}$ and $\tilde\Pi^{(j)}$, we observe:
\begin{enumerate}
    \item If $\tilde\Pi^{(j-1)}$ accepts in the first iteration, we must still (approximately) have a vector in $\Pi^{(j-1)}$ as the ancilla is initialized to zero at the beginning.
    \item Furthermore, the alternating projections can still estimate the singular value. If we, instead of going to $j - 1$ whenever $\Pi^{(j-1)}$ accepts, estimate the singular value and only declare we are in case $j$ when we are below some minuscule threshold, it turns out to still work. This is because as long as the threshold is small enough, when we are below the threshold, by gentle measurement, it must be the case that the auxiliaries are not too far from zero. Thus a small singular vector between $\tilde\Pi$'s is also a relatively small singular vector between $\Pi$'s.
    \item Now it remains to handle the last case where the first measurement rejects but the estimated singular value is still higher than the threshold. The final observation is that in fact the probability that we reach the last case is in fact bounded away from 1 for \emph{any} starting state: intuitively if the starting state has a large overlap with $\Pi^{(j-1)}$ then the first clause catches it with noticeable probability, otherwise the second clause catches it with noticeable probability. Therefore, when we reach the last case, we can simply recover \emph{any} state in $\Pi^{(j)}$ again so that we can restart from the beginning.
    Since the algorithm succeeds for any starting state with some probability, even if in each iteration the starting state is different, we will still eventually reach one of the two good cases with a sufficiently large number of trials.
\end{enumerate}

Leveraging these three observations, we solve this final issue by modifying the loop (\cref{step:overview-uniformloop}) with a more careful algorithm as follows:
\begin{enumerate}[noitemsep,label=(\alph*)]
    \item Repeat $t \gg 1/\tau$ times for some inverse polynomial threshold $\tau = \delta^{O(k)}$:
    \begin{enumerate}[label=(\roman*),ref=\theenumi(\roman*)]
	    \item \label[step]{step:uniform-bail} Initialize $\reg W_{j,j-1}$ to zero. Measure $(I - \tilde{\Pi}^{(j-1)}) \otimes \proj{0}_{\reg W_j}$. If the measurement rejects, proceed to the next iteration of the outer loop ($j-1$).
	    \item Otherwise, measure $\tilde{\Pi}^{(j)} \otimes \proj{0}_{\reg W_{j-1}},(I -\tilde{\Pi}^{(j-1)}) \otimes \proj{0}_{\reg W_j}$ in an alternating fashion for (say) $t^2$ iterations. Then, keep alternating until $\tilde{\Pi}^{(j)} \otimes \proj{0}_{\reg W_{j-1}}$ accepts. Let $\ket{\phi}_{\reg A,\reg W_{j}}\ket{0}_{\reg W_{j-1}}$ be the post-measurement state.
	    \item \label[step]{step:estimate-gamma} Use the outcomes of the alternating measurements to compute an estimate $\gamma$ of $\norm{((I -\tilde{\Pi}^{(j-1)}) \otimes \proj{0}_{\reg W_j})\ket\phi \ket{0}_{\reg W_{j-1}}}^2$. If $\gamma$ is above $1-\tau$, terminate the outer loop. Otherwise, proceed to the next iteration of the inner loop.
    \end{enumerate}
    \item Abort without any outputs.
\end{enumerate}
An additional key change is that we are now alternating $\tilde{\Pi}^{(j)} \otimes \proj{0}_{\reg W_{j-1}}$ and $(I - \tilde{\Pi}^{(j-1)}) \otimes \proj{0}_{\reg W_j}$.
We also use state repair again to recover a new state for the next iteration.
We remark that in order for the algorithm to work we also need to slightly shift the singular value threshold in each iteration, but we refer the readers to the full proof for these technical details.

We now formalize the observations above to analyze this new algorithm.
Note that if the \emph{first} application of $(I - \tilde{\Pi}^{(j-1)}) \otimes \proj{0}_{\reg W_j}$ \emph{rejects}, it must be that the post-measurement state is in $\tilde{\Pi}^{(j-1)}$ because $\reg W_{j}$ is initialized to $\ket{0}$; this is \emph{not true} for subsequent applications because the measurement may have rejected due to a malformed ancilla.

To argue correctness, we consider two cases. The first case is when, in some iteration of the inner loop, the estimate $\gamma$ is above the threshold $1-\tau$. In this case we must show that the post-measurement state $\rho = \Tr_{\reg W_j}(\proj{\phi})$ on $\reg A$ is (almost completely) in $\Pi^{(j)}$ and has very small overlap with $\Pi^{(j-1)}$. To see this, observe that by gentle measurement the state $\ket\phi$ is $\sqrt{\tau}$-close to a state of the form $\ket{\psi}_{\reg A} \ket{0}_{\reg W_{j}}$. The state $\ket{\psi}$ then has the property that $\ket{\psi}\ket{0}_{\reg W_j}\ket{0}_{\reg W_{j-1}}$ is $O(\sqrt{\tau})$-close to both $\tilde{\Pi}^{(j)}$ and $I - \tilde{\Pi}^{(j-1)}$. Since this latter state has ancillas initialized to zero, it follows that $\tilde{\Pi}^{(j)},\tilde{\Pi}^{(j-1)}$ approximate $\Pi^{(j)},\Pi^{(j-1)}$ on this state, and so $\ket{\psi}$ (which is close to $\rho$) is close to both $\Pi^{(j)}$ and $I-\Pi^{(j-1)}$.

Otherwise, if $\gamma$ is always below $1-\tau$, then in each iteration of the inner loop, we will terminate in \cref{step:uniform-bail} with probability at least $\tau$. It follows that, since $t \gg 1/\tau$, with overwhelming probability the loop will terminate in one of these two cases.

\begin{remark}[Advice preservation]
	\label{remark:advice-preservation} We note that, while our reduction preserves uniformity, it is not strictly advice-preserving (or constructive \cite{BBK22}), as it requires many copies of the adversary's advice state. %

This is inherent for any quantum reduction whose success probability ought to be higher than that of the adversary. Indeed, this is true even classically for randomized advice (and hence also for quantum advice via purification): given an adversary which succeeds with probability $\delta$ over the advice distribution, a black-box reduction given only one sample from the advice distribution cannot succeed with probability greater than $\delta$ in general.

We remark that the only reason for requiring many copies of the advice is in order to obtain a state in $\Pi^{(k)}$ in \cref{step:overview-uniformmeasure}. Thus, if the advice state is already in $\Pi^{(k)}$, one copy suffices.
\end{remark}

\subsection{Round compression}

We analyze the soundness of the round compression transformation of Kempe et al.~\cite{kempe2007using} when applied to argument systems.
At a high level, their transformation works by recursively converting an $(r+1)$-message protocol into an $(r/2 + 1)$-message compressed protocol.
In an honest execution, the prover begins by simulating the original $(r+1)$-message protocol until the $r/2$-th message, and sends the original uncompressed verifier's private registers to the challenger in the compressed protocol.  From there, the verifier flips a coin, deciding whether to continue by running the original protocol forwards or backwards in time.

If the verifier decides to execute the protocol backwards in time, the honest prover and verifier apply the inverse of uncompressed protocol, and at the end the verifier measures whether their private register returns to the state $\ket0$.
On the other hand, if the verifier decides to execute the protocol forwards, the honest prover and verifier execute the remainder of the uncompressed protocol and the verifier checks the same predicate that the uncompressed verifier does at the end of the uncompressed protocol.

Completeness is straightforward: the honest prover simply simulates the protocol using the original prover and verifier up to the midpoint, and then cooperate with the verifier to compute the protocol either in the forward or the backward direction.
To show (computational) soundness, we demonstrate an efficient reduction from an adversary for the compressed protocol to an adversary for the uncompressed protocol.  In particular, the adversary for the uncompressed protocol simulates an interaction between the compressed adversary and the compressed verifier, conditioned on the verifier executing the protocol backwards.  The adversary can then measure the simulated verifier's register, and conditioned that measurement accepting, the adversary now has a good initial state for the uncompressed protocol, and the state of the simulated verifier's register is $\ket{0}$, so it can be discarded.  

From there, the adversary sends their first message and continues by applying the inverse of the compressed adversary until round $r/2$.  After round $r/2$, they apply the same unitaries as the compressed adversary, conditioned on the compressed verifier executing the protocol forward in time.  Assuming that the compressed adversary was accepted with probability $(1-\epsilon)$, we show that the state after simulating either the forwards or backwards protocol is $(1 - 4\epsilon)$-close in squared Bures distance to a state that is accepted by the challenger in both cases.   Using the weak triangle inequality for the squared Bures distance, we find that the state of the verifier at the end of the protocol is $(1 - 16\epsilon)$-close in squared Bures distance to a state that is accepted by the challenger, implying that the challenger accepts with probability $1 - 16\epsilon$.  The use of squared Bures distance, instead of the more commonly-used trace distance, avoids a blowup from $\epsilon$ to $\sqrt{\epsilon}$ in this step.

This process halves the number of rounds at a cost of mapping $1 - \epsilon$ soundness to $1 - \epsilon/16$.  Iterating this protocol $\log m$ times, where $m$ is the number of messages in the original protocol, we arrive at a $3$-message protocol with soundness $1 - \epsilon/m^4$.

\ifstoc

\fi

\ifacmart
\begin{acks}
    \printacks
\end{acks}
\else
\section{Preliminaries}

\subsection{Quantum information}
A quantum \emph{register} $\reg{R}$ is a named finite-dimensional complex Hilbert space.  We write $L(\reg{R})$ to denote the set of linear transformations on $\reg{R}$ and $\states(\reg{R})$ to denote the set of density matrices on $\reg{R}$ (i.e.\ positive semi-definite and unit-trace operators).  For a vector $\ket{\psi} \in \reg{R}$, we write $\psi$ to denote the density matrix $\proj{\psi}$, and for vectors $\ket{\psi}, \ket{\phi} \in \reg{R}$, we write $\braket{\psi|\phi}$ to denote the inner product.  For a vector $\ket{\psi} \in \reg{R}$, we write $\norm{\ket{\psi}}$ to denote the standard norm over $\reg{R}$, i.e. $\braket{\psi | \psi}$.  We write $\Tr(\cdot)$ to denote the trace and $\Tr_{\reg{R}}$ to denote the partial trace over a register $\reg{R}$.

For a linear operator $X \in L(\reg{R})$, let $\opnorm{X}$ be its operator norm and $\norm{X}_{1} = \Tr(\sqrt{X^{\dagger}X})$ be its trace norm.  For two density matrices $\rho, \sigma \in \states(\reg{R})$, let $\td(\rho, \sigma) = \frac{1}{2}\norm{\rho - \sigma}_{1}$ be the trace distance between the two.  We sometimes write $X_\reg{R}$ to indicate that $X$ acts on $\reg{R}$.  All un-labeled operators act on all registers that do not have an operator acting on them, and if an operator is associated with specific registers, we drop the register subscripts for brevity.

A \emph{binary projective measurement} is a pair $(\Pi,I-\Pi)$, where $\Pi$ is an orthogonal projector. By convention we refer to the outcome corresponding to $\Pi$ as $1$ and $I-\Pi$ as $0$. Since a binary projective measurement is completely specified by $\Pi$, we often refer to such a measurement simply as $\Pi$.

\begin{definition}[Eigenspace projectors]
	\label{def:eigenspace-projectors}
	Let $H = \sum_j \lambda_j \proj{j}$ be a Hermitian matrix. For $\kappa \in \R$, we denote by $\HProj{H}{<\kappa} \coloneqq \sum_{j, \lambda_j < \kappa} \proj{j}$ the projector on to eigenspaces of $H$ with eigenvalue less than $\kappa$. $\HProj{H}{\geq\kappa}$ is defined similarly.
\end{definition}

\subsection{Fidelity and Bures distance}
An important tool used in the paper will be the quantum fidelity and the related squared Bures distance.  Given two quantum states $\rho, \sigma \in \states(\reg{R})$, the \emph{fidelity} between $\rho$ and $\sigma$ is given by
\begin{equation*}
    \fidelity(\rho, \sigma) = \Tr\left(\sqrt{\sqrt{\rho}\sigma \sqrt{\rho}}\right)^2.
\end{equation*}
This definition of fidelity might be sometimes be referred to as the ``squared'' fidelity.  The fidelity can be related to the trace distance by a pair of inequalities called the Fuchs-van de Graaf inequalities.
\begin{proposition}[Fuchs-van de Graaf inequalities] 
For all density matrices $\rho$ and $\sigma$ over the same Hilbert space, we have that
\begin{equation*}
    1 - \sqrt{\fidelity(\rho, \sigma)} \leq \td(\rho, \sigma) \leq \sqrt{1 - \fidelity(\rho, \sigma)}\,.
\end{equation*}
\end{proposition}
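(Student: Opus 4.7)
The plan is to prove the two inequalities separately, using Uhlmann's theorem for the upper bound and the variational (measurement-based) characterization of fidelity for the lower bound.

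For the upper bound $\td(\rho,\sigma) \le \sqrt{1 - \fidelity(\rho,\sigma)}$, I would first invoke Uhlmann's theorem to obtain purifications $\ket{\psi}, \ket{\phi}$ of $\rho, \sigma$ on a common extending Hilbert space such that $|\braket{\psi|\phi}|^2 = \fidelity(\rho,\sigma)$. For pure states, the trace distance is computed directly by diagonalizing the rank-$2$ operator $\proj{\psi} - \proj{\phi}$, giving $\td(\proj{\psi},\proj{\phi}) = \sqrt{1 - |\braket{\psi|\phi}|^2}$. Since the partial trace is a CPTP map, trace distance is contractive under it, and so $\td(\rho,\sigma) \le \td(\proj{\psi},\proj{\phi}) = \sqrt{1 - \fidelity(\rho,\sigma)}$.

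For the lower bound $1 - \sqrt{\fidelity(\rho,\sigma)} \le \td(\rho,\sigma)$, I would reduce to the classical case via a well-chosen measurement. Let $\{M_x\}$ be a POVM achieving the trace distance, i.e.\ one for which the induced distributions $p_x = \Tr(M_x \rho)$ and $q_x = \Tr(M_x \sigma)$ satisfy $\td(p,q) = \td(\rho,\sigma)$ (take the projective measurement onto the positive and negative parts of $\rho - \sigma$). By the Fuchs--Caves theorem, quantum fidelity is the infimum of classical fidelity over POVMs, so in particular $\sqrt{\fidelity(\rho,\sigma)} \le \sum_x \sqrt{p_x q_x}$. For the classical distributions, the algebraic identity
\begin{equation*}
    1 - \sum_x \sqrt{p_x q_x} = \tfrac{1}{2}\sum_x (\sqrt{p_x} - \sqrt{q_x})^2
\end{equation*}
combined with $|\sqrt{p_x} - \sqrt{q_x}|(\sqrt{p_x} + \sqrt{q_x}) = |p_x - q_x|$ and $\sqrt{p_x} + \sqrt{q_x} \ge |\sqrt{p_x} - \sqrt{q_x}|$ yields $1 - \sum_x \sqrt{p_x q_x} \le \tfrac{1}{2}\sum_x |p_x - q_x| = \td(p,q)$. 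Chaining these gives $1 - \sqrt{\fidelity(\rho,\sigma)} \le \td(p,q) = \td(\rho,\sigma)$, as desired.

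The main obstacle, insofar as there is one, is that the two directions rely on genuinely different tools: the upper bound uses the data-processing behavior of trace distance applied to the optimal joint purification from Uhlmann's theorem, while the lower bound uses that fidelity is non-increasing under measurement (equivalently, the Fuchs--Caves variational formula). Both facts are standard, so the proof is essentially a matter of quoting them in the right order; no new calculation beyond the scalar inequality above is required.
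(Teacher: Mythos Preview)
Your proof is correct and is the standard textbook argument. The paper itself does not prove this proposition; it is stated without proof in the preliminaries as a well-known fact, so there is no ``paper's proof'' to compare against. Your two-direction split---Uhlmann's theorem plus contractivity of trace distance for the upper bound, and the measurement (Fuchs--Caves) characterization of fidelity reduced to the classical Hellinger/total-variation inequality for the lower bound---is exactly the usual route (see, e.g., Nielsen--Chuang or Wilde).
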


It is well known that the fidelity is a useful quantity when examining the effects of performing a measurement on a quantum state.  Specifically, the gentle measurement lemma gives a bound on the trace distance a state can move after a measurement based on the probability of the measurement accepting.  

\begin{proposition}[Gentle measurement lemma~\cite{winter1999coding}]
	\label{lemma:gentle-measurement}
    Given a pure state $\rho$ and a projector $\Lambda$, let 
    \begin{equation*}
        \rho' = \frac{\Lambda \rho \Lambda}{\Tr(\Lambda \rho)}
    \end{equation*}
    be the post-measurement state.  Then $\fidelity(\rho', \rho) = \Tr(\Lambda \rho)$. It follows from the Fuchs-van de Graaf inequalities that $\td(\rho',\rho) \leq \sqrt{1 - \Tr(\Lambda \rho)}$.
\end{proposition}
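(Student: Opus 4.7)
The plan is to reduce the fidelity computation to an inner product calculation for pure states, and then invoke the Fuchs--van de Graaf inequality stated just above for the trace distance bound.

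First, I would write $\rho = \proj{\psi}$ for some unit vector $\ket{\psi}$. Since $\Lambda$ is a projector, $\Lambda^2 = \Lambda$, so $\Tr(\Lambda \rho) = \bra{\psi} \Lambda \ket{\psi} = \|\Lambda \ket{\psi}\|^2$. Setting $p = \Tr(\Lambda \rho)$ (and assuming $p > 0$ so that $\rho'$ is defined), the post-measurement state is itself pure: $\rho' = \proj{\psi'}$ where $\ket{\psi'} = \Lambda \ket{\psi} / \sqrt{p}$. This reduction to pure states is the key simplification that makes the fidelity tractable.

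Next, I would use the fact that the fidelity between two pure states reduces to the squared overlap, i.e.\ $\fidelity(\proj{\psi}, \proj{\psi'}) = |\braket{\psi | \psi'}|^2$. A direct computation gives
\[
\braket{\psi | \psi'} = \frac{\bra{\psi} \Lambda \ket{\psi}}{\sqrt{p}} = \frac{p}{\sqrt{p}} = \sqrt{p},
\]
so $\fidelity(\rho', \rho) = p = \Tr(\Lambda \rho)$, as claimed. The trace distance bound then follows immediately by plugging this identity into the right-hand Fuchs--van de Graaf inequality stated in the preceding proposition.

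There is no real obstacle here: the only mild subtlety is handling the degenerate case $\Tr(\Lambda \rho) = 0$, which I would dispose of by noting that $\rho'$ is not defined in that case so the statement is vacuous, and verifying that the derivation above is valid whenever $p > 0$.
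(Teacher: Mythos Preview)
Your proposal is correct and is the standard argument. The paper does not actually provide a proof of this proposition: it is stated as a preliminary result with a citation to~\cite{winter1999coding}, so there is nothing to compare against beyond noting that your direct pure-state computation is exactly the expected derivation.
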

Another way to view the trace of a projector applied to a state $\psi$ is as the fidelity with the closest state in the $+1$-eigenspace of the projector.  Formally we have the following lemma.

\begin{proposition}[Projector to max fidelity~{\cite[Theorem 9.2.2]{wilde2016quantum}}]
\label{lem:projector_to_fidelity}
	For any projector $\Pi$ and state $\proj{\psi}$,
	\begin{equation*}
		\Tr(\Pi \proj{\psi}) = \max_{\Tr(\Pi \sigma) = 1} \fidelity(\sigma, \proj{\psi})
    \end{equation*}
\end{proposition}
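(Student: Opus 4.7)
The plan is to evaluate both sides directly using the fact that $\proj\psi$ is pure. First, I would recall that for any mixed state $\sigma$ and pure state $\proj{\psi}$, the fidelity simplifies to $\fidelity(\sigma, \proj\psi) = \braket{\psi | \sigma | \psi}$. So the right-hand side becomes $\max_{\Tr(\Pi\sigma) = 1} \braket{\psi | \sigma | \psi}$, optimizing $\braket{\psi|\sigma|\psi}$ over density matrices $\sigma$ whose support lies in the image of $\Pi$.

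Next I would handle the trivial case $\Tr(\Pi\proj\psi) = 0$: here $\Pi\ket\psi = 0$, so for any $\sigma$ with $\Tr(\Pi\sigma) = 1$ we have $\sigma = \Pi\sigma\Pi$ and thus $\braket{\psi|\sigma|\psi} = \braket{\psi|\Pi\sigma\Pi|\psi} = 0$, matching the left-hand side. Otherwise, set $p \coloneqq \Tr(\Pi\proj\psi) = \norm{\Pi\ket\psi}^2 > 0$ and define the normalized vector $\ket\phi \coloneqq \Pi\ket\psi / \sqrt p$, which lies in the image of $\Pi$.

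For the upper bound, any feasible $\sigma$ satisfies $\sigma = \Pi\sigma\Pi$ (since $\Tr((I-\Pi)\sigma) = 0$ forces $\sigma$ to be supported in the image of $\Pi$), and hence
\[
\braket{\psi|\sigma|\psi} = \braket{\psi|\Pi\sigma\Pi|\psi} = p\braket{\phi|\sigma|\phi} \le p\,\opnorm{\sigma} \le p,
\]
using that density matrices have operator norm at most $1$. For the matching lower bound, plug in $\sigma = \proj\phi$: this is feasible since $\Pi\ket\phi = \ket\phi$ gives $\Tr(\Pi\proj\phi) = 1$, and
\[
\braket{\psi|\phi}\braket{\phi|\psi} = \abs{\tfrac{1}{\sqrt p}\braket{\psi|\Pi|\psi}}^2 = \tfrac{p^2}{p} = p.
\]
Combining these two inequalities gives the claim. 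I do not expect a real obstacle here; the only slightly delicate point is the implication $\Tr(\Pi\sigma) = 1 \Rightarrow \sigma = \Pi\sigma\Pi$, which follows from positivity of $\sigma$ and the fact that $I - \Pi$ is a projector (so $\sqrt{I-\Pi}\sqrt\sigma$ must vanish).
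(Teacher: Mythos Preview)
Your proof is correct. The paper does not actually prove this proposition; it is stated as a preliminary fact with a citation to Wilde's textbook, so there is no in-paper argument to compare against. Your direct computation---reducing to $\braket{\psi|\sigma|\psi}$ via the pure-state fidelity formula, then using $\Tr(\Pi\sigma)=1 \Rightarrow \sigma=\Pi\sigma\Pi$ for the upper bound and the explicit maximizer $\proj{\phi}$ with $\ket\phi=\Pi\ket\psi/\norm{\Pi\ket\psi}$ for the lower bound---is exactly the standard elementary argument and is complete as written.
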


The fidelity corresponds to a squared inner product between states.  The \emph{squared Bures distance} is a related distance measure between two states $\rho$ and $\sigma$, defined as
    \begin{equation*}
        \buresdist(\rho, \sigma) = 2(1 - \sqrt{\fidelity(\rho, \sigma)}).
    \end{equation*}
Being a distance measure, the Bures distance obeys a weak triangle inequality.
\begin{proposition}[{Weak triangle inequality for Bures distance \cite[Proposition 2.1]{chailloux2015parallel}}]
	\label{lem:weak_triangle_inequality}
    Let $\rho_1, \rho_2, \rho_3$ be three quantum states, then
    \begin{equation*}
        \buresdist(\rho_1, \rho_3) \leq 2 (\buresdist(\rho_1, \rho_2) + \buresdist(\rho_2, \rho_3)).
    \end{equation*}
\end{proposition}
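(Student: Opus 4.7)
The plan is to first establish the ordinary (non-squared) triangle inequality for $d_B(\rho,\sigma) \coloneqq \sqrt{\buresdist(\rho,\sigma)} = \sqrt{2(1 - \sqrt{\fidelity(\rho,\sigma)})}$, and then deduce the claimed bound by squaring and applying the elementary inequality $(a+b)^2 \leq 2(a^2+b^2)$.

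The key observation is that Uhlmann's theorem identifies $d_B(\rho,\sigma)$ with $\min \| \ket{\psi_\rho} - \ket{\psi_\sigma}\|$, where the minimum ranges over all purifications $\ket{\psi_\rho}, \ket{\psi_\sigma}$ of $\rho,\sigma$ in a common extended Hilbert space. Indeed, $\|\ket{\psi_\rho} - \ket{\psi_\sigma}\|^2 = 2 - 2\,\mathrm{Re}\braket{\psi_\rho | \psi_\sigma}$, and after absorbing a global phase this is minimized exactly when $|\braket{\psi_\rho|\psi_\sigma}|$ is maximized, which equals $\sqrt{\fidelity(\rho,\sigma)}$ by Uhlmann.

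With that in hand, I would bound $d_B(\rho_1,\rho_3)$ as follows. Fix any purification $\ket{\psi_2}$ of $\rho_2$ in a sufficiently large purifying register, and invoke Uhlmann twice to produce purifications $\ket{\psi_1},\ket{\psi_3}$ of $\rho_1,\rho_3$ in the same space satisfying $\braket{\psi_1|\psi_2} = \sqrt{\fidelity(\rho_1,\rho_2)}$ and $\braket{\psi_2|\psi_3} = \sqrt{\fidelity(\rho_2,\rho_3)}$ (absorbing global phases to keep the inner products real and non-negative). Then $\|\ket{\psi_1} - \ket{\psi_2}\| = d_B(\rho_1,\rho_2)$ and $\|\ket{\psi_2} - \ket{\psi_3}\| = d_B(\rho_2,\rho_3)$, while $d_B(\rho_1,\rho_3) \leq \|\ket{\psi_1} - \ket{\psi_3}\|$ because $\ket{\psi_1},\ket{\psi_3}$ is merely some, possibly suboptimal, pair of purifications of $\rho_1,\rho_3$. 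The Euclidean triangle inequality in the ambient Hilbert space then yields $d_B(\rho_1,\rho_3) \leq d_B(\rho_1,\rho_2) + d_B(\rho_2,\rho_3)$.

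Squaring both sides and applying $(a+b)^2 \leq 2(a^2+b^2)$ gives exactly $\buresdist(\rho_1,\rho_3) \leq 2(\buresdist(\rho_1,\rho_2) + \buresdist(\rho_2,\rho_3))$. The only subtle point is ensuring Uhlmann can be applied simultaneously to the two pairs $(\rho_1,\rho_2)$ and $(\rho_2,\rho_3)$ with the \emph{same} middle purification $\ket{\psi_2}$; this is not really an obstacle, since Uhlmann allows the purifying ancilla to be chosen arbitrarily large, so one simply fixes $\ket{\psi_2}$ first in a large enough space and realizes the optimal purifications of $\rho_1$ and $\rho_3$ in that same space. The looseness factor of $2$ in the weak triangle inequality is the unavoidable price of converting from the genuine metric $d_B$ back to its square $\buresdist$.
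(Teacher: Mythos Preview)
Your proof is correct and is the standard argument: the Bures metric $d_B = \sqrt{\buresdist}$ is a genuine metric (via Uhlmann's theorem and the Euclidean triangle inequality on purifications), and squaring the triangle inequality together with $(a+b)^2 \le 2(a^2+b^2)$ gives the stated bound. Note that the paper does not actually prove this proposition; it is simply cited from \cite[Proposition 2.1]{chailloux2015parallel}, so there is no ``paper's own proof'' to compare against --- your argument is exactly the kind of proof one would expect to find in the cited reference.
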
  

\subsection{Quantum interactive protocols}

A $(2r+1)$-message quantum interactive protocol $\proto$ is specified by a quantum interactive algorithm $C$ (the ``challenger''), which interacts with an arbitrary quantum interactive algorithm $A$ (the ``adversary'').
An execution of $\proto$ consists of $r$ \emph{interactions}, each one consisting of a (quantum) message from the adversary followed by a (quantum) message from the challenger; and then a \emph{decision} after the adversary sends their final message to the challenger, wherein the challenger either accepts or rejects.  

In the following we give a detailed description of an interactive protocol and introduce the notation for registers we use throughout the paper.  A visual representation of a quantum interactive protocol can be found in \Cref{fig:quantum_interactive_argument}.

The adversary in an interactive protocol starts with an initial private register $\reg{A}_0$, and an initial ($0$-dimensional) message register $\reg{R}_{-1}$, and the challenger starts with an initial private register $\reg{W}_0$.\footnote{In the case of a non-uniform adversary, $\reg{A}_0$ may be initialized to an advice state $\ket{\aux}$, in which case $A_0$ may be taken to be the identity. For uniform adversaries we can assume that $\reg{A}_0$ is in the all-zero state.} Each interaction in an interactive protocol proceeds (without loss of generality) as follows:
\begin{enumerate}
	\item The adversary applies a unitary $A_i$ on $\reg A_i \reg R_i $ to obtain a state on registers $\reg A_{i+1} \reg M_i$ and sends the message register $\reg M_i$ to the challenger.  
	\item The challenger then performs a unitary $C_i$ to registers $\reg M_i \reg W_i$, to obtain a state over registers $\reg R_i \reg W_{i+1}$, here $\reg R_i$ represents the response register.  The challenger sends $\reg R_i$ to the adversary.
\end{enumerate}
We use $\reg{M}_i$ and $\reg{R}_i$ to distinguish the messages sent by the adversary and challenger respectively.  

In the special case where $r = 1$ (i.e., a $3$-message interactive protocol), once the adversary prepares their initial state, both the adversary and challenger have a single unitary to apply.  As a result, we drop the round index and refer to the adversary and challenger unitaries as $A$ and $C$, and the adversary's initial state as $\ket{\aux}$.  

Without loss of generality, we assume that after the adversary has sent their final message, the challenger performs some $2$-outcome measurement, described by the POVM $\{D, \id - D\}$, on $\reg M_r \reg W_r$.  
If the measurement outputs $D$, then the challenger accepts and otherwise the challenger rejects.  In this work, we focus on the \emph{soundness} of an interactive protocol.
We say that an interactive protocol has soundness $s$ if for every polynomial time (in the security parameter $\lambda$) adversary, the adversary is accepted by the challenger with probability at most $s(\lambda) + \negl(\lambda)$.  

\begin{figure}
    \centering
    \tikzset{every picture/.style={line width=0.75pt}} %

\begin{tikzpicture}[x=0.75pt,y=0.75pt,yscale=-1,xscale=1]

\draw    (63,83) -- (63,115) ;
\draw [shift={(63,117)}, rotate = 270] [color={rgb, 255:red, 0; green, 0; blue, 0 }  ][line width=0.75]    (10.93,-3.29) .. controls (6.95,-1.4) and (3.31,-0.3) .. (0,0) .. controls (3.31,0.3) and (6.95,1.4) .. (10.93,3.29)   ;
\draw    (90,130) -- (209,130) ;
\draw [shift={(211,130)}, rotate = 180] [color={rgb, 255:red, 0; green, 0; blue, 0 }  ][line width=0.75]    (10.93,-3.29) .. controls (6.95,-1.4) and (3.31,-0.3) .. (0,0) .. controls (3.31,0.3) and (6.95,1.4) .. (10.93,3.29)   ;
\draw    (244,83) -- (244,115) ;
\draw [shift={(244,117)}, rotate = 270] [color={rgb, 255:red, 0; green, 0; blue, 0 }  ][line width=0.75]    (10.93,-3.29) .. controls (6.95,-1.4) and (3.31,-0.3) .. (0,0) .. controls (3.31,0.3) and (6.95,1.4) .. (10.93,3.29)   ;
\draw    (244,146) -- (244,178) ;
\draw [shift={(244,180)}, rotate = 270] [color={rgb, 255:red, 0; green, 0; blue, 0 }  ][line width=0.75]    (10.93,-3.29) .. controls (6.95,-1.4) and (3.31,-0.3) .. (0,0) .. controls (3.31,0.3) and (6.95,1.4) .. (10.93,3.29)   ;
\draw    (208.5,195) -- (92,195) ;
\draw [shift={(90,195)}, rotate = 360] [color={rgb, 255:red, 0; green, 0; blue, 0 }  ][line width=0.75]    (10.93,-3.29) .. controls (6.95,-1.4) and (3.31,-0.3) .. (0,0) .. controls (3.31,0.3) and (6.95,1.4) .. (10.93,3.29)   ;
\draw    (63,207) -- (63,239) ;
\draw [shift={(63,241)}, rotate = 270] [color={rgb, 255:red, 0; green, 0; blue, 0 }  ][line width=0.75]    (10.93,-3.29) .. controls (6.95,-1.4) and (3.31,-0.3) .. (0,0) .. controls (3.31,0.3) and (6.95,1.4) .. (10.93,3.29)   ;
\draw    (91,256) -- (210,256) ;
\draw [shift={(212,256)}, rotate = 180] [color={rgb, 255:red, 0; green, 0; blue, 0 }  ][line width=0.75]    (10.93,-3.29) .. controls (6.95,-1.4) and (3.31,-0.3) .. (0,0) .. controls (3.31,0.3) and (6.95,1.4) .. (10.93,3.29)   ;
\draw  [draw opacity=0][fill={rgb, 255:red, 0; green, 0; blue, 0 }  ,fill opacity=1 ] (109,287.5) .. controls (109,286.12) and (110.12,285) .. (111.5,285) .. controls (112.88,285) and (114,286.12) .. (114,287.5) .. controls (114,288.88) and (112.88,290) .. (111.5,290) .. controls (110.12,290) and (109,288.88) .. (109,287.5) -- cycle ;
\draw  [draw opacity=0][fill={rgb, 255:red, 0; green, 0; blue, 0 }  ,fill opacity=1 ] (188,287.5) .. controls (188,286.12) and (189.12,285) .. (190.5,285) .. controls (191.88,285) and (193,286.12) .. (193,287.5) .. controls (193,288.88) and (191.88,290) .. (190.5,290) .. controls (189.12,290) and (188,288.88) .. (188,287.5) -- cycle ;
\draw  [draw opacity=0][fill={rgb, 255:red, 0; green, 0; blue, 0 }  ,fill opacity=1 ] (148,287.5) .. controls (148,286.12) and (149.12,285) .. (150.5,285) .. controls (151.88,285) and (153,286.12) .. (153,287.5) .. controls (153,288.88) and (151.88,290) .. (150.5,290) .. controls (149.12,290) and (148,288.88) .. (148,287.5) -- cycle ;
\draw    (91,351) -- (210,351) ;
\draw [shift={(212,351)}, rotate = 180] [color={rgb, 255:red, 0; green, 0; blue, 0 }  ][line width=0.75]    (10.93,-3.29) .. controls (6.95,-1.4) and (3.31,-0.3) .. (0,0) .. controls (3.31,0.3) and (6.95,1.4) .. (10.93,3.29)   ;
\draw    (63,303) -- (63,335) ;
\draw [shift={(63,337)}, rotate = 270] [color={rgb, 255:red, 0; green, 0; blue, 0 }  ][line width=0.75]    (10.93,-3.29) .. controls (6.95,-1.4) and (3.31,-0.3) .. (0,0) .. controls (3.31,0.3) and (6.95,1.4) .. (10.93,3.29)   ;
\draw   (221,335) -- (274,335) -- (274,367.7) -- (221,367.7) -- cycle ;
\draw  [draw opacity=0] (228.63,356.88) .. controls (231.8,349.31) and (239.28,344) .. (248,344) .. controls (257.21,344) and (265.03,349.93) .. (267.86,358.17) -- (248,365) -- cycle ; \draw   (228.63,356.88) .. controls (231.8,349.31) and (239.28,344) .. (248,344) .. controls (257.21,344) and (265.03,349.93) .. (267.86,358.17) ;  
\draw    (247,358) -- (265,344) ;

\draw (24,29) node [anchor=north west][inner sep=0.75pt]   [align=left] {Adversary};
\draw (209,29) node [anchor=north west][inner sep=0.75pt]   [align=left] {Challenger};
\draw (38,58) node [anchor=north west][inner sep=0.75pt]   [align=left] {$\displaystyle \ket{\mathsf{aux}}_{\mathsf{A_{0}}}$};
\draw (39,121) node [anchor=north west][inner sep=0.75pt]   [align=left] {$\displaystyle \mathsf{A_{1} M_{0}}$};
\draw (8,91) node [anchor=north west][inner sep=0.75pt]   [align=left] {$\displaystyle A_{0}$};
\draw (136,109) node [anchor=north west][inner sep=0.75pt]   [align=left] {$\displaystyle \mathsf{M_{0}}$};
\draw (224,58) node [anchor=north west][inner sep=0.75pt]   [align=left] {$\displaystyle \ket{0}_{\mathsf{W_{0}}}$};
\draw (222,121) node [anchor=north west][inner sep=0.75pt]   [align=left] {$\displaystyle \mathsf{W_{0} M_{0}}$};
\draw (274,154) node [anchor=north west][inner sep=0.75pt]   [align=left] {$\displaystyle C_{0}$};
\draw (222,186) node [anchor=north west][inner sep=0.75pt]   [align=left] {$\displaystyle \mathsf{W_{1} R_{0}}$};
\draw (136,174) node [anchor=north west][inner sep=0.75pt]   [align=left] {$\displaystyle \mathsf{R_{0}}$};
\draw (39,185) node [anchor=north west][inner sep=0.75pt]   [align=left] {$\displaystyle \mathsf{A_{1} R_{0}}$};
\draw (8,214) node [anchor=north west][inner sep=0.75pt]   [align=left] {$\displaystyle A_{1}$};
\draw (39,246) node [anchor=north west][inner sep=0.75pt]   [align=left] {$\displaystyle \mathsf{A_{2} M_{1}}$};
\draw (137,235) node [anchor=north west][inner sep=0.75pt]   [align=left] {$\displaystyle \mathsf{M_{1}}$};
\draw (137,330) node [anchor=north west][inner sep=0.75pt]   [align=left] {$\displaystyle \mathsf{M_{r}}$};
\draw (8,310) node [anchor=north west][inner sep=0.75pt]   [align=left] {$\displaystyle A_{r}$};
\draw (37,342) node [anchor=north west][inner sep=0.75pt]   [align=left] {$\displaystyle \mathsf{A_{r+1} M_{r}}$};
\draw (201,371) node [anchor=north west][inner sep=0.75pt]   [align=left] {$\displaystyle \{D,\ \mathsf{id} \ -\ D\}$};

\end{tikzpicture}
    \caption{A $2r+1$-message quantum interactive protocol.}
    \label{fig:quantum_interactive_argument}
\end{figure}
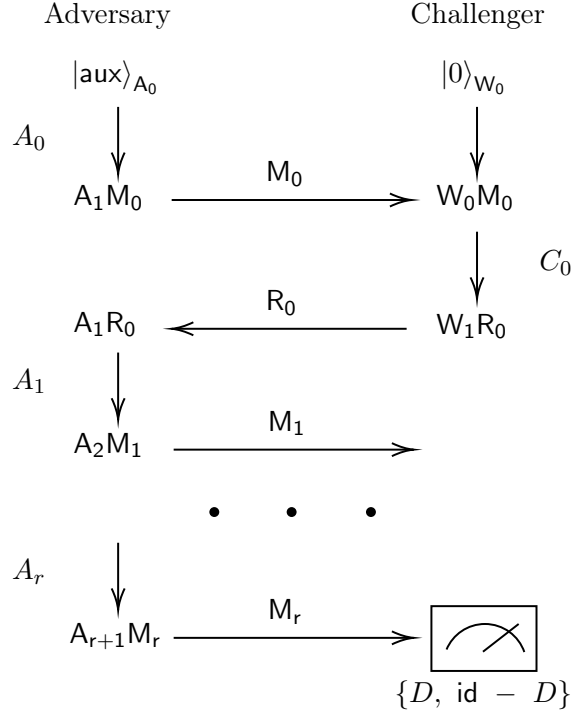

Given an interactive protocol, $\proto$, we define the \emph{$k$-fold parallel repetition} of the protocol, $\proto^{\otimes k}$ to be the protocol where the challenger and adversary execute $\proto$ $k$-times in parallel.  In $\proto^{\otimes k}$, the challenger performs the unitary $C_{i}^{\otimes k}$ on round $i$ and measures the two outcome POVM $\{D^{\otimes k}, \id - D^{\otimes k}\}$.  This means that the challenger in the $k$-fold parallel repetition only accepts if every decision POVM accepts.  We note that the adversary in the $k$-fold parallel repetition of the protocol may play a correlated, or even entangled strategy across the $k$ copies of the protocol.  

We use superscripts to denote the registers and unitaries applied by the individual repetitions of the protocol, so the first message register $\reg M_0$ of the $k$-fold parallel repetition of $\proto$ consists of $k$ many registers $\reg {M_0}^i$ where each $\reg {M_0}^i$ is sent to the $i$'th repetition of the protocol, and similarly for the registers $\reg{W_j}$, and $\reg{R_j}$.  We assume that there is only one adversary private register.

\section{Non-uniform parallel repetition of 3-message protocols}
\label{sec:non-uniform-rep}

In this section, we prove the following non-uniform version of our main theorem.

\begin{theorem}\label{cor:3_message_parallel_repetition}
    For any 3-message quantum interactive protocol $\proto$ with soundness $\epsilon$ against adversaries of size $s$, $\proto^{\otimes k}$ has soundness $\delta^k$ against adversaries of size $O\mparen{\frac{(\epsilon - \delta)\sqrt{\delta^{k - 1}}}{\log\delta} \cdot \frac sk}$ for any $k$ and $\delta > \epsilon$.

    In particular, if a family of $\proto$ has soundness $\epsilon$ against non-uniform polynomial-time adversaries, then $\proto^{\otimes k}$ has soundness $\epsilon^k$ against non-uniform polynomial-time adversaries for any polynomial $k$.
\end{theorem}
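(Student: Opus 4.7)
The plan is to prove this by contrapositive. Fix $\delta > \epsilon$ and assume there is an adversary $(U, \ket{\aux})$ of size $s'$ whose success probability against the $k$-fold challenger $(V^{\otimes k}, P^{\otimes k})$ strictly exceeds $\delta^k$. The goal is to construct, with the aid of a piece of non-uniform quantum advice, a quantum adversary of size $s$ that succeeds with probability at least $\delta$ against the single-copy challenger $(V,P)$, contradicting the soundness hypothesis. Following the technical overview, I will work with the projectors $P_{\leq j} \coloneqq P_1 P_2 \cdots P_j$ (accept in the first $j$ slots) and $Q_{\leq j} \coloneqq (UV_1 \cdots V_j)(\proj{0}_{\reg{M}_{\leq j}\reg{C}_{\leq j}} \otimes I)(UV_1 \cdots V_j)^{\dagger}$ (states honestly reachable after the first $j$ simulated exchanges), defined on the full $k$-slot Hilbert space with identities on the untouched registers.

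The hypothesis gives $\|P_{\leq k}Q_{\leq k}\ket{\aux}\ket{0}\|^2 > \delta^k$, so in particular $\|P_{\leq k}Q_{\leq k}\|_{\infty} > \sqrt{\delta^k}$; since $\|P_{\leq 0}Q_{\leq 0}\|_{\infty} = 1$, a telescoping of spectral norms along $j = 0,1,\dots,k$ yields an index $j^* \in [k]$ with
\[
\|P_{\leq j^*}Q_{\leq j^*}\|_{\infty} \geq \sqrt{\delta^{j^*}} \qquad \text{and} \qquad \|P_{<j^*}Q_{<j^*}\|_{\infty} \leq \sqrt{\delta^{j^*-1}}.
\]
The non-uniform advice then consists of $j^*$ together with a top right singular vector $\ket{\psi}$ of $P_{\leq j^*}Q_{\leq j^*}$, which satisfies $\|P_{\leq j^*}Q_{\leq j^*}\ket{\psi}\| \geq \sqrt{\delta^{j^*}}$ and lies in the range of $Q_{\leq j^*}$.

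The single-copy reduction splits on $j^*$. When $j^* = 1$, the containment $\ket{\psi}\in\operatorname{range}(Q_{\leq 1})$ means that $(UV_1)^{\dagger}\ket{\psi}$ factors as $\ket{0}_{\reg{M}_1\reg{C}_1} \otimes \ket{\phi}$ for some $\ket{\phi}$; supplying $\ket{\phi}$ as non-uniform advice and running $U$ against the real challenger attains $\|P_1\ket{\psi}\|^2 \geq \delta$. When $j^* \geq 2$, the adversary loads $\ket{\psi}$ and applies the quantum singular value transformation of \cite[Theorem 17]{gilyen2019quantum} to a block encoding of $P_{<j^*}Q_{<j^*}$ (built from $O(j^*)$ calls to $U, U^{\dagger}$ together with controlled reflections about $\ket{0}$) with a polynomial $g$ of degree $O(\log(1/\eta)/\sqrt{\delta^{j^*-1}})$ that $\eta$-approximates $x/\sqrt{\delta^{j^*-1}}$ on $[0,\sqrt{\delta^{j^*-1}}]$ while remaining bounded by $1$ on $[-1,1]$. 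The second threshold bound guarantees that every singular value of $P_{<j^*}Q_{<j^*}$ lies in the interval where $g$ is a good approximant, so the QSVT produces (up to $O(\eta)$ error) the vector $P_{<j^*}Q_{<j^*}\ket{\psi}/\sqrt{\delta^{j^*-1}}$; forwarding the appropriate $\reg{M}$ register to the real challenger and having it measure $P$ yields success probability $\|P_{\leq j^*}Q_{\leq j^*}\ket{\psi}\|^2/\delta^{j^*-1} - O(\eta) \geq \delta - O(\eta)$.

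The main obstacle is the QSVT step: one must verify the boundedness hypothesis $|g(x)|\leq 1$ on $[-1,1]$ from \cite{gilyen2019quantum} while keeping the degree only logarithmic in $1/\eta$ and proportional to $1/\sqrt{\delta^{j^*-1}}$, and then carefully track how this degree, the $O(k)$-cost block encoding, and the choice $\eta = \Theta(\delta - \epsilon)$ compose to produce the stated overhead $O((\epsilon-\delta)\sqrt{\delta^{k-1}}/\log\delta \cdot s/k)$. A secondary bookkeeping task is to formalize which registers $\ket{\psi}$ lives on and how the real challenger's $\reg{M}$ register is inserted into the QSVT circuit without disturbing coherence across slots; this is a direct, if careful, consequence of working on the full $k$-slot space with identities on the untouched registers. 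The non-uniform polynomial-time corollary follows immediately, since whenever $\delta$ is bounded away from $1$ and $k$ is polynomial in $\lambda$, the size blow-up is polynomial, so any non-negligible advantage $\delta^k + 1/\poly$ over the repeated protocol is transferred to an advantage $\delta > \epsilon$ over a single copy.
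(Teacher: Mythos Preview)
Your proposal is correct and follows essentially the same approach as the paper: the discrete intermediate value argument to locate $j^*$ with the two spectral-norm thresholds is Corollary~\ref{claim:n-fold_cases}, and the QSVT-based coherent post-selection with a degree-$O\bigl(\delta^{-(j^*-1)/2}\log(1/\eta)\bigr)$ approximant to $x/\sqrt{\delta^{j^*-1}}$ is exactly Lemma~\ref{theorem:non-uniform-rep} (instantiated with $\tau=1$, $\mu=\Theta(\delta-\epsilon)$). The only cosmetic differences are that the paper works in the $3$-message register conventions $(G_i,\wt G_i)$ rather than the overview's $2$-message $(P_{\le j},Q_{\le j})$, and absorbs your $j^*=1$ base case into the general lemma since $D^{(<1)}=\id$ makes the QSVT trivial there.
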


\newcommand{\Amplify}{\mathsf{Amp}}

Let $(A, \ket{\aux})$ be an adversary for $\proto^{\otimes k}$ that achieves success probability $\delta^k$.
For a collection of registers $\{\reg{C}^i\}_{i}$, let $\reg{C}$ be the concatenation of all $\reg C^i$, and $\reg{C}^{(\leq j)}$ be the concatenation of the first $j$ many $\reg C^i$.  Let the following operators be the challenger unitaries $C$ and decisions $D$ that are only concerned about the first $i$ repetitions of the protocol:
\begin{align*}
    \left(C^{(\leq i)}\right)_{\reg{M_0{^{(\leq i)}}\reg W_0^{(\leq i)}}} &= \left(\bigotimes_{j = 1}^{i} C_{\reg{M_0^j}\reg{W_0^j}}\right) \otimes \id\,,\\
    \left(D^{(\leq i)}\right)_{\reg{M_1^{(\leq i)}\reg W_1^{(\leq i)}}} &= \left(\bigotimes_{j = 1}^{i} D_{\reg{M_1^j}\reg{W_1^j}}\right) \otimes \id\,.
\end{align*}
Note that $C^{(\leq 0)} = D^{(\leq 0)} = \id$.
$D^{(<i)}$, $D^{(\neq i)}$, $D^{(> i)}$, etc. are defined similarly to be the restriction of $D^{\otimes k}$ to registers that satisfy the condition in the superscript.

Imagine a prefix of the $k$-fold game, where the $k$-fold adversary plays their strategy against all $k$ challengers but only the first $i$ repetitions perform $C$ and measure $D$.
Similar to \cite{canetti2005hardness}, our strategy for constructing an adversary for the $1$-fold game will be to find an index $i$ such that the $i$'th repetition has a high probability of accepting, conditioned on the first $i-1$ repetitions accepting, and then have the adversary ``post-select'' on the first $i-1$ repetitions accepting.
To that end, define the following projectors
\begin{equation}\label{eqn:game_projector}
    G_{i} = \left(C^{(\leq k)}\right)^{\dagger} A^{\dagger}_{\reg A_1\reg R^{(\leq k)}}\left(D^{(\leq i)}\right)A_{\reg A_1\reg R^{(\leq k)}} \left(C^{(\leq k)}\right).
\end{equation}
This further gives rise to the following POVM where we further enforce that the private registers of the verifiers is correctly initialized to $\ket{0}$.
\begin{equation*}
    \wt{G}_i = \left(\id \otimes \bra{0}_{\reg W_0^{(\le i)}}\right) G_i \left(\id \otimes \ket{0}_{\reg W_0^{(\le i)}}\right).
\end{equation*}
Crucially note that this operator only enforces the initialization of the first $i$ folds but permits arbitrary initialization of the other $(k - i)$ repetitions.
In other words, its input registers are $\reg{A_1}, \reg{R}^{(\le k)}, \reg W_0^{(> i)}$. 

Observe that $\norm{\wt G_i\ket\aux}^2$ exactly captures the probability of the success probability of winning the first $i$ repetitions when initialized with $\ket\aux$, and thus $\norm{\wt G_i}_\infty^2$ captures the maximum success probability for the first $i$ repetitions over all possible initializations.
The following corollary shows that there does always exist a ``good'' index to put the real challenger, for a particular definition of ``good'' that would suffice later for the reduction.

\begin{fact}[Discrete intermediate value theorem]
	\label{claim:p-sequence}
	Let $(p_0,...,p_k)$ be a sequence of reals such that $p_0 \le 0 \le p_k$. Then there exists an integer $1 \le i \le k$ such that $p_{i-1} \leq 0 \le p_i$.
\end{fact}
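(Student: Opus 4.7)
The plan is to use a minimum-index argument, which is the standard discrete analogue of the intermediate value theorem. First I would define $i^* \coloneqq \min\{i \in \{1, \ldots, k\} : p_i \ge 0\}$; this minimum exists because the set is nonempty (it contains $k$, by the hypothesis $p_k \ge 0$). Then I would verify that $i^*$ satisfies the desired property, namely $p_{i^*-1} \le 0 \le p_{i^*}$.

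The upper inequality $p_{i^*} \ge 0$ holds by the defining property of $i^*$. For the lower inequality $p_{i^*-1} \le 0$, I would split into two cases. If $i^* = 1$, then $p_{i^*-1} = p_0 \le 0$ by hypothesis. If $i^* > 1$, then since $i^*-1 \in \{1, \ldots, k\}$ and $i^*$ was chosen minimally, we must have $p_{i^*-1} < 0$, which in particular gives $p_{i^*-1} \le 0$. There is no real obstacle here; the only thing to be careful about is that the argument for the lower inequality splits depending on whether the minimum is attained at the boundary $i^* = 1$ (where we use the hypothesis on $p_0$) or in the interior (where we use minimality of $i^*$).
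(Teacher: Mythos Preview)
Your proof is correct and is essentially the same argument as the paper's, just phrased directly via a minimum-index selection rather than by contradiction: the paper assumes no such $i$ exists, deduces that $p_{i-1}\le 0 \implies p_i < 0$ for all $i$, and then inducts from $p_0 \le 0$ to reach $p_k < 0$, a contradiction. Your minimality argument is the contrapositive of that induction, so the two approaches coincide.
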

\begin{proof}
    Suppose this is not the case then we have $p_{i-1}\le 0 \implies p_i < 0$, and thus by induction, $p_k < 0$, a contradiction.
\end{proof}

\begin{corollary}\label{claim:n-fold_cases}
There exists some $1 \le i \le k$ such that $\opnorm{\wt{G}_{i - 1}}^2 \leq \delta^{i - 1}$ and $\opnorm{\wt{G}_{i}}^2 \geq \delta^{i}$.  
\end{corollary}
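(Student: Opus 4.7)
The plan is to reduce the claim to the discrete intermediate value theorem of \Cref{claim:p-sequence} applied to the sequence
\[
  p_i \coloneqq \opnorm{\wt G_i}^2 - \delta^i, \qquad i = 0, 1, \ldots, k.
\]
Once $p_0 \le 0$ and $p_k \ge 0$ are verified, the fact immediately yields an index $1 \le i \le k$ with $p_{i-1} \le 0 \le p_i$, which unpacks to exactly $\opnorm{\wt G_{i-1}}^2 \le \delta^{i-1}$ and $\opnorm{\wt G_i}^2 \ge \delta^i$. So the entire proof boils down to checking the two boundary inequalities.

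For the lower boundary, the key observation is that when $i = 0$ the operator $D^{(\le 0)}$ is the empty product, hence $D^{(\le 0)} = \id$. Substituting this into \eqref{eqn:game_projector} and cancelling $A^\dagger A$ and $(C^{(\le k)})^\dagger C^{(\le k)}$ by unitarity gives $G_0 = \id$, so $\wt G_0$ is just the identity restricted to the subspace fixed by $\ket{0}_{\reg W_0^{(\le 0)}}$; in particular $\opnorm{\wt G_0} \le 1$, giving $p_0 \le 1 - \delta^0 = 0$. For the upper boundary, I would evaluate $\wt G_k$ on the adversary's initial state $\ket{\aux}$: by construction $\wt G_k$ first initializes every challenger's private workspace to $\ket{0}$, applies the full $k$-fold challenger unitary $C^{(\le k)}$, applies the adversary's response unitary $A$, and measures the decision POVM on every one of the $k$ repetitions. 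This is precisely the circuit that defines the acceptance probability of $(A, \ket{\aux})$ in $\proto^{\otimes k}$, which equals $\delta^k$ by hypothesis. Hence $\opnorm{\wt G_k}^2 \ge \norm{\wt G_k \ket{\aux}}^2 = \delta^k$ and $p_k \ge 0$.

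I do not anticipate any real obstacle beyond register bookkeeping. The only step that requires a little care is confirming that the operational description of $\wt G_k$ on $\ket{\aux}$ really does implement the full $k$-fold security game, but this is immediate from the definitions of $C^{(\le k)}$, $D^{(\le k)}$, and the ``tilded'' operator $\wt G_i$ given directly above the statement.
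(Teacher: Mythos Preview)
Your proposal is correct and follows exactly the paper's approach: apply \Cref{claim:p-sequence} to the sequence $p_i = \opnorm{\wt G_i}^2 - \delta^i$. The paper's own proof is a single line invoking the discrete intermediate value theorem, leaving the boundary checks $p_0 \le 0$ and $p_k \ge 0$ implicit (the surrounding text already notes $D^{(\le 0)} = \id$ and that $\norm{\wt G_i \ket\aux}^2$ equals the probability of winning the first $i$ repetitions); your write-up simply spells these out.
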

\begin{proof}
    Apply \Cref{claim:p-sequence} to the sequence $\left\{\opnorm{\wt{G}_{i}}^2 - \delta^{i}\right\}_{i}$.
\end{proof}

Since our reduction is non-uniform for now, we can assume that the adversary knows a good index and starts with an advice state that certifies the largeness of $\opnorm{\wt G_{i}}$.
In order to leverage the smallness of $\opnorm{\wt G_{i - 1}}$, we need to invoke the Quantum Singular Value Transform.

\begin{theorem}[{Uniform singular value amplification \cite[Theorem 17 (rephrased)]{gilyen2019quantum}}]\label{thm:uniform_singular_value_transform}
   Let $\Pi$, $\wt{\Pi}$ be projectors and $\gamma > 1$ and $\mu, \nu \in (0, \frac{1}{2})$.
   Let $\wt{\Pi}\Pi = \sum_{i} \varsigma_i \ket{w_i}\!\!\bra{v_i}$ be a singular value decomposition.  Then there is an $m = O(\frac{\gamma}{\mu}\log\left(\frac{\gamma}{\nu}\right))$ and efficiently computable $\Phi \in \mathbb{R}^{m}$ such that
   \begin{equation*}
       \left(\bra{+} \otimes \wt{\Pi}_{\leq \frac{1 - \mu}{\gamma}}\right) U_{\Phi} \left(\ket{+} \otimes \Pi_{\leq \frac{1 - \mu}{\gamma}}\right) = \sum_{i : \varsigma_i \leq \frac{1 - \mu}{\gamma}} \Tilde{\varsigma_i} \ket{w_i}\!\!\bra{v_i}, \text{ where } \left|\frac{\Tilde{\varsigma_i}}{\gamma \varsigma_i} - 1\right| \leq \nu
   \end{equation*}
   and where, for $\kappa \in [0,1]$ (using notation from \Cref{def:eigenspace-projectors}),
	\begin{equation*}
	    \Pi_{\leq \kappa} \coloneqq \sum_{i:\varsigma_i \leq \kappa} \proj{v_i} = \Pi^{\Pi\tilde{\Pi}\Pi}_{\leq \kappa^2}\, \qquad \text{and} \qquad \tilde{\Pi}_{\leq \kappa} \coloneqq \sum_{i:\varsigma_i \leq \kappa} \proj{w_i} = \Pi^{\tilde{\Pi}\Pi\tilde{\Pi}}_{\leq \kappa^2}\,.
	\end{equation*}
   Moreover $U_{\Phi}$ can be implemented using a single ancilla qubit with $m$ uses of $C_{\Pi}NOT$, $m$ uses of $C_{\wt{\Pi}}NOT$ and $m$ single qubit gates.
\end{theorem}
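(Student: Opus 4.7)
The plan is to derive this as a direct instance of the quantum singular value transformation (QSVT) framework of Gilyén--Su--Low--Wiebe, combined with an explicit polynomial approximation of the linear amplification map $x \mapsto \gamma x$. Concretely, the QSVT machinery provides the following: for any real polynomial $P$ of degree $m$ with definite parity (odd in our case) satisfying $|P(x)|\le 1$ for $x\in[-1,1]$, there is an efficiently computable phase sequence $\Phi\in\mathbb{R}^m$ and a corresponding unitary $U_\Phi$, built from one ancilla qubit, $m$ applications of $C_\Pi\mathrm{NOT}$, $m$ applications of $C_{\tilde\Pi}\mathrm{NOT}$, and $m$ single-qubit gates, such that $(\bra{+}\otimes \tilde\Pi)\, U_\Phi\, (\ket{+}\otimes \Pi) = \sum_i P(\varsigma_i)\ket{w_i}\!\bra{v_i}$. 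This is the block-encoded singular value transformation identity; once $P$ is chosen, the theorem's conclusion is immediate if we set $\tilde\varsigma_i := P(\varsigma_i)$.

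The main remaining task is therefore to design an odd polynomial $P$ of degree $m = O\!\bigl(\tfrac{\gamma}{\mu}\log(\tfrac{\gamma}{\nu})\bigr)$ such that \textbf{(i)} $|P(x)| \le 1$ for all $x\in[-1,1]$ (so that QSVT is applicable) and \textbf{(ii)} $|P(x)-\gamma x|\le \nu\gamma x$ for all $x\in[0,(1-\mu)/\gamma]$ (so that the claimed multiplicative error bound on $\tilde\varsigma_i/\gamma\varsigma_i$ holds). Building $P$ with these properties is a classical approximation-theory problem: one can obtain it by truncating and suitably rescaling the Chebyshev expansion of the sign-smoothed error function, or equivalently by using the explicit polynomial approximation of $\tfrac{1}{1-\exp(-\cdot)}$--style functions used by Gilyén et al. In either case the relevant quantitative lemma gives a polynomial whose degree scales like $\tfrac{\gamma}{\mu}\log(\tfrac{\gamma}{\nu})$, absorbing the $1/\mu$ from the ``cutoff'' distance away from the singular value $1/\gamma$ and the $\log(1/\nu)$ from the target approximation error.

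With $P$ in hand, the final step is just bookkeeping: applying QSVT with the phase sequence implementing $P$ produces a block-encoded map that acts as $P(\varsigma_i)$ on each singular triple $(\varsigma_i,\ket{v_i},\ket{w_i})$. Restricting the input to $\Pi_{\le(1-\mu)/\gamma}$ (and the output to $\tilde\Pi_{\le(1-\mu)/\gamma}$) ensures the approximation bound from property (ii) is in force, which yields the displayed equation with $|\tilde\varsigma_i/(\gamma\varsigma_i)-1|\le \nu$. The identifications $\Pi_{\le\kappa} = \HProj{\Pi\tilde\Pi\Pi}{\le\kappa^2}$ and $\tilde\Pi_{\le\kappa} = \HProj{\tilde\Pi\Pi\tilde\Pi}{\le\kappa^2}$ are immediate from the singular value decomposition of $\tilde\Pi\Pi$, since $\Pi\tilde\Pi\Pi = \sum_i \varsigma_i^2 \ket{v_i}\!\bra{v_i}$ and symmetrically for $\tilde\Pi\Pi\tilde\Pi$.

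The only genuinely nontrivial ingredient is the polynomial construction in the middle step, and this is where any careful rewriting of the proof would need to spend its effort: one must verify simultaneously the uniform boundedness on $[-1,1]$, the multiplicative (not just additive) accuracy of the linear approximation on $[0,(1-\mu)/\gamma]$, and the degree bound. All other steps are either invocations of the QSVT black box or straightforward linear-algebraic translations between the SVD of $\tilde\Pi\Pi$ and the spectral decompositions of the Hermitian operators $\Pi\tilde\Pi\Pi$ and $\tilde\Pi\Pi\tilde\Pi$.
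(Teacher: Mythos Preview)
The paper does not give its own proof of this statement: it is quoted (with ``rephrased'' in the citation) directly from \cite[Theorem 17]{gilyen2019quantum} and used as a black box. So there is no in-paper argument to compare against.

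That said, your sketch is a faithful outline of how the cited result is actually proved in Gily\'en--Su--Low--Wiebe. The two ingredients you identify --- the QSVT identity $(\bra{+}\otimes\tilde\Pi)U_\Phi(\ket{+}\otimes\Pi)=\sum_i P(\varsigma_i)\ket{w_i}\!\bra{v_i}$ for odd bounded $P$, and the construction of an odd polynomial approximating $x\mapsto\gamma x$ on $[0,(1-\mu)/\gamma]$ with $|P|\le 1$ on $[-1,1]$ and degree $O(\tfrac{\gamma}{\mu}\log\tfrac{\gamma}{\nu})$ --- are exactly the two pieces, and your remark that the polynomial construction is the only substantive step is accurate. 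One small caveat: the polynomial in \cite{gilyen2019quantum} gives an \emph{additive} approximation $|P(x)-\gamma x|\le\nu'$ on the restricted interval, not the multiplicative one you state in (ii); the multiplicative bound $|\tilde\varsigma_i/(\gamma\varsigma_i)-1|\le\nu$ in the theorem statement is obtained from the additive one because $\gamma\varsigma_i\le 1$ on the restricted range (so additive error $\nu$ is at least as strong as multiplicative error $\nu$ there). This is a minor bookkeeping point, not a gap.
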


Note that in the theorem statement, $\varsigma_i$ are the \emph{singular values} of $\tilde{\Pi} \Pi$; the eigenvalues of $\Pi \tilde{\Pi} \Pi$, used later in the Jordan decomposition, are obtained as $p_i = \varsigma_i^2$.
Applying $U_{\Phi}$ to any pure state ``simulates'' the boosted singular value transform up to a small error $\nu$.
The following corollary makes the error more convenient later in our proof of the main theorem.

\begin{corollary}\label{cor:amplification_closeness}
    Let $\varsigma_i\ket{w_i}\!\!\bra{v_i}$, $\wt{\varsigma}_i$, $\mu$ and $\nu$ be as defined in Theorem~\ref{thm:uniform_singular_value_transform}.  For any pure state $\ket{\psi}$, let
    \begin{align*}
        \ket{\wt{\phi}} = \left(\sum_{i: \varsigma_i \leq \frac{1 - \mu}{\gamma}}\Tilde{\varsigma_i} \ket{w_i}\!\!\bra{v_i}\right) \ket{\psi},\\
        \ket{\phi} = \left(\sum_{i: \varsigma_i \leq \frac{1 - \mu}{\gamma}} \gamma \varsigma_i \ket{w_i}\!\!\bra{v_i}\right) \ket{\psi}.
    \end{align*}
    Then $\ket{\tilde\phi} = \ket\phi + \ket\nu$ where $\norm{\ket\nu}_2 \le \nu$.
\end{corollary}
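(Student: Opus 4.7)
The plan is to reduce the claim to a direct term-by-term comparison using the multiplicative error bound $\abs{\tilde\varsigma_i/(\gamma\varsigma_i) - 1} \le \nu$ provided by \Cref{thm:uniform_singular_value_transform}, together with the orthonormality of the left singular vectors $\{\ket{w_i}\}$.

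First I would expand $\ket{\psi} = \sum_i \alpha_i \ket{v_i} + \ket{\psi^\perp}$, where $\ket{\psi^\perp}$ is orthogonal to $\Span\{\ket{v_i}\}$. The orthogonal component is annihilated by both operators defining $\ket{\phi}$ and $\ket{\tilde\phi}$ (since every term in either sum has the form $\ketbra{w_i}{v_i}$), so only the coefficients $\alpha_i$ on the right singular vectors contribute. Subtracting and using that $\{\ket{w_i}\}$ is an orthonormal set yields
\begin{equation*}
    \bignorm{\ket{\tilde\phi} - \ket{\phi}}_2^2 \;=\; \sum_{i:\, \varsigma_i \le (1-\mu)/\gamma} \abs{\tilde\varsigma_i - \gamma\varsigma_i}^2 \cdot \abs{\alpha_i}^2.
\end{equation*}

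Next I would convert the multiplicative error bound into an additive one: from $\abs{\tilde\varsigma_i/(\gamma\varsigma_i) - 1} \le \nu$ we get $\abs{\tilde\varsigma_i - \gamma\varsigma_i} \le \nu \cdot \gamma\varsigma_i$, and since the sum is restricted to indices with $\gamma\varsigma_i \le 1-\mu \le 1$, this gives the uniform estimate $\abs{\tilde\varsigma_i - \gamma\varsigma_i} \le \nu$. Plugging in and using $\sum_i \abs{\alpha_i}^2 \le \smnorm{\ket\psi}^2 = 1$ yields $\smnorm{\ket{\tilde\phi} - \ket{\phi}}_2 \le \nu$, so setting $\ket{\nu} \coloneqq \ket{\tilde\phi} - \ket{\phi}$ finishes the proof.

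There is no real obstacle: the corollary is essentially a packaging step that converts the per-singular-value relative error guarantee of \Cref{thm:uniform_singular_value_transform} into an $\ell_2$ error guarantee between the two resulting pure (subnormalized) states. The only mild care required is not assuming that $\{\ket{v_i}\}$ is a basis of the whole Hilbert space, which is handled by the orthogonal decomposition of $\ket{\psi}$ above.
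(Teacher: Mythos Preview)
Your proof is correct and is essentially identical to the paper's own argument: both subtract the two operators, use orthonormality of the $\ket{w_i}$ to reduce to a sum of squares over $\abs{\tilde\varsigma_i - \gamma\varsigma_i}^2\abs{\braket{v_i|\psi}}^2$, and then bound each coefficient by $\nu$ via $\abs{\tilde\varsigma_i - \gamma\varsigma_i} \le \nu\gamma\varsigma_i \le \nu(1-\mu) \le \nu$. Your explicit treatment of the component of $\ket{\psi}$ orthogonal to $\Span\{\ket{v_i}\}$ is a small clarification the paper leaves implicit.
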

\begin{proof}
    We simply verify by calculating
    \begin{align*}
        \norm{\ket\nu}_2 &= \norm{\ket{\tilde\phi} - \ket\phi}_2 \\
          &= \norm{\sum_{i: \varsigma_i \leq \frac{1 - \mu}{\gamma}}(\Tilde{\varsigma_i} - \gamma\varsigma_i) \ket{w_i}\braket{v_i|\psi}}_2 \\
          &= \sqrt{\sum_{i: \varsigma_i \leq \frac{1 - \mu}{\gamma}} \abs{(\Tilde{\varsigma_i} - \gamma\varsigma_i) \braket{v_i|\psi}}^2} \\
          &\le \max_{i: \varsigma_i \leq \frac{1 - \mu}{\gamma}} \abs{\Tilde{\varsigma_i} - \gamma\varsigma_i} \\
          &\le \max_{i: \varsigma_i \leq \frac{1 - \mu}{\gamma}} \gamma\varsigma_i\nu \\
          &\le \nu\,,
    \end{align*}
    where the second inequality is by the guarantee of the algorithm.
\end{proof}

We are now ready to state and prove the main result of this section.
In order to use this lemma in \Cref{sec:uniform-parallel-rep}, we introduce a parameter $\tau$ to account for a loss in the uniform reduction.
For the non-uniform case, it suffices to set $\tau = 1$ since we can prepare the advice state without any loss.

\begin{lemma}
    \label{theorem:non-uniform-rep}
    Let $i$ and $\ket{\aux}$ be an index and state such that $\norm{\Pi^{\wt{G}_{i-1}}_{> \delta^{i-1}} \ket{\aux}}^2 = 0$ and $\Tr\mparen{\wt{G}_{i}\proj{\aux}} \geq \tau\delta^{i}$ for some $0 < \tau \le 1$. Then for any $0 < \mu < \frac12$, there exists an adversary that takes as input $i, \mu, \delta, \tau$ and $\ket{\aux}$ and is accepted by the $1$-fold verifier with probability at least $(1 - 2\mu)^2\tau\delta$.  The adversary's running time is dominated by running the original $k$-fold prover $O\mparen{\frac i{\mu\sqrt{\delta^{i - 1}}}\log\frac1{\mu\delta\tau}}$ times.%
\end{lemma}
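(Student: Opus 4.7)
The plan is to construct the $1$-fold adversary by embedding the real challenger into the $i$-th slot of the $k$-fold game, simulating the other $k-1$ slots internally with the $k$-fold prover's strategy, and using the QSVT of \Cref{thm:uniform_singular_value_transform} to coherently ``post-select'' on the event that the simulated first $i-1$ slots all accept. Classically, rejection-sampled post-selection on the first $i-1$ winning would boost the $i$-th slot's acceptance probability to $\tau\delta$: by assumption $\mathrm{Tr}(\wt{G}_i \proj{\aux}) \ge \tau\delta^i$, while the high-eigenvalue support assumption forces $\mathrm{Tr}(\wt{G}_{i-1}\proj{\aux}) \le \delta^{i-1}$. Quantumly, naive rejection sampling is ruled out (we cannot clone the state and measurement would decouple us from the real challenger's private register), so QSVT performs the post-selection coherently.

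Setting $U = A\,C^{(\leq k)}$, I would invoke \Cref{thm:uniform_singular_value_transform} with $\Pi = U\,\proj{0}_{\reg{W_0^{(<i)}}}\,U^\dagger$ and $\wt{\Pi} = D^{(<i)}$. Then $\Pi\wt{\Pi}\Pi = U\,\wt{G}_{i-1}\,U^\dagger$, so the right singular vectors of $\wt{\Pi}\Pi$ are $U$-images of eigenvectors of $\wt{G}_{i-1}$ and their squared singular values coincide with the eigenvalues of $\wt{G}_{i-1}$. The assumption $\Pi^{\wt{G}_{i-1}}_{>\delta^{i-1}}\ket{\aux}=0$ then translates to: the state $U\ket{\aux}$ has zero amplitude on singular values $\varsigma_j > \sqrt{\delta^{i-1}}$. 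Choosing $\gamma = (1-\mu)/\sqrt{\delta^{i-1}}$ places the whole support of $U\ket{\aux}$ inside the amplifiable range $\varsigma_j \le (1-\mu)/\gamma$, and I set the QSVT approximation parameter to $\nu = \mu\sqrt{\tau\delta}$.

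The reduction then proceeds as follows: initialize a $\ket{+}$ ancilla alongside $\ket{\aux}$ and zeros on all simulated challenger registers; let the real challenger play the $i$-th simulated challenger (its message is used as $\reg{R^{(i)}}$); apply $C$ on the other $k-1$ slots and then $A$ on $\reg{A_1 R^{(\leq k)}}$; apply $U_\Phi$ from \Cref{thm:uniform_singular_value_transform} with the above $\gamma$ and $\nu$; measure the ancilla in the Hadamard basis and send $\reg{M_1^{(i)}}$ to the real challenger only on outcome $\ket{+}$. By \Cref{thm:uniform_singular_value_transform} together with \Cref{cor:amplification_closeness}, the sub-normalized post-$\bra{+}$ state equals $\gamma\, D^{(<i)}\,U\ket{\aux} + \ket{\nu}$ with $\norm{\ket{\nu}}\le\nu$, so the real challenger's acceptance probability is at least
\[
  \bigl(\gamma\,\bignorm{D^{(\leq i)}\,U\ket{\aux}} - \nu\bigr)^2 \ge \bigl((1-\mu)\sqrt{\tau\delta} - \mu\sqrt{\tau\delta}\bigr)^2 = (1-2\mu)^2\tau\delta,
\]
using $\bignorm{D^{(\leq i)}\,U\ket{\aux}}^2 = \mathrm{Tr}(\wt{G}_i \proj{\aux}) \ge \tau\delta^i$ and $\gamma^2\delta^i = (1-\mu)^2\delta$.

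The main delicate point is the calibration of $\gamma$: any larger and the amplifiable range $(1-\mu)/\gamma$ misses high-singular-value components permitted by the assumption on $\ket{\aux}$, so the guarantee of \Cref{thm:uniform_singular_value_transform} would break; any smaller and the amplified probability falls below $\tau\delta$. The runtime bound is immediate from \Cref{thm:uniform_singular_value_transform}: $U_\Phi$ uses $m = O(\tfrac{\gamma}{\mu}\log\tfrac{\gamma}{\nu})$ controlled reflections through $\Pi$ and $\wt{\Pi}$, where each controlled reflection through $\Pi$ costs a single run of $U$ and $U^\dagger$ (one $k$-fold prover call), while each reflection through $\wt{\Pi}=D^{(<i)}$ costs only decision measurements. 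Substituting $\gamma = \Theta(1/\sqrt{\delta^{i-1}})$ and $\nu = \Theta(\mu\sqrt{\tau\delta})$ yields the stated $O\bigl(\tfrac{i}{\mu\sqrt{\delta^{i-1}}}\log\tfrac{1}{\mu\delta\tau}\bigr)$ prover runs.
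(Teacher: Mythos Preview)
Your construction and parameter choices coincide with the paper's (same QSVT amplification with $\gamma = (1-\mu)/\sqrt{\delta^{i-1}}$ and $\nu = \mu\sqrt{\tau\delta}$); the only cosmetic difference is that you conjugate the paper's projector pair by $A$, i.e., you apply $A$ before the QSVT rather than after.

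There is, however, a gap in your analysis. You assert that the sub-normalized post-$\bra{+}$ state equals $\gamma\, D^{(<i)}\, U\ket{\aux} + \ket{\nu}$ with $\norm{\ket{\nu}} \le \nu$, but this does not follow from \Cref{thm:uniform_singular_value_transform} and \Cref{cor:amplification_closeness}. Those results characterize only the $\wt\Pi$-block of $(\bra{+})\, U_\Phi\, (\ket{+} \otimes \Pi)$: concretely they give
\[
  D^{(<i)}\,(\bra{+})\, U_\Phi\, \ket{+}\, U\ket{\aux}\ket{0} \;=\; \gamma\, D^{(<i)}\, U\ket{\aux}\ket{0} + \ket{\nu},
\]
while the $(\id - D^{(<i)})$-component of the post-$\bra{+}$ state is uncontrolled and may be large. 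The paper handles this by explicitly measuring $D^{(<i)} \otimes \proj{+}_{\reg P}$ (\cref{step:fakeverify} of \Cref{prot:non-uniform-adv}) before forwarding $\reg M_1^i$, thereby projecting onto the subspace where the QSVT guarantee applies. Your algorithm omits this projection, so the equality you state is unjustified.

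The fix is easy: either insert the $D^{(<i)}$ measurement as the paper does, or observe that since the real challenger's decision projector on the $i$-th slot commutes with $D^{(<i)}$, the uncontrolled $(\id - D^{(<i)})$-component remains orthogonal to the controlled part after the challenger's measurement and hence can only increase the acceptance probability; your lower bound then follows as written.
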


\begin{proof}
At a high level, the goal of the single-fold adversary will be to use \Cref{thm:uniform_singular_value_transform} to coherently do post selection such that the first $(i - 1)$ repetitions are accepted, in which case the $i$'th repetition should also be accepted with decent probability by the theorem assumption. 

Let $A$ be the unitary that the adversary performs on registers $\reg{AR}^{(\leq k)}$ in the $k$-fold game.  Apply \Cref{thm:uniform_singular_value_transform} with the following projectors and parameters.
\begin{align}
    \Pi &= \left(C^{(< i)} \otimes \id\right) \left(\id \otimes \proj{0}_{\reg{W_0}^{(<i)}}\right) \left(\left(C^{(<i)}\right)^{\dagger} \otimes \id\right)\,, \nonumber\\
    \label{def:tildePi}
    \wt{\Pi} &= (A^{\dagger}_{\reg{A_1R}^{(\leq k)}} \otimes \id)\left(D^{(<i)} \otimes \id\right) \left(A_{\reg{A_1R}^{(\leq k)}}  \otimes \id\right)\,,\\
    \label{def:gamma}
    \gamma &= \frac{1 - \mu}{\sqrt{\delta^{i - 1}}}\,,\\
    \nu &= \mu\sqrt{\tau\delta}\nonumber\,.
\end{align}
Let $W$ be the unitary satisfying the conclusions \Cref{thm:uniform_singular_value_transform} for that choice of parameters.
Note that $W$ acts on $\reg{A}$, the $k$-fold adversaries private register, $\reg{W}_1^{(< i)}$, the $(i - 1)$ many simulated challengers' private workspaces after sending the challenges, $\reg{R}^{(\leq k)}$, the response registers for all $k$ challengers (which $A$ expects to act on), and an ancilla register $\reg{P}$, which will be projected onto the $\proj{+}$ state.  Now consider the following prover for the $1$-fold game (the challenger's actions are included in monospace font to aid understanding).

\begin{longfbox}[breakable=false, padding=1em, margin-top=1em, margin-bottom=1em]
\begin{algorithm}
	\label{prot:non-uniform-adv}
    \textbf{Non-uniform adversary $\Amplify$ for the $1$-fold protocol}
\end{algorithm}
\noindent \textbf{Input: } Quantum registers $\reg{AM}_0^{(\leq k)}\reg{W}_0^{>i}$, index $i$, and slackness parameter $\mu$, and black-box oracle access to $A,C,D$.
\begin{enumerate}
		\item \label[step]{step:initw} Initialize registers $\reg{W}_0^{(< i)}$ to $\ket{0}$, the private workspace registers for the first $(i - 1)$ simulated challengers.
	    \item Send $\reg M_0^i$ to the challenger as the first message.
    \\ \texttt{(Challenger performs $C_{\reg{W}_0^i \reg{M}_0^i}$ and sends $\reg{R^i}$ back.)}
	    \item \label[step]{step:fakechallenge} Perform $C^{(\neq i)}$ on registers $\reg W_0\reg M_0^{(\neq i)}$ to get a state on $\reg W_1 \reg R^{(\neq i)}$.
		\item \label[step]{step:initauxp} Create ancilla register $\reg{P}$ initialized in $\ket{+}_{\reg{P}}$.
    \item \label[step]{step:gslw} Perform $W_{\reg{A}\reg W_1^{(<i)}\reg{R}^{(\leq k)}\reg{P}}$ as defined above.
    \item \label[step]{step:invokeadv} Perform $A_{\reg{AR}^{(\leq k)}}$.
    \item \label[step]{step:fakeverify} Measure $D^{(< i)} \otimes \proj{+}_{\reg{P}}$. If the measurement rejects, abort.  
    \item Send $\reg M_1^i$ to the challenger.
    \\ \texttt{(Challenger measures $D_{\reg{W_1^i}\reg{M_1^i}}$ and accepts or rejects.)}
\end{enumerate}
\end{longfbox}
We analyze the algorithm by describing the state of the combined prover-verifier system after every step.  

\noindent \textbf{After \cref{step:initw}:}
We assume that the registers $\reg{AM_0}^{(\leq k)}$ are initialized in the state $\ket{\aux}_{\reg{AM_0}^{(\leq k)}\reg W_0^{>i}}$ satisfying the theorem statement.  Thus, state of the adversary and challenger after \cref{step:initw} is
\begin{equation*}
    \ket{\init} := \ket{\aux}_{\reg{AM_0}^{(\leq k)}\reg W_0^{(>i)}} \otimes \ket{0}_{\reg W_0^{(\leq i)}}.
\end{equation*}

\noindent \textbf{After \cref{step:initauxp}:}
In \cref{step:fakechallenge}, together with the verifiers action, the verifier and prover perform $C^{(\leq k)}$ on registers $\reg W_0^{(\leq k)}\reg M_0^{(\leq k)}$, so the state of the system after \cref{step:initauxp} is given by
\begin{equation*}
    C^{(\leq k)}\ket{\init} \otimes \ket{+}_{\reg{P}}.
\end{equation*}
This state is in the $+1$ eigenstate of $(\Pi \otimes \proj{+}_{\reg{P}})$ as the first $(i - 1)$ repetitions are initialized correctly, so we can write the state as
\begin{equation*}
    (\Pi \otimes \ket{+}_{\reg{P}})C^{(\leq k)}\ket{\init}.
\end{equation*}

\noindent \textbf{After \cref{step:fakeverify}:}
After measuring $D^{(< i)} \otimes \proj{+}_{\reg P}$, we get the following state. 
\begin{equation*}
    (D^{(<i)} \otimes \bra{+}_{\reg{P}}) AW(\Pi \otimes \ket{+}_{\reg{P}})C^{(\leq k)}\ket{\init}.
\end{equation*}
The state might be subnormalized since we might abort in the event that the measurement rejects; aborting also ensures that the rejection part of the amplitude would not interfere with the rest of the algorithm and the analysis.
Now recall that $\wt{\Pi} = A^{\dagger} D^{(<i)} A$, thus $D^{(<i)} = A \wt{\Pi}A^\dagger$.  Performing this substitution gives us the following expression for the state above. 
\begin{equation*}
    A 
    (\wt{\Pi} \otimes \bra{+}_{\reg{P}})W(\Pi \otimes \ket{+}_{\reg{P}})C^{(\leq k)}\ket{\init}.
\end{equation*}
Note that we choose the parameter at \eqref{def:gamma} so that $\frac{1 - \mu}\gamma = \sqrt{\delta^{i - 1}}$ and thus we can now apply the guarantee of \Cref{cor:amplification_closeness} and get that the state can be written as
\begin{equation*}
    \gamma A\wt{\Pi} \Pi
    C^{(\leq k)}\ket{\init} + \ket\nu
\end{equation*}
for some $\norm{\ket\nu}_2 \le \nu$.
We now use again the fact that $C^{(\leq k)}\ket{\init}$ is invariant under $\Pi$ to remove $\Pi$.
We further plug in the definition of $\wt\Pi$, yielding
\begin{equation*}
    \gamma
    D^{(<i)}AC^{(\leq k)}\ket{\init} + \ket\nu,
\end{equation*}
which is exactly the ``post-selection'' state we would like to prepare up to a small error.

\noindent \textbf{After the final verifier decision:}
After the verifier measures $D_{\reg{W_1^i}\reg{M_1^i}}$, we obtain the state
\begin{equation*}
    \gamma
    D^{(\le i)}AC^{(\leq k)}\ket{\init} + \ket{\nu'},
\end{equation*}
where $\ket{\nu'} := D_{\reg{W_1^i}\reg{M_1^i}}\ket\nu$ which still has 2-norm at most $\nu$ as $D \preccurlyeq \id$.

By the theorem's assumption, the state $\ket{\aux}$ satisfies $\Tr(\wt{G_{i}} \proj{\aux}) \geq \tau\delta^{i}$, thus $\norm{D^{(\le i)}AC^{(\leq k)}\ket{\init}} = \norm{G_i\ket{\init}} = \norm{\wt{G_i}\ket{\aux}} \ge \sqrt{\tau\delta^i}$.
Therefore, the above state has $2$-norm at least
\begin{equation*}
    \gamma\sqrt{\tau\delta^i} - \nu = (1 - \mu)\sqrt{\tau\delta} - \mu\sqrt{\tau\delta} = (1 - 2\mu)\sqrt{\tau\delta}.
\end{equation*}  
So the prover is accepted with probability at least $(1 - 2\mu)^2\tau\delta$.
Finally the running time can be verified by plugging in the appropriate parameters and noting that $A$ is only used in \cref{step:invokeadv} and $\wt\Pi$ as defined in \cref{def:tildePi} in \cref{step:gslw}.
\end{proof}

Note that since the theorem holds for any pure state $\ket\aux$ and since the algorithm is linear, it immediately extends to mixed state inputs satisfying the same condition as well.

Finally, combining \Cref{theorem:non-uniform-rep} (instantiating $\mu = \frac{\delta - \epsilon}8$ and $\tau = 1$) with \Cref{claim:n-fold_cases}, we obtain the following non-uniform reduction.

\begin{proof}[{Proof of \Cref{cor:3_message_parallel_repetition}}]
    The first part of the theorem immediately follows by picking the parameters above.

    For the second part, assume the adversary's success probability is non-negligibly larger than $\epsilon^k$.
    Then there exists some function $\delta = \delta(\lambda)$ such that $\delta^{k} - \epsilon^{k}$ is some inverse polynomial and the adversaries success probability is at least $\delta^k$ infinitely often.
    Whenever the adversary in the $k$-fold parallel repetition of the original protocol achieves $\delta^k$, the success probability of \Cref{prot:non-uniform-adv} is at least $(1 - 2\mu)^2\delta \ge (1 - 4\mu)\delta \ge \frac{\delta + \epsilon}2$, which is larger than $\epsilon$ by a non-negligible function in $\lambda$, since $\delta - \epsilon \ge \frac{\delta^k - \epsilon^k}k$ for any real $\epsilon \le \delta \le 1 \le k$.
    Finally, the running time of this adversary has a multiplicative overhead of $\tilde O\mparen{\frac k{\mu\sqrt{\delta^{k - 1}}}} = \tilde O\mparen{\frac k{(\delta - \epsilon)\sqrt{\delta^{k - 1}}}} = \tilde O\mparen{\frac{k^2}{(\delta^k - \epsilon^k)\sqrt{\delta^{k - 1}}}} = \tilde O\mparen{k^2\delta^{-3k/2}}$; therefore, this new adversary is efficient as $\delta^k - \epsilon^k$ is inverse polynomial and $k$ is polynomial.
\end{proof}

\section{Uniform parallel repetition of 3-message protocols}
\label{sec:uniform-parallel-rep}

In this section, we prove the uniform version of our main theorem.
\begin{theorem}\label{thm:main_theorem_uniform}
    Let $\proto$ be a $3$-message quantum interactive protocol with soundness $s$ against polynomial-time (resp.\ polynomial-size) quantum adversaries.
    Then for any polynomial $k = k(\lambda)$, $\pi^{\otimes k}$ has soundness $s^{k}$ against polynomial-time (resp.\ polynomial-size) quantum adversaries.  
\end{theorem}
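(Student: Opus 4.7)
The plan is to reduce \Cref{thm:main_theorem_uniform} to \Cref{theorem:non-uniform-rep} by showing how to efficiently prepare, from polynomially many copies of any initial advice state $\ket{\aux}$ of a purported $k$-fold adversary achieving probability non-negligibly larger than $s^k$, a pair $(i, \rho)$ such that $\rho$ (approximately) lies in the singular value subspace $\Pi^{\wt G_i}_{\geq\delta^i}$ while having overlap $\ll \delta^k$ with $\Pi^{\wt G_{i-1}}_{>\delta^{i-1}}$, for $\delta$ chosen so that $\delta^k - s^k$ is inverse polynomial. Conditioned on success, plugging this $(i,\rho)$ into \Cref{prot:non-uniform-adv} with $\tau$ equal to the success probability of the preparation yields a $1$-fold adversary breaking $\pi$ with probability non-negligibly larger than $s$, contradicting soundness.

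First I would formalize approximate versions of the singular value threshold projectors $\Pi^{(j)} := \Pi^{\wt G_j}_{\geq \delta^j}$ using Marriott--Watrous style singular value estimation, obtaining for each $j$ a projector $\wt\Pi^{(j)}$ on $\reg A$ together with an ancilla register $\reg W_j$ (initialized to $\ket 0$) such that measuring $\wt\Pi^{(j)}\otimes\proj 0_{\reg W_j}$ faithfully approximates $\Pi^{(j)}$ on any state with $\reg W_j = \ket 0$. I would then implement the algorithm sketched in the technical overview: first apply $\wt\Pi^{(k)}$ to a fresh copy of $\ket\aux$, repeating with fresh copies until acceptance (which occurs with probability $\geq \delta^k$ up to approximation error, by assumption); then loop $j = k, k-1, \ldots, 2$ with an inner loop of $t = \poly(1/\tau)$ iterations that, in each iteration, first tests $(I-\wt\Pi^{(j-1)})\otimes\proj 0_{\reg W_j}$, and on acceptance performs alternating projections between $\wt\Pi^{(j)}\otimes\proj 0_{\reg W_{j-1}}$ and $(I-\wt\Pi^{(j-1)})\otimes\proj 0_{\reg W_j}$ in order to estimate the corresponding singular value via a Jordan-lemma based analysis. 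If the estimate exceeds $1-\tau$, output the current state with index $j$; otherwise repair the state back into $\wt\Pi^{(j)}$ by continued alternation and proceed to the next inner iteration.

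The correctness argument splits into three cases in each iteration of the outer loop: (i) the first $(I-\wt\Pi^{(j-1)})$ measurement rejects, in which case the ancilla is still zero so gentle measurement guarantees we have (approximately) a state in $\Pi^{(j-1)}$, and we decrement $j$; (ii) the singular value estimate exceeds $1-\tau$, whence gentle measurement and the ancilla-zero property on the first measurement together yield a state in $\Pi^{(j)}$ with overlap $O(\sqrt\tau)$ on $\Pi^{(j-1)}$, which is the desired output (with a slight shift of thresholds in each step so that we can absorb the approximation error into the gap $\delta^j$ vs $\delta^{j-1}$); (iii) the estimate stays small, in which case each inner iteration independently terminates via case (i) with probability $\Omega(\tau)$, so $t \gg 1/\tau$ iterations suffice to trigger case (i) with all but negligible probability. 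Finally, reaching $j=1$ trivially gives a state in $\Pi^{(0)} = I$ with zero overlap on the (nonexistent) $\Pi^{(-1)}$. The main obstacle is case (ii) and the state-repair step: the approximate projectors $\wt\Pi^{(j)}$ are only faithful on ancilla-zero states, and after several alternations the ancillas are corrupted, so we must carefully use gentle measurement together with the estimate $\gamma > 1-\tau$ to conclude that the physical state on $\reg A$ is close to a state in $\Pi^{(j)} \cap (I-\Pi^{(j-1)})$ rather than merely in the approximated subspace. Calibrating the thresholds $\tau = \delta^{O(k)}$ and the polynomial number $t$ of inner trials so that the cumulative errors remain $o(\delta^k - s^k)$ completes the reduction; the running time and number of copies of $\ket\aux$ are $\poly(\lambda, 1/(\delta^k - s^k))$, hence polynomial by our choice of $\delta$.
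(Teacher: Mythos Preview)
Your proposal is correct and takes essentially the same approach as the paper: the paper formalizes the preparation step via a black-box subroutine $\StateTrans$ operating on abstract $(\varepsilon,\delta)$-almost-projective measurements $\APMeas_i = \EffJor[\proj{0}_{\reg W_0^{\leq i}}, G_i]$, but the underlying algorithm and three-case analysis are exactly what you describe (and match the paper's own technical overview). The only notable packaging difference is that the paper isolates the ``alternating projections with ancilla-corruption repair'' logic into \Cref{lemma:unif-alg-correctness} at the level of arbitrary almost-projective measurements, whereas you work directly with the approximate threshold projectors $\wt\Pi^{(j)}$; both routes require the same gentle-measurement argument for case~(ii) and the same $\Omega(\tau)$-per-iteration termination bound for case~(iii).
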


We do this by giving an efficient algorithm which prepares such a state from (polynomially many copies of) the adversary's initial state $\ket{\aux}$. Formally, we show the following, from which the theorem is immediate.

\newcommand{\UnifAmplify}{\textsf{Amp-U}}
\begin{lemma}
	\label{lemma:uniform-rep}
	There is a polynomial-time quantum oracle algorithm $\UnifAmplify$ with the following guarantee.
	Let $\proto = \{ C, D \}$ be a 3-message quantum interactive protocol. For $k \in \N$, $\delta \in [0,1]$, let $(A,\ket{\aux})$ be an adversary against $\proto^{\otimes k}$ which causes the challenger to accept with probability $\delta^k$. Then for every $\epsilon$ there is a $t = (\epsilon\delta^k)^{-O(1)}$ such that $(\UnifAmplify^{A,C,D}(1^{1/\delta^k},1^{1/\epsilon}),\ket{\aux}^{\otimes t})$ is an adversary against $\proto$ (i.e., a single repetition) which causes the challenger to accept with probability $\delta - \epsilon$.
\end{lemma}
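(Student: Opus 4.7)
The plan is to reduce to \Cref{theorem:non-uniform-rep} by designing $\UnifAmplify$ as the composition of two stages: a preprocessing stage that, from $\ket{\aux}^{\otimes t}$, efficiently prepares an index $j \in [k]$ and a state $\ket{\psi}$ satisfying the hypotheses of \Cref{theorem:non-uniform-rep} up to a small $\ell_2$ error, and then a call to \Cref{prot:non-uniform-adv} on $(j,\ket{\psi})$ with parameters $\mu = \Theta(\epsilon)$ and $\tau = 1 - O(\epsilon)$. Since the non-uniform reduction is linear in its input, a prepared state whose projection onto $\Pi^{\tilde G_{j-1}}_{>\delta^{j-1}}$ has mass $\epsilon' \ll \epsilon$ and whose $\tilde G_j$-mass is at least $(1-\epsilon')\delta^j$ will still yield acceptance probability $\delta - \epsilon$.

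To realize the exact threshold projectors $\Pi^{\tilde G_j}_{>\delta^j}$ efficiently, I will use Marriott--Watrous/QSVT coherent singular value threshold measurements $\tilde\Pi^{(j)}$, which are exact binary projective measurements on $\reg A$ plus an ancilla $\reg W_j$ initialized to $\ket 0$. By \Cref{thm:uniform_singular_value_transform}, when $\reg W_j = \ket 0$, these agree with the true threshold projectors up to any prescribed inverse-polynomial gap. The preprocessing stage then follows the algorithm outlined in \Cref{sec:overview}: first obtain a state in $\tilde\Pi^{(k)}$ by applying it to fresh copies of $\ket{\aux}$ (this succeeds within $O(1/\delta^k)$ tries in expectation since the full $k$-fold acceptance probability is at least $\delta^k$); then for $j=k,k-1,\dots,2$ run an inner loop which, in each iteration, re-initializes $\reg W_{j-1}$ and $\reg W_j$ to $\ket 0$, measures $(I-\tilde\Pi^{(j-1)})\otimes\proj 0_{\reg W_j}$, and: (a) if it rejects on this \emph{first} application, proceeds to index $j-1$ (since $\reg W_j$ is still $\ket 0$, the post-state is certifiably close to $\Pi^{\tilde G_{j-1}}_{>\delta^{j-1}}$ by the ancilla-zero approximation guarantee); (b) otherwise alternates with $\tilde\Pi^{(j)}\otimes\proj 0_{\reg W_{j-1}}$ for $T=\poly(1/(\epsilon\delta^k))$ rounds, uses the empirical frequencies as a Marriott--Watrous estimate $\gamma$ of the squared singular value, and if $\gamma > 1-\tau$ outputs $(j,\ket\psi)$, or else invokes state repair \cite{chiesa2022post} to return to $\tilde\Pi^{(j)}$ and retries.

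Correctness is analyzed, as in \cite{chiesa2022post}, via Jordan's lemma applied to the pair $\bigl(\tilde\Pi^{(j)}\otimes\proj 0_{\reg W_{j-1}},\,(I-\tilde\Pi^{(j-1)})\otimes\proj 0_{\reg W_j}\bigr)$: on each two-dimensional invariant subspace the alternating sequence simulates singular value estimation, so $\gamma$ concentrates around the true eigenvalue. When $\gamma > 1-\tau$, gentle measurement (\Cref{lemma:gentle-measurement}) forces the state to be $O(\sqrt\tau)$-close to one with $\reg W_{j-1}=\reg W_j=\ket 0$, at which point the approximation guarantee of \Cref{thm:uniform_singular_value_transform} converts the $\tilde\Pi^{(j)}$, $I-\tilde\Pi^{(j-1)}$ information into exact $\Pi^{\tilde G_j}_{>\delta^j}$, $\id - \Pi^{\tilde G_{j-1}}_{>\delta^{j-1}}$ statements (up to $\poly(\tau)$ error). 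Conversely, if neither branch (a) nor branch (b) fires on a given iteration, then in every Jordan block the singular value lies in $(\sqrt\tau,\sqrt{1-\tau})$, so state repair succeeds with probability $\Omega(\tau)$; hence $O(1/\tau)$ iterations of the inner loop succeed except with $\epsilon$ probability. Since there are only $k$ outer iterations and each uses $\poly(1/(\epsilon\delta^k))$ queries to $A,C,D$ and copies of $\ket\aux$, both $t$ and the running time are polynomial.

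The main obstacle, flagged in \Cref{sec:overview}, is the drift of the ancillas $\reg W_{j-1},\reg W_j$ during the alternating phase, which invalidates the ancilla-zero approximation $\tilde\Pi^{(j)} \approx \Pi^{\tilde G_j}_{>\delta^j}$ after more than one measurement. The two ideas that overcome this are (i) treating only the \emph{first} $(I-\tilde\Pi^{(j-1)})$ rejection as a direct certificate for recursion, since at that moment $\reg W_j$ is still exactly $\ket 0$, and (ii) in the singular-value-estimation branch, selecting the threshold $\tau$ polynomially small in $\epsilon\delta^k$ so that gentle measurement restores the ancilla-zero condition with enough precision for \Cref{thm:uniform_singular_value_transform} to carry the needed approximation, at the price of a longer (but still polynomial) inner loop of length $1/\tau^{O(1)}$. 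Together with the linearity of the non-uniform reduction, this closes the gap between the approximate preparation guarantees and the exact hypotheses required by \Cref{theorem:non-uniform-rep}.
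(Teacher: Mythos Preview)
Your proposal follows essentially the same route as the paper: prepare an index $j$ and state satisfying the hypotheses of \Cref{theorem:non-uniform-rep} by iterating approximate threshold measurements with state repair, exactly as sketched in \Cref{sec:overview}. The paper formalizes this via a general-purpose subroutine $\StateTrans$ (\Cref{alg:state-trans}) that operates on abstract almost-projective measurements $\APMeas_i = \EffJor[\proj{0}_{\reg W_0^{\leq i}}, G_i]$ (\Cref{lemma:effjor}), rather than QSVT-based threshold projectors. Two technical points are worth flagging. First, your citation of \Cref{thm:uniform_singular_value_transform} for the threshold measurements is off: that theorem gives singular value \emph{amplification} (the linear polynomial $x \mapsto \gamma x$), not a threshold. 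The paper instead uses Marriott--Watrous alternation (\Cref{lemma:effjor}) to build the $(\varepsilon,\delta)$-almost-projective $\APMeas_i$, and the threshold projectors arise only as Naimark dilations of these.

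Second, your inner-loop analysis is imprecise in a way that hides the one nontrivial step. You write that if neither branch fires then ``in every Jordan block the singular value lies in $(\sqrt\tau,\sqrt{1-\tau})$, so state repair succeeds with probability $\Omega(\tau)$''; neither clause follows. What is actually needed (and what the paper proves via the pseudoinverse lemma, \Cref{lemma:pseudoinverse}, and \Cref{claim:single-iteration}) is that after state repair returns to $\tilde\Pi^{(j)} \otimes \proj 0_{\reg W_{j-1}}$ and you trace out the ancillas, the resulting state on $\reg A$ still has $\APMeas_j$-value close to its value at the start of the iteration. This is not automatic: the repaired state lives in the image of $\tilde\Pi^{(j)} \otimes \proj 0_{\reg W_{j-1}}$, but $\reg W_j$ may be arbitrary, so being in $\tilde\Pi^{(j)}$ says nothing about $\Pi^{\tilde G_j}_{>\delta^j}$ without further argument. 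The paper handles this by re-measuring $\APMeas_j$ at the start of each inner iteration to obtain a fresh threshold $\alpha_i$, and then uses the pseudoinverse structure to show $\alpha_i$ drops by at most $2\varepsilon$ per iteration. Separately, the $\Omega(\tau)$ termination probability per iteration comes not from state repair but from a dichotomy on the Jordan value $p_j$: if $p_j < 1-\tau+\varepsilon$ then branch (a) fires with probability $\geq \tau-\varepsilon$, and if $p_j \geq 1-\tau+\varepsilon$ then branch (b) fires with probability $\geq 1-\delta$ (\Cref{claim:state_trans_abort_probability}).
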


\newcommand{\Subspace}{\mathcal{S}}
\newcommand{\ProjA}{\Pi_\mathsf{A}}
\newcommand{\ProjB}{\Pi_\mathsf{B}}
\newcommand{\JorKetA}[2]{\ket{v_{#1}^{#2}}}
\newcommand{\JorBraA}[2]{\bra{v_{#1}^{#2}}}
\newcommand{\JorKetB}[2]{\ket{w_{#1}^{#2}}}

\subsection{Jordan's lemma and alternating projectors}

\begin{lemma}[Jordan's lemma~\cite{jordan1875essai}]
\label{lemma:jordan}
For any two Hermitian projectors $\ProjA$ and $\ProjB$ on a Hilbert space $\reg{H}$, there exists an orthogonal decomposition of $\reg{H} = \bigoplus_j \Subspace_{j}$ (the \emph{Jordan decomposition} with respect to $\ProjA,\ProjB$) into one-dimensional and two-dimensional subspaces $\{\Subspace_{j}\}_{j}$ (the \emph{Jordan subspaces}), where each $\Subspace_{j}$ is invariant under both $\ProjA$ and $\ProjB$. Moreover:
\begin{itemize}[noitemsep]
\item in each one-dimensional space, $\ProjA$ and $\ProjB$ act as identity or rank-zero projectors; and
\item in each two-dimensional subspace $\Subspace_{j}$, $\ProjA$ and $\ProjB$ are rank-one projectors. In particular, there exist distinct orthogonal bases $\{\JorKetA{j}{1},\JorKetA{j}{0}\}$ and $\{\JorKetB{j}{1},\JorKetB{j}{0}\}$ for $\Subspace_{j}$ such that $\ProjA$ projects onto $\JorKetA{j}{1}$ and $\ProjB$ projects onto $\JorKetB{j}{1}$.
\end{itemize}
\end{lemma}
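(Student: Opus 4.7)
The plan is to reduce the problem to spectral analysis of the self-adjoint operator $M \coloneqq \ProjA \ProjB \ProjA$, whose spectrum lies in $[0,1]$. By the spectral theorem, $\reg{H} = \bigoplus_{\lambda \in [0,1]} V_\lambda$ where $V_\lambda \coloneqq \ker(M - \lambda \id)$. Since $M = \ProjA M \ProjA$, each $V_\lambda$ with $\lambda > 0$ is contained in $\img \ProjA$; in particular $\ProjA v = v$ for every $v \in V_\lambda$ with $\lambda > 0$. We also have $[\ProjA, M] = 0$, so $V_0$ itself is invariant under $\ProjA$. The strategy is then to decompose each $V_\lambda$ further into the one- and two-dimensional Jordan subspaces claimed by the lemma.

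For $\lambda \in (0,1)$, pick any orthonormal basis $\{v_i\}_i$ of $V_\lambda$ and set $w_i \coloneqq \ProjB v_i / \sqrt{\lambda}$. A short calculation gives $\|w_i\|^2 = \langle v_i, \ProjB v_i\rangle / \lambda = \langle v_i, M v_i\rangle / \lambda = 1$, so each $w_i$ is a unit vector. The two-dimensional subspace $\Subspace_{j} \coloneqq \Span(v_i, w_i)$ is invariant under both $\ProjA$ and $\ProjB$: $\ProjA$ fixes $v_i$ and, since $\ProjA w_i = \ProjA \ProjB v_i/\sqrt{\lambda} = M v_i/\sqrt{\lambda} = \sqrt{\lambda}\, v_i$, sends $w_i$ back into the span; $\ProjB$ projects onto $w_i$ (using $\ProjB^2 = \ProjB$ and the defining relation $\sqrt{\lambda}\, w_i = \ProjB v_i$). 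Gram--Schmidting $w_i$ against $v_i$ yields the orthonormal bases $\{\JorKetA{j}{1},\JorKetA{j}{0}\}$ and $\{\JorKetB{j}{1},\JorKetB{j}{0}\}$ claimed by the lemma, with $\JorKetA{j}{1} = v_i$ and $\JorKetB{j}{1} = w_i$.

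The step requiring the most care is verifying that distinct $\Subspace_{j}$ are mutually orthogonal. The key identity is that for $v_1, v_2 \in V_\lambda$ orthogonal with $\ProjA v_k = v_k$, one has $\langle v_1, \ProjB v_2\rangle = \langle \ProjA v_1, \ProjB v_2\rangle = \langle v_1, \ProjA \ProjB v_2\rangle = \langle v_1, M v_2\rangle = \lambda \langle v_1, v_2\rangle = 0$, where I used $\ProjA\ProjB v_2 = \ProjA\ProjB\ProjA v_2 = M v_2$. From this, $\langle v_1, w_2\rangle = 0$, and $\langle w_1, w_2\rangle = \langle v_1, \ProjB v_2\rangle / \lambda = 0$, which implies the 2D Jordan subspaces are pairwise orthogonal and exhaust $V_\lambda$ for $\lambda \in (0,1)$.

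It remains to handle the boundary eigenvalues, which produce the one-dimensional subspaces. For $\lambda = 1$: if $Mv = v$ with $\|v\|=1$, then $\|\ProjB \ProjA v\| = 1$ forces $\ProjA v = v$ and $\ProjB v = v$, so $\Span(v)$ is 1-dimensional with both projectors acting as identity. For $\lambda = 0$: on $V_0$, decompose $V_0 = (V_0 \cap \img \ProjA) \oplus (V_0 \cap \ker \ProjA)$, which is an orthogonal decomposition since $V_0$ is $\ProjA$-invariant. On the second summand $\ProjA$ acts as zero; on the first, $\langle v, \ProjB v\rangle = \langle v, M v\rangle = 0$ so $\ProjB v = 0$ and $\ProjB$ acts as zero. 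Further diagonalizing $\ProjB$ (resp.\ $\ProjA$) on each summand by choosing an arbitrary orthonormal basis yields the remaining one-dimensional invariant subspaces on which each projector is rank-zero or the identity. Assembling the pieces from all eigenspaces of $M$ gives the full Jordan decomposition.
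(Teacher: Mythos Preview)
The paper does not prove this lemma; it is stated as a classical background result with a citation to Jordan. Your spectral approach via $M = \ProjA \ProjB \ProjA$ is the standard one, and your treatment of the eigenspaces $V_\lambda$ for $\lambda > 0$ is correct.

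There is, however, a genuine gap in your handling of $\lambda = 0$. The two-dimensional subspaces $\Span(v_i, w_i)$ you build for $\lambda \in (0,1)$ are \emph{not} contained in $V_\lambda$: since $\ProjA w_i = \sqrt{\lambda}\, v_i$, the vector $w_i - \sqrt{\lambda}\, v_i$ lies in $\ker \ProjA \subseteq V_0$. So when you afterwards decompose $V_0 = (V_0 \cap \img \ProjA) \oplus \ker \ProjA$ and try to ``diagonalize $\ProjB$'' on the second summand, you are double-counting vectors already absorbed into the two-dimensional blocks. Worse, $\ker \ProjA$ is not $\ProjB$-invariant in general---take $\ProjA = \proj{0}$, $\ProjB = \proj{+}$ in two dimensions, where $\ProjB\ket{1} = \tfrac12(\ket{0}+\ket{1}) \notin \Span(\ket{1})$---so the restriction of $\ProjB$ to $\ker\ProjA$ is not even a well-defined operator to diagonalize.

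The fix is short: let $W$ be the span of all the subspaces you have already built (the two-dimensional blocks for $\lambda\in(0,1)$, the lines in $V_1$, and the lines in $V_0\cap\img\ProjA$). Then $\img\ProjA \subseteq W$, so $W^\perp \subseteq \ker\ProjA$; and each summand of $W$ is $\ProjB$-invariant, hence so is $W$ and therefore $W^\perp$. On $W^\perp$ you now have $\ProjA = 0$ and a genuine action of $\ProjB$, which you can diagonalize to produce the remaining one-dimensional Jordan subspaces.
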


In order to unify the treatment of one- and two-dimensional subspaces, for a one-dimensional subspace $\Subspace_j = \Span(\ket{v})$ we denote $\ket{v}$ both by $\JorKetA{j}{\lambda_0}$ for $\ProjA \ket{v} = \lambda_0 \ket{v}$, and by $\JorKetB{j}{\lambda_1}$ for $\ProjB \ket{v} = \lambda_1 \ket{v}$. We define $\JorKetA{j}{1-\lambda_0}$ and $\JorKetB{j}{1-\lambda_1}$ to be the zero vector.

\begin{definition}
	\label{def:jordan-meas}
	For two Hermitian projectors $\ProjA,\ProjB$ we define the \emph{Jordan measurement} to be the projective measurement $\MeasJor[\ProjA,\ProjB] \coloneqq ( \JorProj{j} )_{j}$ with outcomes $j$, where $\JorProj{j}$ projects on to $\Subspace_j$.
	
	We define the \emph{value} of the subspace $\Subspace_j$ to be $p_j \coloneqq |\braket{v_j^1 | w_j^1}|^2$.
\end{definition}

The following straightforward but useful claim relates the Jordan decomposition with respect to $\ProjA,\ProjB$ to the spectral decompositions of $\ProjA \ProjB \ProjA$ and $\ProjB \ProjA \ProjB$.
\begin{claim}
	$\ProjA \ProjB \ProjA = \sum_j p_j \proj{v_j^1}$, and $\ProjB \ProjA \ProjB = \sum_j p_j \proj{w_j^1}$.
\end{claim}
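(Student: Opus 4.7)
The plan is to prove each identity subspace by subspace, leveraging the fact that every Jordan subspace $\Subspace_j$ is invariant under both $\ProjA$ and $\ProjB$. This invariance means that $\ProjA\ProjB\ProjA$ is also block-diagonal with respect to the decomposition $\mathcal{H} = \bigoplus_j \Subspace_j$, so it suffices to verify that its restriction to each $\Subspace_j$ equals $p_j \proj{v_j^1}$ (and likewise for the second identity). I will treat the two- and one-dimensional cases separately.

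For a two-dimensional subspace, Jordan's lemma tells us that $\ProjA$ restricted to $\Subspace_j$ is exactly $\proj{v_j^1}$ and $\ProjB$ restricted to $\Subspace_j$ is exactly $\proj{w_j^1}$. Hence on $\Subspace_j$,
\[
\ProjA \ProjB \ProjA = \proj{v_j^1}\proj{w_j^1}\proj{v_j^1} = |\braket{v_j^1 | w_j^1}|^2 \proj{v_j^1} = p_j \proj{v_j^1},
\]
which gives exactly the claimed contribution. The second identity is symmetric.

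For a one-dimensional subspace $\Subspace_j = \Span(\ket{v})$, $\ProjA$ and $\ProjB$ act as identity or zero, so $\ket v = \JorKetA{j}{\lambda_0} = \JorKetB{j}{\lambda_1}$ for some $\lambda_0,\lambda_1 \in \{0,1\}$. By the convention stated just after Jordan's lemma, $\JorKetA{j}{1}$ (resp. $\JorKetB{j}{1}$) equals $\ket v$ when $\lambda_0=1$ (resp. $\lambda_1=1$) and is the zero vector otherwise. A straightforward case check on $(\lambda_0,\lambda_1) \in \{0,1\}^2$ shows that $\ProjA\ProjB\ProjA$ restricted to $\Subspace_j$ equals $\lambda_0 \lambda_1 \proj{v}$, while $p_j \proj{v_j^1} = |\braket{v_j^1|w_j^1}|^2 \proj{v_j^1}$ also evaluates to $\lambda_0 \lambda_1 \proj{v}$ (it is zero unless both $\lambda_0 = \lambda_1 = 1$, in which case it is $\proj v$). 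The second identity again follows by symmetry.

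Summing the per-subspace contributions using the orthogonal decomposition $\mathcal{H} = \bigoplus_j \Subspace_j$ yields the two claimed global identities. I do not anticipate any real obstacle: the only mild subtlety is bookkeeping the one-dimensional cases correctly under the zero-vector convention for $\JorKetA{j}{1}$ and $\JorKetB{j}{1}$, which the case analysis above handles.
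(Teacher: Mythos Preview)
Your proof is correct and follows the natural block-diagonal argument via Jordan's lemma; the paper itself does not give a proof of this claim, merely calling it ``straightforward,'' and your write-up is exactly the kind of verification the authors had in mind.
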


\newcommand{\MWDist}[2]{\mathsf{MWDist}(#1,#2)}
\paragraph{Alternating projectors.}
Jordan's lemma allows us to characterize the behavior of an alternating sequence of binary projective measurements. Define the following (classical) probability distribution $\MWDist{p}{T}$ (for ``Marriott--Watrous distribution''), parameterized by $p \in [0,1]$ and $T \in \Z$:
\begin{longfbox}[breakable=false, padding=1em, margin-top=1em, margin-bottom=1em]
\noindent $\MWDist{p}{T}$:
\begin{enumerate}
    \item For each $i \in [T]$, set $a_i \coloneqq 0$ with probability $p$ and $a_i \coloneqq 1$ otherwise. 
    \item Let $b_0 \coloneqq 1$. For each $i \in [T]$, define $b_i \coloneqq b_{i-1} \oplus a_i$. 
    \item Output $b_1,b_2,\ldots,b_T$.
\end{enumerate}
\end{longfbox}

The following is a straightforward consequence of Jordan's lemma; see e.g.\ \cite{chiesa2022post} for a proof.
\begin{lemma}
\label{lemma:alt-meas-general-state}
The measurement outcomes that result from applying $T$ alternating binary projective measurements $\ProjA,\ProjB,\ProjA,\ProjB,\ldots$ to the state $\sum_j \alpha_j \JorKetB{j}{1}$ have the following distribution:
\begin{enumerate}[nolistsep]
    \item sample $p_j$ with probability  $|\alpha_j|^2$;
    \item output $\MWDist{p_j}{T}$.
\end{enumerate}
\end{lemma}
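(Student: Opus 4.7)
The plan is to use Jordan's lemma to reduce the problem to an independent analysis within each Jordan subspace, and then handle the two-dimensional case by direct computation. First I would observe that because each subspace $\Subspace_j$ is invariant under both $\ProjA$ and $\ProjB$, the projector $\JorProj{j}$ commutes with every binary measurement in the alternating sequence. Consequently, for any outcome sequence $b_1,\ldots,b_T$ the probability of observing that sequence when the alternating measurement is applied to $\sum_j \alpha_j \JorKetB{j}{1}$ decomposes as $\sum_j |\alpha_j|^2 \cdot q_j(b_1,\ldots,b_T)$, where $q_j$ is the probability conditional on the initial state being $\JorKetB{j}{1}$. This orthogonal decomposition eliminates cross-terms and gives the ``sample $j$ with probability $|\alpha_j|^2$'' step of the claimed distribution.

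Next, I would fix a two-dimensional Jordan subspace $\Subspace_j$ (the one-dimensional case is immediate: $p_j \in \{0,1\}$, both projectors act as $0$ or $\id$, and every outcome is deterministic, matching $\MWDist{p_j}{T}$). Using Jordan's lemma, up to global phase I can write $\JorKetB{j}{1} = \sqrt{p_j}\, \JorKetA{j}{1} + \sqrt{1-p_j}\, \JorKetA{j}{0}$ and symmetrically $\JorKetA{j}{1}$ has overlap $\sqrt{p_j}$ with $\JorKetB{j}{1}$. A direct computation then yields the transition rule: from state $\JorKetB{j}{b}$ a measurement of $\ProjA$ returns outcome $b$ with probability $p_j$ (leaving the state in $\JorKetA{j}{b}$) and outcome $1-b$ with probability $1-p_j$ (leaving the state in $\JorKetA{j}{1-b}$), and symmetrically the measurement $\ProjB$ applied to $\JorKetA{j}{b}$ obeys the same rule. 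Since after each step the state lies in one of $\{\JorKetA{j}{0},\JorKetA{j}{1},\JorKetB{j}{0},\JorKetB{j}{1}\}$ and is determined entirely by the most recent outcome, this transition rule holds at every step of the $T$-step sequence.

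Finally, I would interpret this transition rule as exactly generating $\MWDist{p_j}{T}$. Defining $a_i \coloneqq b_{i-1} \oplus b_i$ (with $b_0 \coloneqq 1$ corresponding to the fact that $\JorKetB{j}{1}$ is the $+1$ eigenvector of $\ProjB$, i.e.\ the outcome a hypothetical preceding $\ProjB$ would have returned), the analysis above shows the $a_i$ are i.i.d.\ Bernoulli with $\Pr[a_i=0]=p_j$ and $b_i = b_{i-1} \oplus a_i$. This is the definition of $\MWDist{p_j}{T}$, which combined with the first paragraph completes the proof.

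The main obstacle is largely bookkeeping: ensuring the transition probabilities are genuinely $p_j$ and $1-p_j$ regardless of whether the current step is a $\ProjA$- or $\ProjB$-measurement, and setting the initial bit $b_0 = 1$ correctly. Both follow from the symmetric role of the two Jordan bases within each $\Subspace_j$ (each projector is rank-one with the other basis vector mapping to $\sqrt{p_j}$ times the distinguished vector) and from the hypothesis that the initial state is supported on the $+1$-eigenspace of $\ProjB$.
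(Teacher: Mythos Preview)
Your proposal is correct and follows the standard Jordan's-lemma argument that the paper itself defers to \cite{chiesa2022post} rather than spelling out. The decomposition into Jordan subspaces, the two-dimensional transition analysis, and the identification with $\MWDist{p_j}{T}$ are exactly the expected steps, and your bookkeeping of $b_0=1$ and the symmetric transition probabilities is accurate.
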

Via a Chernoff bound, we then obtain the following very useful result.
\begin{claim}
\label{claim:estimate-concentration}
For $b_0,b_1,\ldots,b_{n} \in \bits$, define
\begin{equation*}
\NumReps(b_0,b_1,\dots,b_{n}) \coloneqq \frac{|\{ j \in \{1, \ldots, n\} : b_{j-1} = b_{j} \}|}{n}.
\end{equation*}
Fix $p \in [0,1], T \in \Z$. Let $X \coloneqq \NumReps(1,b_1,\ldots,b_T)$ for $b_1,\ldots,b_T \gets \MWDist{T}{p}$. Then $\E[X] = p$, and for any $\varepsilon, \delta \in [0,1]$, if $T \geq \left\lceil \ln(\frac{1}{2\delta})/(2\varepsilon^2) \right\rceil$,
\[\Pr[\abs{X - p} \leq \varepsilon] \geq 1-\delta.\]
\end{claim}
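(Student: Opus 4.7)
The plan is to reduce this concentration claim to a standard Hoeffding bound on a sum of i.i.d.\ Bernoullis, by unpacking the definition of $\MWDist{p}{T}$ and showing that $X$ literally counts the fraction of the $a_j$'s that equal $0$.

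First I would observe the key structural fact: from $b_j \coloneqq b_{j-1} \oplus a_j$ we have $b_{j-1} = b_j$ if and only if $a_j = 0$. Since the convention $b_0 \coloneqq 1$ is baked into both $\NumReps$ (which is called on $(1,b_1,\ldots,b_T)$) and $\MWDist{p}{T}$, this equivalence holds for every index $j \in \{1,\ldots,T\}$. Therefore
\[
X \;=\; \frac{|\{ j \in [T] : b_{j-1} = b_j \}|}{T} \;=\; \frac{1}{T}\sum_{j=1}^{T} \mathbf{1}[a_j = 0].
\]
Because the $a_j$ in $\MWDist{p}{T}$ are drawn i.i.d.\ with $\Pr[a_j = 0] = p$, the random variables $Y_j \coloneqq \mathbf{1}[a_j = 0]$ form an i.i.d.\ $\mathrm{Bernoulli}(p)$ sequence.

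Taking expectations immediately yields $\E[X] = \frac{1}{T}\sum_{j=1}^T \E[Y_j] = p$, which gives the first assertion. For the tail bound, I would apply Hoeffding's inequality to the $[0,1]$-valued i.i.d.\ variables $Y_1,\ldots,Y_T$:
\[
\Pr\!\left[\,|X - p| > \varepsilon\,\right] \;\le\; 2\exp\!\left(-2 T \varepsilon^{2}\right).
\]
Substituting $T \ge \lceil \ln(1/(2\delta))/(2\varepsilon^2)\rceil$ (interpreting the paper's bound as the one that makes the right-hand side at most $\delta$; concretely one needs $2T\varepsilon^2 \ge \ln(2/\delta)$, and the ceiling with the appropriate constant ensures this) gives the claimed $\Pr[|X-p|\le \varepsilon] \ge 1-\delta$.

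There is essentially no obstacle here beyond the bookkeeping of matching conventions. The only subtlety is noticing that the initial token $b_0 = 1$ fed into $\NumReps$ is exactly the one used to start the $\MWDist$ recursion, so that every pairwise comparison $(b_{j-1},b_j)$ is captured by the Bernoulli $Y_j$ with no edge effects; once that is observed, the claim is just Hoeffding.
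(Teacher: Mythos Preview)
Your proposal is correct and matches the paper's approach: the paper simply says ``Via a Chernoff bound, we then obtain the following very useful result'' without giving further detail, and your argument---observing that $b_{j-1}=b_j$ iff $a_j=0$ so that $X$ is the empirical mean of i.i.d.\ $\mathrm{Bernoulli}(p)$ variables, then invoking Hoeffding---is exactly the intended unpacking. Your parenthetical remark about the constant inside the logarithm (the paper writes $\ln(1/(2\delta))$ where the two-sided Hoeffding bound gives $\ln(2/\delta)$) is a fair observation; this appears to be a minor typo in the stated threshold rather than a gap in the argument.
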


An important consequence of the above is the existence of an efficient $(\varepsilon,\delta)$-almost projective measurement related to the Jordan decomposition.

\begin{definition}[\cite{zhandry2020schrodinger}]
\label{def:almost-proj}
A real-valued measurement $\APMeas$ is \textbf{$(\varepsilon,\delta)$-almost-projective} if applying $\APMeas$ twice in sequence to any state $\rho$ produces measurement outcomes $p,p'$ where 
\begin{equation*}
\Pr[\abs{p-p'} \leq \varepsilon] \geq 1-\delta.
\end{equation*}
\end{definition}

\begin{lemma}
	\label{lemma:effjor}
        For any $\varepsilon, \delta > 0$, and binary projective measurements $(\ProjA,\ProjB)$,
	there is an $(\varepsilon,\delta)$-almost projective measurement $\EffJor_{\varepsilon,\delta}$ which applies $\ProjA,\ProjB$ a total of $O(\frac{1}{\varepsilon^2} \log\frac1\delta)$ times, with the following properties:
	\begin{itemize}
		\item if $\Tr(\ProjA \rho) = 1$ then $\E_{p \gets \EffJor(\rho)}[p] = \Tr(\ProjB \rho) = \Tr(\ProjA \ProjB \ProjA \rho)$;
		\item if $\Pr[\EffJor(\rho) \geq p] \geq \gamma$ then $\sum_{j, p_j \geq p - \varepsilon} \JorBraA{j}{1} \rho \JorKetA{j}{1} \geq \gamma - \delta$;
		\item similarly, if $\Pr[\EffJor(\rho) \leq p] \geq \gamma$ then $\sum_{j, p_j \leq p + \varepsilon} \JorBraA{j}{1} \rho \JorKetA{j}{1} \geq \gamma - \delta$;
	\end{itemize}
\end{lemma}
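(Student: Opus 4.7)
The plan is to build $\EffJor_{\varepsilon,\delta}$ from the natural alternating-measurement protocol and use Jordan's lemma to reduce every claim to the classical concentration statement in \Cref{claim:estimate-concentration}. Concretely, set $T = \lceil \ln(1/\delta)/(2\varepsilon^2)\rceil = O(\varepsilon^{-2}\log\delta^{-1})$ and define $\EffJor_{\varepsilon,\delta}(\rho)$ to apply $\ProjA$ first, recording outcome $a_0$, then alternate $\ProjB,\ProjA,\ProjB,\ldots$ for $T$ more measurements producing outcomes $b_1,\ldots,b_T$; output $\NumReps(1, b_1,\ldots,b_T)$ if $a_0 = 1$ and a sentinel $\bot$ (treated as incomparable to real numbers) otherwise. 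The total number of invocations of $\ProjA,\ProjB$ is $T+1$, matching the required complexity.

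Next I invoke \Cref{lemma:jordan} to decompose the Hilbert space into Jordan subspaces $\Subspace_j$ with values $p_j$, each invariant under both projectors. Conditioned on $a_0 = 1$, the probability of the subsequent measurements being consistent with subspace $\Subspace_j$ is exactly $\langle v_j^1|\rho|v_j^1\rangle$; coherences between distinct subspaces do not affect the joint outcome probabilities because $\{\JorKetA{j}{1}\}_j$ and $\{\JorKetB{j}{1}\}_j$ are each orthonormal sets. By (the symmetric version of) \Cref{lemma:alt-meas-general-state} the classical sequence $(1,b_1,\ldots,b_T)$ is distributed as $\MWDist{p_j}{T}$ within subspace $j$, so \Cref{claim:estimate-concentration} guarantees that $X = \NumReps(1,b_1,\ldots,b_T)$ lies within $\varepsilon$ of $p_j$ with probability at least $1-\delta$.

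The three listed properties then follow routinely. For (i), if $\Tr(\ProjA\rho)=1$ then $a_0=1$ surely, and summing the per-subspace expectations gives $\E[X] = \sum_j p_j\langle v_j^1|\rho|v_j^1\rangle = \Tr(\ProjA\ProjB\ProjA\rho) = \Tr(\ProjB\rho)$, where the last equality uses that $\rho$ is supported on the range of $\ProjA$. For (ii), since the $\bot$ branch contributes nothing to $\Pr[\EffJor(\rho)\geq p]$, summing the per-subspace Chernoff bound yields $\Pr[X \geq p] \leq \sum_{j:\,p_j \geq p-\varepsilon}\langle v_j^1|\rho|v_j^1\rangle + \delta$, which rearranges to the stated bound; (iii) is symmetric. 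For almost-projectivity, observe that both $\ProjA$ and $\ProjB$ are block-diagonal with respect to the Jordan decomposition, so after one run of $\EffJor$ the state remains in the same $\Subspace_j$ (with $\bot$ left on the $\ProjA$-rejecting summand); running $\EffJor$ again therefore produces a second estimate concentrated around the same $p_j$, and choosing $T$ with a factor-$2$ slack plus a union bound gives $|p-p'|\leq\varepsilon$ with probability $\geq 1-\delta$, matching \Cref{def:almost-proj}.

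The main obstacle I expect is the bookkeeping required to cleanly isolate the $\ProjA$-accepting branch from the $\ProjA$-rejecting one, since without the $\bot$ convention the $\JorKetA{j}{0}$-weights would leak into the right-hand sides of (ii) and (iii). A related subtlety is that a second run can produce $\bot$ on a branch where the first run produced a numeric estimate (e.g. when the post-measurement state is $\JorKetB{j}{b}$ for small $p_j$), so the almost-projectivity argument must either restrict to the joint numeric branch or append one extra $\ProjA$ to anchor the final state at $\JorKetA{j}{b_T}$; either fix preserves the asymptotic count $T+1 = O(\varepsilon^{-2}\log\delta^{-1})$. Once this is pinned down, no genuinely quantum difficulty remains: the entire analysis reduces to the classical Chernoff bound of \Cref{claim:estimate-concentration} applied per Jordan subspace.
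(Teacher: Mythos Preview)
Your derivation of properties (i)--(iii) via \Cref{lemma:alt-meas-general-state} and \Cref{claim:estimate-concentration} is correct and mirrors the paper. The gap is in almost-projectivity, and neither of your proposed fixes closes it. Appending a final $\ProjA$ only pins the post-run state to some $\JorKetA{j}{b_T}$; for $p_j$ near $1/2$ one has $b_T=0$ with probability about $1/2$, so the second run's initial $\ProjA$ deterministically returns $a_0=0$ and outputs~$\bot$. The natural further fix---continue alternating until $\ProjA\to 1$---also fails within the stated budget: if $p_j(1-p_j)=\Theta(1/T^2)$ then the probability of landing in $\JorKetA{j}{0}$ is $\Theta(1/T)$, while each subsequent pair of measurements returns to $\JorKetA{j}{1}$ only with probability $2p_j(1-p_j)=\Theta(1/T^2)$. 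A short calculation shows that with $R$ repair rounds the worst-case (over $p_j\in[0,1]$) failure probability is $\Theta(T/R)$, not exponentially small in $R$, so achieving failure $\le\delta$ costs $\Omega(T/\delta)$ rather than $O(\varepsilon^{-2}\log(1/\delta))$ measurements.

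The paper resolves this with a balancing trick that your proposal is missing. It replaces $\ProjA,\ProjB$ by $\ProjA'=\ProjA\otimes\proj{00}$ and $\ProjB'=\ProjB\otimes\proj{++}+I\otimes\proj{--}$ on two ancilla qubits, which sends each Jordan value $p_j$ to $\tfrac14 p_j+\tfrac14\in[\tfrac14,\tfrac12]$. With the values confined to this interval, every $\ProjA'$-measurement has probability at least $\tfrac14$ of returning outcome $1$ regardless of the current state in the subspace, so a repair phase of $O(\log(1/\delta))$ additional alternations lands in the image of $\ProjA'$ except with probability $\le\delta/2$. The estimate is then recovered by inverting the affine shift, outputting $4\bigl(\NumReps(1,b_1,\ldots,b_T)-\tfrac14\bigr)$. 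After repair the state is supported on the vectors $\JorKetA{j}{1}\otimes\ket{00}$, so a second run never aborts and almost-projectivity follows from the same Chernoff argument you already gave.
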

\begin{proof}[Proof sketch]
	Let $\Pi'_A \coloneqq \Pi_A \otimes \proj{00}$, $\Pi'_B = \Pi_B \otimes \proj{++} + I \otimes \proj{--}$. The algorithm works as follows:
	\begin{enumerate}
		\item Measure $\Pi_A'$ on $\rho \otimes \proj{00}$; if the outcome is $0$, abort.
		\item Alternate $\Pi_B',\Pi_A'$ $T = 2\left\lceil \ln(\frac{1}{\delta})/\varepsilon^2 \right\rceil$ times, obtaining outcomes $b_1,\ldots,b_T$.
		\item Continue alternating until $\Pi_A' \to 1$, or at most $k = 3\log \frac{2}{\delta}$ times.
		\item Output $4(\NumReps(1,b_1,\ldots,b_T)-\frac14)$.
	\end{enumerate}
	If $\JorKetA{j}{1}$ is an eigenvector of $\Pi_A \Pi_B \Pi_A$ with eigenvalue $p_j$, then $\JorKetA{j}{1} \otimes \ket{00}$ is an eigenvector of $\Pi'_A \Pi'_B \Pi'_A$ with eigenvalue $\frac{1}{4} p_j + \frac{1}{4} \in [\frac14,\frac12]$. Hence the transition probability when alternating $\Pi_A',\Pi_B'$ is between $\frac14$ and $\frac12$, so the probability that after $k$ applications we have not reached $\Pi_A'$ is at most $\left(\frac34\right)^k \leq \delta/2$. Combined with the usual analysis of alternating projectors \cite{zhandry2020schrodinger,chiesa2022post}, the lemma statement follows.
\end{proof}

\newcommand{\StateTrans}{\hyperref[alg:state-trans]{\mathsf{StateTrans}}}
\subsection{State transformation for almost-projective measurements}
\label{sec:sim-jordan-alg}

In this section, we describe an algorithm for the following problem. Let $\APMeas_0,\APMeas_1$ be $(\varepsilon,\delta)$-almost projective measurements (\Cref{def:almost-proj}). Given a state $\rho$ such that $\APMeas_0(\rho) \geq \alpha$ with high probability, and a target $\beta \in [0,1]$, efficiently prepare a state $\sigma$ so that either:
	\begin{enumerate}[label=(\roman*)]
		\item \label{case:statetrans-gtr} $\APMeas_1(\sigma) \gtrsim \beta$ with high probability, or
		\item \label{case:statetrans-less} both $\APMeas_0(\sigma) \gtrsim \alpha$ \emph{and} $\APMeas_1(\sigma) < \beta$ with high probability.
	\end{enumerate}
	That is, the algorithm either converts a ``good'' state with respect to $\APMeas_0$ into a ``good'' state with respect to $\APMeas_1$, or produces a ``good'' state with respect to $\APMeas_0$ which has small overlap with \emph{any} state that is ``good'' with respect to $\APMeas_1$.

Before giving the algorithm, we set up some preliminaries. By Naimark dilation, any measurement $\APMeas = (M_q)_{q \in \bits^n}$, can be implemented as a unitary $\MeasUnitary{}$ on $\RegH \otimes \RegW$ for some ancilla register $\RegW$ initialized to zero, followed by some projective measurement $(\Pi_q)_{q \in \bits^n}$ on $\RegW$, where the $\Pi_q$ are independent of $\APMeas$. Formally, for each $q \in \bits^n$, the unitary $\MeasUnitary{}$ and projector $\Pi_q$ satisfy $M_{q} \rho M_{q}^{\dagger} = \Tr_{\RegW}(\Pi_q \MeasUnitary{} (\rho \otimes \ketbra{0}{0}_{\RegW}) \MeasUnitary{}^{\dagger})$ for all states $\rho$. By ``black-box unitary access to $\APMeas$'', we mean access to $\MeasUnitary{}$,$\MeasUnitary{}^{\dagger}$, and access to the unitary $\sum_{q,q' \in \bits^n} \Pi_q \otimes \ketbra{q' \oplus q}{q'}$.

The main result of this section is the following lemma.

\begin{lemma}
	\label{lemma:unif-alg-correctness}
	Let $\APMeas_0,\APMeas_1$ be $(\varepsilon,\delta)$-almost projective measurements on the same system $\RegH$ for some $0 < \delta < 1$. There is an algorithm $\StateTrans$ such that for every state $\rho$, and real numbers $\alpha,\beta,\gamma \in [0,1]$, $\tau \in [2\varepsilon,1-\delta]$ satisfying $\Pr[\APMeas(\rho) \geq \alpha] \geq 1-\gamma$, letting $(\sigma,c) \gets \StateTrans_{\varepsilon,\delta,\tau}[\APMeas_0,\APMeas_1,\beta](\rho)$ and $q_b \gets \APMeas_b(\sigma)$, the following hold, for $K = \lceil \frac{2}{\tau} \ln \frac{1}{\delta} \rceil$:
    \begin{enumerate}
        \item $\Pr[c = \bot] \le 4K \sqrt{\delta},$
        \item $\Pr[c = 0 \wedge q_1 < \beta - \varepsilon] \le \delta,$
        \item $\Pr[c = 1 \wedge q_0 < \alpha - 2K\varepsilon] \leq \gamma + \sqrt{\tau + \varepsilon + \delta},$ and
        \item $\Pr[c = 1 \wedge q_1 > \beta] \leq \tau + \varepsilon + \delta.$
    \end{enumerate}
	Furthermore $\StateTrans_{\epsilon, \delta, \tau}$ runs in expected time $O(K)$, given black-box unitary access to $\APMeas_0,\APMeas_1$.
\end{lemma}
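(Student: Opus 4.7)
The plan is to design a bounded-iteration algorithm that alternates between the two measurements, and to analyze it via Jordan's lemma applied to the Naimark dilations. Concretely, I would have $\StateTrans$ maintain a current state $\sigma_0 \coloneqq \rho$ and repeat, for $i = 0, 1, \ldots, K-1$: apply $\APMeas_1$ to $\sigma_i$ obtaining outcome $q$ and post-measurement state $\sigma_i'$; if $q \geq \beta$, halt and output $(\sigma_i', 0)$; otherwise apply $\APMeas_0$ to $\sigma_i'$ obtaining outcome $p$ and state $\sigma_{i+1}$; if $p < \alpha - 2(i+1)\varepsilon$, halt and output $(\sigma_{i+1}, \bot)$. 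If all $K$ iterations complete without halting, output $(\sigma_K, 1)$.

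The first two post-conditions are essentially direct from $(\varepsilon,\delta)$-almost-projectivity. Bullet 2 follows because, conditioned on $c = 0$, the state just came out of a measurement of $\APMeas_1$ with outcome $\geq \beta$, and almost-projectivity of $\APMeas_1$ then says a fresh measurement yields $q_1 \geq \beta - \varepsilon$ except with probability $\delta$. Bullet 1 follows from a union bound across the $K$ iterations: in each iteration there are a constant number of measurement steps, each of which can falsely report a large drop (via gentle measurement) only with probability $O(\sqrt{\delta})$, summing to $4K\sqrt{\delta}$.

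The main obstacle, and the place where the Jordan machinery pays off, is Bullet 4, the bound $\Pr[c=1 \wedge q_1 > \beta] \leq \tau + \varepsilon + \delta$. I would apply \Cref{lemma:jordan} to (the outcome-``$\geq \beta$'' projective Naimark dilations of) $\APMeas_0$ and $\APMeas_1$, so that the alternating measurement process decomposes into independent two-state Markov chains on Jordan subspaces, with transition probability controlled by the Jordan value $p_j$. Let $w$ denote the total weight of $\sigma_K$ on Jordan subspaces of value $\geq \beta$. Conditioned on reaching iteration $i$ without halting, an independent $\APMeas_1$ measurement in iteration $i$ succeeds (outcome $\geq \beta$) with probability at least $w - \varepsilon$ (using the almost-projective approximation), so the probability of surviving all $K$ iterations with $c = 1$ is at most $(1-w+\varepsilon)^K \leq e^{-K(w-\varepsilon)}$. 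Since $K = \lceil (2/\tau) \ln(1/\delta) \rceil$, the only way $c = 1$ can occur with non-negligible probability is $w \leq \tau + \varepsilon$, and a final almost-projective measurement of $\APMeas_1$ on $\sigma_K$ then exceeds $\beta$ only with probability $w + \delta \leq \tau + \varepsilon + \delta$. Bullet 3 is proved in the same framework: the abort check at iteration $i$ fails only if the Jordan weight on $\APMeas_0$-low subspaces is small, and chaining almost-projectivity of $\APMeas_0$ across iterations (with the $2(i+1)\varepsilon$ slack accommodating two $\varepsilon$-drops per iteration), together with a Fuchs--van de Graaf conversion from the Bullet 4 probability estimate into trace distance, yields the claimed $\gamma + \sqrt{\tau + \varepsilon + \delta}$ bound.

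The most delicate part will be threading the approximation errors through $K = \tilde O(1/\tau)$ rounds without the slack swamping the $\tau + \varepsilon + \delta$ bound. This requires analyzing each iteration's $\APMeas_1$ measurement against a \emph{fixed} threshold $\beta$ rather than the instantaneous post-measurement Jordan value, using almost-projectivity of $\APMeas_0$ only to control the ``progress'' quantity in the abort check, and letting the $2(i+1)\varepsilon$ tolerance grow linearly in $i$ so that the overall abort probability scales as $K\sqrt\delta$ instead of $K^2 \sqrt\delta$. Managing the interplay among the three post-conditions simultaneously is what forces the particular shape of the parameters in the statement.
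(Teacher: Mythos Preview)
Your proposal has a genuine gap, and the algorithm as described will not achieve the stated bounds.

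The central issue is that the $(\varepsilon,\delta)$-almost-projective property of $\APMeas_0$ controls only two \emph{consecutive} applications of $\APMeas_0$; it says nothing once $\APMeas_1$ is applied in between. In your loop, between the $i$-th and $(i{+}1)$-th applications of $\APMeas_0$ there is an application of $\APMeas_1$ whose outcome was $<\beta$. You have no gentle-measurement bound here (the probability of that outcome need not be close to $1$), so $\APMeas_1$ may disturb the state by $\Theta(1)$ in trace distance. Hence the ``$\alpha_i$ drops by at most $2\varepsilon$ per round'' chaining fails, and both your Bullet~1 and Bullet~3 arguments collapse: you cannot get $\Pr[p<\alpha-2(i{+}1)\varepsilon]=O(\sqrt\delta)$ per iteration, and you cannot propagate the initial $\Pr[\APMeas_0(\rho)\ge\alpha]\ge 1-\gamma$ guarantee through $K$ rounds.

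Your Bullet~4 argument has a related structural problem. You want to invoke Jordan's lemma on the Naimark dilations, but your algorithm applies $\APMeas_0,\APMeas_1$ as POVMs on $\RegH$, tracing out the ancillas after each shot. That is \emph{not} an alternating projection on the dilated space $\RegH\otimes\RegW_0\otimes\RegW_1$; after tracing out you re-initialize the ancillas, and the Jordan subspaces are not preserved by this operation. Consequently the ``weight $w$ on high Jordan subspaces'' is not invariant across iterations, and your survival bound $(1-w+\varepsilon)^K$---which implicitly treats each iteration as an independent $\mathrm{Ber}(w)$ trial against a fixed $w$---is not justified. (Separately, defining $w$ via the \emph{final} state $\sigma_K$ and then using it to bound the probability of reaching $\sigma_K$ is circular.)

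The paper's construction addresses exactly these points. It works on the dilated space throughout, alternating the \emph{projections} $\APMeasProjExt{\ge\alpha_i-\varepsilon}{0}$ and $\APMeasProjExt{<\beta}{1}$ so that Jordan's lemma applies cleanly and the outcome sequence has the Marriott--Watrous form; it uses $\NumReps$ of that sequence (rather than a single shot per round) to estimate the Jordan value and decide $c\in\{0,1\}$; and crucially it inserts a \emph{state-repair} phase (continue alternating until $\APMeasProjExt{\ge\alpha_i-\varepsilon}{0}\to 1$) together with the pseudoinverse lemma (\Cref{lemma:pseudoinverse}) to show that, conditioned on not aborting, the state is restored to $\img(\APMeasProjExt{\ge\alpha_i-\varepsilon}{0})$ with the ancillas effectively reset, so that a fresh $\APMeas_0$ application is again governed by almost-projectivity. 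This repair step is precisely what lets one ``chain'' $\APMeas_0$ across iterations despite the intervening $\APMeas_1$-type measurements, and is the missing ingredient in your scheme.
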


That is, $\StateTrans$ takes as input a state $\rho$ and outputs a state $\sigma$ and a flag $c$ such that if $c = 0$ then $\sigma$ satisfies condition \ref{case:statetrans-gtr} and if $c = 1$ then $\sigma$ satisfies condition \ref{case:statetrans-less}. ($c = \bot$ indicates failure.)

Let $\APMeas_0,\APMeas_1$ be almost projective measurements on the same register $\RegH$, and let $\RegW_0,\RegW_1$ be corresponding ancilla registers for the Naimark dilations of $\APMeas_0$ and $\APMeas_1$ respectively. We define the projectors:
\begin{align*}
\Pi_{\geq p}^b \coloneqq \sum_{q \geq p} \MeasUnitary{b}^{\dagger} \Pi_q \MeasUnitary{b} \qquad\text{and}\qquad \Pi_{< p}^b \coloneqq \id_{\RegH \otimes \RegW_b} - \Pi_{\geq p}^b\,.
\end{align*}
We define $\tilde{\Pi}_{\geq p}^b \coloneqq \Pi_{\geq p}^b \otimes \ketbra{0}{0}_{\RegW_{1-b}}$, and define $\tilde{\Pi}_{< p}^b \coloneqq \Pi_{< p}^b \otimes \ketbra{0}{0}_{\RegW_{1-b}}$.

\begin{longfbox}[breakable=false, padding=1em, margin-top=1em, margin-bottom=1em]
    \begin{algorithm}\label{alg:state-trans}
        $\mathsf{StateTrans}_{\varepsilon, \delta, \tau}[\APMeas_0, \APMeas_1, \beta]$.
    \end{algorithm}
    \noindent \textbf{Input: } Quantum register $\reg{H}$.
    \begin{enumerate}
        \item Let $K \coloneqq \lceil \frac{2}{\tau} \ln\frac{1}{\delta} \rceil$, $T \coloneqq \lceil 1/\sqrt{\delta} \rceil$.
        For $i = 1, \ldots, K$:
    \begin{enumerate}
        \item \label[step]{step:initial-meas} Apply the $(\varepsilon,\delta)$-almost-projective measurement $\APMeas_0$, obtaining outcome $\alpha_i$. 
        \item \label[step]{step:init-w} Initialize ancilla registers $\RegW_0,\RegW_1$ to $\ket{0}$.
        \item \label[step]{step:disturb} Apply the measurement $\APMeasProjExt{< \beta}{1}$, obtaining outcome $b_1$. If $b_1 = 0$, apply $\MeasUnitary{1}$ to $(\RegH, \RegW_1)$, discard the $\RegW$ registers, and output $c = 0$ along with the $\RegH$ register.
        \item \label[step]{step:estimate} Apply the measurements $\APMeasProjExt{\geq \alpha_i - \varepsilon}{0}$, $\APMeasProjExt{< \beta}{1}$ in an alternating fashion $K-1$ times, obtaining outcomes $b_2,\ldots,b_{2K-1}$.
        \item \label[step]{step:alg-repair} Apply the measurements $\tilde{\Pi}_{\geq \alpha_i - \varepsilon}^0$, $\tilde{\Pi}_{< \beta}^1$ in an alternating fashion until $\tilde{\Pi}_{\geq \alpha_i - \varepsilon}^0 \to 1$, or until $2TK+1$ measurements have been applied in this step ($2(T+1)K$ overall). In the latter case, abort (output $\bot$).
        \item If $\NumReps(b_1,\ldots,b_{2K-1}) \geq 1-\tau$, discard the $\RegW$ registers and output $c = 1$ along with the $\RegH$ register.
        \item Apply $\MeasUnitary{0}$ to $(\RegH, \RegW_0)$, and discard the $\RegW$ registers.
        \end{enumerate}
        \item If the algorithm does not terminate with output above, abort (output $\bot$).
    \end{enumerate}
\end{longfbox}

In the proof we will make use of the following lemma, due to \cite{lombardi2022post}:
\newcommand{\DMatrix}{\rho}
\begin{lemma}[Pseudoinverse lemma]
\label{lemma:pseudoinverse}
    Let $\ProjA,\ProjB$ be projectors, and $(\JorProj{j})_j \coloneqq \MeasJor[\ProjA,\ProjB]$ the corresponding Jordan measurement. Let $\DMatrix$ be a state such that $\ProjA \DMatrix = \DMatrix \ProjA$ and $\Tr(\ProjA \DMatrix) \geq 1 - \gamma$, and let $\Pi_0 \coloneqq \sum_{j, p_j = 0} \JorProj{j}$. Let $E \coloneqq \sum_{j, p_j > 0} \frac{1}{p_j} \JorProj{j}$. There exists a ``pseudoinverse'' state $\sigma$ with $\Tr(\ProjB \sigma) = 1$ such that all of the following are true:
    \begin{enumerate}[noitemsep]
        \item $\Tr(\ProjA \sigma) = \frac{1-\Tr(\Pi_0 \DMatrix)}{\Tr(E \DMatrix)} $,
        \item $\td\mparen{\DMatrix,\frac{\ProjA \sigma \ProjA}{\Tr(\ProjA \sigma)}} \leq \sqrt{\Tr(\Pi_0 \DMatrix)}$,
        \item for all $j$ such that $p_j > 0$ it holds that $\Tr(\JorProj{j} \sigma) = \frac{\Tr(\JorProj{j} \DMatrix)}{p_j \cdot \Tr(E \DMatrix)}$, and
        \item $\Tr(\Pi_0 \sigma) = 0$.
    \end{enumerate}
\end{lemma}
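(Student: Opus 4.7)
The plan is to construct $\sigma$ explicitly from the Jordan decomposition $(\JorProj{j})_j$ of $\MeasJor[\ProjA,\ProjB]$ and then verify the four stated clauses (plus the preamble's $\Tr(\ProjB\sigma)=1$). The core construction is the rescaled pseudoinverse operator
\begin{equation*}
R \;\coloneqq\; \sum_{j:\,p_j > 0} \tfrac{1}{\sqrt{p_j}}\, \ket{w_j^1}\!\bra{v_j^1},
\end{equation*}
which sends each $\ket{v_j^1}$ with $p_j>0$ to $\ket{w_j^1}/\sqrt{p_j}$ and annihilates every $\ket{v_j^0}$ and every Jordan subspace with $p_j=0$. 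Two Jordan identities underlie this choice: $\ProjA R = \Pi_+^A \coloneqq \sum_{j:p_j>0}\proj{v_j^1}$ (using $\ProjA\ket{w_j^1} = \sqrt{p_j}\ket{v_j^1}$), and each $\ket{w_j^1}$ with $p_j>0$ lies in the $+1$-eigenspace of $\ProjB$. The pseudoinverse state is then $\sigma \coloneqq R\,\DMatrix\,R^{\dagger}/N$ with $N \coloneqq \Tr(R\,\DMatrix\,R^{\dagger})$.

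The easy clauses follow by direct computation. Because $\sigma$ is supported on $\Span\{\ket{w_j^1}\}_{p_j>0}$, we immediately get $\Tr(\ProjB\sigma)=1$ and Property~4, $\Tr(\Pi_0\sigma)=0$, by orthogonality of Jordan subspaces. For Property~3, use $\braket{w_j^1|w_k^1}=\delta_{jk}$ together with the commutation $\ProjA\DMatrix = \DMatrix\ProjA$ (which forces each $\Subspace_j$-block of $\DMatrix$ to be diagonal in $\{\ket{v_j^1},\ket{v_j^0}\}$) to compute $\Tr(\JorProj{j}\sigma) = \bra{v_j^1}\DMatrix\ket{v_j^1}/(p_j N)$ and $N = \sum_{j:p_j>0}\bra{v_j^1}\DMatrix\ket{v_j^1}/p_j = \Tr(E\,\DMatrix)$; plugging in yields Property~3. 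Summing Property~3 against $\Tr(\ProjA\proj{w_j^1}) = p_j$ gives Property~1.

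The main step is Property~2. The identity $\ProjA R = \Pi_+^A$ yields
\begin{equation*}
\frac{\ProjA\,\sigma\,\ProjA}{\Tr(\ProjA\,\sigma)} \;=\; \frac{\Pi_+^A\,\DMatrix\,\Pi_+^A}{\Tr(\Pi_+^A\,\DMatrix)},
\end{equation*}
which is exactly the post-measurement state of $\DMatrix$ under the binary projective measurement $\Pi_+^A$ conditioned on acceptance. The gentle measurement lemma (\Cref{lemma:gentle-measurement}) then bounds the trace distance by $\sqrt{1-\Tr(\Pi_+^A\,\DMatrix)}$. Writing $I = \Pi_+^A + \Pi_0 + \Pi_+^{A^\perp}$, where $\Pi_+^{A^\perp}$ denotes the sum of $\proj{v_j^0}$ over two-dimensional Jordan blocks with $p_j>0$, and using $\ProjA\DMatrix=\DMatrix\ProjA$ with $\Tr(\ProjA\DMatrix)\geq 1-\gamma$, the $\Pi_+^{A^\perp}$-weight is controlled by the $\gamma$-slack while the dominant term is $\Tr(\Pi_0\,\DMatrix)$, giving the stated bound. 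I expect Property~2 to be the main obstacle: a na\"ive per-block construction such as $\sum_{j:p_j>0}\tfrac{\Tr(\JorProj{j}\DMatrix)}{p_j\,\Tr(E\DMatrix)}\proj{w_j^1}$ satisfies Properties~1, 3, 4 but destroys the cross-Jordan-block coherences of $\DMatrix$, forcing the trace distance to be as large as $\Omega(1)$ even when $\Tr(\Pi_0\DMatrix)=0$; it is precisely the choice $R\,\DMatrix\,R^{\dagger}$ (rather than a dephased variant) that preserves these coherences and makes the $\Pi_+^A$-compression identity above exact.
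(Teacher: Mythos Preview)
The paper does not prove this lemma; it is quoted from \cite{lombardi2022post}, so there is no in-paper proof to compare against. Your construction $\sigma = R\rho R^{\dagger}/\Tr(R\rho R^{\dagger})$ with $R = \sum_{j:p_j>0} p_j^{-1/2}\ket{w_j^1}\bra{v_j^1}$ is the natural one, and your verification of $\Tr(\ProjB\sigma)=1$ and Properties~1, 3, 4 is correct.

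There is, however, a genuine gap in your Property~2 argument. Your identity $\ProjA\sigma\ProjA/\Tr(\ProjA\sigma) = \Pi_+^A\rho\Pi_+^A/\Tr(\Pi_+^A\rho)$ together with gentle measurement yields only the bound $\sqrt{1-\Tr(\Pi_+^A\rho)}$; your decomposition $I = \Pi_+^A + \Pi_0 + \Pi_+^{A^\perp}$ then gives $1-\Tr(\Pi_+^A\rho) = \Tr(\Pi_0\rho) + \Tr(\Pi_+^{A^\perp}\rho)$. The second term is at most $\gamma$ (since $\Pi_+^{A^\perp} \leq I-\ProjA$), but it does not vanish, so the closing hand-wave ``giving the stated bound'' is unjustified: what you have actually shown is $\sqrt{\Tr(\Pi_0\rho)+\gamma}$, not $\sqrt{\Tr(\Pi_0\rho)}$. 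In fact the stated bound cannot hold for \emph{any} $\sigma$ meeting the other clauses once $\gamma>0$: take a single $2$-dimensional Jordan block with $p_j\in(0,1)$ and $\rho=(1-\epsilon)\proj{v_j^1}+\epsilon\proj{v_j^0}$; then $\Tr(\Pi_0\rho)=0$, yet Properties~3,~4 and $\Tr(\ProjB\sigma)=1$ force $\sigma=\proj{w_j^1}$, whence $\ProjA\sigma\ProjA/\Tr(\ProjA\sigma)=\proj{v_j^1}$ and the trace distance is $\epsilon>0$. The paper only invokes the lemma (in the proof of \Cref{claim:single-iteration}) after first applying \Cref{claim:rotate-to-subspace} to enforce $\Tr(\ProjA\rho)=1$; in that regime $\Tr(\Pi_+^{A^\perp}\rho)=0$ and your argument goes through as written.
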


\begin{claim}
	\label{claim:rotate-to-subspace}
	Let $\rho$ be such that $\ProjA \rho = \rho \ProjA$ and $\Tr(\Pi_0 \rho) = 0$. Then there exists a state $\rho'$ such that $\Tr(E\rho') = \Tr(E\rho)$, $\Tr(\ProjA \rho') = 1$, and $\td(\rho,\rho') \leq 1 - \Tr(\ProjA \rho)$.
\end{claim}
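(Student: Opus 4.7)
The plan is to write $\rho$ explicitly in the Jordan basis of $(\ProjA,\ProjB)$, modify its support to lie entirely in the range of $\ProjA$, and then bound the resulting trace distance directly.

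First I would invoke Jordan's lemma for $(\ProjA,\ProjB)$ to obtain the subspaces $\Subspace_j$ with projectors $\JorProj{j}$, together with the bases $\{\JorKetA{j}{1},\JorKetA{j}{0}\}$ and $\{\JorKetB{j}{1},\JorKetB{j}{0}\}$. Because $\JorProj{j}$ commutes with $\ProjA$, the two conditions on $\rho$ translate cleanly: the assumption $\Tr(\Pi_0 \rho)=0$ says that $\rho$ is supported only on subspaces with $p_j > 0$, and $[\ProjA,\rho]=0$ together with invariance of $\Subspace_j$ under $\ProjA$ forces each ``block'' $\rho_j \coloneqq \JorProj{j} \rho \JorProj{j}$ to be diagonal in the $\{\JorKetA{j}{1},\JorKetA{j}{0}\}$ basis. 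Writing $q_j \coloneqq \Tr(\rho_j)$, $a_j \coloneqq \JorBraA{j}{1} \rho_j \JorKetA{j}{1}$, $b_j \coloneqq \JorBraA{j}{0} \rho_j \JorKetA{j}{0}$, we thus obtain
\[
\rho \;=\; \sum_{j:\,p_j>0} \bigl( a_j \proj{v_j^1} + b_j \proj{v_j^0} \bigr), \qquad a_j + b_j = q_j, \qquad \sum_j a_j = \Tr(\ProjA \rho).
\]

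Next I would define the candidate state
\[
\rho' \;\coloneqq\; \sum_{j:\,p_j > 0} q_j \, \proj{v_j^1}.
\]
Note that when $p_j > 0$ the vector $\JorKetA{j}{1}$ is nonzero (in one-dimensional subspaces with $p_j>0$, $\ProjA$ acts as identity; in two-dimensional subspaces $\ProjA$ is rank one with $\JorKetA{j}{1}$ as its image). A direct check yields $\Tr(\ProjA\rho') = \sum_j q_j = 1$, and since $E \proj{v_j^1} = \tfrac{1}{p_j}\proj{v_j^1}$ on $p_j > 0$ subspaces,
\[
\Tr(E\rho') \;=\; \sum_{j:\,p_j>0} \frac{q_j}{p_j} \;=\; \Tr(E \rho).
\]

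Finally, for the trace-distance bound I would compute
\[
\rho - \rho' \;=\; \sum_{j:\,p_j>0} b_j \bigl( \proj{v_j^0} - \proj{v_j^1} \bigr),
\]
which is a block-diagonal Hermitian operator whose blocks are supported on mutually orthogonal two-dimensional subspaces (the one-dimensional $p_j > 0$ blocks contribute nothing because there $b_j = 0$). Hence $\|\rho-\rho'\|_1 = \sum_j 2 b_j$, and this sum equals $2(1 - \Tr(\ProjA \rho))$ by the identity $\sum_j a_j = \Tr(\ProjA \rho)$ together with $\sum_j q_j = 1$. Dividing by two gives $\td(\rho,\rho') \le 1 - \Tr(\ProjA \rho)$, as required. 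There is no real obstacle here; the only mildly delicate step is keeping track of the one-dimensional Jordan subspaces under the convention that $\JorKetA{j}{0}$ may be the zero vector, which is why the trace-distance sum collapses to the 2D blocks.
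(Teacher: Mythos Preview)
Your argument has a genuine gap: you silently assume that $\rho$ is block-diagonal with respect to the Jordan decomposition, i.e.\ that $\rho = \sum_j \JorProj{j}\rho\JorProj{j}$. But the hypotheses only give $[\ProjA,\rho]=0$, not $[\JorProj{j},\rho]=0$; commuting with $\ProjA$ merely forces $\rho = \ProjA\rho\ProjA + (I-\ProjA)\rho(I-\ProjA)$, and within the $\ProjA$-block (or the $(I-\ProjA)$-block) $\rho$ may have arbitrary coherences between $\JorKetA{j}{1}$ and $\JorKetA{j'}{1}$ for $j\neq j'$. Your displayed decomposition $\rho = \sum_j (a_j\proj{v_j^1}+b_j\proj{v_j^0})$ is therefore unjustified, and once it fails so does the trace-distance computation.

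Concretely, take two two-dimensional Jordan blocks with $p_1,p_2\in(0,1)$ and let $\rho$ be the pure state $\tfrac12(\ket{v_1^1}+\ket{v_2^1})(\bra{v_1^1}+\bra{v_2^1})$. Then $\ProjA\rho=\rho$, $\Tr(\Pi_0\rho)=0$, and $\Tr(\ProjA\rho)=1$, so the claim demands a $\rho'$ with $\td(\rho,\rho')=0$. Your construction gives $\rho'=\tfrac12\proj{v_1^1}+\tfrac12\proj{v_2^1}$, for which $\td(\rho,\rho')=\tfrac12$. The paper avoids this by keeping the coherences: it defines a block-preserving unitary $U$ that swaps $\JorKetA{j}{0}\leftrightarrow\JorKetA{j}{1}$ inside each two-dimensional subspace and sets $\rho' \coloneqq \ProjA\rho + U(I-\ProjA)\rho U^\dagger$. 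Since $U$ commutes with every $\JorProj{j}$ (hence with $E$), the invariants $\Tr(E\rho')$ and $\Tr(\ProjA\rho')$ come out right, and $\rho-\rho' = (I-\ProjA)\rho - U(I-\ProjA)\rho U^\dagger$ immediately gives $\td(\rho,\rho')\le \Tr((I-\ProjA)\rho)$ by the triangle inequality.
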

\begin{proof}
	This is a variant of \cite[Claim 7.2]{lombardi2022post}. Let
	\[
		U \coloneqq \sum_{j,p_j \notin \{0,1\}} (\JorKetA{j}{1}\JorBraA{j}{0} + \JorKetA{j}{0}\JorBraA{j}{1}) + \sum_{j,p_j \in \{0,1\}} \JorKetA{j}{p_j} \JorBraA{j}{p_j}
	\]
	and let $\rho' \coloneqq \ProjA \rho + U(I-\ProjA)\rho U^\dagger$. Then $\td(\rho,\rho') = \frac12 \norm{(I-\ProjA)\rho - U(I-\ProjA)\rho U^\dagger}_1 \leq \Tr((I-\ProjA)\rho)$.
\end{proof}

\begin{claim}
	\label{claim:single-iteration}
	For all $\epsilon$, $\delta$, $\tau$ and all initial states in $\RegH$ at \cref{step:initial-meas}, every time the $\StateTrans_{\epsilon, \delta, \tau}$ reaches \cref{step:alg-repair}, it aborts with probability at most $2\sqrt{\delta}$ at that step and makes at most $9K$ measurements in expectation.  Moreover, for all $1 < i \leq K$, $\Pr[b_1 = 1 \text{ in iteration $i-1$ } \wedge \alpha_i < \alpha_{i-1} - 2\varepsilon] \le 3\sqrt{\delta}$.
	\end{claim}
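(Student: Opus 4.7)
My plan is to prove the two parts of the claim separately, with both parts relying on a Jordan decomposition with respect to the two projectors $\ProjA \coloneqq \tilde{\Pi}^0_{\geq \alpha_i - \varepsilon}$ and $\ProjB \coloneqq \tilde{\Pi}^1_{<\beta}$ used in \cref{step:alg-repair}. The starting point is to observe that by the $(\varepsilon,\delta)$-almost projectivity of $\APMeas_0$ (\Cref{def:almost-proj}) together with the second bullet of \Cref{lemma:effjor}, the state $\rho_i$ on $\RegH$ immediately after the measurement in \cref{step:initial-meas} (tensored with $\ket{0}_{\RegW_0}\ket{0}_{\RegW_1}$) satisfies $\Tr(\ProjA \rho_i) \geq 1 - \delta$. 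I will use this almost-projection guarantee as the starting invariant for all subsequent analysis.

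For the first part, I partition the Jordan subspaces into three types: 1-d subspaces on which $\ProjA = 1$ (trivially good), 2-d subspaces with value $p_j \in (0,1)$ (``good''), and 1-d subspaces with $\ProjA = 0,\ProjB = 1$ (``stuck''), and observe that by the invariance of Jordan subspaces under the alternating measurements applied in \cref{step:disturb,step:estimate,step:alg-repair}, the Jordan-subspace masses are preserved throughout the algorithm. Since the stuck mass is controlled by $\Tr((I-\ProjA)\rho_i) \leq \delta$, stuck subspaces contribute at most $\delta \cdot (2TK+1) = O(K\sqrt{\delta})$ to the expected measurement count (capped by the budget). For the good 2-d subspaces, a direct Markov-chain calculation in the basis $\{\ket{v_j^1},\ket{v_j^0},\ket{w_j^1},\ket{w_j^0}\}$ (coupled with the pseudoinverse identity from \Cref{lemma:pseudoinverse}) shows that, starting from $\rho_i$ concentrated on $\ket{v_j^1}$ and running the alternating sequence $\ProjB,\ProjA,\ProjB,\ldots$ until the first $\ProjA=1$, the expected number of measurements is a universal constant (I compute $4$) \emph{independent of $p_j$}; crucially, this means the in-subspace conditional expectation $O(1/p_j)$ is exactly cancelled by the $\Pr[b_1 = 1] \approx p_j$ prefactor coming from Jordan invariance. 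Combining these pieces and accounting for the forced $2K-1$ measurements from \cref{step:disturb,step:estimate} preceding \cref{step:alg-repair} yields $\E[\text{measurements in \cref{step:alg-repair}}] \leq 9K$. The abort bound of $2\sqrt{\delta}$ then follows from Markov's inequality applied to this expectation against the budget $2TK+1 = \Theta(K/\sqrt{\delta})$, absorbing the stuck-subspace contribution.

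For the second part, I will chain together two observations. First, if $b_1 = 1$ in iteration $i-1$ and the algorithm advances to iteration $i$ (rather than aborting in \cref{step:alg-repair} of iteration $i-1$ or terminating via $c \in \{0,1\}$), then \cref{step:alg-repair} terminated successfully with $\tilde{\Pi}^0_{\geq \alpha_{i-1} - \varepsilon} \to 1$; by the first part of the claim this non-abort event holds with probability at least $1 - 2\sqrt{\delta}$ conditioned on reaching \cref{step:alg-repair}. Second, the post-\cref{step:alg-repair} state is of the form $\ket{\phi}_{\RegH,\RegW_0} \otimes \ket{0}_{\RegW_1}$ with $\Pi^{0}_{\geq \alpha_{i-1} - \varepsilon}\ket{\phi} = \ket{\phi}$; after applying $\MeasUnitary{0}$ and discarding $\RegW_0,\RegW_1$ in Step 1(g), the resulting reduced state $\rho'$ is exactly the post-measurement state of $\APMeas_0$ conditioned on an outcome $\geq \alpha_{i-1} - \varepsilon$, so by the almost-projective property of $\APMeas_0$ (\Cref{def:almost-proj}) applied across these two invocations, the next outcome $\alpha_i$ at \cref{step:initial-meas} of iteration $i$ satisfies $\alpha_i \geq \alpha_{i-1} - 2\varepsilon$ except with probability at most $\delta$. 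A union bound yields the $3\sqrt{\delta}$ bound.

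The main obstacle will be the expected-measurement calculation in the first part: I expect the ``universal constant'' hitting time in 2-d Jordan subspaces to be the delicate point, since one might na\"ively fear a $1/p_j$-blowup for small $p_j$, and the saving mechanism is that the very same small-$p_j$ subspaces have negligible $\Pr[b_1 = 1]$ contribution from a state concentrated on $\ket{v_j^1}$. A secondary subtlety, which I expect to require the pseudoinverse machinery from \Cref{lemma:pseudoinverse} rather than a direct calculation, is that the state entering \cref{step:alg-repair} is not literally $\rho_i$ but the result of $2K-1$ alternations applied to $\rho_i$; the pseudoinverse identity will let me ``rewind'' the effect of these alternations without losing track of the Jordan-subspace weights.
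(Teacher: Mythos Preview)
Your proposal has the pseudoinverse lemma pointed at the wrong part of the argument, and this creates a genuine gap in the second half.

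\textbf{Second part (the $\alpha_i$ bound).} Your step 3 asserts that after \cref{step:alg-repair} terminates with $\tilde{\Pi}^0_{\geq \alpha_{i-1}-\varepsilon}\to 1$, applying $U_{\APMeas_0}$ and tracing out $\RegW_0$ yields ``exactly the post-measurement state of $\APMeas_0$ conditioned on an outcome $\geq \alpha_{i-1}-\varepsilon$.'' This is not true in general. Being in $\img(\Pi^0_{\geq \alpha_{i-1}-\varepsilon})$ only says $U_{\APMeas_0}\ket{\phi}$ lies in $\bigoplus_{q\geq \alpha_{i-1}-\varepsilon} I_{\RegH}\otimes \img(\Pi_q)$; it does \emph{not} say $\ket{\phi}$ arose as $\Pi^0_{\geq \alpha_{i-1}-\varepsilon}(\ket{\theta}_{\RegH}\otimes\ket{0}_{\RegW_0})$ for some $\ket{\theta}$. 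Without that, there is no ``first invocation'' of $\APMeas_0$ for the almost-projective property to chain across, so you cannot conclude $\alpha_i \geq \alpha_{i-1}-2\varepsilon$ with probability $1-\delta$. The paper fills exactly this hole with the pseudoinverse lemma: because the other projector $\tilde{\Pi}^1_{<\beta}$ carries a $\proj{0}_{\RegW_0}$ factor, the pseudoinverse state $\sigma$ lives in $\img(\tilde{\Pi}^1_{<\beta})$ and hence has $\RegW_0=\ket{0}$; then $\rho' = \tilde{\Pi}^0\sigma\tilde{\Pi}^0/q'$ \emph{is} a genuine post-$\APMeas_0$ state (on input $\Tr_{\RegW_1}\sigma$), and almost-projectivity applies with the $\delta/q$ loss that you then undo by multiplying by $\Pr[b_1=1]=q$. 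So the pseudoinverse is needed here, not in the first part.

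\textbf{First part (measurement count and abort).} Your ``universal constant $4$'' is the expected \emph{first} return time to $\ProjA=1$ from $\ket{v_j^1}$; but \cref{step:alg-repair} begins only after $2K-1$ forced measurements, and the residual waiting time from an interior point can be $\Theta(1/(p_j(1-p_j)))$ by the inspection paradox---the $p_j$-cancellation you invoke pertains to the unconditional first return, not to this residual. The paper sidesteps this by looking only at the coarse-grained subsequence $b_{2K},b_{4K},\dots$: since $b_0=1$ and the even-indexed chain is symmetric, $\E[\text{first }i\text{ with }b_{2iK}=1]=2$ for \emph{every} $p_j$, giving $\E[D]\leq 4K$ directly, and $\Pr[\text{abort}]\leq r(1-r)^{T-1}\leq 1/T\leq\sqrt{\delta}$ (plus the $\sqrt{\delta}$ from the initial gentle-measurement step). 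Your Markov-inequality route from $\E[D]\leq 9K$ to the abort bound only yields $O(\sqrt{\delta})$ with a constant $\approx 4.5$, not the stated $2\sqrt{\delta}$. (Your first-part argument is salvageable---for instance, $N_K\geq 2K$ deterministically and $\E[N_K]=4K$ by Wald gives $\E[D]\leq 2K+1$---but that is not the mechanism you described, and the pseudoinverse plays no role there.)
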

	\begin{proof}
	This follows via an analysis similar to \cite[Lemma 4.10]{chiesa2022post}, modified to account for the additional measurements in \cref{step:estimate}.
	
	First note that, because $\APMeas_0$ is $(\varepsilon,\delta)$-almost projective, the probability that a measurement of $\APMeasProjExt{\geq \alpha_i - \varepsilon}{0}$ accepts immediately after \cref{step:init-w} is $1 - \delta$.
 Hence by \Cref{lemma:gentle-measurement}, post-selecting on $\APMeasProjExt{\geq \alpha_i - \varepsilon}{0} \to 1$ at this point disturbs the state by at most $\sqrt{\delta}$ in trace distance.
 Henceforth we therefore assume that the state immediately after \cref{step:init-w} is some $\ket{\phi} \in \img(\APMeasProjExt{\geq \alpha_i - \varepsilon}{0})$.
 (We without loss of generality argue this for all pure states $\ket\phi \in \img(\APMeasProjExt{\geq \alpha_i - \varepsilon}{0})$ and by linearity the argument also generalizes to mixed states.)
 This changes the probability of any event by at most an additive $\sqrt{\delta}$ and the expected running time by at most an additive $\sqrt{\delta} (2TK+1) \le \sqrt\delta(4K/\sqrt\delta + 1) \le 5K$.
	
	We first analyse the distribution of measurement outcomes in \cref{step:alg-repair}.
	Indeed, let $b_{2K},b_{2K+1},\ldots,b_{2TK}$ be the outcomes of those measurements. By \Cref{lemma:alt-meas-general-state}, we have that $b_1,\ldots,b_{2TK} \sim \sum_j a_j \MWDist{p_j}{2TK}$ for some $a_j \in [0,1], \sum_j a_j = 1$.
	
	Fix any $p \in [0,1]$, and consider $c_1,\ldots,c_{2TK} \gets \MWDist{p}{2TK}$. Let $r \coloneqq \Pr[c_{2K} = 0]$. Note that the variables $c_2,c_4,\ldots,c_{2TK}$ form a Markov chain with symmetric transition probabilities and initial state $c_0 = 1$. Hence, we have that $\Pr[c_{2iK} = 1 \mid c_{2(i-1)K} = 0] = r$ for all $2 \leq i \leq T$. It follows that
	\begin{equation}
		\label{eq:return-prob}
		\Pr\mbracket{\bigwedge_{i=1}^T c_{2iK} = 0} = \Pr[c_{2K} = 0] \prod_{i=2}^T \Pr[c_{2iK} = 0 \mid c_{2(i-1)K} = 0] = r (1-r)^{T-1} \le \frac{1}{T},
	\end{equation}
    where the last inequality follows by finding that LHS is maximized when $r = \frac1T$, giving maximum $\frac1T \cdot \paren{1 - \frac1T}^{T-1} \le \frac1T$.
	Next, let $D'$ be a random variable corresponding to the smallest $i \geq 1$ such that $c_{2iK} = 1$. Then
	\begin{equation}
		\label{eq:return-time}
		\E[D'] = \sum_{t=1}^{T} \Pr[D' \geq t] \leq 1 + r \sum_{t=0}^{\infty} (1 - r)^t = 2\,.
	\end{equation}
 
	Since the distribution of $b_1,\ldots,b_{2TK}$ is a convex combination of $\MWDist{p_j}{2TK}$, it holds by convexity and \eqref{eq:return-prob} that $\Pr\mbracket{\bigwedge_{i=1}^T b_{2iK} = 0} \leq 1/T$. If (in particular) $b_{2iK} = 1$ for any $1 \leq i \leq T$, then $\StateTrans$ does not abort; hence $A$ aborts with probability at most $1/T \le \sqrt{\delta}$. Next, let $D$ be the number of measurements applied in \cref{step:alg-repair}. By convexity, linearity of expectation and \Cref{eq:return-time}, $\E[D] \leq 4K$.
	
	Finally we bound the probability that $\alpha_i < \alpha_{i-1} - 2\varepsilon$.
 The post-measurement state after \cref{step:disturb}, conditioned on $b_1 = 1$, is $\APMeasProjExt{< \beta}{1} \ket{\phi}/\sqrt{q}$, where $q = \Pr[b_1 = 1] = \norm{\APMeasProjExt{< \beta}{1} \ket{\phi}}^2 = \sum_{j} q_j p_j$ for $q_j \in [0,1]$ such that $\sum_j q_j = 1$.
 $q > 0$ since otherwise we would reach \cref{step:disturb} with probability 0.
 Observe that $\norm{\JorProj{j} \ket{\phi}/\sqrt{q}}^2 = q_j p_j/q$ for each $j$.
	Let $\rho$ be the reduced density matrix on register $\reg H, \reg W_0, \reg W_1$ after \cref{step:alg-repair} (again conditioned on $b_1 = 1$). Let $\rho'$ be the state guaranteed by \Cref{claim:rotate-to-subspace}; note that $\td(\rho,\rho') \leq \sqrt{\delta}$ by \eqref{eq:return-prob}.
 Since $\JorProj{j}$ commutes with both $\APMeasProjExt{\geq \alpha_i - \varepsilon}{0}$ and $\APMeasProjExt{< \beta}{1}$, $\Tr(\JorProj{j} \rho) = q_j p_j/q$, so
 \begin{equation}
    \label{eq:uniform-trace-bound}
     \Tr(E \rho') = \Tr(E \rho) = \sum_{j, p_j > 0} \frac{q_j}q \leq \frac1q
 \end{equation}
 and $\Tr(\Pi_0 \rho') = \Tr(\Pi_0 \rho) = 0$ for $E,\Pi_0$ as defined in \Cref{lemma:pseudoinverse}.
	
    By \Cref{lemma:pseudoinverse} there is a pseudoinverse state $\sigma \in \img(\APMeasProjExt{< \beta}{1})$ with $\rho' = \APMeasProjExt{\geq \alpha_i - \varepsilon}{0} \sigma \APMeasProjExt{\geq \alpha_i - \varepsilon}{0}/q'$ for $q' = \frac{1}{\Tr(E \rho)} \geq q$ by \eqref{eq:uniform-trace-bound}. Since $\img(\APMeasProjExt{< \beta}{1}) \subseteq \img(I_{\RegH} \otimes \proj{0}_{\RegW_0} \otimes I_{\RegW_1})$, $\sigma = \sigma'_{\RegH,\RegW_1} \otimes \proj{0}_{\RegW_0}$ for some $\sigma'$. Hence $\Tr_{\RegW}(\MeasUnitary{0} \rho' \MeasUnitary{0}^{\dagger})$ is precisely the post-measurement state after applying $\APMeas_0$ to $\sigma'$ and post-selecting on obtaining an answer greater than $\alpha_i - \varepsilon$. It follows that $\Pr[\alpha_i < \alpha_{i-1} - 2\varepsilon \mid b_1 = 1] = \Pr[p' < \alpha_{i-1} - 2\varepsilon \mid p \geq \alpha_i - \varepsilon]$ where $p,p'$ are the results of applying $\APMeas_0$ twice in sequence to $\sigma$. By the definition of conditional probability,
    \begin{equation*}
    	\Pr[p' < \alpha_{i-1} - 2\varepsilon \mid p \geq \alpha_i - \varepsilon] = \frac{\Pr[p' < \alpha_{i-1} - 2\varepsilon \wedge p \geq \alpha_i - \varepsilon]}{\Pr[p \geq \alpha_i - \varepsilon]} \leq \frac{\delta}{q}
    \end{equation*}
    since $\Pr[p \geq \alpha_i - \varepsilon] = q' \geq q$ and $\APMeas_0$ is $(\varepsilon,\delta)$-almost projective. The claim follows since $\Pr[b_1 = 1] = q$.
\end{proof}

	\begin{claim}\label{claim:state_trans_abort_probability}
		$\StateTrans$ aborts with probability at most $4K \sqrt{\delta}$.
	\end{claim}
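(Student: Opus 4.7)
My plan is to decompose the $c=\bot$ event into two mutually exclusive sources and bound each. Either (i)~some iteration aborts inside \cref{step:alg-repair} because the repair alternating projections fail to return to $\tilde{\Pi}^0_{\geq \alpha_i-\varepsilon}$ within the $2TK+1$-measurement budget, or (ii)~the algorithm completes all $K$ outer iterations without ever emitting $c\in\{0,1\}$ (in \cref{step:disturb} or in the \NumReps-threshold step), so the catch-all in Step~2 fires.

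Case~(i) is essentially immediate from \Cref{claim:single-iteration}, which bounds each visit to \cref{step:alg-repair} by $2\sqrt{\delta}$; a union bound over the at most $K$ outer iterations contributes at most $2K\sqrt{\delta}$. For case~(ii), the plan is to establish a uniform per-iteration termination probability of at least $\tau/2$ (on top of the same $\sqrt{\delta}$ gentle-measurement slack already invoked in \Cref{claim:single-iteration}), which, iterated over $K=\lceil(2/\tau)\ln(1/\delta)\rceil$ rounds, gives $(1-\tau/2)^K \le e^{-K\tau/2} \le \delta$ for reaching Step~2, plus an accumulated $K\sqrt{\delta}$ of gentle-measurement slack.

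To prove the per-iteration bound, I will condition on the state entering \cref{step:disturb} lying in $\img(\tilde{\Pi}^0_{\geq \alpha_i-\varepsilon})$ (losing $\sqrt{\delta}$ to gentle measurement exactly as in \Cref{claim:single-iteration}). The iteration fails to emit an output iff $b_1=1$ \emph{and} $\NumReps(b_1,\ldots,b_{2K-1}) < 1-\tau$. Applying Jordan's lemma to $(\tilde{\Pi}^0_{\geq \alpha_i-\varepsilon},\tilde{\Pi}^1_{<\beta})$ together with \Cref{lemma:alt-meas-general-state}, the outcomes are distributed as a Jordan-subspace mixture of $\MWDist{p_j}{2K-1}$; I then split the sum at the threshold $p_j = 1-\tau/2$. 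On low-value subspaces $(p_j \le 1-\tau/2)$ the contribution is bounded directly by $\Pr[b_1=1\mid j]=p_j \le 1-\tau/2$, while on high-value subspaces $(p_j > 1-\tau/2)$, Hoeffding applied to the $2K-2$ $\MWDist{}$ trials bounds $\Pr[\NumReps<1-\tau\mid j]$ by $\delta^{\Omega(\tau)}$, which is strictly below $1-\tau/2$ for $\delta$ bounded away from~$1$. Since the bound is uniform in the Jordan-coefficient distribution $|\alpha_j|^2$, it holds regardless of the history-dependent state entering the iteration, so the $K$-fold iteration is valid.

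The main obstacle will be tuning the Hoeffding exponent: the high-value contribution only beats the competing $1-\tau/2$ from the low-value side when $(K-1)\tau^2 \gtrsim \ln(1/\delta)$, and the algorithm's choice $K=\lceil(2/\tau)\ln(1/\delta)\rceil$ is precisely what makes this work. Summing the two contributions and absorbing the accumulated $K\sqrt{\delta}$ gentle-measurement slack into the leading term then yields $\Pr[c=\bot] \le 2K\sqrt{\delta} + K\sqrt{\delta} + \delta \le 4K\sqrt{\delta}$ in the only regime where the bound is non-trivial (namely $4K\sqrt{\delta}<1$, which forces $\delta$ small enough that $\delta^{\Omega(\tau)} < 1-\tau/2$ is automatic).
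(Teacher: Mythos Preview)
Your proposal is correct and follows essentially the same route as the paper. Both arguments (a) union-bound the in-loop \cref{step:alg-repair} aborts via \Cref{claim:single-iteration}, and (b) show a uniform, state-independent per-iteration ``stop'' probability of order $\tau$ by inserting the Jordan measurement for $(\tilde{\Pi}^0_{\geq\alpha_i-\varepsilon},\tilde{\Pi}^1_{<\beta})$ and splitting on whether $p_j$ lies below or above a threshold near $1-\tau$; the exponential decay over $K=\lceil(2/\tau)\ln(1/\delta)\rceil$ rounds then gives the $\delta$ term, and the final tally $2K\sqrt{\delta}+K\sqrt{\delta}+\delta\le 4K\sqrt{\delta}$ matches the paper's $3K\sqrt{\delta}+\delta$.

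Two cosmetic differences: the paper splits at $p_j=1-\tau+\varepsilon$ whereas you split at $1-\tau/2$, and you make the gentle-measurement conditioning (forcing the state into $\img(\tilde{\Pi}^0_{\geq\alpha_i-\varepsilon})$ before invoking \Cref{lemma:alt-meas-general-state}) explicit, while the paper leaves it implicit but still books the corresponding $K\sqrt{\delta}$ in its final sum. Your threshold choice is arguably cleaner, since the Hoeffding deviation is then $\tau/2$ rather than $\varepsilon$, which is what the sample budget $2K-2\approx(4/\tau)\ln(1/\delta)$ actually supports; and your observation that the claim is vacuous unless $4K\sqrt{\delta}<1$ (hence $\delta<1/16$) is exactly what makes the high-$p_j$ Hoeffding tail $\delta^{\Omega(\tau)}$ sit below $1-\tau/2$.
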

	\begin{proof}
    We show that for each $i$, the probability that the algorithm stops (outputting either $0$ or $1$) in the $i$-th iteration is at least $\tau$. 
    Consider the following experiment:
    
    \begin{longfbox}[breakable=false, padding=1em, margin-top=1em, margin-bottom=1em]
    \textbf{$\StateTrans$ $i$-th round abort experiment}
    \begin{enumerate}[noitemsep]
        \item Initialize ancilla registers $\RegW_0,\RegW_1$ to $\ket{0}$.
        \item[$\star$.] Apply the Jordan subspace measurement, obtaining a subspace label $j$.
        \item Apply the measurement $\APMeasProjExt{< \beta}{1}$, obtaining outcome $b_1$. If $b_1 = 0$, output \textsc{Yes}.
        \item Apply the measurements $\APMeasProjExt{\geq \alpha_i - \varepsilon}{0}$, $\APMeasProjExt{< \beta}{1}$ in an alternating fashion $K-1$ times, obtaining outcomes $b_2,\ldots,b_K$. If $\NumReps(b_1,\ldots,b_K) \geq 1-\tau$, output \textsc{Yes}.
        \item Otherwise, output \textsc{No}.
    \end{enumerate}
    \end{longfbox}
    Observe that $\StateTrans$ stops at the $i$-th iteration without aborting if and only if this experiment, without step $\star$, outputs $\textsc{Yes}$. Since step $\star$ commutes with $\APMeasProjExt{\geq \alpha_i - \varepsilon}{0}$ and $\APMeasProjExt{< \beta}{1}$, inserting $\star$ does not change the outcome probabilities. Note that given outcome $j$ from step $\star$, the probability that step 2 outputs \textsc{Yes} is $1-p_j$, and that $\NumReps(b_1,\ldots,b_K) \sim \Bin(K,p_j)/K$. Hence there are two cases:
    \begin{itemize}[nolistsep]
        \item if $p_j < 1 - \tau + \varepsilon$, the probability that step $2$ outputs \textsc{Yes} is at least $\tau - \varepsilon$;
        \item if $p_j \geq 1 - \tau + \varepsilon$, the probability that step $3$ outputs \textsc{Yes} is at least $1-\delta \geq \tau-\varepsilon$.
    \end{itemize}
    Since this lower bound holds regardless of the initial state at iteration $i$, it follows that the probability that the procedure fails to terminate with output within $K$ steps is at most $(1-(\tau-\varepsilon))^K \le (1 - \tau/2)^K \leq e^{-\tau K/2} \leq \delta$.

    Then the overall probability, using union bound and \Cref{claim:single-iteration}, is at most $3K\sqrt\delta + \delta \le 4K\sqrt\delta$.
\end{proof}

Next we show that, for any $\alpha,\beta \in [0,1]$, a state with large overlap with both $\APMeasProjExt{\geq \alpha}{0}$ and $\APMeasProjExt{< \beta}{1}$ is likely to return an outcome $\geq \alpha$ when measured with $\APMeas_0$ and $< \beta$ when measured with $\APMeas_1$. (Note this is not trivial due to the presence of ancillas.) 

\begin{claim}
    	\label{claim:overlapping-projectors}
    	For any state $\rho$ satisfying $\Tr(\APMeasProjExt{\geq \alpha}{0} \rho) = 1$ and $\Tr(\APMeasProjExt{< \beta}{1}\rho) \geq 1 - \gamma$, it holds that
    	\begin{equation*}
    		\Pr[\APMeas_0(\rho') \geq \alpha] \geq 1 - \sqrt{\gamma} \qquad \text{and} \qquad \Pr[\APMeas_1(\rho') < \beta] \geq 1 - \gamma\,,
    	\end{equation*}
    	where $\rho' \coloneqq \Tr_{\RegW_0,\RegW_1}(\rho)$.
    \end{claim}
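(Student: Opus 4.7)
The plan is to exploit the structure of the ``extended'' projectors $\APMeasProjExt{\geq p}{b} = \APMeasProj{\geq p}{b}\otimes \proj{0}_{\reg W_{1-b}}$. The first hypothesis $\Tr(\APMeasProjExt{\geq \alpha}{0}\rho)=1$ forces $\rho$ to decompose as $\rho = \sigma_{\reg H\reg W_0}\otimes \proj{0}_{\reg W_1}$, with $\sigma$ entirely in $\img \APMeasProj{\geq \alpha}{0}$. Plugging this into the second hypothesis and defining $M^1_{<\beta} \coloneqq \bra{0}_{\reg W_1}\APMeasProj{<\beta}{1}\ket{0}_{\reg W_1}$ (the POVM element on $\reg H$ for the outcome ``$\APMeas_1<\beta$''), the second hypothesis rewrites as $\Tr\mparen{(M^1_{<\beta}\otimes \proj{0}_{\reg W_0})\sigma}\geq 1-\gamma$. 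In particular, since $M^1_{<\beta}\preceq \id$, we also get $\Tr(\proj{0}_{\reg W_0}\sigma)\geq 1-\gamma$.

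Given this setup, the second conclusion is almost immediate. Writing $\rho' = \Tr_{\reg W_0}(\sigma)$, we have
\begin{equation*}
  \Pr[\APMeas_1(\rho')<\beta] \;=\; \Tr(M^1_{<\beta}\rho') \;=\; \Tr\mparen{(M^1_{<\beta}\otimes \id_{\reg W_0})\sigma} \;\geq\; \Tr\mparen{(M^1_{<\beta}\otimes \proj{0}_{\reg W_0})\sigma} \;\geq\; 1-\gamma,
\end{equation*}
where we used $\id_{\reg W_0}\succeq \proj{0}_{\reg W_0}$ and non-negativity of the remainder.

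For the first conclusion I will purify $\sigma$ as $\ket{\Psi}\in\reg H\reg W_0\reg R$, so that $\ket{\Psi}$ lies in the image of $\APMeasProj{\geq \alpha}{0}\otimes \id_{\reg R}$ and $\norm{(\id\otimes \bra{0}_{\reg W_0}\otimes\id)\ket{\Psi}}^2 \geq 1-\gamma$. Applying \Cref{lemma:gentle-measurement} (gentle measurement) with the projector $\id\otimes \proj{0}_{\reg W_0}\otimes \id$ yields a normalized product state $\ket{\tilde\Psi} = \ket{\tilde\phi}_{\reg H\reg R}\otimes \ket{0}_{\reg W_0}$ with $\td(\proj{\Psi},\proj{\tilde\Psi})\leq \sqrt{\gamma}$. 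Because $\Tr(\APMeasProj{\geq \alpha}{0}\proj{\Psi})=1$, trace-distance contractivity gives $\Tr(\APMeasProj{\geq \alpha}{0}\proj{\tilde\Psi})\geq 1-\sqrt{\gamma}$; and since $\ket{\tilde\Psi}$ has $\reg W_0=\ket{0}$, this identity collapses to $\Tr(M^0_{\geq \alpha}\,\Tr_{\reg R}\proj{\tilde\phi})\geq 1-\sqrt{\gamma}$, where $M^0_{\geq \alpha}\coloneqq \bra{0}_{\reg W_0}\APMeasProj{\geq \alpha}{0}\ket{0}_{\reg W_0}$. A final application of trace-distance contractivity (this time for the partial trace over $\reg R$) allows us to transfer this bound to $\rho' = \Tr_{\reg W_0 \reg R}\proj{\Psi}$, yielding the desired lower bound on $\Pr[\APMeas_0(\rho')\geq \alpha] = \Tr(M^0_{\geq \alpha}\rho')$.

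The main subtlety is tracking where the asymmetry between the two bounds comes from. The ``easy'' direction needs only the PSD order $\id\succeq \proj{0}$ and hence loses only $\gamma$. The ``hard'' direction needs gentle measurement to argue that $\sigma$ is effectively a state with $\reg W_0 = \ket{0}$, which costs a square root, giving $\sqrt{\gamma}$. The most delicate point will be to verify that the bookkeeping between the purified picture and $\rho'$ does not introduce extra factors; this amounts to checking that the two gentle-measurement applications (on $\reg W_0$ and then on the purifying register $\reg R$) can be composed cleanly via the triangle inequality for trace distance without double-counting the $\sqrt{\gamma}$.
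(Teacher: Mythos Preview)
Your approach is correct and essentially identical to the paper's: both exploit that $\APMeasProjExt{\geq\alpha}{0}\rho=\rho$ forces $\rho=\sigma_{\reg H\reg W_0}\otimes\proj{0}_{\reg W_1}$, get the $1-\gamma$ bound from $\id\succeq\proj{0}_{\reg W_0}$, and obtain the $1-\sqrt\gamma$ bound via gentle measurement on $\reg W_0$. On your flagged concern: the paper sidesteps the double use of trace distance by directly asserting $\td\bigl(\Tr_{\reg W_1}\rho,\;\rho'\otimes\proj{0}_{\reg W_0}\bigr)\leq\sqrt\gamma$ and applying it once against $\Tr(\APMeasProj{\geq\alpha}{0}\sigma)=1$, whereas your route through $\ket{\tilde\Psi}$ applies the $\sqrt\gamma$ bound twice and yields $1-2\sqrt\gamma$; this constant is immaterial for the downstream application.
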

    \begin{proof}
    	We have that $\Pr[\APMeas_1(\rho') < \beta] = \Tr(\Pi_{< \beta}^1 (\rho' \otimes \ketbra{0}{0}_{\RegW_1}))$. Since $\Tr(\APMeasProjExt{\geq \alpha}{0} \rho) = 1$, $\rho' \otimes \ketbra{0}{0}_{\RegW_1} = \Tr_{\RegW_0}(\rho)$, and so $\Pr[\APMeas_1(\rho') < \beta] = \Tr(\APMeasProjExt{< \beta}{1} \rho) = 1 - \gamma$.
    
    	Similarly, $\Pr[\APMeas_0(\rho') \geq \alpha] = \Tr(\APMeasProjExt{\geq \alpha}{0} (\rho' \otimes \ketbra{0}{0}_{\RegW_0}))$. Since $\Tr(\APMeasProjExt{< \beta}{1} \rho) \geq 1 - \gamma$, the states $\Tr_{\RegW_1}(\rho)$ and $\rho' \otimes \ketbra00_{\RegW_0}$ are $\sqrt{\gamma}$-close in trace distance by \Cref{lemma:gentle-measurement}. Hence $\Pr[\APMeas_0(\rho') \geq \alpha] \geq \Tr(\Pi_{\geq \alpha}^0 \rho) - \sqrt{\gamma} = 1 - \sqrt{\gamma}$.
\end{proof}
We use this claim to prove that the algorithm outputs the correct state when it does not abort.  
\begin{claim}\label{claim:state_trans_correctness_non_abort}
        For all quantum states $\sigma$ and $\varepsilon, \delta, \tau$, let $(c, \rho) \gets \StateTrans_{\varepsilon, \delta, \tau}[\APMeas_0, \APMeas_1, \beta](\sigma)$ be the output of the algorithm.  Then the following hold, where $i$ is (a random variable corresponding to) the last iteration of the algorithm:
        \begin{enumerate}
        \item $\Pr[c = 0 \wedge \APMeas_1(\rho) < \beta - \varepsilon] \leq \delta,$
        \item $\Pr[c = 1 \wedge \APMeas_0(\rho) < \alpha_i - \varepsilon] \leq \sqrt{\tau + \varepsilon + \delta}$, and
        \item $\Pr[c = 1 \wedge \APMeas_1(\rho) \geq \beta] \leq \tau + \varepsilon + \delta.$
        \end{enumerate}
    \end{claim}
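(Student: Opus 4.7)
The overall plan is to analyze the structure of the output state $\rho$ in the terminating iteration and reduce both (2) and (3) to \Cref{claim:overlapping-projectors}, while handling (1) directly via almost-projectivity. For (1), when the algorithm outputs $c = 0$ in iteration $i$, the rejection of $\tilde{\Pi}^1_{< \beta}$ in step (c) is equivalent to the acceptance of $\tilde{\Pi}^1_{\geq \beta}$, since $\RegW_0$ was initialized to $\ket{0}$ in step (b); after applying $\MeasUnitary{1}$ and tracing out the $\RegW$ registers, $\rho$ becomes a mixture of ideal post-$\APMeas_1$ states conditioned on outcome $\geq \beta$. For any initial state $\sigma_i$ of that iteration, $(\varepsilon,\delta)$-almost-projectivity of $\APMeas_1$ then bounds the joint probability that the first (coarse) outcome $p \geq \beta$ while $p' = \APMeas_1(\rho) < \beta - \varepsilon$ by $\delta$, and summing over the (disjoint) stopping iterations yields (1).

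For (2) and (3), I will apply \Cref{claim:overlapping-projectors} to the state $\pi$ on $(\RegH, \RegW_0, \RegW_1)$ at the end of step (e) in the last iteration, conditioned on $c = 1$. By construction of the state-repair step, $\Tr(\tilde{\Pi}^0_{\geq \alpha_i - \varepsilon} \pi) = 1$; the main task is therefore to establish $\Tr(\tilde{\Pi}^1_{<\beta} \pi) \geq 1 - (\tau + \varepsilon + \delta)$ in expectation over the conditional distribution. Once this is in hand, \Cref{claim:overlapping-projectors} directly yields $\Pr[\APMeas_0(\rho) \geq \alpha_i - \varepsilon \mid c = 1] \geq 1 - \sqrt{\tau + \varepsilon + \delta}$ and $\Pr[\APMeas_1(\rho) < \beta \mid c = 1] \geq 1 - (\tau + \varepsilon + \delta)$; multiplying by $\Pr[c = 1] \leq 1$ gives the joint probability bounds in (2) and (3).

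To establish the overlap bound, decompose $\RegH \otimes \RegW_0 \otimes \RegW_1$ into the Jordan subspaces $\{\Subspace_j\}$ of the pair $(\tilde{\Pi}^0_{\geq \alpha_i - \varepsilon}, \tilde{\Pi}^1_{<\beta})$ with values $p_j$ using \Cref{lemma:jordan}. By \Cref{lemma:alt-meas-general-state}, the outcomes $b_1,\ldots,b_{2K-1}$ in steps (c)--(d) follow a mixture of $\MWDist{p_j}{2K-1}$, and the repair step (e) projects the state within each subspace onto the unique line $\img(\tilde{\Pi}^0_{\geq \alpha_i - \varepsilon}) \cap \Subspace_j$, so that conditional on the subspace index $j$, $\Tr(\tilde{\Pi}^1_{<\beta} \pi) = p_j$ deterministically. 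It therefore suffices to upper bound $\E[(1 - p_j) \cdot \mathbf{1}[c = 1]]$; splitting at the threshold $p_j \geq 1 - \tau - \varepsilon$ contributes at most $\tau + \varepsilon$, while the low-$p_j$ contribution is controlled by $\Pr[\NumReps - p_j \geq \varepsilon \mid p_j < 1 - \tau - \varepsilon]$ via a Chernoff tail, which with the choice $K = \lceil (2/\tau) \ln(1/\delta) \rceil$ and the parameter regime $\tau \geq 2\varepsilon$ contributes the remaining additive $\delta$ error. Calibrating this tail estimate against the required error so that the two contributions combine cleanly is the most delicate step, and will drive the precise placement of the threshold.
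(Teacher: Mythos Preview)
Your proposal follows essentially the same route as the paper: almost-projectivity for (1), and for (2)--(3) the Jordan decomposition of $(\tilde\Pi^0_{\ge\alpha_i-\varepsilon},\tilde\Pi^1_{<\beta})$, a threshold split at $p_j = 1-\tau-\varepsilon$ with a Chernoff bound on $\NumReps$, followed by \Cref{claim:overlapping-projectors}.

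One bookkeeping point to fix: you state the goal as the \emph{conditional} bound $\Tr(\tilde\Pi^1_{<\beta}\pi)\ge 1-(\tau+\varepsilon+\delta)$, but your computation bounds the \emph{joint} quantity $\E[(1-p_j)\mathbf 1[c=1]]\le \tau+\varepsilon+\delta$. Dividing by $q=\Pr[c=1]$ only gives the conditional deficit $\gamma\le(\tau+\varepsilon+\delta)/q$, so your ``multiply by $\Pr[c=1]\le 1$'' step is not quite right as written. The paper handles this by applying \Cref{claim:overlapping-projectors} with $\gamma=\tau+\varepsilon+\delta/q$ and then absorbing the $q$ at the end: for (3) one gets $q\gamma = q(\tau+\varepsilon)+\delta\le\tau+\varepsilon+\delta$, and for (2) one gets $q\sqrt{\gamma}=\sqrt{q^2(\tau+\varepsilon)+q\delta}\le\sqrt{\tau+\varepsilon+\delta}$ since $q\le 1$. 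With this correction your argument goes through and matches the paper's.
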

	\begin{proof}
	By the description of $\StateTrans$, $c = 0$ when the measurement $\APMeasProjExt{<\beta}{1}$ yields outcome $0$ in \cref{step:disturb}. By the definition of an almost projective measurement, the probability that applying $\APMeas_1$ subsequently yields outcome $<\beta - \varepsilon$ is less than $\delta$, which proves the first inequality.  
	
	Consider inserting the Jordan subspace measurement, $\MeasJor$, after \cref{step:alg-repair}, obtaining outcome $j$.
 Since $\MeasJor$ commutes with $\APMeasProjExt{\geq \alpha - \varepsilon}{0}$ and $\APMeasProjExt{< \beta}{1}$, we can equivalently insert the Jordan subspace measurement before \Cref{step:estimate}.
 Suppose that the outcome $j$ satisfies $p_j < 1 - \tau - \varepsilon$; conditioned on receiving this outcome, by \Cref{claim:estimate-concentration}, $\Pr[\NumReps(b_1,\ldots,b_{2K-1}) \geq 1 - \tau] \leq \delta$.

 By the definition of conditional probability we have that
    \begin{equation}\label{eqn:num_reps_prob_upper_bound}
    	\Pr[\NumReps(b_1,\ldots,b_{2K-1}) \geq 1 - \tau \wedge p_j < 1 - \tau - \varepsilon] \leq \delta\,.
    \end{equation}
    
    Let $\rho$ be the output state of $\StateTrans$, conditioned on halting with output $c = 1$; let $q \coloneqq \Pr[c = 1]$. Then by construction, $\Tr(\APMeasProjExt{\geq \alpha_i - \varepsilon}{0} \rho) = 1$, and by \cref{eqn:num_reps_prob_upper_bound}, $\Tr(\sum_{j, p_j < 1 - \tau - \varepsilon} \JorProj{j} \rho) \leq \delta/q$. Then $\Tr(\APMeasProjExt{< \beta}{1} \rho) \ge (1 - \tau - \varepsilon)(1 - \delta/q) \geq 1 - \tau - \varepsilon - \delta/q$.
    It follows from \Cref{claim:overlapping-projectors} that
		\begin{equation*}
    		\Pr[\APMeas_0(\rho) \geq \alpha_i - \varepsilon] \geq 1 - \sqrt{\tau + \varepsilon + \delta/q} \text{ and } \Pr[\APMeas_1(\rho) < \beta] \geq 1 - \tau - \varepsilon - \delta/q~.
    	\end{equation*}
    	The second two inequalities in the claim then follow by definition of the conditional  probability and since $q \le 1$.
    \end{proof}

We are now ready to put together the proof of \Cref{lemma:unif-alg-correctness}.
    
\begin{proof}[Proof of \Cref{lemma:unif-alg-correctness}]
	\Cref{claim:state_trans_abort_probability} shows that $\Pr[c = \bot] \le 4K\sqrt{\delta}$.
 \Cref{claim:state_trans_correctness_non_abort} shows the correctness of the algorithm when the algorithm does not abort.
 \Cref{claim:single-iteration} shows that $\StateTrans$ makes $O(K)$ measurements in expectation.
\end{proof}

\subsection{Proof of \texorpdfstring{\Cref{lemma:uniform-rep}}{Lemma 5.2}}

We are now ready to prove \Cref{lemma:uniform-rep}.
First recall the definition of $G_i$ from \cref{eqn:game_projector}:

\begin{equation*}
    G_{i} = (C^{\otimes k})^{\dagger} A^{\dagger} \left(D^{(\leq i)}\right) A \cdot C^{\otimes k}.
\end{equation*}
Let $\epsilon_0 \coloneqq \epsilon \delta^k/4$, $\tau \coloneqq \epsilon_0^2/100, \hat{\delta} \coloneqq \frac{\tau^6}{8k^3}, \hat{\varepsilon} \coloneqq \frac{\epsilon_0^2 \tau}{10\log 1/\delta}$. For all $i \in [k]$, define $\APMeas_i \coloneqq \EffJor_{\hat{\varepsilon},\hat{\delta}}[\proj{0}_{\RegW_{0}^{\leq i}},G_i]$. The algorithm $\UnifAmplify$ is defined as follows.

\begin{longfbox}[breakable=false, padding=1em, margin-top=1em, margin-bottom=1em]
\begin{algorithm}\label{alg:single-fold-adversary}
    Uniform adversary $\UnifAmplify$ for the $1$-fold protocol.
\end{algorithm}
\noindent \textbf{Input: } registers $(\reg{A_j})_{1 \leq j \leq t}$, unitary oracles $A,C,D$, $\delta,\epsilon \in [0,1], k \in \N$.
\begin{enumerate}
	\item For $j = 1, \ldots, t$, initialise a fresh register $\reg W_{0,j},\reg M_{0,j}$ to $\ket{0}$ and measure $\reg{A}_j,\reg W_{0,j},\reg M_{0,j}$ with $\APMeas_k$, obtaining outcome $\gamma_j$.
	\item \label[step]{step:choose-j} Let $j$ be such that $\gamma_j \geq \delta^k - \hat{\varepsilon}$; if no such $j$ exists, abort. Set $\reg{A} \coloneqq \reg{A}_j, \reg{W}_0 \coloneqq \reg{W}_{0,j}$.
	\item \label[step]{step:unif-transform-state} For $i = k, \ldots, 2$, apply $\StateTrans_{\hat{\varepsilon},\hat{\delta},\tau}[\APMeas_{i},\APMeas_{i-1},\delta^{i-1}-\hat{\varepsilon}]$ to $(\reg{A},\RegW_0)$, obtaining outcome $c_i$. If $c_i = \bot$, abort. If $c_i = 1$, stop and proceed to the next step.
	\item If $c_i = 0$ for all $i = 2, \ldots, k$, set $c_1 \coloneqq 1$. Let $i^*$ be the (unique) index for which $c_{i^*} = 1$.
	\item Run $\Amplify_{\tau,\mu}$ (\Cref{prot:non-uniform-adv}) on input $i^*, \delta$ and registers $\reg A \reg W_{0}^{\geq i+1} \reg M_0^{\leq k}$.
\end{enumerate}
\end{longfbox}
We first bound the probability that \Cref{alg:single-fold-adversary} aborts.
By \Cref{lemma:effjor} and assumption, for all $j$, $\E[p_j] = \Tr(\wt{G}_k \rho_j) \geq \delta^k$.
Since $p_j \le 1$, by an averaging argument we get $\Pr[p_j \geq \delta^k - \hat{\varepsilon}] \ge \frac{\hat\varepsilon}{1 - (\delta^k - \hat\varepsilon)} \geq \hat{\varepsilon}$, and so the probability that \Cref{alg:single-fold-adversary} aborts in \cref{step:choose-j} is at most $(1-\hat{\varepsilon})^t \leq \epsilon_0/10$ if $t =\frac1{\hat\varepsilon}\ln\frac{10}{\epsilon_0}$.
By \Cref{lemma:unif-alg-correctness}, the probability that $c_i = \bot$ for any $i$ is at most $4kK\sqrt{\hat\delta} = 4k\sqrt{\hat\delta} \cdot \lceil\frac2\tau\ln\frac1{\hat\delta}\rceil \le \frac{9k}\tau \cdot \sqrt{\hat\delta}\ln\frac1{\hat\delta} \le \frac{20k}\tau \cdot \sqrt[3]{\hat\delta} = \epsilon_0/10$.

By \Cref{lemma:unif-alg-correctness}, the state $\rho$ at the beginning of iteration $i^*$ of \Cref{step:unif-transform-state} has $\Pr[\APMeas_{i^*}(\rho) < \delta^{i^*} - 2\hat{\varepsilon}] = O(K\delta) \leq \epsilon_0/10$.
Then, again by \Cref{lemma:unif-alg-correctness}, the state $\rho'$ at the end of iteration $i^*$ of \Cref{step:unif-transform-state} has $\APMeas_{i^*}(\rho') \geq \delta^{i} - 2K\hat{\varepsilon}$ with probability $1 - \epsilon_0/10 - \sqrt{\tau + \hat{\varepsilon} + \hat{\delta}} \geq 1 - \epsilon_0/5$.
It then follows by \Cref{lemma:effjor} that $\Tr(\wt{G}_{i^*} \rho') \geq \delta^{i^*} - 2K\hat{\varepsilon} - \epsilon_0/5 \geq \delta^{i^*} - \epsilon_0/4$. We also have that $\APMeas_{i^*-1}(\rho') \leq \delta^{i^*-1} - \hat{\varepsilon}$ with probability $1 - \tau - \hat{\varepsilon} - \hat{\delta}$, from which it follows that $\Tr(\HProj{\wt{G}_{i^*-1}}{> \delta^{i^*-1}} \rho') \leq \tau + \hat{\varepsilon} + 2\hat{\delta} \leq \epsilon_0^2/16$.

Let $\sigma \coloneqq \HProj{\wt{G}_{i^*-1}}{> \delta^{i^*-1}} \rho' \HProj{\wt{G}_{i^*-1}}{> \delta^{i^*-1}} / \Tr(\HProj{\wt{G}_{i^*-1}}{> \delta^{i^*-1}})$. By gentle measurement, $\td(\rho',\sigma) \leq \epsilon_0/4$.
By \Cref{theorem:non-uniform-rep}, applying $\Amplify$ (\Cref{prot:non-uniform-adv}) to $\sigma$ with parameters $\tau = 1-\frac{\epsilon_0}{\delta^{i^*}} \geq 1-\epsilon/4$, $\mu = \epsilon/4$ yields an adversary that succeeds with probability at least $(1-\epsilon/4)^3 \delta \geq (1 - \epsilon)\delta \geq \delta - \epsilon$, and runs in time $\poly(k,\frac{1}{\delta^k},\frac{1}{\epsilon})$.

\section{Barriers to parallel repetition beyond \texorpdfstring{$3$}{3}-message protocols}
\label{sec:failure}

In this section we show that, under a cryptographic assumption, for every $k$ there exists a constant round quantum interactive protocol such that the $k$-fold parallel repetition of the protocol has the \emph{same} soundness as the original protocol.  Before introducing the interactive quantum protocol, we need to define post-quantum bit commitment schemes and non-malleability.

\subsection{Post-quantum bit commitments}

We begin by describing post-quantum bit commitments similar to how they are described in \cite{chia2021black}.  At a high level, interactive quantum bit commitments are quantum interactive protocols executed between a sender and receiver.  Post-quantum bit commitments describe a special form of general quantum bit commitments where the protocol is entirely classical but security holds against quantum adversaries.  Since the messages are entirely classical we will not use the quantum register notation used in the rest of the paper for the sake of simplicity.  To talk about the security of post-quantum bit commitments, we must first define computational indistinguishability.  We say that two families of random variables, indexed by $\lambda$, $\mathcal{X} = \{\mathcal{X}_{\lambda}\}$ and $\mathcal{Y} = \{\mathcal{Y}_{\lambda}\}$ are \emph{computationally indistinguishable}, denoted $\mathcal{X} \approx_{c} \mathcal{Y}$ if for all polynomial-time quantum adversaries $(A_{\lambda})_{\lambda}$, 
\begin{equation*}
   \left|\Pr_{x \sim \mathcal{X}_{\lambda}}[A_{\lambda}(x)\text{ accepts}] - \Pr_{y \sim \mathcal{Y}_{\lambda}}[A_{\lambda}(y)\text{ accepts}] \right| \leq \negl(\lambda)\,.
\end{equation*}

For a post-quantum commitment scheme between a sender $A = (A_{\lambda})_{\lambda}$ and receiver $B = (B_{\lambda})_{\lambda}$, denote by $\OUT_{S}(m, A, B)$ and $\OUT_{R}(m, A, B)$ the random variables corresponding to the classical strings held in the private workspace registers at the end of the commit phase when the sender is committing to the message $m$, where the randomness is over the internal randomness of the algorithm.  Let $\tau(m, A, B)$ be the public transcript (i.e. an ordered list of messages sent).  We now define the hiding and binding properties of post-quantum commitments.

\begin{definition}[Hiding property of post-quantum bit commitments]
    Let $A$ be the algorithm that performs an honest execution of the sender in a post-quantum bit commitment.  The post-quantum bit commitment is computationally hiding if for all polynomial-time quantum adversaries $B = (B_{\lambda})_{\lambda}$, 
    \begin{equation*}
        \OUT_{R}(0, A, B) \approx_{c} \OUT_{R}(1, A, B)\,.
    \end{equation*}
\end{definition}

At a high level, the hiding property means that any bounded receiver can not tell if they are receiving a commitment to $0$ or $1$.  

\begin{definition}[Binding property of post-quantum commitments]
    Let $\OUT_{R}^{b}$ and $\tau^{b}$ be the private receiver workspace and public transcript after an execution of a post-quantum commitment scheme where the sender committed to $b$.  Let $\ACCEPT_{\lambda}(\OUT_{R}^{b}, \tau^{b}, A)$ be the bit corresponding to whether or an honest receiver accepts in the reveal phase when the sender acts according to the algorithm $A$, conditioned on the receiver having private workspace distributed as $\OUT_{R}^{b}$, and public transcript $\tau^{b}$ in the commit phase, and $\REVEAL(\OUT_{R}^{b}, \tau^{b}, A)$ be the bit that is revealed to the receiver in the same context.  The post-quantum commitment scheme is \emph{binding} if for all polynomial time non-uniform quantum adversaries $A$, 
    \begin{equation}
        \Pr[\REVEAL(\OUT_{R}^{b}, \tau^{b}, A) = 1 - b \text{ and } \ACCEPT(\OUT_{R}^{b}, \tau^{b}, A) = 1] \leq \negl(\lambda)\,.
    \end{equation}
\end{definition}

At a high level, the binding property means that a bounded adversary can not switch their commitment from $b$ to $1-b$ after the commit phase.

\begin{figure}
    \centering
    \tikzset{every picture/.style={line width=0.75pt}} %

\begin{tikzpicture}[x=0.75pt,y=0.75pt,yscale=-1,xscale=1]

\draw    (77,126) -- (201,126) ;
\draw [shift={(203,126)}, rotate = 180] [color={rgb, 255:red, 0; green, 0; blue, 0 }  ][line width=0.75]    (10.93,-3.29) .. controls (6.95,-1.4) and (3.31,-0.3) .. (0,0) .. controls (3.31,0.3) and (6.95,1.4) .. (10.93,3.29)   ;
\draw    (203,171) -- (80,171) ;
\draw [shift={(78,171)}, rotate = 360] [color={rgb, 255:red, 0; green, 0; blue, 0 }  ][line width=0.75]    (10.93,-3.29) .. controls (6.95,-1.4) and (3.31,-0.3) .. (0,0) .. controls (3.31,0.3) and (6.95,1.4) .. (10.93,3.29)   ;
\draw  [fill={rgb, 255:red, 0; green, 0; blue, 0 }  ,fill opacity=1 ] (89,215) .. controls (89,213.9) and (89.9,213) .. (91,213) .. controls (92.1,213) and (93,213.9) .. (93,215) .. controls (93,216.1) and (92.1,217) .. (91,217) .. controls (89.9,217) and (89,216.1) .. (89,215) -- cycle ;
\draw  [fill={rgb, 255:red, 0; green, 0; blue, 0 }  ,fill opacity=1 ] (140,215) .. controls (140,213.9) and (140.9,213) .. (142,213) .. controls (143.1,213) and (144,213.9) .. (144,215) .. controls (144,216.1) and (143.1,217) .. (142,217) .. controls (140.9,217) and (140,216.1) .. (140,215) -- cycle ;
\draw  [fill={rgb, 255:red, 0; green, 0; blue, 0 }  ,fill opacity=1 ] (190,215) .. controls (190,213.9) and (190.9,213) .. (192,213) .. controls (193.1,213) and (194,213.9) .. (194,215) .. controls (194,216.1) and (193.1,217) .. (192,217) .. controls (190.9,217) and (190,216.1) .. (190,215) -- cycle ;
\draw    (77,263) -- (201,263) ;
\draw [shift={(203,263)}, rotate = 180] [color={rgb, 255:red, 0; green, 0; blue, 0 }  ][line width=0.75]    (10.93,-3.29) .. controls (6.95,-1.4) and (3.31,-0.3) .. (0,0) .. controls (3.31,0.3) and (6.95,1.4) .. (10.93,3.29)   ;
\draw    (380,126) -- (504,126) ;
\draw [shift={(506,126)}, rotate = 180] [color={rgb, 255:red, 0; green, 0; blue, 0 }  ][line width=0.75]    (10.93,-3.29) .. controls (6.95,-1.4) and (3.31,-0.3) .. (0,0) .. controls (3.31,0.3) and (6.95,1.4) .. (10.93,3.29)   ;
\draw    (506,174) -- (383,174) ;
\draw [shift={(381,174)}, rotate = 360] [color={rgb, 255:red, 0; green, 0; blue, 0 }  ][line width=0.75]    (10.93,-3.29) .. controls (6.95,-1.4) and (3.31,-0.3) .. (0,0) .. controls (3.31,0.3) and (6.95,1.4) .. (10.93,3.29)   ;
\draw  [fill={rgb, 255:red, 0; green, 0; blue, 0 }  ,fill opacity=1 ] (392,214) .. controls (392,212.9) and (392.9,212) .. (394,212) .. controls (395.1,212) and (396,212.9) .. (396,214) .. controls (396,215.1) and (395.1,216) .. (394,216) .. controls (392.9,216) and (392,215.1) .. (392,214) -- cycle ;
\draw  [fill={rgb, 255:red, 0; green, 0; blue, 0 }  ,fill opacity=1 ] (443,214) .. controls (443,212.9) and (443.9,212) .. (445,212) .. controls (446.1,212) and (447,212.9) .. (447,214) .. controls (447,215.1) and (446.1,216) .. (445,216) .. controls (443.9,216) and (443,215.1) .. (443,214) -- cycle ;
\draw  [fill={rgb, 255:red, 0; green, 0; blue, 0 }  ,fill opacity=1 ] (493,214) .. controls (493,212.9) and (493.9,212) .. (495,212) .. controls (496.1,212) and (497,212.9) .. (497,214) .. controls (497,215.1) and (496.1,216) .. (495,216) .. controls (493.9,216) and (493,215.1) .. (493,214) -- cycle ;
\draw    (386,263) -- (510,263) ;
\draw [shift={(512,263)}, rotate = 180] [color={rgb, 255:red, 0; green, 0; blue, 0 }  ][line width=0.75]    (10.93,-3.29) .. controls (6.95,-1.4) and (3.31,-0.3) .. (0,0) .. controls (3.31,0.3) and (6.95,1.4) .. (10.93,3.29)   ;
\draw    (276,15) -- (276,276) ;

\draw (5,64) node [anchor=north west][inner sep=0.75pt]   [align=left] {Sender};
\draw (91,18) node [anchor=north west][inner sep=0.75pt]   [align=left] {Commit Stage};
\draw (191,64) node [anchor=north west][inner sep=0.75pt]   [align=left] {Receiver};
\draw (54,62) node [anchor=north west][inner sep=0.75pt]   [align=left] {$\displaystyle \left(\ket{\mathrm{aux}}\right)$};
\draw (126,105) node [anchor=north west][inner sep=0.75pt]   [align=left] {$\displaystyle m_{0}$};
\draw (126,150) node [anchor=north west][inner sep=0.75pt]   [align=left] {$\displaystyle r_{0}$};
\draw (126,242) node [anchor=north west][inner sep=0.75pt]   [align=left] {$\displaystyle m_{c}$};
\draw (306,64) node [anchor=north west][inner sep=0.75pt]   [align=left] {Sender};
\draw (394,18) node [anchor=north west][inner sep=0.75pt]   [align=left] {Reveal Stage};
\draw (500,64) node [anchor=north west][inner sep=0.75pt]   [align=left] {Receiver};
\draw (355,62) node [anchor=north west][inner sep=0.75pt]   [align=left] {$\displaystyle \left(\ket{\mathrm{aux}}\right)$};
\draw (429,105) node [anchor=north west][inner sep=0.75pt]   [align=left] {$\displaystyle m_{0}$};
\draw (429,153) node [anchor=north west][inner sep=0.75pt]   [align=left] {$\displaystyle r_{0}$};
\draw (429,242) node [anchor=north west][inner sep=0.75pt]   [align=left] {$\displaystyle m_{r}$};
\draw (523,253) node [anchor=north west][inner sep=0.75pt]   [align=left] {$\displaystyle \{\mathrm{Accept} ,\ \mathrm{Reject}\}$};

\end{tikzpicture}
    \caption{An $c$, $r$-message post-quantum bit commitment scheme.  In a post-quantum bit commitment, all messages are classical, but the sender might be quantum.}
    \label{fig:quantum_bit_commitment}
\end{figure}
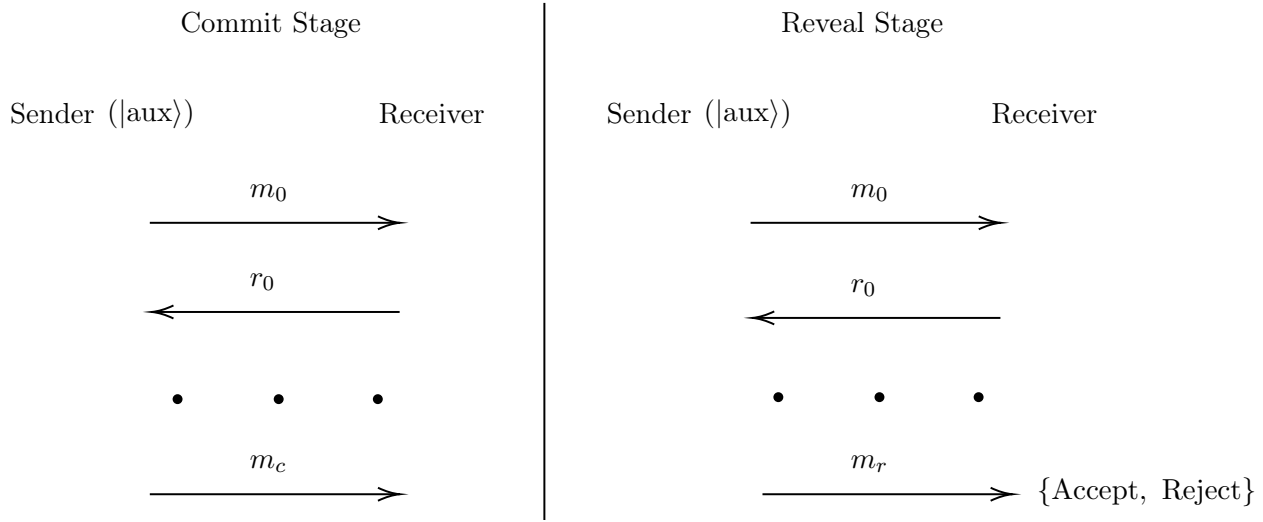

Now we turn our attention to post-quantum interactive protocols with a specific type of security.  The notion of security that we will care about is called \emph{non-malleability}~\cite{dolev1991non}.  At a high level, a post-quantum commitment scheme is non-malleable if any bounded man-in-the-middle adversary can not modify a commitment into another valid commitment in a systematic way (i.e. they may be able to output commitments to random bits, or forward commitments they receive).  We say that the man-in-the-middle adversary receives commitments on the left and sends their commitments on the right.  When discussing man-in-the-middle attacks against interactive protocols, there are two additional concepts we need to introduce: tags and schedules.  

A \emph{tag-based} commitment scheme is one where in addition to $\lambda$, the sender receives a classical tag $\tg_{\lambda} \in \{0, 1\}^{t(\lambda)}$, which we assume (w.l.o.g.) is contained in the sender's initial workspace register (i.e. the sender starts with $\ket{b}\otimes \ket{\tg_{\lambda}} \otimes \ket{0}$), and the receiver's private workspace register after the commit phase.  For a fixed sequence of tags $\tg = \{\tg_{\lambda}\}_{\lambda}$, we require that the corresponding family of commitment schemes satisfy hiding and binding.  At a high level, a tag is meant to prevent an adversary from forwarding communication from the left to the right, in the sense that we will make the definitions such that in order for an adversary to break the non-malleable property of a commitment scheme, we will require that the man in the middle uses different tags in the left and right commitments. 

For man-in-the-middle adversaries, a schedule refers to the sequence of messages sent in the left and right that the adversary accepts.  In this paper we define non-malleability with respect to a specific schedule, which we call $\sch$, defined by \Cref{fig:sync_comp_sec_expt}.  Certain post-quantum bit commitments might be secure against more general schedules, or even all schedules, but any scheme that is secure against this schedule will suffice for our result.  

\begin{definition}[Many-to-many synchronous non-malleable property of commitment scheme]
    Let $\mathcal{C}$ be a post-quantum, tagged, commitment scheme.  For a man-in-the-middle quantum adversary $A = (A_{\lambda})_{\lambda}$, that receives $l$ commitments on the left and outputs commitments to the $l'$ bits on the right according to the schedule $\sch$, let $\mim^{A}_{\mathcal{C}}(\lambda, x)$ denote the random variable over the private workspace of the adversary and the message bits it outputs, $(\OUT_{S}(m_i, A, R), m)_{i \leq l'}$, where $m$ is the $l'$ bits that the adversary commits to on the right, where $R$ is an honest execution of the receiver, whenever the $l$ bits being committed to on the left are $x$.  Let $\mim^{A}_{\mathcal{C}}(\lambda, x) = \bot$ whenever the adversary uses the tag that the original commitment used.  We assume that the adversary commits to each $m_i$ individually using the bit commitment scheme.  A post-quantum bit commitment scheme is \emph{one-to-many non-malleable} if 
    \begin{equation*}
       \mim_{\mathcal{C}}^{A}(\lambda, x_1) \approx_{c} \mim_{\mathcal{C}}^{A}(\lambda, x_2)
    \end{equation*}
    for all message on the left, $x_1$ and $x_2$.  
\end{definition}

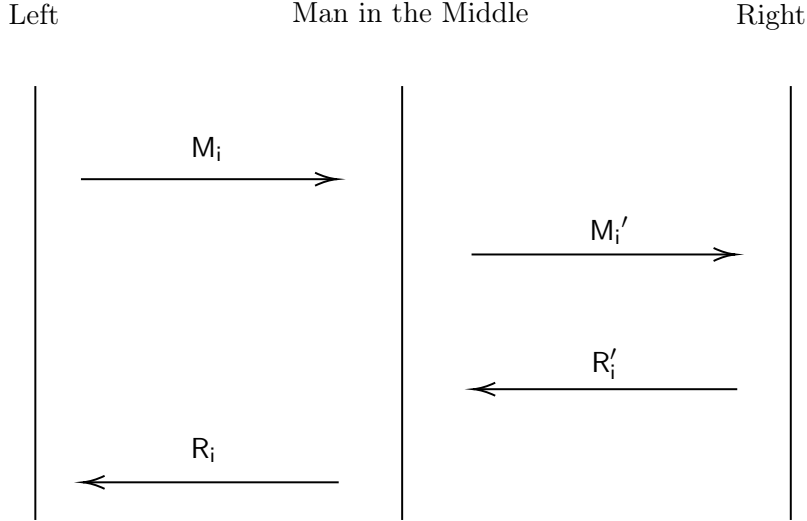
\begin{figure}
    \centering
    \tikzset{every picture/.style={line width=0.75pt}} %

\begin{tikzpicture}[x=0.75pt,y=0.75pt,yscale=-1,xscale=1]

\draw    (321,81) -- (321,300) ;
\draw    (159,128) -- (286,128) ;
\draw [shift={(288,128)}, rotate = 180] [color={rgb, 255:red, 0; green, 0; blue, 0 }  ][line width=0.75]    (10.93,-3.29) .. controls (6.95,-1.4) and (3.31,-0.3) .. (0,0) .. controls (3.31,0.3) and (6.95,1.4) .. (10.93,3.29)   ;
\draw    (356,166) -- (487,166) ;
\draw [shift={(489,166)}, rotate = 180] [color={rgb, 255:red, 0; green, 0; blue, 0 }  ][line width=0.75]    (10.93,-3.29) .. controls (6.95,-1.4) and (3.31,-0.3) .. (0,0) .. controls (3.31,0.3) and (6.95,1.4) .. (10.93,3.29)   ;
\draw    (490,234) -- (359,234) ;
\draw [shift={(357,234)}, rotate = 360] [color={rgb, 255:red, 0; green, 0; blue, 0 }  ][line width=0.75]    (10.93,-3.29) .. controls (6.95,-1.4) and (3.31,-0.3) .. (0,0) .. controls (3.31,0.3) and (6.95,1.4) .. (10.93,3.29)   ;
\draw    (289,281) -- (162,281) ;
\draw [shift={(160,281)}, rotate = 360] [color={rgb, 255:red, 0; green, 0; blue, 0 }  ][line width=0.75]    (10.93,-3.29) .. controls (6.95,-1.4) and (3.31,-0.3) .. (0,0) .. controls (3.31,0.3) and (6.95,1.4) .. (10.93,3.29)   ;
\draw    (136,81) -- (136,300) ;
\draw    (517,81) -- (517,300) ;

\draw (264,37) node [anchor=north west][inner sep=0.75pt]   [align=left] {Man in the Middle};
\draw (121,38) node [anchor=north west][inner sep=0.75pt]   [align=left] {Left};
\draw (488,38) node [anchor=north west][inner sep=0.75pt]   [align=left] {Right};
\draw (213,105) node [anchor=north west][inner sep=0.75pt]   [align=left] {$\displaystyle \mathsf{M_{i}} \ $};
\draw (414,145) node [anchor=north west][inner sep=0.75pt]   [align=left] {$\displaystyle \mathsf{M_{i}} '$};
\draw (415,212) node [anchor=north west][inner sep=0.75pt]   [align=left] {$\displaystyle \mathsf{R_{i} '}$};
\draw (213,257) node [anchor=north west][inner sep=0.75pt]   [align=left] {$\displaystyle \mathsf{R_{i}} \ $};

\end{tikzpicture}
    \caption{Schedule of messages sent in $\sch$.  The schedule is chosen specifically so that an adversary might forward commitments to other challengers, in particular the left and right always send their messages \emph{first}, before expecting a response.  In \Cref{prot:k-fold-unrepeatable}, both the left and right are executed by the challenger, and the adversary executes the man-in-the-middle strategy.  }
    \label{fig:sync_comp_sec_expt}
\end{figure}

The use of the word many-to-many indicates that the adversary can receive any number of commitment on the left, and commits to many bits on the right.  Being non-malleable means no matter what joint distribution over bits the adversary commits to on the right, the state of the adversary and bits committed to on the right is indistinguishable when the left commits to $0$ versus $1$.
We note that we define $\mim$ to take the value $\bot$ whenever the adversary re-uses the tag to allow for the ``un-interesting'' case when the adversary does not try to tamper with the commitment. %

\subsection{Parallel repetition fails for \texorpdfstring{$4$}{4}-message quantum interactive protocols}

We show that, if there exists a $(c+r)$-message post-quantum interactive bit commitment scheme that satisfies one-to-many non-malleability, then the following protocol is a $2(c+r)$-message quantum interactive protocol such that the soundness does not change after $k$ repetitions.  The protocol involves the challenger and adversary simultaneously making commitments to each other, and then revealing their commitments, with the adversary winning if they can ``flip'' the challengers commitment.  At a high level, the challenger and adversary will do the following: The challenger will begin by sending a message corresponding to the next message of commitment to $k-1$ copies of a bit $b$ from the challenger to the adversary.  Before the adversary sends to response to this message, the adversary and challenger will compute and send (in parallel) the next message of $k-1$ commitments from the adversary to the challenger.  After doing this, the adversary will send their response to the challenger's initial commitment.  Formally, assume that a $(c+r)$-message commitment scheme exists and consider the following protocol:

\begin{longfbox}[breakable=false, padding=1em, margin-top=1em, margin-bottom=1em]
\begin{algorithm}\label{prot:k-fold-unrepeatable}
    {\bf $k$-fold unrepeatable challenger}
\end{algorithm}

\noindent \textbf{Input: } Security parameter $\lambda$, tag $\tg$.
\begin{enumerate}
    \item Sample a bit $b$ uniformly at random.
    \item Repeat $\lceil c/2 \rceil$ times: 
    \begin{enumerate}
        \item The challenger sends the next message in $k-1$ commitments to $b$. \label{step:challenger_commit}
        \begin{enumerate}
            \item The adversary sends the next message in the commitments to $k-1$ many bits.\label{step:adversary_commit}
            \item The challenger sends their response to the $k-1$ commitments from the adversary.\label{step:challenger_response}
        \end{enumerate}
        \item The adversary sends the responses to the challengers commitments to $b$.\label{step:adversary_response}
    \end{enumerate}
    \item Check that the adversary does not use the tag $\tg$. If not, reject.
    \item Repeat $\lceil r/2 \rceil$ times:
    \begin{enumerate}
        \item The challenger sends the next message in $k-1$ reveals to $b$.
        \begin{enumerate}
            \item The adversary sends the next message in the reveal to $k-1$ many bits.
            \item The challenger sends their response to the $k-1$ reveals from the adversary.
        \end{enumerate}
        \item The adversary sends the response to the challengers reveals to $b$.
    \end{enumerate} 
    \item Let $\{c_i\}_{i = 0}^{k-1}$ be the bits that the adversary revealed.  Accept if 
    \begin{enumerate}
        \item $\oplus_{i = 0}^{k-1} c_i \neq b$ and
        \item Every commitment sent by the adversary is accepted.  
    \end{enumerate}
\end{enumerate}
\end{longfbox}

The use of the doubly indented bullet points is meant to highlight that steps \ref{step:challenger_commit} and \ref{step:adversary_response} correspond to an execution of the commitment scheme, and steps \ref{step:adversary_commit} and \ref{step:challenger_response} correspond to another execution of the commitment scheme, interwoven with the first.  Note that in the final iteration of each loop, the response might be an empty message (for example, if the final message the commitment or reveal stages has already been sent).  With this, it is clear that \Cref{prot:k-fold-unrepeatable} is a $2(c+r)$-message protocol if the commitment scheme involves exchanging $c$ messages in the commit stage and $r$ messages in the reveal stage.
We will show that both \Cref{prot:k-fold-unrepeatable} and the $k$-fold parallel repetition of \Cref{prot:k-fold-unrepeatable} have soundness $1/2$, assuming that the commitment scheme satisfies certain properties.  

\begin{lemma}
    Assume that $\mathcal{C}$ is a post-quantum commitment scheme satisfying one-to-many non-malleability.  Then for every $k$, and every polynomial-time quantum adversary $A$, there exists a negligible function $\epsilon$ such that the probability that $A$ is accepted by the challenger executing \Cref{prot:k-fold-unrepeatable} is at most
    \begin{equation}
        \frac{1}{2} + \epsilon(\lambda)\,.
    \end{equation}
\end{lemma}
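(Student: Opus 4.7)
The plan is to reduce this directly to the one-to-many non-malleability of $\mathcal{C}$. The key observation is that, during the commit phase of \Cref{prot:k-fold-unrepeatable}, the challenger simultaneously acts as an honest receiver for $k-1$ left commitments, all to the same bit $b$, and as an honest sender for $k-1$ right commitments to bits of the adversary's choosing. The four-step interleaving within each outer round matches the schedule $\sch$ of \Cref{fig:sync_comp_sec_expt} exactly: the challenger (left) speaks first, then the adversary (middle) forwards its own round-$i$ message to the right, then the challenger (right) responds, and finally the adversary responds to the left. Hence $A$ together with the challenger realizes the synchronous man-in-the-middle experiment for $\mathcal{C}$, and since the tag check forces $A$ to use a tag different from $\tg$, on accepting transcripts $\mim_{\mathcal{C}}^A(\lambda, b^{k-1}) \neq \bot$.

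First I would use computational binding of $\mathcal{C}$ to argue that, with overwhelming probability, the bits $c_1,\ldots,c_{k-1}$ that $A$ opens in the reveal phase coincide with bits already determined at the end of the commit phase; any non-negligible gap would, by a standard hybrid/extraction argument on the $k-1$ right commitments, yield an efficient binding adversary against a single instance of $\mathcal{C}$. Consequently, up to negligible error, the XOR $\bigoplus_i c_i$ appearing in the winning condition is an efficiently computable function of $A$'s post-commit internal state together with the honest sender's openings of the left commitments, and hence an efficiently computable function of $\mim_{\mathcal{C}}^A(\lambda, b^{k-1})$ (extended with the honest left-opening randomness, which is independent of $b$).

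Now I would invoke one-to-many non-malleability: the distributions $\mim_{\mathcal{C}}^A(\lambda, 0^{k-1})$ and $\mim_{\mathcal{C}}^A(\lambda, 1^{k-1})$ are computationally indistinguishable, so letting $q_b \coloneqq \Pr[\bigoplus_i c_i = 0 \mid \text{challenger bit} = b]$, we have $|q_0 - q_1| \le \negl(\lambda)$. The acceptance probability of $A$ is then, up to the negligible binding error,
\begin{equation*}
\tfrac{1}{2}(1 - q_0) + \tfrac{1}{2} q_1 \;=\; \tfrac{1}{2} + \tfrac{1}{2}(q_1 - q_0) \;\le\; \tfrac{1}{2} + \negl(\lambda),
\end{equation*}
contradicting any non-negligible advantage. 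The main obstacle will be constructing the non-malleability distinguisher cleanly: because $\mim_{\mathcal{C}}^A$ captures only the commit-phase state of $A$, the reduction must resume $A$'s residual strategy in concert with the honest sender algorithm of $\mathcal{C}$ to drive the reveal phase and extract $\bigoplus_i c_i$, and the communication order must be lined up against $\sch$ exactly so that the non-malleability guarantee applies verbatim; handling the tag check and the "$\bot$ on tag reuse" convention inside this hybrid is the only real bookkeeping step.
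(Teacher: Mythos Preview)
Your overall strategy is the same as the paper's: identify the commit phase of \Cref{prot:k-fold-unrepeatable} with the synchronous man-in-the-middle experiment for $\mathcal{C}$ and reduce to non-malleability. Your explicit invocation of binding to tie the revealed bits $c_i$ to the committed values is correct and is a step the paper's proof leaves implicit.

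Where you diverge is in building the distinguisher, and here you overcomplicate things and create a small gap. By the paper's definition, $\mim^A_{\mathcal{C}}(\lambda, x)$ already contains the tuple $m = (m_1,\ldots,m_{k-1})$ of bits $A$ commits to on the right, not merely $A$'s post-commit state. So the non-malleability distinguisher simply outputs the parity $\bigoplus_i m_i$; combined with the binding step you already have, this equals $\bigoplus_i c_i$ on accepting transcripts up to negligible error, and the advantage computation is then exactly your displayed $\tfrac12 + \tfrac12(q_1 - q_0)$. There is no need to resume $A$ through the reveal phase. This dissolves the ``main obstacle'' you flag. In fact your proposed workaround of extending $\mim$ with the left-opening randomness would not go through: even with that randomness $r$, producing a valid left opening requires specifying $b$, which is precisely what the distinguisher must not know; feeding the wrong $b$ makes the opening invalid and $A$ (as left receiver) can detect this, so the simulated reveal does not reproduce the real interaction.

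One minor slip: you have the challenger's roles reversed. In the protocol the challenger is the honest \emph{sender} on the left (committing to $b^{k-1}$) and the honest \emph{receiver} on the right; this is also the orientation required by the non-malleability definition. It does not affect the argument.
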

\begin{proof}
    Assume for the sake of contradiction that there is an adversary $(A_{\lambda})_{\lambda}$ that is accepted by the challenger with advantage $1 / p(\lambda)$ for some polynomial $p(\cdot)$.  We are going to construct an adversary for the non-malleability property of $\mathcal{C}$ as follows: The challenger on the left commits to the string $b^{k-1}$ using $k-1$ copies of the commitment scheme (here $b$ can be either $0$ or $1$).  The adversary then runs $A$ on the commitments from the left to receive a commitment to a string of length $k-1$, denoted $\wt{m}$.  By step $3$, we can assume that all of the commitments in used to commit to $\wt{m}$ use a different tag than the challenger's.

    By the assumption about $A$, $\wt{m}$ has parity $1 - b$ with probability $1/2 + 1/p(\lambda)$.  Thus, with probability $1/2 + 1/p(\lambda)$, the message string in $\mim^{A}_{\mathcal{C}}(0^{k-1}, \lambda)$ has XOR $1$, and the message string in $\mim^{A}_{\mathcal{C}}(1^{k-1}, \lambda)$ has XOR $0$.  Thus, the two distributions are distinguishable with non-negligible advantage by an polynomial-time adversary that accepts if the messages sent to the right have even parity.  Thus, if there is an adversary that succeeds in \Cref{prot:k-fold-unrepeatable} with non-negligible advantage, $\mathcal{C}$ is not many-to-many non-malleable, a contradiction.  
\end{proof}

Next we show that there is a adversary that is accepted with probability $1/2$ in the $k$-fold parallel repetition of \Cref{prot:k-fold-unrepeatable}.  At a high level, for each challenger in the $k$-fold parallel repetition, the adversary will forward commitments from all $k-1$ other challengers to them.  If the XOR of all of the committed bits is equal to $1$, every bit that the adversary commits to will be $b \oplus 1$, and the adversary will win every game in the $k$-fold repetition.  
\begin{lemma}
    For every commitment scheme $\mathcal{C}$, there exists an adversary that accepted with probability $1/2$ in the $k$-fold parallel repetition of \Cref{prot:k-fold-unrepeatable}, when all $k$ repetitions are given unique tags $\tg$.
\end{lemma}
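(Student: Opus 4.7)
The plan is to exhibit an information-theoretic message-forwarding adversary that, conditioned on a single global parity event, wins all $k$ parallel instances simultaneously. Label the $k$ parallel instances by $i \in \{0, \dots, k-1\}$ and let $b_i$ denote the bit sampled by the $i$-th challenger in \Cref{prot:k-fold-unrepeatable}; these are independent and uniformly random. Each challenger $i$ runs $k-1$ honest executions of $\mathcal{C}$ committing to $b_i$ and in return expects the adversary to open $k-1$ bits whose XOR is $b_i\oplus 1$. The strategy I would use is to fix, for each ordered pair $i\neq j$, a canonical slot assignment and have the adversary forward to challenger $i$ one of the $k-1$ commitments it receives from challenger $j$ (and later forward the matching reveal). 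Concretely, the adversary's vector of committed bits to challenger $i$ is a permutation of $(b_j)_{j\neq i}$.

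The first thing to verify is that this forwarding is compatible with the schedule of \Cref{prot:k-fold-unrepeatable}. In each iteration of the outer loop, step $(a)$ has all $k$ challengers send (in parallel) their next commit message before step $(a)(i)$ asks the adversary to send anything; so the adversary can buffer the entire round of challenger traffic and then route it. The response substeps $(a)(ii)$ and $(b)$, as well as the reveal phase, compose identically.

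Next I would check the tag test. By hypothesis the $k$ parallel repetitions use distinct tags $\tg_0,\dots,\tg_{k-1}$, so every commitment forwarded to challenger $i$ carries some $\tg_j$ with $j\neq i$; hence no challenger rejects at step $3$. Because each commitment and each reveal produced by the adversary is a verbatim copy of an honest sender--honest receiver interaction with some other challenger, every such commitment is also syntactically accepted.

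Finally I would compute the winning probability. The only remaining acceptance condition is the XOR check at challenger $i$, which demands $\bigoplus_{j\neq i} b_j \neq b_i$, equivalently $\bigoplus_{j=0}^{k-1} b_j = 1$. This single event, a function of the $k$ independent uniform bits, has probability exactly $1/2$, and when it holds it triggers acceptance at \emph{every} one of the $k$ challengers simultaneously (and when it fails, all reject), so the $k$-fold parallel challenger accepts with probability exactly $1/2$. The only subtlety in the plan is the bookkeeping that the synchronous schedule lets the adversary see each round's entire challenger traffic before emitting its own response; no cryptographic hardness of $\mathcal{C}$ is invoked, matching the ``for every commitment scheme'' quantifier in the statement.
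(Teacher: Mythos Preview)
Your proposal is correct and follows essentially the same approach as the paper: construct a message-forwarding adversary that routes to each challenger $i$ the $k-1$ commitments (and later reveals) produced by the other $k-1$ challengers, verify that the synchronous schedule and distinct tags make this valid, and observe that all $k$ XOR checks succeed simultaneously exactly when the global parity $\bigoplus_j b_j = 1$, which happens with probability $1/2$. The paper's proof differs only in that it writes out an explicit index map for the routing.
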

\begin{proof}
In step \ref{step:challenger_commit} of \Cref{prot:k-fold-unrepeatable}, the challenger sends $k-1$ messages to the adversary.  In the $k$-fold parallel repetition, we will denote by $m^{i, \alpha}_{l}$ the $\alpha^{th}$ commitment sent from the $i^{th}$ repetition of the protocol, and similarly for responses.  Consider the following adversary for the $k$-fold parallel repetition of \Cref{prot:k-fold-unrepeatable}.

\begin{longfbox}[breakable=false, padding=1em, margin-top=1em, margin-bottom=1em]
\begin{algorithm}\label{prot:k-fold-adversary}
    {\bf Adversary for $k$-fold repetition of \Cref{prot:k-fold-unrepeatable}.}
\end{algorithm}
\noindent \textbf{Input:} Security parameter $\lambda$.
\begin{enumerate}
	\item For $l$ from $0$ to $\lfloor c/2 \rfloor$:
	\begin{enumerate}
        \item Receive $k$ messages, $\{m^{0, \alpha}_{l}\}_{\alpha \in [k-1]}, \ldots, \{m^{k-1, \alpha}_{l}\}_{\alpha \in [k-1]}$ from the challengers.
        \begin{enumerate}
            \item To every challenger $i$, send the $k-1$ messages $\{m^{j+i \mod k, j+1 \mod k}_{l}\}_{j \in [1:k]}$.  
            \item From challenger $i$, receive $k-1$ responses, $\{r^{j+i \mod k, j+1 \mod k}_{l}\}_{j \in [1:k]}$ from the challengers.
        \end{enumerate} 
        \item To every challenger $i$, send the $k-1$ responses $\{r^{i, \alpha}_{l}\}_{\alpha \in [k-1]}$.
	\end{enumerate}
     \texttt{(The challengers reveal bits $b_0, \ldots b_k$.  To each challenger $i$ the adversary reveal $\{b_j\}_{j \neq i}$.)}
    \item For $l$ from $0$ to $\lfloor r/2 \rfloor$:
    \begin{enumerate}
        \item Receive $k$ messages, $\{m^{0, \alpha}_{l}\}_{\alpha \in [k-1]}, \ldots, \{m^{k-1, \alpha}_{l}\}_{\alpha \in [k-1]}$ from the challengers.
        \begin{enumerate}
            \item To every challenger $i$, send the $k-1$ messages $\{m^{j+i \mod k, j+1 \mod k}_{l}\}_{j \in [1:k]}$.  
            \item From challenger $i$, receive $k-1$ responses, $\{r^{j+i \mod k, j+1 \mod k}_{l}\}_{j \in [1:k]}$ from the challengers.
        \end{enumerate} 
        \item To every challenger $i$, send the $k-1$ responses $\{r^{i, \alpha}_{l}\}_{\alpha \in [k-1]}$.
    \end{enumerate}
\end{enumerate}
\end{longfbox}
We first note that \Cref{prot:k-fold-adversary} runs in polynomial time and always produces valid commitments to the bits that it announces, and assuming that all of the challengers used unique tags, every challenger sees commitments that use unique tags.

Let $B = \oplus_{j = 0}^{k-1} b_j$ be the XOR of all of the bits that the challengers committed to.
Because the adversary announces to every challenger, $j$, the bits of the other challengers (not including $j$), the adversary announces bits that XOR to $\left(\bigoplus_{l \neq j} b_l\right) = b_j \oplus B$, to the $j^{th}$ repetition of the protocol.  If $B = 1$, \Cref{prot:k-fold-adversary} wins all of the repetitions of \Cref{prot:k-fold-unrepeatable}.  Since every challenger samples a bit uniformly at random, this happens with probability $1/2$.  Thus the adversary is accepted in the $k$-fold repetition of \Cref{prot:k-fold-unrepeatable} with probability $1/2$.  
\end{proof}

If $c = r = 1$, then we say that a quantum bit commitment is \emph{non-interactive}.  Classical non-interactive, non-malleable bit commitments are known to exist in a number of settings and from a wide variety of assumptions~\cite{lin2020two}, and there is no formal evidence ruling out the existence of post-quantum (or fully quantum) non-interactive non-malleable bit commitments.  Under this assumption, combined with the previous claims, we have the following corollary.

\begin{corollary}
    If there exists a non-interactive post-quantum bit commitment scheme satisfying one-to-many non-malleability, then there is a $4$-message quantum interactive protocol for which parallel repetition fails.
\end{corollary}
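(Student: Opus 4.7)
The plan is a direct combination of the two preceding lemmas applied to \Cref{prot:k-fold-unrepeatable} instantiated with the hypothesized non-interactive post-quantum non-malleable bit commitment scheme $\mathcal{C}$. Setting $c = r = 1$, the protocol has exactly $2(c+r) = 4$ messages by construction: one commit message and one reveal message in each direction, with the challenger's commit/reveal sandwiching the adversary's commit/reveal per the scheduling in \Cref{fig:sync_comp_sec_expt}. Once this instantiation is fixed, there is essentially nothing left to prove beyond invoking the two lemmas with the correct parameters.

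Concretely, I would first apply the first lemma (soundness of the $k$-fold unrepeatable challenger) to $\mathcal{C}$ to conclude that the single-fold instantiation has soundness at most $\tfrac12 + \negl(\lambda)$ against polynomial-time quantum adversaries; here the one-to-many non-malleability hypothesis is used exactly as in that lemma's proof. Then I would apply the second lemma to exhibit a polynomial-time adversary (namely, \Cref{prot:k-fold-adversary}) that wins the $k$-fold parallel repetition with probability exactly $\tfrac12$, using only the forwarding strategy and requiring no cryptographic assumption on $\mathcal{C}$. Since $\tfrac12 > (\tfrac12 + \negl(\lambda))^k$ for every $k \geq 2$, the soundness of $\proto^{\otimes k}$ is at least as large as that of $\proto$ up to a negligible additive term, which is the asserted failure of parallel repetition.

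The only bookkeeping issue is that the second lemma requires the $k$ parallel copies to use pairwise distinct tags, so I would specify that the challenger samples its tag uniformly from $\{0,1\}^\lambda$; independence across the $k$ repetitions then makes the tags pairwise distinct with overwhelming probability, and the single-fold soundness argument is unaffected because the tag is a public, honestly-sampled input that is only used in the rejection check of Step 3 of \Cref{prot:k-fold-unrepeatable}. I do not anticipate any technical obstacle: every ingredient is already in place, and the corollary is a one-line consequence of the two lemmas once the round count is verified.
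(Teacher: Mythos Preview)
Your proposal is correct and matches the paper's intended argument: the corollary is stated immediately after the two lemmas with the remark that it follows ``combined with the previous claims,'' and your write-up simply makes that combination explicit for $c=r=1$. Your handling of the tag-distinctness requirement (sampling tags uniformly so that the $k$ copies are distinct with overwhelming probability) is a detail the paper leaves implicit, and your treatment is appropriate.
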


Thus, it can not be the case that both non-interactive one-to-many non-malleable commitments exist, and that strong parallel repetition, where the soundness for the $k$-fold parallel repetition goes down as $\epsilon^k$, holds $4$-message quantum interactive protocols.  If one believes that these kinds of commitments do exist, then the $3$-message parallel repetition theorem proved here is tight.  However this argument only shows that for every fixed $k$, there exists a protocol for which soundness of the (exactly) $k$-fold parallel repetition stays high.  It could still be the case that for every $4$-message protocol, there exists a large $k$ for which the soundness of the $k$-fold parallel repetition of the protocol has negligible soundness.  We leave the task of ruling out this (weaker) form of parallel repetition as an open question.

\section{Round compression for quantum argument systems}
\label{sec:compression}

A quantum interactive \emph{argument} 
is a special type of quantum interactive proof where the completeness and soundness conditions hold with respect to computationally efficient provers. Formally, a promise decision problem $L = (L_{yes},L_{no})$ admits a quantum interactive argument with completeness $c$ and soundness $s$ if there exists a polynomial-time quantum verifier $V$ and a polynomial-time quantum ``honest'' prover $P$ such that
\begin{enumerate}
    \item For all $x \in L_{yes}$ there exists a quantum advice state $\ket{\psi_x}$ such that $P$ given $\ket{\psi_x}$ interacting with $V$ on input $x$ gets accepted with probability at least $c$;
    \item For all polynomial-time ``malicious'' provers $P^*$, for all $x \in L_{no}$, for all quantum advice states $\ket{\psi_x}$, the prover $P^*$ interacting with $V$ on input $x$ gets accepted with probability at most $s$.
\end{enumerate}
In other words, the quantum protocol that the verifier $V$ engages in is $s$-computationally secure when $V$ is given ``no'' instances from $L_{no}$.

In this section we prove a \emph{round compression} result for quantum interactive arguments. It was proved by Kitaev and Watrous in~\cite{kitaev2000parallelization} that every $m$-round quantum interactive \emph{proof} (i.e. were the prover has unbounded computational power) for a (promise) language $L$ can be \emph{compressed} into another interactive proof for $L$ that only has $3$ messages and the soundness is worsened by a factor $\poly(m)$. By parallel repetition, the soundness can be improved back to a constant~\cite{kitaev2000parallelization}. This stands in contrast to the classical case, where an analogous round compression technique for interactive proofs is unknown and considered unlikely. 

We prove an analogue of the Kitaev--Watrous compression procedure for interactive arguments:

\begin{theorem}[Round compression for quantum interactive arguments]
\label{thm:compression}
    Let $L$ be a promise language with an $m(n)$-message quantum interactive argument with completeness $c$ and soundness $s$ for $m \ge 3$.
    Then there exists a $3$-message quantum interactive argument for $L$ with completeness $1 - \frac{2(1 - c)}{m - 1}$ and soundness $1 - \frac{1 - s}{(m - 1)^4}$.
    Furthermore, if the original prover has complexity $t_P$ and verifier has complexity $t_V$, then the compressed protocol has prover complexity $m^{O(1)}(t_P + t_V)$ and verifier complexity $m^{O(1)}t_V$.
\end{theorem}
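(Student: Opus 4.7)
The plan is to generalize the round-compression technique of Kitaev--Watrous \cite{kitaev2000parallelization,kempe2007using} from proofs to arguments, iterating a single-shot compression $O(\log m)$ times. The single-shot step converts an $(r+1)$-message protocol into an $(\lceil r/2 \rceil + 1)$-message protocol efficiently and while preserving computational soundness up to a constant factor loss. Applying it $\log_2((m-1)/2)$ times reduces $m$ messages to $3$, and composition of the per-step bounds yields the stated completeness and soundness.

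The compressed protocol is defined as follows. The honest prover runs the first $r/2$ rounds of the original protocol internally (simulating the verifier on its own), and sends to the challenger the verifier's private register together with the current message register. The challenger picks a bit $b \in \{0,1\}$ and sends it. If $b=0$, the two parties continue the original protocol forward from round $r/2$ to round $r$, and the challenger applies the original decision POVM $\{D, \id - D\}$. If $b=1$, they instead apply the inverses of the first $r/2$ rounds of the original protocol, and the challenger measures whether its private register has returned to $\ket{0}$. Completeness is immediate: the forward branch succeeds with probability at least $c$ and the backward branch succeeds with probability $1$, so the honest success probability is at least $(1+c)/2 = 1 - (1-c)/2$ per step; iterating gives $1 - 2(1-c)/(m-1)$.

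The main technical step is the soundness reduction. Given a polynomial-time adversary $P^*$ accepted by the compressed challenger with probability $1-\epsilon$, I would construct an efficient adversary $\widetilde{P}$ for the original protocol as follows. $\widetilde{P}$ internally simulates an execution of $P^*$ against the compressed challenger on challenge $b=1$: it runs $P^*$ through its first-half strategy, then applies the backward half, and measures whether the simulated challenger's private register equals $\ket{0}$. Conditioned on that measurement accepting, the post-measurement state on the message register is a (near-)valid initial state of the original protocol, and the simulated challenger workspace can be discarded. $\widetilde{P}$ then forwards the appropriate message to the real verifier and, from round $r/2$ onward, executes the forward half of $P^*$'s strategy with the real verifier playing the role of the compressed challenger conditioned on $b=0$.

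The main obstacle is obtaining the right distance bound: a naive trace-distance accounting would incur a square-root blowup $\epsilon \mapsto \sqrt{\epsilon}$ per compression step, which would not compose to the claimed $(m-1)^4$ factor. I would instead work in squared Bures distance, which by \Cref{lem:projector_to_fidelity} converts the per-branch success probabilities (each at least $1-2\epsilon$ by averaging) into a squared-Bures closeness of the compressed adversary's midpoint state to an accepting state for each branch, losing a factor of $4$. Two applications of the weak triangle inequality (\Cref{lem:weak_triangle_inequality}) chain the forward and backward closeness through the midpoint state, yielding that the real verifier's final state is $O(\epsilon)$-close in squared Bures to a state accepted by the real verifier, hence $\widetilde{P}$ is accepted with probability $1 - 16\epsilon$. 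Each compression step thus sends $1-\epsilon \mapsto 1-\epsilon/16$ in soundness; iterating $\log_2((m-1)/2)$ times gives the $(m-1)^4$ overall factor claimed. Efficiency is preserved because each step only adds a constant number of runs of the prover and verifier of the previous protocol, so $\log m$ iterations cost only $\poly(m)$ overhead in both parties' complexity and in the communication.
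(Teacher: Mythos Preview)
Your proposal is correct and matches the paper's proof essentially step for step: iterative round-halving with the Kempe et al.\ forward/backward coin-flip verifier, completeness halving the error per step, and a soundness reduction analyzed via squared Bures distance plus the weak triangle inequality to obtain the $1-\epsilon \mapsto 1-16\epsilon$ per-step loss and hence the $(m-1)^4$ factor after $O(\log m)$ iterations. Two small points to tighten: you should say explicitly that for rounds $1$ through $r/2$ of the real protocol, $\widetilde{P}$ applies the \emph{inverses} of $P^*$'s backward-branch unitaries (to rewind from round $0$ back to the midpoint) before switching to the forward-branch unitaries; and only a single application of the weak triangle inequality (through the original state $\proj{\aux}$) is needed, not two.
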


We follow the proof approach of Kempe et al.\ \cite{kempe2007using} (see also~\cite{vidick2016quantum}) who gave an iterative procedure to round-compress quantum proof systems (whereas the original Kitaev-Watrous proposal achieved the compression in one step). The idea as follows: given a $(2r+1)$-message ``original'' protocol, one can obtain an $(r+1)$-message ``compressed'' protocol where the prover first sends the intermediate state of the verifier at round $r+1$ of the original protocol (i.e., the midpoint). The verifier in the compressed protocol randomly decides whether to play the original protocol forwards or backwards to check respectively whether the original verifier would have accepted, or was initialized properly. 
Our protocol is similar in spirit to this reduction, except that we must slightly tweak the reduction and analysis to ensure that the reduction is efficient.

\begin{remark}
    In the quantum interactive proof setting, the assumption of perfect completeness on the original protocol is not necessary, because every quantum interactive proof can be transformed to have perfect completeness (see~\cite{kitaev2000parallelization} and~\cite{vidick2016quantum}). However we do not know whether quantum arguments can be generically transformed to have perfect completeness.
\end{remark}

\begin{remark}
    To maintain consistency with existing work regarding interactive proofs and arguments, in this section we adopt the terminology ``verifier'', ``honest prover'', and ``adversary'' (or malicious prover).  The verifier corresponds to the challenger in an interactive argument.  When talking about completeness, the honest prover corresponds to the adversary, and when talking about soundness we still use the term adversary.  
\end{remark}

\begin{remark}
    When we refer to the \emph{total run-time} of an efficient adversary in an interactive argument, as the total time it takes to execute the entire interaction between the adversary and the challenger.  If every action the adversary takes is polynomial time, then the total run-time will be a polynomial in $n$ as well.  
\end{remark}

We first describe a ``round-halving'' compression procedure that transforms a verifier for a $(2r+1)$-message protocol to a verifier for an equivalent $(r+1)$-message protocol (in the completeness and soundness sense).  Our reduction adds at most a constant overhead to the gate complexity of the verifier. At the end, we will iterate this procedure logarithmically many times to obtain a $3$-message protocol.

Let $\{C_i\}_{i \leq r}$ be a verifier in a $(2r+1)$-message quantum interactive protocol.
We describe a $(r + 1)$-message protocol and analyse its completeness and soundness.  We briefly recall the notation used to describe quantum interactive arguments.  $\reg{A_i}$ denotes the adversaries private workspace in round $i$, $\reg{M}_i$ denotes the message register sent from the adversary to the challenger, and $\reg{R_i}$ denotes the response register sent from the challenger to the adversary.  The adversary applies the unitary $A_i$ on $\reg{A_iR_{i-1}}$ to get a state on $\reg{A_{i+1}M_i}$, and the verifier applies a unitary $C_i$ on $\reg{W_iM_i}$ to get a state on $\reg{W_{i+1}R_i}$.  In the final round, the challenger applies a POVM $\{D, \id - D\}$ on $\reg{W_{r}M_{r}}$ to decide whether to accept or reject.  

\begin{longfbox}[breakable=false, padding=1em, margin-top=1em, margin-bottom=1em]
\begin{algorithm} {\bf Verifier for $(r+1)$-message compressed protocol} \label{prot:quantum_round_compression} 
\end{algorithm}
\begin{enumerate}
    \item Receive registers $\reg{M}_{r/2}\reg{W}_{r/2}$ from the adversary.  Flip a unbiased coin.  
    \item If the outcome of the coin flip is heads:
    \begin{enumerate}
        \item Run the original $2r+1$-message protocol starting from interaction $r/2$, i.e. apply unitary $C_{r/2}$ to registers $\reg{M}_{r/2}\reg{W}_{r/2}$, send the result of the coin flip and $\reg{R}_{r/2}$ to the adversary, and continue for interactions $r/2+1$ through $r$.\label[step]{step:forward_compression}
        \item After $r/2$ many interactions, receive register $\reg{M_{r}}$, measure $\Pi$ on registers $M_{r}\reg{W}_{r}$.  Accept if the measurement accepts.
    \end{enumerate}
    \item If outcome of the coin flip is tails:
    \begin{enumerate}
        \item Run the original $2r+1$-message protocol in reverse, i.e. send the result of the coin flip and $\reg{M}_{r/2}$ back to the adversary, receive register $\reg{R}_{r/2-1}$ from the adversary, apply $(C_{r/2-1})^{\dagger}$ to $\reg{R}_{r/2-1}\reg{W}_{r/2}$, and continue for interactions $r/2-1$ through $1$.  \label[step]{step:backward_compression} 
        \item After $r/2$ many interactions, receive register $\reg{R}_{1}$.  Apply $(C_{1})^{\dagger}$ on registers $\reg{R}_1\reg{W}_2$ to get registers $\reg{M}_0\reg{W}_0$, and measure $\proj{0}_{\reg{W}_0}$.  Accept if the measurement accepts.
    \end{enumerate}
\end{enumerate}
\end{longfbox}
\begin{claim}[Completeness]\label{claim:round_compression_completeness}
    If there is an non-uniform (resp. uniform) honest prover (denoted the \emph{original} honest prover) that succeeds in the original $(2r+1)$-message protocol with probability $1 - \epsilon$, then there is a non-uniform (resp. uniform) honest prover that succeeds in the $(r+1)$-message compressed protocol with probability $1 - \epsilon/2$.
    Furthermore, if the original protocol has running time $t_P, t_V$ for the prover and the verifier, then the new protocol has running time $O(t_P + t_V)$ for the prover and $O(t_V)$ for the verifier.
\end{claim}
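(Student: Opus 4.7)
The plan is to describe an honest prover $P'$ for the compressed protocol that simulates the first half of the original protocol internally, then continues in either direction based on the verifier's coin flip. In the forward case, the execution is identical to the original protocol, so acceptance probability is exactly $1-\epsilon$; in the backward case the uncomputation returns the verifier's workspace exactly to $\ket0$, yielding acceptance probability $1$. Averaging over the coin flip gives $1-\epsilon/2$.

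Concretely, the new prover $P'$ proceeds as follows. First, using fresh ancillas it initializes a simulated verifier workspace $\reg{W}_0$ to $\ket0$ and runs the first $r/2$ interactions of the original protocol, alternately applying the original honest prover's unitaries $A_1,\ldots,A_{r/2}$ on its own registers and the original verifier's unitaries $C_1,\ldots,C_{r/2-1}$ on the simulated $\reg{W}_i\reg{M}_i$ registers. After this simulation, $P'$ holds registers whose joint state is identical to what the original prover and verifier would share just before the $r/2$-th verifier unitary. $P'$ then sends the simulated $\reg{M}_{r/2}\reg{W}_{r/2}$ to the compressed verifier and waits for the coin flip.

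If the coin is heads, $P'$ simply continues the interaction with the compressed verifier by playing the rest of the original honest prover strategy, i.e.\ applying $A_{r/2+1},\ldots,A_{r}$ as it receives the response registers $\reg R_{r/2},\ldots,\reg R_{r-1}$. This produces exactly the joint final state of an honest execution of the original protocol, so the final measurement $\Pi$ (which is the decision POVM $D$ of the original protocol) accepts with probability $1-\epsilon$. If the coin is tails, $P'$ \emph{undoes} the first half: upon receiving $\reg M_{r/2}$ back, it applies $A_{r/2}^\dagger$ to obtain $\reg A_{r/2}\reg R_{r/2-1}$, sends $\reg R_{r/2-1}$ to the verifier, and continues applying $A_{r/2-1}^\dagger,\ldots,A_1^\dagger$ as it receives $\reg R_{r/2-2},\ldots,\reg R_1$. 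Since the compressed verifier simultaneously applies $C_{r/2-1}^\dagger,\ldots,C_1^\dagger$, the overall combined action is the inverse of the first $r/2$ interactions applied to the same initial state that produced them. Hence the workspace register $\reg W_0$ returns exactly to $\ket0$, and the verifier's measurement $\proj0_{\reg W_0}$ accepts with probability $1$.

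Taking the average over the two coin outcomes yields completeness
\[
\tfrac{1}{2}(1-\epsilon) + \tfrac{1}{2}\cdot 1 \;=\; 1 - \epsilon/2,
\]
as claimed. For the complexity bounds, $P'$ runs $A_1,\ldots,A_{r/2}$ and $C_1,\ldots,C_{r/2-1}$ once in the first phase and at most $r/2$ additional $A_i$'s or $A_i^\dagger$'s in the second phase, so its total running time is $O(t_P+t_V)$; the new verifier only applies $C_{r/2},\ldots,C_r$ or $C_{r/2-1}^\dagger,\ldots,C_1^\dagger$ together with the final measurement, giving $O(t_V)$. Uniformity of the construction is inherited from $P$, since $P'$ is simply a black-box composition of $P$'s and $V$'s unitaries with their inverses; no nontrivial obstacle arises here, the only thing to be careful about is bookkeeping of registers and making sure the uncomputation exactly matches the compressed verifier's inverse steps.
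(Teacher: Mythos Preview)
Your proposal is correct and follows essentially the same argument as the paper: construct the compressed honest prover by simulating the first half of the original protocol internally, then either continue forward (winning with probability $1-\epsilon$) or uncompute backward (winning with probability $1$), and average. The complexity accounting and the observation that uniformity is preserved also match the paper's proof.
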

\begin{proof}
    Let $A_{i}$ be unitary operations that the honest prover implements in the $i$'th interaction.  Specifically, the honest prover applies $A_{i}$ to registers $\reg{R}_{i-1}\reg{A_i}$ to get a message register $\reg{M_i}$ and new private register $\reg{A_{i+1}}$, and then sends $\reg{M}_i$ to the verifier.  Initially, $\reg{R}_{0}$ is an empty register, and $\reg{A_0}$ is initialized to some state $\proj{\aux}_{\reg{A}_0}$ (in the non-uniform case this might be an advice state, otherwise $\proj{0}$).

    The honest prover for the compressed protocol will essentially implement the honest strategy, applying the unitaries of the un-compressed honest prover forward or backwards depending on the outcome of the verifier's coin flip.  The honest prover begins by performing preparing a register $\reg{W_0}$ in the state $\ket{0}_{\reg{W_0}}$, and then performs $A_{0}, C_0, A_{1}, C_1, \ldots, A_{r/2}$ to the appropriate registers to get a state over registers $\reg{A_{r/2+1}}\reg{M_{r/2}}\reg{W_{r/2}}$.  The honest prover then sends $\reg{M_{r/2}}\reg{W_{r/2}}$ to the challenger as their initial message.  

    If the outcome of the verifier's coin flip is heads, the honest prover applies $A_{r/2+1}$ to the verifier's response, and continues as the un-compressed honest prover would starting from round $(r/2)+1$.  After all $r/2$ interactions, the verifier and honest prover hold the state after implementing the unitaries $A_0, C_0, \ldots, A_{r}$ on the initial state $\proj{\aux}_{\reg{A_0}}\otimes \proj{0}_{\reg{W_0}}$ and then measuring $D$ on registers $\reg{M_{r}}\reg{W_{r}}$.  Since the original protocol accepted with probability $1-\epsilon$, we have that
    \begin{equation*}
        \Tr(D (A_{r} C_{r-1} \ldots C_0 A_0) (\proj{\aux}_{\reg{A}_0} \otimes  \proj{0}_{\reg{W_0}}) (A_0^\dagger C_0^{\dagger} \ldots C_{r-1}^{\dagger} A_{r}^{\dagger})) = 1 - \epsilon\,.
    \end{equation*}
    Thus, the probability that the honest prover is accepted is also $1 - \epsilon$.  

    If the outcome of the verifier's coin flip is tails, when the honest prover receives register $\reg{M_i}$, the honest prover applies $(A_{i})^{\dagger}$ to $\reg{M_i}\reg{A_{i+1}}$ and sends registers $\reg{R}_{i-1}$ to the verifier, and continues by applying the inverse of the un-compressed honest prover unitary, going backwards in round number.  When the protocol reaches the final message, the honest prover and verifier will have performs the inverse of $C_0, A_1, C_1, \ldots, A_{r/2}$ to yield some state $\ket{\psi}_{\reg{M_0}\reg{A_1}} \otimes \ket{0}_{\reg{W_0}}$.  When the verifier measures $\proj{0}$, the measurement will accept with probability $1$.  

    So with probability $1/2$, the honest prover is accepted with probability $1 - \epsilon$, and with probability $1/2$ the honest prover is accepted with probability $1$.  Thus, the overall probability that the honest prover is accepted is $1 - \epsilon/2$.  

    To analyze the runtime of the honest prover, note that the honest prover for the compressed protocol applies every unitary $A_i$ at most twice, and runs the first verifier's unitaries $C_i$ at most once.  The verifier runs every $C_i$ at most once, so together they run every $A_i$ and $C_i$ at most twice, so its total running time of the protocol is at most twice the total running time of the original honest prover.  We note that the compressed honest prover can be implemented by controlling the $A_i$ and $C_i$ on the message register (so that they act correct controlled on the outcome of the verifier's coin flip).  If the honest prover can only use $2$-qubit gates, this might result in a constant multiplicative overhead to the gate complexity of the honest prover.  
\end{proof}

\begin{claim}[Soundness]\label{claim:round_compression_soundness}
    If there is a adversary that has total run-time $t$ that succeeds in the $(r+1)$-message compressed protocol with probability $1 - \epsilon$, then there is an adversary that has total run-time $O(t + t_V)$ and succeeds in the original $(2r+1)$-message protocol with probability $1 - 16\epsilon$.  
\end{claim}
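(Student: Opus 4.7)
The plan is to construct an adversary $A^*$ for the original $(2r+1)$-message protocol that uses the compressed adversary $A'$ as a black box, by simulating the compressed protocol in both the backward and forward branches coherently. Concretely, $A^*$ will first run the compressed protocol internally, playing the role of the compressed verifier and setting the coin flip to ``tails'' (backward branch), until the simulation produces the adversary's midpoint message $\reg{M}_{r/2}\reg{W}_{r/2}$. Then $A^*$ will execute the inverse unitaries $C_{r/2-1}^\dagger,\ldots,C_0^\dagger$ of the original verifier on the simulated verifier register, and measure $\proj0_{\reg{W}_0}$. Conditioning on this measurement accepting, $A^*$ discards $\reg{W}_0$ and now holds (approximately) a correctly initialized state for the \emph{beginning} of the original $(2r+1)$-message protocol. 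From there, $A^*$ sends its first message $\reg{M}_0$ to the real verifier and proceeds by applying the same unitaries that $A'$ would use in the forward branch of the compressed protocol, i.e.\ essentially running $A'$'s strategy conditioned on ``heads.''

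The analysis proceeds in three stages. First, since the compressed adversary is accepted with probability $1-\epsilon$ over a uniformly random coin, by averaging each of the two branches (backward and forward) is accepted with probability at least $1-2\epsilon$. Thus by \Cref{lem:projector_to_fidelity} and the definition of squared Bures distance, the state that $A'$ produces at the end of the backward branch is $4\epsilon$-close in squared Bures distance (denoted $\buresdist$) to a state on which the measurement $\proj0_{\reg W_0}$ succeeds with probability $1$; likewise, the state at the end of the forward branch is $4\epsilon$-close to a state on which the final decision $D$ accepts. Second, both of these ``nearby'' states are produced by applying the same adversary unitaries $A'$ composed with unitaries of the verifier, so Bures distance is preserved under this composition. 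Third, combining these via the weak triangle inequality for Bures distance (\Cref{lem:weak_triangle_inequality}) gives that the state the real verifier holds at the end of the original protocol is $2(4\epsilon + 4\epsilon) = 16\epsilon$-close in $\buresdist$ to a state accepted with probability $1$ by $D$, and hence $D$ accepts $A^*$'s final state with probability at least $1 - 16\epsilon$ by Fuchs--van de Graaf (applied to the fidelity definition of $\buresdist$).

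The main obstacle will be formalizing the ``composability'' step: one has to argue that after $A^*$ discards $\reg{W}_0$ following its successful $\proj0$ measurement, the reduced state on the remaining registers is genuinely close to what the original honest protocol would use as input at round $0$, and then that running $A'$'s heads-branch unitaries coherently on top of this state produces output that remains close to a state accepted by $D$. The subtle point is that the ``nearby'' state in the Bures-closeness bound for the forward branch is defined with respect to the \emph{compressed verifier's} initial distribution over the midpoint message, and one must verify that this is exactly the distribution induced by executing the backward branch and then inverting, so that the two bounds can indeed be chained via the triangle inequality. Once this is set up, the additive $t_V$ in the runtime follows from the fact that $A^*$ applies each of the original verifier unitaries (and their inverses) a constant number of times in addition to running $A'$, yielding the claimed $O(t+t_V)$ total runtime.
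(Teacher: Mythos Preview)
Your approach is essentially the same as the paper's: both simulate the backward branch to obtain a post-selected state, then run the forward branch against the real verifier, and chain two gentle-measurement/Bures bounds through the common ``midpoint'' state via the weak triangle inequality. The paper simply works in the pulled-back picture on the initial register $\proj{\aux}$ (defining projectors $\Pi_H,\Pi_T$ there), whereas you keep the states at the ends of the branches and appeal to unitary invariance of $\buresdist$; these are equivalent.

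There is, however, a small arithmetic gap. Your per-branch bound $\buresdist \le 4\epsilon$ (from $\sqrt{1-2\epsilon}\ge 1-2\epsilon$) plus weak triangle gives a \emph{conditional} acceptance gap of $16\epsilon$. You still need to pay the abort probability $\le 2\epsilon$ for the $\proj{0}_{\reg W_0}$ measurement failing, which pushes the total failure to $\le 18\epsilon$, not $16\epsilon$. The paper fixes this by using the slightly sharper bound $2(1-\sqrt{1-2\epsilon})\le 3\epsilon$ (valid since one may assume $\epsilon\le 1/16$), yielding $\buresdist(\rho_H,\rho_T)\le 12\epsilon$ via the triangle inequality, hence conditional failure $\le 12\epsilon$, and then total failure $2\epsilon + (1-2\epsilon)\cdot 12\epsilon \le 16\epsilon$. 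Also, your ``main obstacle'' paragraph slightly mis-frames the chaining step: there is no ``compressed verifier's initial distribution over the midpoint message'' to match---the adversary supplies that message, and the chaining works simply because both branch end-states are unitary images of the same pure state $A_0\ket{\aux}$.
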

\begin{proof}
    We can assume that $\epsilon \leq 1/16$, otherwise the claim is trivially true.  Assume that the adversary for the $(r+1)$-message compressed protocol implements an initial unitary $A_{0}$ on register $\reg{A_0}$, which is initialized in state $\ket{\aux}_{\reg{A_0}}$.  Call the output registers of this unitary $\reg{A_{r/2 + 1}}\reg{M_{r/2}}\reg{W_{r/2}}$, of which $\reg{M_{r/2}W_{r/2}}$ are sent to the challenger as the adversary's first message.  After seeing the result of the coin flip, $b \in \{H, T\}$, from the challenger, we denote by $A^{b}_{i}$ the unitaries that the adversary implements in interactions $i = 1$ through $r/2$.  We say that the unitary $A^{H}_{i}$ acts on registers $\reg{R_{r/2 + i - 1}}\reg{A_{r/2 + i}}$ (so that the first one acts on registers $\reg{R_{r/2}}\reg{A_{r/2 + 1}}$ and counts up from there) and the unitary $A^{T}_{i}$ acts on registers $\reg{M_{r/2 - i + 1}}\reg{A_{r/2 - i + 2}}$ (so that the first one acts on registers $\reg{M_{r/2}}\reg{A_{r/2 + 1}}$).  The challengers actions in the game are specified in \Cref{prot:quantum_round_compression}.  Define the following projectors.
    \begin{align*}
        \Pi_{H} &= (A_0^{\dagger}  C_{r/2}^{\dagger} {A^{H}_{1}}^{\dagger} C_{r/2+1}^{\dagger} \ldots A^{H \dagger}_{r/2}) D_{\reg{M_{r}W_{r}}} (A^{H}_{r/2} \ldots C_{r/2 + 1} A^{H}_{1} C_{r/2}  A_0)\,,\\
        \Pi_{T} &= (A_0^{\dagger} {A^{T}_{1}}^{\dagger} C_{r/2-1} \ldots A^{T \dagger}_{r/2} C^{\dagger}_{0}) (\id \otimes \proj{0}_{\reg{W_0}}) (C_{0}^{\dagger} A^{T}_{r/2}\ldots C_{r/2-1}^{\dagger} A^{T}_{1} A_0)\,.
    \end{align*}

    It is clear that $\Tr(\Pi_{H}\proj{\aux})$ and $\Tr(\Pi_{T}\proj{\aux})$ are the probabilities that the adversary is accepted when the result of the challenger's coin flip is $H$ and $T$ respectively.  By assumption, the adversary is accepted in the $r+1$-message compressed protocol with probability $1 - \epsilon$, so the adversary must be accepted in the protocol if the coin flip is fixed to being either heads or tails with probability at least $1-2\epsilon$.  Therefore we have the inequalities
    \begin{align*}
        \Tr(\Pi_{H}\proj{\aux}) \geq 1 - 2\epsilon\,,\\
        \Tr(\Pi_{T}\proj{\aux}) \geq 1 - 2\epsilon\,.
    \end{align*}
    
    Now we construct an adversary for the original $(2r+1)$-message protocol as follows:
    
    \begin{longfbox}[breakable=false, padding=1em, margin-top=1em, margin-bottom=1em]
    \begin{algorithm} {\bf Adversary for the $(2r+1)$-message protocol} \label{prot:prover_from_compressed_prover} 
    \end{algorithm}
    \noindent\textbf{Input: } Initial state $\ket{\aux}$.
    \begin{enumerate}
        \item Run $A_0$ to receive a state on registers $\reg{A_{r/2+1}M_{r/2}W_{r/2}}$.  Implement unitaries $A^{T}_{1}, C^{\dagger}_{r/2 - 1} \ldots, A^{T}_{r/2}, C_0^{\dagger}$ to receive a state on registers $\reg{M_0}\reg{W_0}$.  Measure $\id \otimes \proj{0}_{\reg{W}_0}$, if the measurement fails, abort.  If it accepts, send register $\reg{M_0}$ to the verifier.
        \item For steps $i = 1$ through $r/2$:
        \begin{enumerate}
            \item Run $(A^{T}_{r/2 - i})^{\dagger}$ to get the next message register and send it to the verifier. 
        \end{enumerate}
        \item For steps $i = r/2+1$ through $r$:
        \begin{enumerate}
            \item Run $A^{H}_{i - r/2}$ to get the next message register and send it to the verifier.
        \end{enumerate}
    \end{enumerate}
    \end{longfbox}
    Let $\rho_T$ be the state of the adversary after measuring $\Pi_T$ on $\proj{\aux}$ and accepting.  Let $\rho_H$ be the state after measuring $\Pi_H$ on $\proj{\aux}$ and accepting.  By the gentle measurement lemma (\Cref{lemma:gentle-measurement}) and the definition of the squared Bures distance, we have that the the following inequalities
    \begin{align*}
        \buresdist(\rho_T, \proj{\aux}) &\leq 2(1 - \sqrt{1 - 2\epsilon}) \leq 3\epsilon\,,\\
        \buresdist(\rho_H, \proj{\aux}) &\leq 3\epsilon\,.
    \end{align*}
    Here the first inequality holds for any $0 \le \epsilon \le \frac49$. %
    Using the weak triangle inequality for the squared Bures distance (\Cref{lem:weak_triangle_inequality}), we have that
    \begin{equation*}
        \buresdist(\rho_H, \rho_T) \leq 12 \epsilon\,.
    \end{equation*}
    By the definition of the squared Bures distance, we have that 
    \begin{equation}\label{eqn:heads_tails_fidelity}
        \fidelity(\rho_H, \rho_T) \geq (1 - 6\epsilon)^2 \geq 1 - 12\epsilon\,,
    \end{equation}   
    where we applied the Bernoulli inequality here.  Now we examine the state of the adversary in the un-compressed game.  If the un-compressed adversary does not abort in step $1$, after step $2$, the adversary is left with exactly $\rho_T$, because they condition on $\proj{0}_{\reg{W_0}}$ accepting.  At the end of the protocol, the challenger will have implemented the measurement $\Pi_{H}$ on the adversary's state after step $2$, $\rho_{T}$.  Since $\Pi_{H}$ is a projector and $\rho_{H}$ is the post-measurement state of some initial state after measuring $\Pi_{H}$, we have that $\Tr(\Pi_{H} \rho_{H}) = 1$.  Applying \Cref{lem:projector_to_fidelity} and \Cref{eqn:heads_tails_fidelity}, we have that $\Tr(\Pi_{H}\rho_{T}) \geq \fidelity(\rho_{H}, \rho_{T}) \geq 1 - 16\epsilon$.  Thus, the adversary is accepted with probability at least $1 - 16\epsilon$.  By assumption, the first measurement in the step succeeds with probability $(1 - 2\epsilon)$, thus the total probability of failure is given by
    \begin{equation*}
        2\epsilon + (1 - 2\epsilon) \cdot 12 \epsilon \leq 16\epsilon\,.
    \end{equation*}

    The un-compressed adversary applies every the unitaries $A_i^{T}$ (and $(A_i^{T})^{\dagger}$) two times, the unitaries $C_{i}^{\dagger}$ once, and the unitaries $A_i^{H}$ once.
    Thus the total run time of the adversary is $O(t + t_V)$ accounting for constant multiplicative overheads in implementing controlled unitaries.
\end{proof}

\begin{proof}[Proof of \Cref{thm:compression}]
To prove the theorem, we iterate the round-halving procedure. Let $L = (L_{yes},L_{no})$ be a (promise) decision language with a quantum interactive proof $\proto$ where for inputs of length $n$, the protocol has $m(n)$ messages. We assume without loss of generality that $m(n)$ is of the form $2^{k(n)}+1$; the adversary and challenger can send empty messages at the beginning of the protocol.
In other words, $k = \lceil\log(m - 1)\rceil$ and $2^{k - 1} < m - 1 \le 2^k$.

Let $\proto'$ denote protocol that applies the round-halving procedure described above $k - 1$ times. In other words, when the challenger receives input $x$, it first computes the description of the challenger for the $(2^{k-1}+1)$-message protocol (where $n = |x|$), and then based on this computes the description of the challenger for the $(2^{k-2}+1)$-message protocol, and so forth, until it obtains a challenger for a $3$-message protocol. Since $m(n)$ is polynomial in $n$, $k = O(\log n)$.
The final honest prover runs in time $m^{O(1)} (t_P + kt_V)$ and the verifier runs in time $m^{O(1)} t_V$, where the $O(1)$ factors here are exactly the constant multiplicative overhead above.

We now analyze the completeness and soundness of the $3$-message protocol. Given a ``yes'' instance $x \in L_{yes}$, by definition there is an adversary that is accepted with probability $c = 1 - \delta$.  By \Cref{claim:round_compression_completeness}, there is an adversary for the $2^{k(n) - 1}$+1-message protocol that is accepted with probability $1 - \delta/2$.  Repeating the compression $k - 1$ times, there exists an adversary that is accepted by the $3$-message protocol with probability 
\begin{equation*}
    1 - \frac{\delta}{ 2^{k - 1}} \geq 1 - \frac{2\delta}{m(n) - 1}.
\end{equation*}

Given a ``no'' instance $x \in L_{no}$, the original protocol accepts with probability at most $s = 1 - \eps$ for all efficient adversaries. Then by \Cref{claim:round_compression_soundness}, the soundness of the $2^{k(n)-1}+1$-message protocol is at most $1 - \eps/16$.
Similarly the  soundness of the $2^{k(n)-1}+1$-message protocol is at most $1 - \eps/16^2$. Iterating this we see that the $3$-message protocol has soundness at most 
\[
    1 - \frac{\eps}{16^{k - 1}} \leq 1 - \frac{\eps}{(m - 1)^4}.
\]

We note that all of the reductions are efficient, even after $k - 1$ iterations of round collapse.  Specifically, for ``no'' instances, if there is an adversary for the compressed protocol that has total run-time $t(n)$, then there is an adversary for the original protocol that has total run-time $m^{O(1)} (t + kt_V)$, which is efficient.
\end{proof}

We can slightly extend our round compression to compile any $3$-message quantum interactive argument into a public coin argument using the same strategy as \cite[Theorem 5.4]{Marriott2005}.  A \emph{public coin} quantum interactive argument is a quantum interactive argument where all of the challenger messages are uniformly random coin flips.  At a high level, if we were to apply round compression starting with a $3$-message protocol, the challenger only ever needs to receive one register (either $\reg{R}_0$ to go backwards, or $\reg{M}_1$ to go forwards) from the adversary, so they do not need to send back a quantum register as in round compression for $m > 3$.  Formally, we have the following.

\begin{theorem}[Compilation to public coin]\label{thm:public_coin}
    Let $L$ be a promise language with a $3$-message quantum interactive argument with completeness $c$ and soundness $s$.  Then there exists a $3$-message public coin quantum interactive argument for $L$ with completeness $1 - \frac{1 - c}{2}$ and soundness $1 - \frac{1 - s}{16}$.
\end{theorem}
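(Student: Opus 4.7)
The plan is to specialize the round-halving construction of \Cref{thm:compression} to the 3-message case, exploiting the fact that when the original protocol has only one verifier move $C_0 \colon \reg{M}_0 \reg{W}_0 \to \reg{R}_0 \reg{W}_1$, the ``midpoint'' state the prover sends first need only contain the verifier's post-$C_0$ workspace $\reg{W}_1$, while the response register $\reg{R}_0$ can stay with the prover. Consequently the verifier's second message reduces to a single random bit, yielding a public coin protocol.

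Concretely, the new verifier (i) receives $\reg{W}_1$; (ii) flips a fair coin $b \in \{0,1\}$ and sends $b$; (iii) in the forward branch $b=0$, receives $\reg{M}_1$ and runs the original decision POVM $D$ on $\reg{M}_1 \reg{W}_1$; in the rewind branch $b=1$, receives $\reg{R}_0$, applies $C_0^\dagger$ to $\reg{R}_0 \reg{W}_1$, and accepts iff the measurement $\proj{0}_{\reg{W}_0}$ succeeds. The honest prover simulates both $A_0$ and $C_0$ on a fresh $\ket{0}_{\reg{W}_0}$, sends $\reg{W}_1$, and then plays honestly in either branch; the rewind branch accepts with probability $1$ by construction and the forward branch accepts with probability $c$ by hypothesis, giving completeness $\tfrac{1+c}{2} = 1 - \tfrac{1-c}{2}$.

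The main work is soundness, which I will prove by reusing the Bures-distance argument in the proof of \Cref{claim:round_compression_soundness} essentially verbatim. Given an efficient adversary that wins the public-coin protocol with probability $1-\epsilon$, define the projectors $\Pi_H$ and $\Pi_T$ corresponding to acceptance in the forward and rewind branches (obtained by conjugating the verifier's decision operator, respectively $(\id \otimes \proj{0}_{\reg{W}_0}) \circ C_0^\dagger$, by the adversary's unitaries). Each has trace at least $1-2\epsilon$ against $\proj{\aux}$, so by the gentle measurement lemma (\Cref{lemma:gentle-measurement}) the two post-measurement states $\rho_H$ and $\rho_T$ both lie within squared Bures distance $3\epsilon$ of $\proj{\aux}$. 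The weak triangle inequality (\Cref{lem:weak_triangle_inequality}) then gives $\buresdist(\rho_H,\rho_T) \leq 12\epsilon$, and hence $\fidelity(\rho_H,\rho_T) \geq (1-6\epsilon)^2 \geq 1-12\epsilon$ by Bernoulli.

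The reduction to the original 3-message protocol now mirrors that of \Cref{claim:round_compression_soundness}: run the adversary's first unitary $A_0^*$, coherently simulate the rewind branch on a fresh $\ket{0}_{\reg{W}_0}$ register, and condition on its acceptance; by construction this yields the state $\rho_T$ on the relevant registers. Then apply the adversary's forward-branch unitary $A_1^{0*}$ and forward the resulting $\reg{M}_1$ to the true verifier. Since $\Tr(\Pi_H \rho_H) = 1$, \Cref{lem:projector_to_fidelity} combined with the fidelity bound gives $\Tr(\Pi_H \rho_T) \geq \fidelity(\rho_H,\rho_T) \geq 1-12\epsilon$; combined with the $\geq 1-2\epsilon$ success probability of the post-selection, the total acceptance probability is at least $1-16\epsilon$. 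Hence the original soundness $s$ forces $1-16\epsilon \leq s$, i.e., $\epsilon \geq (1-s)/16$, giving the stated soundness $1-\tfrac{1-s}{16}$. The main subtlety that requires care is verifying that conditioning on acceptance of the simulated rewind branch really certifies that the shared state can be legitimately used as input to the second move of the original protocol; this is precisely what the gentle-measurement step on the $\proj{0}_{\reg{W}_0}$ check gives us, mirroring the role of the backwards branch in \Cref{claim:round_compression_soundness}.
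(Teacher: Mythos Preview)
Your construction, completeness argument, and Bures-distance soundness analysis all match the paper's. There is, however, a gap in your description of the soundness reduction. After simulating the rewind branch and post-selecting on $\proj{0}_{\reg{W}_0}$, you hold (in particular) a register $\reg{M}_0$, yet your reduction jumps straight to ``apply the forward-branch unitary and send $\reg{M}_1$''. An adversary for the original 3-message protocol must first send $\reg{M}_0$ and receive $\reg{R}_0$; without this, the true verifier never obtains $\reg{W}_1$ and cannot run the final decision. Concretely, the missing steps are: (i) send $\reg{M}_0$ to the true verifier, who applies $C_0$ on a fresh $\ket{0}_{\reg{W}_0}$ and returns $\reg{R}_0$---this is exactly what undoes the simulated $C_0^\dagger$, and is why post-selecting on $\proj{0}_{\reg{W}_0}$ is the right move; (ii) apply $(A_1^{T*})^\dagger$ to undo the tails-branch unitary, bringing the joint state to $A_0^*\,\rho_T\,(A_0^*)^\dagger$; only then can you (iii) apply $A_1^{H*}$ and send $\reg{M}_1$. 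With these steps in place, your bound $\Tr(\Pi_H\rho_T)\ge\fidelity(\rho_H,\rho_T)\ge 1-12\epsilon$ is correct and the $1-16\epsilon$ conclusion follows.

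A minor wording issue: ``on a fresh $\ket{0}_{\reg{W}_0}$ register'' is misleading, since in the rewind branch $\reg{W}_0$ is an \emph{output} of $C_0^\dagger$, not an input you initialize.
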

\begin{proof}
Let $(C_0, \Pi)$ be a challenger for a $3$-message quantum interactive protocol.  Consider the following challenger for a $3$-message protocol.

\begin{longfbox}[breakable=false, padding=1em, margin-top=1em, margin-bottom=1em]
\begin{algorithm} {\bf Challenger for $3$-message public coin protocol} \label{prot:quantum_public_coin} 
\end{algorithm}
\begin{enumerate}
    \item Receive registers $\reg{W}_{1}$ from the adversary.  Flip a unbiased coin and send it to the adversary.
    \item If the outcome of the coin flip is heads:
    \begin{enumerate}
        \item The adversary sends register $\reg{M}_1$.
        \item Measure $\Pi$ on $\reg{M}_1\reg{W}_1$, accept if the measurement accepts.
    \end{enumerate}
    \item If outcome of the coin flip is tails:
    \begin{enumerate}
        \item The adversary sends register $\reg{R}_0$.
        \item Perform $C_{0}^{\dagger}$, and measure $\proj{0}$ on registers $\reg{W}_0$, accept if the measurement accepts.
    \end{enumerate}
\end{enumerate}
\end{longfbox}

This protocol is clearly public coin, as the challenger only sends the outcome of a single coin flip to the adversary.  An honest adversary runs the original $3$-message protocol up to the first two messages and sends the challenger's private workspace to the challenger.  Upon seeing heads they perform the honest adversary unitary and send $\reg{M}_1$, and upon seeing tails they send back $\reg{R}_0$ to the challenger.  A similar argument as \Cref{claim:round_compression_completeness} shows that the completeness behaves the same way as applying a single round of round compression.

Similarly, given an adversary for the public coin protocol, a malicious adversary for the original protocol can simulate the interaction between the public coin adversary and the challenger, and measure the challenger's private workspace to get $\proj{0}_{\reg{W}_0}$, and then get a suitable state on $\reg{M}_0$ to send to the challenger.  From there they run the public coin adversary, conditioned on seeing tails, in reverse for one step, and then the public coin adversary, conditioned on seeing heads, forward for one step.  The same argument as in \Cref{claim:round_compression_soundness} shows that the soundness of the public coin protocol is equivalent to having applied round compression one additional step.  

The run-time of the adversary in both directions is multiplied by a constant factor in the reduction, so the reduction is efficient.  
\end{proof}

Finally, we combine \Cref{thm:compression,thm:public_coin} with our hardness amplification result to show that all polynomial-message quantum interactive arguments can be compressed into $3$-message quantum interactive arguments with negligible soundness.

\begin{corollary}[Parallelization and amplification for quantum interactive arguments]
	Let $L$ be a (promise) language that has a polynomial-message quantum interactive argument with completeness $c \geq 1 - \negl(n)$ (or 1, respectively) and soundness $s \leq 1 - \frac{1}{\poly(n)}$.  Then there is a $3$-message public coin quantum interactive argument for $L$ with completeness $1 - \negl(n)$ (or 1, respectively) and soundness $\negl(n)$.
\end{corollary}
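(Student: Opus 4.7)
The plan is to compose the three main results of the paper in sequence. First, apply \Cref{thm:compression} to the given $m$-message protocol with $m = \poly(n)$ to obtain a 3-message argument with completeness $1 - \frac{2(1-c)}{m-1}$ and soundness $1 - \frac{1-s}{(m-1)^4}$; since $m$ is polynomial, the new completeness remains $1 - \negl(n)$ (or exactly $1$ if the original was perfect), and the new soundness remains of the form $1 - 1/\poly(n)$. Then apply \Cref{thm:public_coin} to convert this into a 3-message \emph{public coin} protocol; this introduces only constant factors in the complements of completeness and soundness, so the parameters stay in the same asymptotic regime ($c'' \in \{1\} \cup [1-\negl(n),1]$ and $s'' \le 1 - 1/\poly(n)$).

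Finally, take the $k$-fold parallel repetition and invoke \Cref{thm:main_theorem_uniform}. The key observation is that parallel repetition of a public coin 3-message protocol is again a public coin 3-message protocol: the verifier simply sends the concatenation of $k$ independent uniform coin strings, and all messages between prover and verifier are still single rounds. Choose $k = k(n)$ to be a polynomial large enough that $(s'')^k = \negl(n)$; writing $1 - s'' = 1/p(n)$ for some polynomial $p$, any $k$ with $k/p(n) = \omega(\log n)$ works and is itself polynomial. By \Cref{thm:main_theorem_uniform}, the soundness of the repeated protocol is at most $(s'')^k + \negl(n) = \negl(n)$. The completeness is $(c'')^k$, which equals $1$ in the perfect completeness case and is at least $1 - k(1 - c'') = 1 - \negl(n)$ otherwise by a union bound over the $k$ coordinates, since $k$ is polynomial and $1 - c''$ is negligible.

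I do not anticipate any serious obstacle: the result is essentially a bookkeeping exercise stringing together the three previously established theorems. The only points to verify are (i) that the intermediate parameter regimes (negligible complement of completeness, inverse polynomial complement of soundness) are preserved under each reduction, which holds because each only multiplies these gaps by $\poly(m)$ or constant factors, and (ii) that the public coin property, once established, is preserved by parallel repetition, which is immediate from the definition.
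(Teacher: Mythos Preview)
Your proposal is correct and follows exactly the approach the paper intends: the corollary is stated immediately after \Cref{thm:public_coin} as the combination of \Cref{thm:compression}, \Cref{thm:public_coin}, and \Cref{thm:main_theorem_uniform}, and the paper does not spell out any further details. Your bookkeeping on the parameters and the observation that parallel repetition preserves the public-coin property are precisely what is needed.
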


We remark that this parallel repetition is not as efficient as what is possible classically.
Classically, to amplify any interactive argument (say of soundness $\frac12$ to soundness $\frac14$) while preserving round complexity $r$, the state of the art incurs a multiplicative cost of order either $r$ \cite{berman2021tight} or $\lambda$ \cite{chung2010parallel}, where $\lambda$ is a security parameter.
Turning back to quantum interactive arguments, we note that we can first round collapse the argument into three messages (if it has more than three messages), which only incurs a constant multiplicative overhead, and then apply the three-message soundness amplification.
In the end, the cost of the overall compiled protocol is $\Omega(r^4)$, since we need to make up for the loss in soundness in the round collapse theorem.
A more careful analysis of \Cref{claim:round_compression_soundness} could improve the exponent from $\log 16 = 4$ down to $\log 10 \approx 3.32$ but we leave as future work to further improve the amplification efficiency.

\section{Applications}

The parallel repetition theorem has many immediate applications in cryptography, from boosting the security of commitments, to round reduction for zero-knowledge proofs.

\subsection{Strong amplification of quantum bit commitment schemes}
\label{sec:commitments}
Here we show that the security of canonical quantum bit commitment schemes can be amplified through parallel repetition.  While we have already defined bit commitments and notions of post-quantum security, here we review \emph{canonical} quantum bit commitment schemes, and the standard notions of security they are described by, honest hiding and binding

\paragraph{Canonical quantum bit commitments.}
A canonical quantum bit commitment is a kind of non-interactive bit commitment scheme. The scheme consists of two quantum circuits $C_0,C_1$, to commit to be a bit $b$, the sender generates the bipartite pure state $\ket{\psi_b}_{\reg{CR}} = C_b \ket{0 \ldots 0}$ and sends the $\reg{C}$ register of $\ket{\psi_b}$. To reveal, the sender sends $\reg{R}$ to the receiver, along with the bit $b$. The receiver can then verify the original bit $b$ by applying the circuit $C_b^{\dagger}$ and measuring $\ket{0\ldots 0}$.
Importantly, Yan~\cite{yan2023general} showed that all quantum commitment schemes can be compiled to this canonical form while preserving honest hiding and binding security, and honest security for canonical form commitments is equivalent to the stronger standard security notions.
Throughout this section we focus on canonical quantum commitment schemes.

The honest hiding property of a commitment scheme guarantees that an adversarial receiver can not reveal the bit committed to before the reveal phase. Formally,

\begin{definition}[Honest hiding property of commitment scheme]\label{def:hiding_commitment}
    Let $\eps(\lambda)$ denote a function. We say that a commitment scheme  $(\pi_{\lambda})_{\lambda}$ satisfies \emph{$\eps$-computational (resp.\ $\eps$-statistical) honest hiding} if for all non-uniform polynomial-time algorithms  (resp.\ for non-uniform algorithms) $A = (A_\lambda)_\lambda$ that take as input the receiver's register immediately after an honest execution of the commit stage of $\proto_{\lambda}$, the following holds for sufficiently large $\lambda$:
    \begin{equation*}\label{eqn:commitment_hiding_property}
        \Big|\Pr \mbracket{A_\lambda(\rho_{\lambda,0}) = 1} - \Pr\mbracket{A_\lambda(\rho_{\lambda,1}) = 1} \Big | \leq \eps(\lambda) \,.
    \end{equation*}
    Here, $\rho_{\lambda,b}$ denotes the reduced density matrix of the receiver's register after a honest execution of the commit stage of $\proto_{\lambda}$ when the sender is committing to $b$. 
    If $\eps$ is a negligible function of $\lambda$ then we simply say that the scheme satisfies \emph{strong} computational (resp.\ statistical) hiding. If $\epsilon(\lambda)\leq 1-\frac{1}{p(\lambda)}$ for some polynomial $p(\lambda)$ we say it satisfies \emph{weak} computational (resp.\ statistical) hiding.
\end{definition}

The honest binding property, at a high level, says that after the commit phase of a commitment scheme, an adversarial sender can only reveal to the bit that they committed to. Formally,

\begin{definition}[Honest binding property of commitment scheme]\label{def:honest_binding}
    Let $\eps(\lambda)$ denote a function. We say that a commitment scheme $(\proto_{\lambda})_{\lambda}$ satisfies \emph{$\eps$-computational (resp.\ $\eps$-statistical) honest binding} if for all 
    non-uniform polynomial-time algorithms  (resp.\ for all non-uniform algorithms) $A = (A_\lambda)_\lambda$ that take as input the sender's register immediately after an honest execution of the commit stage of $\proto_{\lambda}$, the following holds for sufficiently large $\lambda$:
    \begin{equation*}
        \fidelity\mparen{\Big( A_{\lambda} \otimes \id_{\reg{C}} \Big)(\psi_{\lambda,0}) , \psi_{\lambda,1}} \leq \eps(\lambda)\,,
    \end{equation*}
    where $\psi_{\lambda,b}$ denotes the state of the joint sender-receiver system after an honest execution of the commit stage of $\proto_{\lambda}$ when the committed bit is $b$.

    If $\eps$ is a negligible function of $\lambda$ then we simply say that the scheme satisfies \emph{strong} computational (resp.\ statistical) honest binding. Otherwise if $\epsilon(\lambda)\leq 1-\frac{1}{p(\lambda)}$ for some polynomial $p(\cdot)$ we say that it satisfies \emph{weak} computational (resp.\ statistical) honest binding.
\end{definition}

\paragraph{Parallel repetition of quantum bit commitments.} We show that parallel repetition can be used to amplify weak computationally binding (and strong statistically hiding) commitments into commitments where both hiding and binding are strong.

\begin{corollary}
    Let $\{C_{\lambda, b}\}_{\lambda, b}$ be a canonical commitment scheme satisfying the $\paren{1 - \frac{1}{p(\lambda)}}$-computational honest binding property for some polynomial $p(\cdot)$, and strong statistical hiding.  Then there exists a polynomial $q(\cdot)$ such that the commitment scheme $\{C_{\lambda, b}^{\otimes q(\lambda)}\}_{\lambda, b}$ satisfies the strong computational honest binding property and strong statistical hiding property.
\end{corollary}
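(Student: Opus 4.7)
The plan is to cast the binding condition as the success probability of a two-message (hence three-message) quantum security game, apply \Cref{cor:3_message_parallel_repetition} (equivalently \Cref{thm:main_theorem_uniform}) to get exponential decay under parallel repetition, and handle hiding by the standard tensorization of trace distance. Since $\psi_{\lambda,1}$ is pure, $\fidelity\mparen{(A_\lambda\ot\id_{\reg{C}})(\psi_{\lambda,0}),\psi_{\lambda,1}} = \Tr\mparen{\psi_{\lambda,1}\,(A_\lambda\ot\id)(\psi_{\lambda,0})}$, which is precisely the acceptance probability of the following protocol $\pi_\lambda$: the challenger prepares $\ket{\psi_{\lambda,0}}_{\reg{SR}}$, sends $\reg S$ to the adversary, receives it back after the adversary applies an efficient channel, and measures the binary POVM $\mparen{\proj{\psi_{\lambda,1}}_{\reg{SR}},\, \id-\proj{\psi_{\lambda,1}}}$. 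Thus weak computational honest binding is equivalent to $\pi_\lambda$ being $\mparen{1-1/p(\lambda)}$-computationally secure; $\pi_\lambda$ can be viewed as a $3$-message protocol with an empty first (adversary-to-challenger) message.

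Second, I would observe that the $q$-fold repeated commitment $\{C_{\lambda,b}^{\ot q(\lambda)}\}$ has joint state $\ket{\psi_{\lambda,b}}^{\ot q(\lambda)}$, and that an arbitrary efficient channel acting on the $q$ sender registers of $\psi_{\lambda,0}^{\ot q}$ is exactly an adversary for the parallel repetition $\pi_\lambda^{\ot q}$ (possibly entangled across repetitions). Hence $\epsilon$-computational binding of $C_{\lambda,b}^{\ot q}$ coincides with $\epsilon$-computational soundness of $\pi_\lambda^{\ot q}$. Applying \Cref{cor:3_message_parallel_repetition} to $\pi_\lambda$ with $\gamma = 1-1/p(\lambda)$ yields soundness at most $\mparen{1-1/p(\lambda)}^{q(\lambda)} + \negl(\lambda)$. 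Choosing $q(\lambda) \deq \lambda\cdot p(\lambda)$ gives $\mparen{1-1/p}^{q} \le e^{-\lambda}$, which is negligible, so the repeated scheme satisfies strong computational honest binding.

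For hiding, I would simply invoke subadditivity of trace distance under tensor products: if $\td(\rho_{\lambda,0},\rho_{\lambda,1}) \le \negl(\lambda)$, then $\td(\rho_{\lambda,0}^{\ot q},\rho_{\lambda,1}^{\ot q}) \le q(\lambda)\cdot \td(\rho_{\lambda,0},\rho_{\lambda,1})$, which remains negligible as $q$ is polynomial. Since the receiver's view after the repeated commit stage is exactly $\rho_{\lambda,b}^{\ot q(\lambda)}$, strong statistical hiding carries over. The only non-trivial step is the identification of the binding game with a parallel-repeatable $3$-message security game, and that is essentially a bookkeeping exercise after applying Uhlmann/purity of $\psi_{\lambda,1}$; all the real work is done by the parallel repetition theorem, so I do not foresee a genuine obstacle.
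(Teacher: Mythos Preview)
Your proposal is correct and follows essentially the same approach as the paper: cast honest binding as the acceptance probability of a two-message security game (challenger sends the reveal register of $\ket{\psi_{\lambda,0}}$, adversary acts on it, challenger tests overlap with $\ket{\psi_{\lambda,1}}$), then apply the 3-message parallel repetition theorem, and note that strong statistical hiding survives because $q$ is polynomial. You have supplied a bit more detail than the paper does (the explicit choice $q(\lambda)=\lambda\cdot p(\lambda)$ and the hybrid/subadditivity argument for hiding), but the structure is identical.
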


\begin{proof}
The computational honest binding property of (canonical) quantum commitments can be equivalently viewed as the maximum winning probability of the following $2$-message game (over efficient adversaries):
\begin{enumerate}
    \item The challenger commits to $0$, and sends the \emph{reveal} register to the adversary.
    \item The adversary sends back a reveal register.
    \item The challenger accepts if applying $C_{\lambda, 1}$ and measuring in the computational basis yields $\ket{0^{\lambda}}$.  
\end{enumerate}
The maximum winning probability of this game is exactly the fidelity present in \Cref{def:honest_binding}.  By \Cref{theorem:non-uniform-rep}, no non-uniform adversary can win the parallel repeated game with probability non-negligibly greater than $(1 - \frac{1}{p(\lambda)})^{q(\lambda)} = \negl(\lambda)$.  Thus, the repeated commitment satisfies strong computational honest binding.  Since $q$ is polynomial, the game still satisfies strong statistical hiding.  
\end{proof}

We can also amplify commitments that satisfy weak computational hiding (instead of binding) using flavor switching \cite{gunn2023commitments,hhan2023hardness}, which we state below.

\begin{proposition}[{\cite[Theorem 7]{hhan2023hardness}}]
\label{prop:flavor-switching}
Let $\eps(n),\delta(n)$ be functions. If $\{C_{\lambda,b}\}_{\lambda,b}$ is an $\eps$-computationally (resp.~statistical) hiding and $\delta$-statistical (resp.~computational) binding commitment scheme, then there exists a $\sqrt{\delta}$-statistical (resp.~computational) hiding and $\eps$-computationally (resp.~statistical) binding commitment scheme. 
\end{proposition}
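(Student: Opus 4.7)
The plan is to invoke the standard flavor-switching construction for canonical quantum commitments. Given the original scheme's circuits $C_0, C_1$ producing $\ket{\psi_b}_{\reg{CR}}$, I would define new canonical circuits $C'_0, C'_1$ producing
\[
    \ket{\phi_b} = \tfrac{1}{\sqrt{2}}\mparen{\ket{0}_{\reg{A}} \otimes \ket{\psi_0}_{\reg{CR}} + (-1)^b \ket{1}_{\reg{A}} \otimes \ket{\psi_1}_{\reg{CR}}},
\]
where $\reg{A}$ is a fresh auxiliary qubit, the new commit register is designated as $\reg{A}\reg{R}$, and the new reveal register is $\reg{C}$. In effect the roles of $\reg{C}$ and $\reg{R}$ are swapped, with $\reg{A}$ coherently tagging the two branches. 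The same analysis handles both cases of the ``resp.'' in the statement by exchanging which security flavor is statistical and which is computational.

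For the new hiding bound, I would compute $\rho'_b \coloneqq \Tr_{\reg C}\proj{\phi_b}$ directly and obtain
\[
    \rho'_0 - \rho'_1 \;=\; \ketbra{0}{1}_{\reg A} \otimes M' + \ketbra{1}{0}_{\reg A} \otimes (M')^\dagger, \qquad M' \coloneqq \Tr_C \ketbra{\psi_0}{\psi_1}.
\]
The spectrum of this difference consists of $\pm$ the singular values of $M'$, so $\td(\rho'_0,\rho'_1) = \norm{M'}_1$. By trace-norm/operator-norm duality, $\norm{M'}_1 = \max_{U_R}|\bra{\psi_1}(I_{\reg C}\otimes U_{\reg R})\ket{\psi_0}|$, which by Uhlmann's theorem equals $\sqrt{\fidelity(\Tr_R\psi_0,\Tr_R\psi_1)}$. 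The original binding bound $\delta$ is exactly a bound on this fidelity, yielding the claimed $\sqrt{\delta}$ hiding bound.

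For the new binding bound, the key identity is that for any CPTP attack $\Phi$ on the new reveal register $\reg C$, purified as a unitary $V_{\reg{CE}}$ on an ancilla $\reg E$, a direct calculation using the superposition structure of $\ket{\phi_b}$ gives
\[
    \bra{\phi_0, 0_{\reg E}}(V_{\reg{CE}} \otimes I_{\reg{AR}})\ket{\phi_1, 0_{\reg E}} \;=\; \tfrac{1}{2}\Tr\mparen{V_{\reg{CE}} \cdot \mparen{(\rho_0 - \rho_1) \otimes \proj{0}_{\reg E}}},
\]
where $\rho_b = \Tr_R \proj{\psi_b}$ are the original commit-register marginals. If a polynomial-time adversary violates the new binding with fidelity advantage exceeding $\eps^2$, then $|\tfrac12\Tr(V(\rho_0-\rho_1)\otimes\proj0_{\reg E})|$ exceeds $\eps$, at which point a Hadamard-test–style measurement built from $V$ yields an efficient distinguisher between $\rho_0$ and $\rho_1$ with advantage $\Omega(\eps)$, contradicting the original computational hiding.

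The main obstacle I anticipate is lining up the quantitative bounds to match the proposition precisely: the natural reduction above produces an $\eps$-distinguisher from an $\eps^2$-binding violation, rather than matching $\eps$ to $\eps$ as stated. Closing this gap should be possible either by sharpening the constants in the fidelity-to-distinguishability step (using that a binding violation produces a pure-state overlap, so the ``squaring'' in fidelity is cosmetic) or by lightly modifying the construction (e.g.\ applying a Hadamard on $\reg A$ before the commit), so the technical substance of the proof really lies in executing the efficient Uhlmann-style reduction in the third step with care, while the first two steps are purely calculational.
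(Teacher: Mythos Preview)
The paper does not give its own proof of this proposition; it is quoted from the cited reference. Your construction is the standard flavor-switching construction (the same commitment states appear in the paper's proof of \Cref{lemma:xor-commitment}), and your hiding calculation is correct.

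Your quantitative worry is not a real obstacle. From your identity $\bra{\phi_0,0_{\reg E}}V\ket{\phi_1,0_{\reg E}}=\tfrac12\Tr\!\big(V\cdot((\rho_0-\rho_1)\otimes\proj{0}_{\reg E})\big)$, a binding attacker with fidelity $F$ yields---via the Hadamard test on controlled-$V$, after absorbing a global phase into $V$ to make the trace real---a distinguisher with advantage $\sqrt{F}$. Since $F>\eps$ implies $\sqrt{F}>\sqrt{\eps}\ge\eps$ for $\eps\le 1$, an $\eps$-binding violation already produces an $\eps$-hiding violation; no sharpening or modified construction is needed, and you have in fact established the stronger $\eps^2$-binding bound.

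The one place your sketch is too quick is the claim that ``the same analysis handles both resp.\ cases.'' The second case ($\delta$-computational binding $\Rightarrow$ $\sqrt{\delta}$-computational hiding) needs the \emph{reverse} efficient reduction: from a distinguisher $D$ on $\reg{AR}$ to a binding attacker on $\reg R$. This is obtained by preparing $\ket{0}_{\reg A}$ alongside the sender's $\reg R$ so that the global state is $\tfrac{1}{\sqrt2}(\ket{\phi_0}+\ket{\phi_1})$, applying (a Naimark dilation of) the reflection $I-2D$ on $\reg{AR}$, and tracing out $\reg A$; since $\ket{1}_{\reg A}\ket{\psi_1}=\tfrac{1}{\sqrt2}(\ket{\phi_0}-\ket{\phi_1})$, a short computation gives binding fidelity at least $\big(\Tr(D(\rho'_0-\rho'_1))\big)^2>\delta$.
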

\begin{corollary}
    The existence of either strong statistical hiding, weak computational honest binding, \emph{or} strong statistical honest binding and weak computational hiding commitments imply standard commitments.
\end{corollary}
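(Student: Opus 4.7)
The plan is to reduce both hypotheses to the preceding corollary, which already handles the case of strong statistical hiding combined with weak computational honest binding. The first hypothesis is immediate: strong statistical hiding together with weak computational honest binding is precisely the hypothesis of the preceding corollary, so a strong commitment follows at once.

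For the second hypothesis --- strong statistical honest binding together with weak computational hiding --- I would apply flavor switching via \Cref{prop:flavor-switching}. Concretely, writing $\eps(\lambda) \leq 1 - 1/\poly(\lambda)$ for the weak computational hiding error and $\delta(\lambda) = \negl(\lambda)$ for the strong statistical honest binding error, the proposition (applied in the direction \emph{computational hiding $+$ statistical binding $\to$ statistical hiding $+$ computational binding}) produces a commitment scheme that is $\sqrt{\delta}$-statistically hiding and $\eps$-computationally honest binding. Since $\sqrt{\negl} = \negl$, the new scheme is strongly statistically hiding, and since $\eps$ remains bounded by $1 - 1/\poly$, it is weakly computationally honest binding. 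This places us back in the first case, and one more invocation of the preceding corollary produces a standard commitment scheme.

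There is no real obstacle here: the only content of the argument lives inside the preceding corollary (whose proof in turn depends on the 3-message parallel repetition theorem, \Cref{theorem:non-uniform-rep}) and inside flavor switching. The only point worth checking is that the parameters after flavor switching fall into the right regime, namely that the square root preserves negligibility on the binding side and that the inherited hiding-to-binding error still satisfies $\eps \leq 1 - 1/\poly$; both are trivial.
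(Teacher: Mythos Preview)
Your proposal is correct and matches the paper's approach exactly: the paper's own proof says only that ``the first part of the corollary is the previous corollary, the second part is a direct implication of flavor switching for commitments,'' and you have spelled this out with the parameter check that $\sqrt{\negl}=\negl$ and that $\eps \le 1 - 1/\poly$ survives flavor switching.
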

\begin{proof}
    The first part of the corollary is the previous corollary, the second part is a direct implication of flavor switching for commitments. 
\end{proof}

This resolves an open problem from Yan~\cite{yan2023general} about the strong amplification of the Uhlmann transformation problem (equivalently quantum bit commitments).
We can push this result slightly to allow for amplification of some commitment schemes that have both weak security against both parties.
\begin{corollary}
    Let $\{C_{\lambda, b}\}_{\lambda, b}$ be a canonical commitment scheme satisfying the $\paren{1 - \frac{1}{p(\lambda)}}$-computational honest binding property for some polynomial $p(\cdot)$, and $\frac{1}{q(\lambda)}$ weak computational hiding, where  $q(\lambda) \geq 2\lambda p(\lambda)$.  Then standard commitments exist.
\end{corollary}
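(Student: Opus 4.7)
The proof plan proceeds in three phases, alternating parallel repetition with flavor switching. The overall strategy is to first exploit the slack $q \geq 2\lambda p$ to safely amplify binding via parallel repetition without pushing hiding all the way to $1$, then apply flavor switching to convert strong binding of the repeated scheme into strong hiding of a dual scheme, and finally apply a second round of parallel repetition to amplify binding once more.

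First, I would apply \Cref{thm:main_theorem_uniform} to $\{C_{\lambda,b}\}$ with $k_1 \coloneqq \lambda \cdot p(\lambda)$ parallel copies. Since honest binding is captured by a 2-message swap game (and $2 \le 3$), parallel repetition amplifies the binding error from $1 - 1/p$ to $(1-1/p)^{k_1} + \negl(\lambda) \leq e^{-\lambda} + \negl(\lambda)$, which is negligible. A standard hybrid argument over the $k_1$ copies bounds the computational hiding advantage of $C^{\otimes k_1}$ by $k_1 \cdot 1/q(\lambda) \leq \lambda p / q \leq 1/2$, using the hypothesis $q(\lambda) \geq 2\lambda p(\lambda)$.

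Next, I would apply flavor switching (\Cref{prop:flavor-switching}) to the scheme produced by the first phase. The resulting dual scheme would have computational hiding advantage at most $\sqrt{\negl(\lambda)} = \negl(\lambda)$ (derived from the strong computational binding of $C^{\otimes k_1}$), and computational binding error at most $1/2$ (derived from the $1/2$-weak computational hiding of $C^{\otimes k_1}$). Finally, I would apply \Cref{thm:main_theorem_uniform} once more with $k_2 \coloneqq \lambda$ repetitions: binding is amplified to $(1/2)^{\lambda} + \negl(\lambda) = \negl(\lambda)$, while a hybrid argument shows hiding remains at $\lambda \cdot \negl(\lambda) = \negl(\lambda)$. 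The result is a canonical commitment with strong computational hiding and strong computational binding, i.e., a standard commitment.

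The main obstacle is that \Cref{prop:flavor-switching} is stated only for the mixed computational-statistical setting, whereas after phase 1 both properties are purely computational. To justify the second phase, I would appeal to the black-box efficient nature of the underlying flavor-switching reduction: any efficient adversary breaking hiding (resp.\ binding) of the dual scheme can be converted in polynomial time to an efficient adversary breaking binding (resp.\ hiding) of the original, so the \emph{computational} nature of the security is preserved on both sides, and only the Fuchs–van de Graaf-induced $\sqrt{\cdot}$ loss appears on the side derived from a fidelity bound. Verifying that the reductions in the referenced proof of \Cref{prop:flavor-switching} indeed go through without ever invoking an unbounded adversary is the only non-routine step.
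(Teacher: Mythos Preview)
Your proposal is correct and follows essentially the same three-phase approach as the paper: repeat $\lambda p(\lambda)$ times to get strong binding and $\tfrac12$-hiding, flavor switch to strong hiding and $\tfrac12$-binding, then repeat $\lambda$ more times. You are in fact more careful than the paper in flagging that \Cref{prop:flavor-switching} is only stated for the mixed computational/statistical case; the paper simply applies it without comment, implicitly relying (as you surmise) on the efficiency of the underlying reduction.
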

\begin{proof}
    From \Cref{theorem:non-uniform-rep}, the $\lambda p(\lambda)$-fold parallel repetition of the commitment has binding error $(1 - \frac{1}{p(\lambda)})^{\lambda p(\lambda)} + \negl(\lambda) \leq e^{-\lambda} + \negl(\lambda)$, and thus satisfies strong honest binding property, and the $\frac{\lambda p(\lambda)}{q(\lambda)} \leq \frac{1}{2}$-weak hiding property by hybrid argument.
    We can then apply flavor switching to get a commitment that is satisfies the strong computational hiding property and the $\frac{1}{2}$-weak computational binding property.  The $\lambda$-fold parallel repetition of this commitment then satisfies the strong computational binding property too.
\end{proof}

For example, a scheme which is $\frac{1}{2}$-binding and $\lambda^{-1}$-hiding can be amplified to a fully-secure commitment in this way. We leave as an open question whether an $\alpha$-binding, $\beta$-hiding quantum commitment can be amplified for some constant $\alpha,\beta$; classically, this is known for $\alpha + \beta \leq 1 - 1/\poly(\lambda)$, which is tight for black-box constructions \cite{HolensteinS11}.

\subsection{Quantum XOR lemma}

\paragraph{XOR lemma for commitments.}
By examining the flavor switching technique for quantum commitments closely, we can get an \emph{XOR lemma} for quantum commitments, which states that one can amplify the computational hardness of breaking the hiding property of a quantum commitment via the \emph{XOR repetition} of a computationally hiding commitment scheme. In the context of classical commitments, the $k$-fold XOR repetition of a commitment $C$ is a new commitment $C^{\oplus k}$ where to commit to bit $b \in \bits$, the sender will send commitments (using the ``base commitment'' $C$) to $x_1,\ldots,x_k$ for a randomly chosen string $x \in \bits^k$ with parity $b$. It was shown by Yao~\cite{yao82theory,goldreich2011yao} that if $C$ was originally $\eps$-computationally hiding against classical adversaries, then $C^{\oplus k}$ is $O(\eps^k)$-computationally hiding. We now show that a quantum version of the XOR repetition applied to canonical quantum commitments does indeed amplify the hiding security of the commitment. 

Let $C = \{C_{\lambda,b}\}_{\lambda,b}$ be a canonical quantum commitment scheme  that is $\eps$-computational hiding and $\delta$-statistically binding. Define the \emph{XOR repetition} $C^{\oplus k} = \{ C^{\oplus k}_{\lambda,b} \}_{\lambda,b}$ to be the following quantum commitment scheme. Letting $\ket{\psi_{\lambda,b}} = C_{\lambda,b} \ket{0\cdots 0}$ for all $\lambda,b$, we define the circuits $C_{\lambda,b}^{\oplus k}$ so that the corresponding states $\ket{\psi^{\oplus k}_{\lambda,b}}$ are
\[
    \ket{\psi^{\oplus k}_{\lambda,b}}_{\reg{R C}} := \frac{1}{\sqrt{2^{k-1}}} \sum_{x \in \bits^k : |x| = b \, \mathrm{mod}\, 2} \ket{x} \otimes \ket{\psi_{\lambda,x_1}}_{\reg{R_1 C_1}} \otimes \cdots \otimes \ket{\psi_{\lambda,x_k}}_{\reg{R_k C_k}}.
\]
In other words, the XOR-repeated commitment to $b$ is the uniform superposition, over all $k$-bit strings $x$ with parity $b$, of commitments to $x_1, x_2, \ldots, x_k$.
This state can be efficiently prepared by first preparing uniform superposition on $x_1, ..., x_{k - 1}$ and then coherently compute $x_k = x_1 \oplus \cdots \oplus x_{k - 1} \oplus b$.
The commitment register (resp. reveal register) of the XOR-repetition is the concatenation of the commitment register (resp. reveal register plus an extra register storing $x$) of all the individual commitments.

\begin{lemma}[XOR lemma for quantum commitments]
    \label{lemma:xor-commitment}
    If the commitment $C$ is $\eps$-computational hiding and $\delta$-statistically binding, then the XOR repetition $C^{\oplus k}$ is $(\eps^{k/2} + \negl)$-computational hiding and $(k\sqrt\delta)$-statistically binding.
\end{lemma}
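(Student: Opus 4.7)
The proof exhibits $C^{\oplus k}$ as a composition of flavor switching (\Cref{prop:flavor-switching}) and parallel repetition (\Cref{cor:3_message_parallel_repetition}), letting us derive both parameters by tracking them through the two transformations. The key algebraic observation is the identity
\[
    \ket{\psi^{\oplus k}_b} \;=\; \frac{1}{\sqrt{2}}\left( \ket{\Psi^+}^{\otimes k} + (-1)^b \ket{\Psi^-}^{\otimes k} \right),
\]
where $\ket{\Psi^\pm} := \frac{1}{\sqrt{2}}(\ket{0}\ket{\psi_0} \pm \ket{1}\ket{\psi_1})$. To verify this, expand each $\ket{\Psi^\pm}^{\otimes k}$ as a sum over $x \in \bits^k$ of $(\pm 1)^{|x|} \ket{x} \bigotimes_i \ket{\psi_{x_i}}$; the even-$|x|$ terms combine and the odd-$|x|$ terms cancel precisely when $b=0$, and conversely when $b=1$. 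Crucially, the states $\ket{\Psi^+}$ and $\ket{\Psi^-}$ are exactly the two commitment states of the flavor-switched commitment $C^F$ committing to $0$ and $1$ under the standard construction underlying \Cref{prop:flavor-switching}. Hence $C^{\oplus k}$ coincides—up to efficient local isometries on the sender's register—with the flavor switch of the $k$-fold parallel repetition $(C^F)^{\otimes k}$.

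Given this identification, both security bounds follow by chaining parameters. Starting from $C$ ($\eps$-computationally hiding, $\delta$-statistically binding), \Cref{prop:flavor-switching} yields $C^F$ with $\sqrt{\delta}$-statistical hiding and $\eps$-computational binding. The $k$-fold parallel repetition $(C^F)^{\otimes k}$ is then $k\sqrt{\delta}$-statistically hiding (by a hybrid argument across the $k$ independent copies) and $(\eps^k + \negl)$-computationally binding (by applying \Cref{cor:3_message_parallel_repetition} to the binding security game, which fits the 3-message template by padding with a trivial empty challenger message). A second application of \Cref{prop:flavor-switching} to this repeated scheme produces $((C^F)^{\otimes k})^F$, which by the identity above equals $C^{\oplus k}$, with $\sqrt{\eps^k + \negl} = \eps^{k/2} + \negl$ computational hiding and $k\sqrt{\delta}$ statistical binding—matching the lemma statement for both parameters simultaneously.

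The main technical obstacle will be making the identification $C^{\oplus k} \equiv ((C^F)^{\otimes k})^F$ precise at the register level. Standard constructions of flavor switching introduce an auxiliary control qubit, so the outer flavor switch of $(C^F)^{\otimes k}$ produces a state on one more register than $\ket{\psi^{\oplus k}_b}$ carries; this discrepancy must be absorbed either by choosing a construction that reuses one of the inner $X_i$ registers as the outer control, or by arguing that an efficient adversary against $C^{\oplus k}$ lifts to an efficient adversary against $((C^F)^{\otimes k})^F$ (and vice versa) by preparing the missing register appropriately. Once this register-level bookkeeping is dispatched, both the computational hiding and statistical binding bounds follow immediately from the parameter chain outlined above.
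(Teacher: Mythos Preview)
Your proposal is correct and follows essentially the same route as the paper: flavor-switch $C$ to $\hat{C}$, parallel-repeat to $\hat{C}^{\otimes k}$, then flavor-switch back and identify the result with $C^{\oplus k}$ via the algebraic identity you state. The paper glosses over the extra-ancilla discrepancy you flag in your last paragraph (it simply asserts the flavor-switched-back scheme ``happens to be'' $C^{\oplus k}$), so your treatment is in fact slightly more careful on that point.
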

\begin{proof}
    By applying flavor switching for quantum commitments (\Cref{prop:flavor-switching}) to $C$ to obtain a commitment $\hat{C}$ that is $\sqrt{\delta}$-statistical hiding and $\eps$-computational binding. The corresponding commitment states have the following form:
    \[
        \ket{\hat{\psi}_{\lambda,b}} = \frac{1}{\sqrt{2}} \Big( \ket{0} \ot \ket{\psi_{\lambda,0}} + (-1)^b \ket{1} \ot \ket{\psi_{\lambda,1}} \Big)
    \]
    where the commitment register of $\ket{\hat{\psi}_{\lambda,b}}$ is the reveal register of $\ket{\psi_{\lambda,b}}$ and the reveal register is everything else (including the extra ancilla qubit). 

    Now we take the parallel repetition of $\hat{C}$ to get $\hat{C}^{\otimes k}$; the corresponding commitment states have the following form:
    \[
        \ket{\hat{\psi}_{\lambda,b}^{\ot k}} = \frac{1}{\sqrt{2^k}} \sum_{x \in \bits^k} (-1)^{|x|} \ket{x} \ot \ket{\psi_{\lambda,x_1}} \ot \cdots \ot \ket{\psi_{\lambda,x_k}}~.
    \]
    By \Cref{thm:main_theorem_uniform} we get that $\hat{C}^{\ot k}$ is $(k \sqrt{\delta})$-statistical hiding (by hybrid argument) and $(\eps^k + \negl)$-computational binding.

    Now we flavor switch back to get our final commitment, which with the construction of \Cref{prop:flavor-switching} happens to be the XOR repeated commitment $C^{\oplus k}$. We get that $C^{\oplus k}$ is thus a $(\eps^{k/2} + \negl)$-computational hiding and $(k\sqrt{\delta})$-statistical binding commitment. Furthermore, it is easy to check that the corresponding commitment states $\ket{\psi^{\oplus k}_{\lambda,b}}$ have the desired form. This completes the proof.
\end{proof}

We remark that due to the commitment duality \cite{hhan2023hardness}, parallel repetition theorem and XOR lemma for commitments are reducible to each other with very little cost, while classically the reduction to parallel repetition from XOR lemma requires majority \cite{SV08}.
The reason we are able to do this more efficiently quantumly can be understood as we are replacing the Goldreich--Levin part \cite{GL89} in the reduction (or the duality construction) with quantum Goldreich--Levin \cite{AC02}, which is much more efficient.

Another consequence is polarization for EFI pairs \cite{brakerski2022computational}: pairs of efficient mixed states that are $\beta$-weakly statistically distinguishable but $\alpha$-weakly computationally indistinguishable where $\beta^2 \gg \sqrt\alpha$, for example, when $\alpha = \frac14$ and $\beta = \frac34$.

\begin{corollary}[EFI polarization]
    \label{cor:efi-polarization}
    If there exist weak EFI pairs that are $\beta$-weakly statistically distinguishable but $\alpha$-weakly computationally indistinguishable such that $\beta^2 - \sqrt\alpha$ is at least constant, then standard EFI pairs exist.
\end{corollary}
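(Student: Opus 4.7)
The plan is to polarize via a three-stage procedure---analogous to Sahai--Vadhan polarization for statistical zero-knowledge---interleaving two invocations of the XOR lemma (\Cref{thm:efi-informal}) around a direct-product (tensor) step. The structural point is that the $k$-fold XOR of $(\rho_0,\rho_1)$ has trace distance exactly $\beta^k$---which follows from the identity $\sigma_0-\sigma_1 = \frac{1}{2^{k-1}}(\rho_0-\rho_1)^{\otimes k}$ combined with multiplicativity of the trace norm across tensor products and $\|\rho_0-\rho_1\|_1 = 2\beta$---and computational indistinguishability $\alpha^{k/2}+\negl$; under the hypothesis $\beta^2 \geq \sqrt{\alpha}+c$, the ratio $\beta^{2k}/\alpha^{k/2} = (\beta^2/\sqrt{\alpha})^{k} \geq (1+c/\sqrt{\alpha})^{k}$ grows exponentially in $k$, so a single XOR stage opens an arbitrarily large polynomial gap between $\beta_1^2$ and $\alpha_1$, even though both individually shrink.

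Concretely, first I would apply a $k_1 = \Theta(\log\lambda)$-fold XOR to $(\rho_0,\rho_1)$, tuning the hidden constant (using the gap $c$) so that the resulting pair $(\sigma_0^{(1)},\sigma_1^{(1)})$ satisfies $\alpha_1 = \lambda^{-a}$ and $\beta_1 = \lambda^{-b}$ with $a-2b$ an arbitrarily large positive constant. Next I would take the $k_2$-fold direct product $\big((\sigma_0^{(1)})^{\otimes k_2},(\sigma_1^{(1)})^{\otimes k_2}\big)$: by Fuchs--van de Graaf ($F \leq 1-\beta_1^2$) and multiplicativity of fidelity, $\beta_2 \geq 1-(1-\beta_1^2)^{k_2/2} \geq 1-\negl$ once $k_2 = \omega(\log\lambda)/\beta_1^2$, which is still polynomial in $\lambda$. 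A standard quantum hybrid argument (the reducing adversary prepares the remaining $k_2-1$ copies itself, since $\rho_0,\rho_1$ are efficiently preparable) bounds $\alpha_2 \leq k_2\,\alpha_1$, still inverse polynomial thanks to the slack $a-2b>0$ arranged in stage one. Finally I would apply a $k_3 = \polylog(\lambda)$-fold XOR: invoking \Cref{thm:efi-informal} once more, $\alpha_3 \leq \alpha_2^{k_3/2}+\negl$ is negligible (an inverse polynomial raised to a superlogarithmic power), while $\beta_3 = \beta_2^{k_3} \geq (1-\negl)^{\polylog\lambda} \geq 1/2$ by Bernoulli's inequality, so the final pair is efficient, distinguishable by a constant in trace distance, and computationally indistinguishable up to negligible advantage---i.e., a standard EFI pair.

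The hard part will be the parameter balancing: the three stages push $\alpha_i$ and $\beta_i$ in opposing directions, and one has to verify carefully that the quantitative hypothesis $\beta^2-\sqrt{\alpha}\geq c$ provides exactly enough slack in stage one so that stages two and three can be carried out with $k_1 k_2 k_3 = \poly(\lambda)$, and that the two uses of \Cref{thm:efi-informal} (each of which incurs an additive $\negl$ loss) compose cleanly. Everything else---multiplicativity of fidelity, Fuchs--van de Graaf, and the quantum hybrid---is routine.
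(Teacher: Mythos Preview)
Your proposal is correct and takes essentially the same approach as the paper: the paper explicitly invokes Watrous' QSZK polarization (the quantum analogue of the Sahai--Vadhan procedure you describe), applying an $r=O(\log\lambda)$-fold XOR, then an $s=\poly(\lambda)$-fold direct product, then a final $\lambda$-fold XOR, with the same parameter balancing driven by the ratio $\beta^2/\sqrt{\alpha}$. Your version is slightly more explicit about the bookkeeping (tracking the exponents $a,b$ and using $\polylog$ in the final stage rather than $\lambda$), but the structure and the use of \Cref{thm:efi-informal} at the two XOR stages are identical.
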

\begin{proof}
    In fact a stronger statement is true: assuming we have computational XOR lemma that amplifies to $\eps^{ck} + \negl$ for constant $\frac12 \le c \le 1$ and $\beta^2 - \alpha^c$ is at least constant, then EFI pairs exist.
    Therefore, a tighter commitment duality (without the square root loss) would give better polarization parameters as well.
    To prove this,  Watrous' construction and proof of $\mathsf{QSZK}$ polarization \cite[Theorem 1]{Wat02-szk} immediately adapts to this setting, except that for we use $\alpha^c$ instead of $\alpha$ when picking the number of repetition.

    For completeness, we reproduce the proof here.
    \newcommand{\cd}{\mathrm{cd}}
    We write $\cd(\rho_0, \rho_1) \le \alpha$ to mean that states $\rho_0$ and $\rho_1$ are $\alpha$-computationally indistinguishable.
    Let $r = \lceil\log(8\lambda)/\log(\beta^2/\alpha^c)\rceil = O(\log \lambda)$ and $s = \lfloor\alpha^{-cr}/2\rfloor = \lambda^{O(1)}$.
    We first apply $r$-fold XOR to these states giving $\rho_0', \rho_1'$, then we have that $\cd(\rho_0', \rho_1') \le \alpha^{cr} + \negl$ and $\td(\rho_0', \rho_1') \ge \beta^r$ \cite[Lemma 2]{Wat02-szk}.
    Next we apply $s$-fold parallel repetition to them giving $\rho_0'', \rho_1''$, and obtain that $\cd(\rho_0'', \rho_1'') \le \frac12 + \negl$ by hybrid argument, and $\td(\rho_0'', \rho_1'') \ge 1 - e^{1 - 2\lambda}$ following the same computation as Watrous.
    Applying $\lambda$-fold XOR again completes the construction.
\end{proof}

\paragraph{XOR lemma for quantum predicates.}
Yao's classical XOR lemma states that taking the XOR of many copies of a Boolean predicate amplifies average-case hardness of prediction.
We can similarly use our XOR lemma to amplify unpredictability of \emph{quantum} predicates.

An (average-case) quantum predicate is defined to be a Hermitian matrix $\rho$ with trace $0$ and Schatten $1$-norm at most $2$.
By Jordan--Hahn decomposition, this gives a bijection to the YES instances $\rho_+$ and the NO instances $\rho_-$ such that $\rho = \rho_+ - \rho_-$, with trace $\Tr(\rho_{+}) = \Tr(\rho_{-}) \leq 1$ and orthogonal support $\rho_{+}\rho_{-} = 0$ (so that the unpredictability is not caused by the input being a superposition over YES and NO).
(A worst-case extension of this is given by the promise Boolean observable $\sgn\rho$.)
We say the predicate $\rho$ is $\epsilon$-unpredictable if for all efficient observables $0 \preccurlyeq P \preccurlyeq \id$ with advice $\sigma$, its advantage $\Tr(P(\rho \otimes \sigma)) = \Tr(P(\rho_+ \otimes \sigma)) - \Tr(P(\rho_- \otimes \sigma)) \le \epsilon$.

The advantage of this notation (other than capturing everything with a single matrix) is that the $k$-fold XOR of $\rho$ is simply $\rho^{\otimes k}$, since for any two matrices $\rho_0, \rho_1$, $(\rho_0 - \rho_1)^{\otimes k} = \sum_{x \in \{0, 1\}^k} (-1)^{|x|} \rho_{x_1} \otimes \cdots \otimes \rho_{x_k}$.
The following corollary gives an XOR lemma for quantum predicate indistinguishability.

\begin{corollary}[Quantum Yao's XOR lemma]
    Let $\rho$ be an $\epsilon$-unpredictable predicate.
    Then $\rho^{\otimes k}$ is $(\epsilon^{k/2} + \negl)$-unpredictable.
\end{corollary}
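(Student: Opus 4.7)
The plan is to deduce this directly from the XOR lemma for EFI pairs (\Cref{thm:efi-informal}), exploiting the observation hinted at in the paragraph preceding the corollary: a quantum predicate with orthogonal positive and negative parts is simply a repackaging of a statistically-far EFI pair. Write $\rho = \rho_+ - \rho_-$ via the Jordan--Hahn decomposition and, for simplicity, assume the normalization $\Tr(\rho_+) = \Tr(\rho_-) = 1$ (the subnormalized case can be absorbed into $\epsilon$). The orthogonality $\rho_+ \rho_- = 0$ makes $(\rho_+, \rho_-)$ statistically far (trace distance $1$); and via the identity $\Tr(P\rho) = \Tr(P\rho_+) - \Tr(P\rho_-)$, the $\epsilon$-unpredictability of $\rho$ is exactly the statement that no efficient observable distinguishes $\rho_+$ from $\rho_-$ with advantage better than $\epsilon$. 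Hence $(\rho_+, \rho_-)$ is an $\epsilon$-weak EFI pair in the sense of \Cref{thm:efi-informal}.

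Next, I would expand
\[
    \rho^{\otimes k} = (\rho_+ - \rho_-)^{\otimes k} = \sum_{x \in \{0,1\}^k} (-1)^{|x|} \rho_{x_1} \otimes \cdots \otimes \rho_{x_k}
\]
(setting $\rho_0 \coloneqq \rho_+$ and $\rho_1 \coloneqq \rho_-$) and group the summands by parity, defining the normalized parity mixtures
\[
    \sigma_b \coloneqq \frac{1}{2^{k-1}} \sum_{x \in \{0,1\}^k :\, |x| \equiv b \bmod 2} \rho_{x_1} \otimes \cdots \otimes \rho_{x_k}.
\]
Each $\sigma_b$ is a density matrix, and the two have orthogonal supports inherited from $\rho_+ \perp \rho_-$. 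By construction $(\sigma_0, \sigma_1)$ is precisely the $k$-fold XOR of the EFI pair $(\rho_+, \rho_-)$, and their signed difference $\sigma_0 - \sigma_1 = 2^{1-k}\rho^{\otimes k}$ is the predicate-formalism representation of that XOR.

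The final step is then immediate: applying \Cref{thm:efi-informal} to the statistically-far, $\epsilon$-indistinguishable pair $(\rho_+, \rho_-)$ yields that $(\sigma_0, \sigma_1)$ is $(\epsilon^{k/2} + \negl)$-computationally indistinguishable, i.e., $\sup_P |\Tr(P(\sigma_0 - \sigma_1) \otimes \tau)| \leq \epsilon^{k/2} + \negl$ for every efficient observable $P$ and advice state $\tau$. Under the identification $\rho^{\otimes k} \leftrightarrow \sigma_0 - \sigma_1$ spelled out in the paragraph preceding the corollary, this is exactly the desired unpredictability bound on the $k$-fold XOR predicate.

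Since all the genuine content is already packaged in \Cref{thm:efi-informal}, the only obstacle I expect is bookkeeping: one must make the normalization convention in the predicate framework explicit so that ``$\rho^{\otimes k}$'' is read as the natural XOR predicate $\sigma_0 - \sigma_1$ (equivalently, $\rho^{\otimes k}$ rescaled by $2^{1-k}$) rather than the literal tensor power, whose Schatten $1$-norm grows as $2^k$ and no longer fits the predicate normalization $\|\rho\|_1 \leq 2$ assumed in the definition. Once this convention is fixed, the three steps above reduce the quantum XOR lemma to a direct quotation of the EFI XOR lemma.
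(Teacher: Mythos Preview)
Your proposal is correct and follows essentially the same route as the paper: map the predicate $\rho$ to a distinguishing pair, observe that the prediction advantage equals the distinguishing advantage, and invoke the XOR lemma for EFI pairs/commitments (the paper cites \Cref{lemma:xor-commitment}, you cite its EFI-language corollary \Cref{thm:efi-informal}). The only notable difference is bookkeeping for the subnormalized case: the paper pads $\rho_\pm$ with $(1-\Tr\rho_+)\cdot\id$ to obtain genuine density matrices while preserving the difference $\rho_0-\rho_1=\rho$, whereas you assume $\Tr\rho_\pm=1$ and defer the general case; the padding trick is the cleaner way to discharge that assumption. Your final paragraph's observation about the $2^{k-1}$ scaling between the literal tensor power $\rho^{\otimes k}$ and the normalized XOR predicate $\sigma_0-\sigma_1$ is well taken---the paper's text identifies the two (``the $k$-fold XOR of $\rho$ is simply $\rho^{\otimes k}$'') without spelling out the convention, so your explicit treatment is, if anything, more careful on this point.
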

\begin{proof}
There is a bijection between every quantum state distinguishing problem $(\rho_0, \rho_1)$ and an average-case problem of implementing a quantum predicate $\rho = \rho_0 - \rho_1$, where in the backwards mapping, $\rho_0, \rho_1$ can be taken to be $\rho_+ + (1 - \Tr(\rho_+)) \cdot \id, \rho_- + (1 - \Tr(\rho_+)) \cdot \id$ respectively.
Furthermore, the distinguishing advantage of $P$ against $\rho_0$ vs $\rho_1$ is exactly $\Tr(P(\rho_0 \otimes \sigma)) - \Tr(P(\rho_1 \otimes \sigma)) = \Tr(P(\rho \otimes \sigma))$ for any $P, \sigma$.
We complete the proof by invoking \Cref{lemma:xor-commitment}.
\end{proof}

\subsection{Security amplification for public-key quantum money}
\label{subsec:money}

Another direct corollary of our main result (\Cref{thm:main_theorem_uniform}) is security amplification for \emph{public-key quantum money} schemes. We first define public-key quantum money.   

\paragraph{Public-key quantum money.} 
Public-key quantum money, informally, is a scheme in which a trusted bank can efficiently generate an unlimited number of quantum banknotes, everyone can verify a valid banknote, and no efficient adversary can produce counterfeit bank notes with non-negligible success probability.  Formally, 
\begin{definition}[Public-key quantum money]
A public-key quantum money scheme is a triple of efficient quantum algorithms $\cal{S} = (\mathsf{KeyGen},\mathsf{Mint},\mathsf{Ver})$ where 
\begin{itemize}
    \item $\mathsf{KeyGen}$ takes as input the security parameter $1^\lambda$ and outputs a private/public key pair $(k_{\mathrm{private}},k_{\mathrm{public}})$, 
    \item $\mathsf{Mint}(k_{\mathrm{private}})$ outputs a pair $(s,\rho_{\$})$ where $s$ is a string representing a serial number and $\rho_{\$}$ is a quantum state representing a bank note, and
    \item $\mathsf{Ver}$ takes as input the public key $k_{\mathrm{public}}$, a serial number $s$, and an alleged banknote $\sigma$, and either accepts or rejects.
\end{itemize}
A public-key quantum money scheme $\cal{S}$ satisfies correctness if for all $\lambda$, 
\[
    \Pr \left [ \mathsf{Ver}(k_{\mathrm{public}}, s, \rho_{\$}) \text{ accepts} : \begin{array}{c}
(k_{\mathrm{private}},k_{\mathrm{public}}) \leftarrow  \mathsf{KeyGen}(1^\lambda) \\
(s,\rho_{\$}) \leftarrow \mathsf{Mint}(k_{\mathrm{private}}) 
\end{array}
   \right ] \geq 1 - \negl(\lambda)\,.
\]
The scheme $\cal{S}$ is \emph{$\eps$-secure} if for all efficient adversaries $A$, the success probability of $A$ in the counterfeit security game (\Cref{prot:quantum_money_security}) is at most $\eps(\lambda)$.
\end{definition}

\begin{longfbox}[breakable=false, padding=1em, margin-top=1em, margin-bottom=1em]
\begin{algorithm}{\bf Counterfeit security game challenger}\label{prot:quantum_money_security}
\end{algorithm}
\begin{enumerate}
    \item Generate $(k_{\mathrm{private}},k_{\mathrm{public}}) \leftarrow \KeyGen(1^{\lambda})$, $(s, \rho_{\$})\leftarrow \Mint(k_{\mathrm{private}})$ and send $(k_{\mathrm{public}}, s, \rho_{\$})$.
    \item[] [Adversary returns two registers $\reg{AB}$ in some entangled state $\sigma_{\reg{AB}}$.]
    \item Run $\Ver(k_{\mathrm{public}}, s, \sigma_{\reg{A}})$ and $\Ver(k_{\mathrm{public}}, s, \sigma_{\reg{B}})$.  If either reject, reject, otherwise accept.
\end{enumerate}
\end{longfbox}

\paragraph{More money, fewer problems.} Aaronson and Christiano~\cite{aaronson2012quantum} raised the question of whether there exists a general security amplification procedure for public-key quantum money schemes. They exhibited a procedure based on amplitude amplification that (for a class of schemes where the verification is a rank-1 projection) reduces the soundness error of $\Ver$ from a constant to negligible.
However, we note that this is different than reducing the soundness error of the counterfeit security game.  Their construction is only amplifies the success probability of the counterfeiter \emph{per serial number}, in the sense that if the original $\Ver$ would have rejected an alleged banknote with constant probability, the new $\Ver$ will accept the same alleged banknote with negligible probability.
As a consequence, it would not amplify a weakly secure scheme where the proposed quantum money is unclonable for half of the serial numbers but trivially clonable for the other half.

We show that the parallel repetition of a quantum money scheme indeed achieves security amplification for \emph{all} weak quantum money schemes.
More precisely, let $\cal{S}^{n}$ denote the $n$-fold parallel repetition of the quantum money scheme $\cal{S}$, with algorithms $(\mathsf{KeyGen}^n,\mathsf{Mint}^n,\mathsf{Ver}^n)$. The algorithm $\mathsf{KeyGen}^n$ algorithm runs $n$ independent instances of $\mathsf{KeyGen}$ to get pairs $(k_{\mathrm{private}}^{(1)},k_{\mathrm{public}}^{(1)}),\ldots,(k_{\mathrm{private}}^{(n)},k_{\mathrm{public}}^{(n)})$, and treats the tuple of individual private keys (resp. public keys) as a large private key (resp. public key) for $\cal{S}^n$. The algorithm $\mathsf{Mint}^n$ takes as input the tuple of private key and runs $\mathsf{Mint}(k_{\mathrm{private}}^{(i)})$ for $i = 1,\ldots,n$ to obtain  $n$ pairs $(s^{(1)},\rho_{\$}^{(1)}),\ldots,(s^{(n)},\rho_{\$}^{(n)})$; the output serial number is the tuple $(s^{(1)},\ldots,s^{(n)})$ and the output bank note is the concatenation $(\rho_{\$}^{(1)},\ldots,\rho_{\$}^{(n)})$. Finally the algorithm $\mathsf{Ver}^n$ will simply run $\mathsf{Ver}(k_{\mathrm{public}}^{(i)},s^{(i)},\rho_{\$}^{(i)})$ for $i=1,\ldots,n$ and accept if they all accept. 

It is clear that if the base money scheme $\cal{S}$ satisfies correctness, so does the repeated scheme $\cal{S}^n$ (provided that the number of repetitions $n$ is polynomial in the security parameter $\lambda$). Next, the security game for the repeated scheme $\cal{S}^n$ is the $n$-fold parallel repetition of the security game for the base scheme $\cal{S}$. Since the security game is a $3$-message quantum interactive protocol, \Cref{thm:main_theorem_uniform} directly implies the following:

\begin{corollary}[Security amplification for public-key quantum money schemes]
\label{cor:money-amp}
    Let $\cal{S}$ be an $\eps$-secure public-key quantum money scheme. Then $\cal{S}^n$ is an $(\eps^n + \negl)$-secure public-key quantum money scheme.
\end{corollary}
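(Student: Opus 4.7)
The plan is to cast the counterfeit security game as a 3-message computationally-secure protocol in the sense of \Cref{sec:non-uniform-rep}, then observe that the game associated with $\cal{S}^n$ is literally the $n$-fold parallel repetition of the game for $\cal{S}$, and conclude via \Cref{thm:main_theorem_uniform} (equivalently \Cref{cor:3_message_parallel_repetition}).

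First I would fit \Cref{prot:quantum_money_security} into the $(2r{+}1)$-message template from \Cref{sec:non-uniform-rep}. Since the challenger speaks first there, I prepend an empty adversary message of dimension one: the adversary trivially sends $\ket{0}$, the challenger runs $\KeyGen$ and $\Mint$ and replies with $(k_{\mathrm{public}},s,\rho_\$)$ in its response register, the adversary applies some unitary $A$ on its private register together with the response and returns registers $\reg{AB}$, and the decision POVM is the product $\Ver(k_{\mathrm{public}},s,\cdot)_{\reg A} \otimes \Ver(k_{\mathrm{public}},s,\cdot)_{\reg B}$ acting on the two halves together with the challenger's private copies of $k_{\mathrm{public}}$ and $s$. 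This is manifestly a 3-message protocol with an efficient challenger, and an efficient adversary succeeds with probability exactly its counterfeit advantage, so the hypothesis that $\cal{S}$ is $\epsilon$-secure is precisely the statement that this protocol has soundness $\epsilon$ against polynomial-time quantum adversaries.

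Next I would check that $\cal{S}^n$'s counterfeit game is the $n$-fold parallel repetition of $\cal{S}$'s counterfeit game as defined in \Cref{sec:non-uniform-rep}. By construction $\KeyGen^n$, $\Mint^n$, and $\Ver^n$ act tensor-wise on $n$ independent copies, so the challenger's unitary is $C^{\otimes n}$ and its decision POVM is $D^{\otimes n}$, where $(C,D)$ is the challenger from the previous paragraph. A counterfeiter against $\cal{S}^n$ is then exactly a (possibly entangled) adversary against $\pi^{\otimes n}$ where $\pi$ is the game for $\cal{S}$, and its success probability in the counterfeit security game equals its acceptance probability in $\pi^{\otimes n}$. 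Applying \Cref{thm:main_theorem_uniform} gives soundness $\epsilon^n + \negl(\lambda)$ against polynomial-time quantum adversaries, which is the claimed bound.

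The argument is essentially a packaging exercise once the main parallel repetition theorem is in hand; the only point requiring minor care is verifying that the trivial ``empty first message'' padding is legal in the framework of \Cref{sec:non-uniform-rep} (it is, since we may take the adversary's first register to be one-dimensional and $A_0 = \id$) and that the decision POVM $D$ built from two copies of $\Ver$ is efficient, which is immediate from efficiency of $\Ver$. There is no obstacle of the form that obstructed our deeper uniform reduction, because we are only quoting the final theorem; in particular we neither need to recover the adversary's advice state nor to control the number of copies, as \Cref{thm:main_theorem_uniform} already supplies a uniform polynomial-time counterfeiter against $\cal{S}$ from any polynomial-time counterfeiter against $\cal{S}^n$ that exceeds the $\epsilon^n + \negl$ bound.
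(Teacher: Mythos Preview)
Your proposal is correct and follows exactly the paper's approach: the paper simply observes that the counterfeit security game for $\cal{S}^n$ is the $n$-fold parallel repetition of the (two-message, hence in particular three-message) counterfeit security game for $\cal{S}$, and then invokes \Cref{thm:main_theorem_uniform} directly. Your added detail about padding with an empty first adversary message to fit the $3$-message template is a reasonable elaboration that the paper leaves implicit.
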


We note that the same argument essentially generalizes to any quantum cryptographic primitive with a security game consisting of at most three messages.
So far we have already considered two examples (commitments and money) where the security game has two messages.
Another notable example is \emph{quantum lightning}~\cite{zhandry2021quantum}, a potentially stronger primitive than quantum money where the bank may not be able to produce two copies of the same banknote, also admits exponential security amplification via parallel repetition.
This is because the security game for quantum lightning is also a two-message quantum interactive protocol. 

\subsection{3-message quantum zero knowledge}
\label{sec:zk}

Weakly sound protocols also naturally occur in the context of zero knowledge protocols.
An important template for constructing zero knowledge protocols is the $\Sigma$-protocol.
At a high level, a $\Sigma$-protocol is a three-message protocol with inverse polynomial soundness (bounded away from 1), zero knowledge, and where the verifier's message is uniformly random.
$\Sigma$-protocols where the prover and verifier are quantum are sometimes called $\Xi$-protocols~\cite{broadbent2022qma}).
\begin{definition}[Quantum $\Sigma$-protocol]
    A quantum $\Sigma$-protocol for a language $L$ is a public coin with $O(\log|x|)$ random bits of challenge, quantum interactive proof system the following additional property called \emph{computational zero knowledge} defined as follows.
    There exists a polynomial time quantum simulator $\Sim$ that takes as input a string $x$ and randomness $r$, and outputs a pair of quantum states $(\rho, \sigma)$ such that
    \begin{equation*}
        \{(\rho, \sigma) : (\rho, \sigma) \leftarrow \Sim(x, r)\}_{r} \approx_{c} \{(\rho, \sigma) : (\rho, \sigma) \leftarrow V(x) \interact_{r} P(x, \ket{\psi_{x}})\}_{r}
    \end{equation*}
    where $(\rho, \sigma) \leftarrow V(x) \interact_{r} P(x, \ket{\psi_{x}})$ denotes the mixed state of the message registers that the honest prover $P$ sends, conditioned on the challenger's randomness being $r$.  
\end{definition}
Note that in the context of classical and post-quantum zero knowledge proofs, $\Sigma$-protocol in addition requires special soundness, which we intentionally omit here.
Furthermore, below we also slightly abuse this notation to also include discussions of statistical zero knowledge \emph{arguments} that follow this template.

While a $\Sigma$-protocol for all languages in $\class{QMA}$ is already known \cite{broadbent2022qma} and can be constructed from minimal assumptions \cite{brakerski2022computational}, the round collapse theorem also allows us to construct these differently by unconditionally compiling any honest-verifier zero knowledge protocol into a $\Sigma$-protocol.
As before, using our round collapse theorem can be advantageous (over building one from scratch by e.g.\ invoking \cite{broadbent2022qma}) for preserving certain properties of the original protocol like succinctness.

\begin{corollary}
    For any language $L$ that admits an $m$-message honest-verifier quantum statistical (resp.\ computational) zero knowledge protocol and computational (resp.\ statistical) soundness, $L$ also admits a malicious-verifier quantum statistical (resp.\ computational) zero knowledge protocol with 3 messages and computational (resp.\ statistical) soundness $1 - 1/\poly$.
    Furthermore, the verifier's message is a coin flip, and the verifier and communication complexity only blows up by $m^{O(1)}$.
\end{corollary}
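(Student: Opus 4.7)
The plan is to chain the round collapse theorem (\Cref{thm:compression}), the public coin compilation (\Cref{thm:public_coin}), and Watrous-style quantum rewinding to lift honest-verifier to malicious-verifier ZK.

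First I would apply \Cref{thm:compression} to round-collapse the $m$-message honest-verifier protocol into a 3-message one with soundness $1 - 1/m^{O(1)}$, while arguing that each halving step preserves HVZK. To build an HVZK simulator for the $(r+1)$-message compressed protocol from one for the $(2r+1)$-message underlying protocol, sample the compressed verifier's coin $b$ uniformly, run the underlying HVZK simulator to obtain a full original transcript, compute the mid-point verifier state by applying the first half of the verifier's unitaries to the simulated first half of the prover's messages, and output this state together with either the forward or backward half of the transcript depending on $b$. The honest prover in the compressed protocol simulates both the original prover and verifier internally, which is efficient since the original protocol is an argument, and iterating $O(\log m)$ times keeps everything polynomial in $m$.

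Next I would apply \Cref{thm:public_coin} to replace the 3-message verifier's quantum message by a classical coin flip, producing a 3-message public coin HVZK protocol. The HVZK simulator for the public coin variant is extracted from the HVZK simulator of the previous step in essentially the same manner, since the compiled verifier has the same two-branch (forward/backward) structure.

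Finally I would lift honest-verifier to malicious-verifier ZK via the oblivious quantum rewinding lemma of~\cite{watrous2006zero}. For any efficient malicious verifier $V^*$, the simulator samples $\beta \in \bits$ uniformly, runs the HVZK simulator conditioned on $\beta$ to obtain simulated first and last prover messages $\alpha_\beta, \gamma_\beta$, coherently runs $V^*$ on $\alpha_\beta$, and succeeds if $V^*$'s output equals $\beta$. Since $\beta$ is chosen independently of $V^*$'s state and uniformly at random, the success probability is essentially $\tfrac{1}{2}$ regardless of $V^*$'s auxiliary input (up to the HVZK simulation error); Watrous's rewinding then amplifies this to overwhelming success probability with $\poly(\lambda)$ rewindings while preserving the flavor (computational or statistical) of the zero knowledge guarantee of the underlying simulator.

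The main obstacle is verifying that the round halving truly preserves HVZK, because the compressed honest prover is structurally very different from the original: it sends the verifier's mid-point quantum state as its first message, and the subsequent interaction runs the original protocol either forward or backward. One must carefully check that the mid-point state and the forward/backward continuation, extracted from a single output of the original HVZK simulator, jointly match the compressed protocol's honest-verifier view; in particular, on a ``backward'' coin the simulated registers must uncompute to $\ket{0}$ on the verifier's private register with essentially the correct distribution. Once this is verified for a single halving, the remainder of the argument is routine composition.
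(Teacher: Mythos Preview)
Your proposal is correct and follows essentially the same approach as the paper's proof sketch: apply \Cref{thm:compression} while checking that each round-halving step preserves HVZK (because the compressed honest verifier's view at any round is just a coin $b$ together with the original verifier's view at the corresponding forward or backward round), then apply \Cref{thm:public_coin} to obtain a single-coin $\Sigma$-protocol, and finally invoke Watrous rewinding to upgrade HVZK to malicious-verifier ZK. The paper additionally notes that the computational-ZK proof case is already handled by \cite[Theorem~34]{Kobayashi08}, but the strategy you outline is exactly the one sketched there and in the paper.
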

\begin{proof}[Proof sketch]
    The statement for computational zero knowledge proofs is already proven by \cite[Theorem 34]{Kobayashi08}, and generalizing this to arguments is straightforward with \Cref{thm:compression,thm:public_coin}.
    For completeness, we sketch the proof strategy here.

    The first step of the construction is to apply the round collapse theorem (\Cref{thm:compression}) to the original protocol.
    We first show honest-verifier zero knowledge is preserved at each step in the iteration of \Cref{thm:compression}, and thus the final protocol is honest-verifier zero knowledge.
    This can be seen as the honest verifier's view at round $i$ (out of $r + 1$ rounds) is simply a random bit $b$ (except for the first round where $b$ is uninitialized) along with the precompiled verifier's view at round $r + 1 + (-1)^b \cdot (i - 1)$.

    In order to have malicious-verifier zero knowledge, we need to compile this protocol into a $\Sigma$-protocol where the verifier's second message is a single coin flip.  To do this, we apply \Cref{thm:public_coin}.
    Honest verifier zero knowledge is still preserved since the view after round 1 is identical to the original view at round 2, and the view after round 3 is a random bit $b$ along with the original view at round $2 + (-1)^b$.
    
    Finally, to show malicious-verifier zero knowledge, we can use the honest-verifier zero knowledge simulator to simulate the honest transcript and then apply Watrous rewinding \cite{watrous2006zero} to the malicious verifier.
\end{proof}

Thus we have round-collapsed the protocol while preserving relevant properties except soundness error.
To get back the same level of soundness, the natural approach is to apply $m^{O(1)}$-fold parallel repetition.
While the protocol after parallel repetition is unlikely to be zero knowledge \cite{holmgren2021fiat}, witness indistinguishability (as well as honest-verifier zero knowledge) is preserved via a standard hybrid argument.
We thus arrive at the following.
\begin{corollary}
	\label{corr:wi}
    For any language $L$ that admits an $m$-message honest-verifier quantum statistical (resp.\ computational) zero knowledge protocol and computational (resp.\ statistical) soundness, $L$ also admits a malicious-verifier statistical (resp.\ computational) witness indistinguishable protocol with 3 messages and negligible computational (resp.\ statistical) soundness.
    Furthermore, the verifier is public coin, and the verifier and communication complexity only increases by a multiplicative $m^{O(1)}$ factor.
\end{corollary}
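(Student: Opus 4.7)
The plan is to combine the previously established compression and amplification theorems with the standard observation that witness indistinguishability (WI) is preserved under parallel composition. Starting from the $m$-message honest-verifier quantum zero knowledge protocol, I first invoke the previous corollary to obtain a $3$-message, public coin, malicious-verifier quantum zero knowledge protocol $\proto_0$ with completeness $1-\negl$ (or $1$) and $(1-1/p)$-computational (resp.\ statistical) soundness for some polynomial $p$. Since malicious-verifier zero knowledge implies witness indistinguishability (the WI simulator for either witness is just the ZK simulator, which produces a view indistinguishable from a real interaction regardless of which witness the prover uses), $\proto_0$ is in particular malicious-verifier WI.

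Next, I would take $k = \lambda \cdot p$ parallel repetitions of $\proto_0$ to obtain $\proto_0^{\otimes k}$. The resulting protocol is still $3$-message (parallel repetition preserves message count), still public coin (the concatenation of random strings is random), and by \Cref{thm:main_theorem_uniform} has soundness $(1-1/p)^k + \negl = \negl$. Completeness also degrades only negligibly by a union bound (or remains $1$ in the perfect completeness case). The verifier and communication complexity relative to the original $m$-message protocol is multiplied by a factor of $m^{O(1)} \cdot k = m^{O(1)} \cdot \poly(\lambda)$, which matches the statement once we absorb the $\poly(\lambda)$ into the ``similar complexity'' clause (inherited from the ambient security parameter).

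It remains to argue that $\proto_0^{\otimes k}$ retains witness indistinguishability against arbitrary quantum malicious verifiers. This follows from a standard quantum hybrid argument: define hybrids $H_0,H_1,\ldots,H_k$ where $H_i$ uses witness $w_1$ in the first $i$ repetitions and witness $w_0$ in the remainder. Adjacent hybrids $H_{i-1}$ and $H_i$ differ only in the witness used in the $i$-th repetition, and an efficient distinguisher between $H_{i-1}$ and $H_i$ (given the malicious verifier's view of the full parallel execution) yields, by internally simulating the other $k-1$ repetitions, an efficient distinguisher against a single execution of $\proto_0$, contradicting WI of $\proto_0$. The hybrid argument goes through with only a factor-$k$ loss in the WI advantage, so WI of $\proto_0^{\otimes k}$ follows.

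The main subtlety is verifying that the quantum hybrid argument actually goes through cleanly when the messages are quantum and the internal simulation of the other $k-1$ repetitions must be carried out by the WI reduction: the reduction must be able to efficiently sample the auxiliary witnesses $w_0,w_1$ and internally run honest prover instances, which is fine since both witnesses are given to the prover as input. The only mildly delicate point is ensuring that the distinguisher's quantum auxiliary state can be re-used across hybrids, which holds because all we need is a single distinguishing experiment per hybrid pair, not coherent access. Everything else (soundness amplification, public-coin preservation, round preservation, complexity bookkeeping) follows directly from the already-established theorems.
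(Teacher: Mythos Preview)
Your proposal is correct and follows essentially the same route as the paper: first apply the preceding corollary to get a $3$-message public-coin malicious-verifier zero-knowledge protocol (hence WI), then parallel-repeat and observe that WI is preserved by a standard hybrid argument. The paper's own justification is in fact even more terse than yours (it just says ``witness indistinguishability \ldots\ is preserved via a standard hybrid argument''), so the details you spell out about the hybrid reduction and the quantum auxiliary input are a welcome elaboration rather than a deviation.
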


We leave as future work to improve this to a stronger security, which is beyond the scope of this work.
For starters, we conjecture that this class of 3-message protocols is already witness hiding even for instances with unique witness \cite{DSYC18-unique}.
Another direction is to consider recovering zero knowledge by further augmenting the protocol like classically.
For example, we expect one could combine \Cref{corr:wi} with \cite{Yan23} using (instance-dependent) commitments to compile a computational zero knowledge proof down to 4 messages (with coherent expected-QPT simulation \cite{lombardi2022post}%
\footnote{
  It is known that constant-round post-quantum zero knowledge cannot be achieved with ``standard'' expected-QPT simulators \cite{chia2021on}.
  We leave as an open problem to investigate whether this impossibility generalizes to quantum protocols.
}).
It is plausible that further ideas could even yield a constant-round transform for statistical zero knowledge arguments.

\subsection{Simpler construction of commitments from undecodable black holes}
\label{sec:blackholes}

Finally, we give a more intuitive one-page proof of a theorem from \cite{brakerski2023black}.
The work by Brakerski~\cite{brakerski2023black} showed that the existence of EFI/commitments is equivalent to the existential hardness of black-hole radiation decoding, as formulated by Hayden and Harlow~\cite{HarlowH13-firewall}.

\paragraph{Black hole radiation decoding.} First, we recall the setting.
A radiation state is an efficiently preparable quantum state $\ket\psi_{\reg{HBR}}$.
We are promised that is that it is statistically possible to extract a qubit from $\reg R$ that is maximally entangled with $\reg B$, but computationally hard to do so with success probability non-negligibly greater than some threshold $\delta \ll 1$.
Note that $\delta \ge \frac14$ always since that is the trivial success probability that can always be achieved by outputting a constant 0 qubit.

\paragraph{Commitments from undecodable states, simplified.}
\cite{brakerski2023black} constructs an undecodable radiation state from a commitment in a straightforward way: consider the overall pure state where we simply use a statistically-binding quantum state commitment (like the folk-lore construction \cite{gunn2023commitments}) to commit to a half of an EPR pair; we assign the other half of EPR being the qubit $\reg B$, the commitment register being the prior radiation $\reg R$, and the reveal register being the rest of the black hole $\reg H$.
However, the construction of EFI from undecodable states is much more involved.
Their construction on a high level is constructing a canonical-form commitment that is weakly statistically binding but strongly computationally hiding.
In order to argue computational hiding, they have to develop additional tools like superdense decoding.

We give, from undecodable radiation states, an alternative construction of commitments that is instead weakly computationally binding but strongly statistically hiding with a tighter analysis (since our parallel repetition theorem is tight); furthermore, it can be viewed as the dual construction for the converse direction.  The construction of the bit commitment is as follows:
\begin{itemize}
    \item Prepare the radiation state along with the EPR pair $\ket\psi_{\reg{HBR}} \otimes \ket{\Psi^+}_{\reg{CD}}$.
    \item To commit to 0, send registers $\reg{HB}$.
    \item To commit to 1, send registers $\reg{HD}$.
\end{itemize}
This is statistically hiding, since by statistical decodability and monogamy of entanglement, after tracing out $\reg R$, the state on $\reg{HB}$ is close to a product state where $\reg B$ is maximally mixed.
On the other hand, the task of breaking honest computational binding (from 0 to 1) is exactly on input register $\reg R$ and an (unrelated) EPR pair $\reg{CD}$, output a state such that the overall state looks like $\reg{HD} \otimes \reg{RCB}$.
In other words, this ``teleports'' out the entanglement in $\reg H$ to $\reg B$ into $\reg C$, and now the register $\reg H$ is maximally entangled with $\reg D$ instead.
In particular, the output $\reg C$ register would be correctly entangled with the test qubit, and thus we conclude that this is computationally $\delta$-binding.
Combining this with the computational amplification (given that $\delta$ is inverse-polynomially bounded away from 1), we obtain a construction of EFI via parallel repetition and commitment flavor switching.

\fi

\ifacmart
\bibliographystyle{ACM-Reference-Format}
\bibliography{refs}
\else
\printbibliography
\fi

\end{document}